\numberwithin{equation}{section}
\newtheorem{thm}{Theorem}[section]
\newtheorem{prop}[thm]{Proposition}
\newtheorem{lem}[thm]{Lemma}
\newtheorem{cor}[thm]{Corollary}
\theoremstyle{definition}
\theoremstyle{remark}
\newtheorem{rmk}[thm]{Remark}
\DeclareMathOperator{\Vol}{Vol}
\let\div\relax
\DeclareMathOperator{\div}{div}
\let\d\relax
\newcommand{\d}{\partial}
\newcommand{\slnabla}{\slashed{\nabla}}
\newcommand{\slDelta}{\slashed{\Delta}}
\newcommand{\rslnabla}{\mathring{\slashed{\nabla}}}
\newcommand{\rslDelta}{\mathring{\slashed{\Delta}}}
\newcommand{\rsldiv}{\mathring{\slashed{\div}}}
\newcommand{\RR}{\mathbb{R}}
\newcommand{\ZZ}{\mathbb{Z}}
\newcommand{\TT}{\mathbb{T}}
\let\SS\relax
\newcommand{\SS}{\mathbb{S}}
\newcommand{\R}{\mathcal{R}}
\newcommand{\C}{\mathcal{C}}
\newcommand{\D}{\mathcal{D}}
\newcommand{\I}{\mathcal{I}}
\newcommand{\E}{\mathcal{E}}
\newcommand{\V}{\mathcal{V}}
\newcommand{\U}{\mathcal{U}}
\newcommand{\W}{\mathcal{W}}
\newcommand{\M}{\mathcal{M}}
\newcommand{\X}{\mathcal{X}}
\let\S\relax
\newcommand{\S}{\mathcal{S}}
\newcommand{\fC}{\mathfrak{C}}
\title{Boundedness and decay of waves on spatially flat decelerated FLRW spacetimes}
\author{Mahdi Haghshenas\thanks{Imperial College London, UK. Email: m.haghshenas23@imperial.ac.uk}  \thanks{University College London, UK. Email: mahdi.haghshenas.22@ucl.ac.uk} }
\date{\today}
\begin{document}

\maketitle

\begin{abstract}
    We study the linear wave equation on a class of spatially homogeneous and isotropic Friedmann--Lemaître--Robertson--Walker (FLRW) spacetimes in the decelerated regime  with spatial topology $\mathbb{R}^3$. Employing twisted $t$-weighted multiplier vector fields, we establish uniform energy bounds and derive integrated local energy decay estimates across the entire range of the decelerated expansion regime. Furthermore, we obtain a hierarchy of $r^p$-weighted energy estimates à la the Dafermos--Rodnianski $r^p$-method, which leads to energy decay estimates. 
    As a consequence, we demonstrate pointwise decay estimates for solutions and their derivatives. In the wave zone, this pointwise decay is optimal in the ``radiation'' and ``sub-radiation'' cases, and almost optimal around the radiation case.
\end{abstract}

\tableofcontents
\section{Introduction}\label{sec:intro}
The standard cosmological models describing a homogeneous and isotropic universe in general relativity are represented by the Friedmann--Lemaître--Robertson--Walker (FLRW) spacetimes \cite{Robertson1933}. The current article concerns FLRW spacetimes with $\RR^3$ spatial topology, for which the Lorentzian metrics, defined on the 4-dimensional manifold  $M=(0,\infty)_t\times \RR^3$, read 
\begin{equation}\label{eq:intro:Metric-x}
    g_q=-dt^2  + t^{2q}\left((dx^1)^2+(dx^2)^2+(dx^3)^2\right)\,,
\end{equation}
 for $0<q<1$ corresponding to the \emph{decelerated regime}. We refer to $a(t)=t^q$ as the scale factor, which encodes the expansion of the universe and the big bang-type singularity as $t\to 0$. This paper addresses the dynamics of the scalar wave equation on the FLRW background metric towards the future.
 The covariant wave equation associated to $g_q$ in $(t,x^1,x^2,x^3)$ coordinates is given by
\begin{equation}\label{eq:intro:wave-equation}
    \Box_{g_q} \psi:= -\d_{tt} \psi -\frac{3q}{t}\d_t \psi + \frac{1}{t^{2q}} \Delta_{\RR^3} \psi=0\,.
\end{equation}

The spacetimes \eqref{eq:intro:Metric-x} belong to the more general class of FLRW spacetimes $(M,g)$ with
\begin{align}\label{eq:intro:general_FLRW-metric}
    M=(0,\infty)_t\times S\,,\qquad g:=-dt^2 + a^2(t) g_S\,,
\end{align}
where $(S,g_S)$ is a three-dimensional Riemannian manifold of constant curvature $K$. The above metrics can be derived from the \textit{cosmological principle}, that is the homogeneity and isotropy of the universe on large cosmological scales \cites{Robertson29,Robertson35Kinematics}. 

In the spatially flat case, when $K=0$, $S$ can be taken to be $\RR^3$ or $\TT^3$. This article focuses primarily on the $\RR^3$ topology where the dispersive features of waves are apparent. The FLRW spacetimes \eqref{eq:intro:Metric-x} are spatially homogeneous and spherically symmetric with \emph{area radius} $R=t^q r$ (where  $r=\sqrt{(x^1)^2+(x^2)^2+(x^3)^2}$), and one formally recovers the Minkowski metric by letting $q=0$ in \eqref{eq:intro:Metric-x}. The FLRW metrics, in contrast to Minkowski space, are neither asymptotically flat nor stationary. Notably, the timelike vector field $\d_t$ is not a Killing vector field for $q\neq0$, and as a result, the energy boundedness of waves is not inherently ensured. However, the conformal structure and asymptotic features of the decelerated FLRW spacetimes can be compared to those of Minkowski spacetime. For instance, the metric \eqref{eq:intro:Metric-x} is conformally flat and a null conformal boundary $\I^+$ called \emph{future null infinity}, which represents far away observers, can be attached to the FLRW spacetime; see Figure \ref{fig:Sigma-tau} or refer to \cite{HawkingEllis}*{Chapter~5.3}.

Many results in the current work also hold in Minkowski space, formally corresponding to $q=0$, and are well known in this case. For simplicity of notation, however, the case $q=0$ is omitted from the statements in the following sections. 

The metric $g_q$ \eqref{eq:intro:Metric-x} for all $\frac{1}{3}\leq q \leq \frac{2}{3}$ arises as a solution of the Einstein--Euler equation for a perfect fluid with a linear equation of state and with cosmological constant $\Lambda=0$. In particular, the cases $q=\frac{1}{2}$ and $q=\frac{2}{3}$ correspond to the \textit{radiation} and the \textit{dust} model, respectively. Furthermore, $g_q$ also solves the Einstein--massless Vlasov equation when $q=\frac{1}{2}$. In fact, all solutions of Einstein--massless Vlasov with an isotropic and ``irrotational'' particle density are either stationary or described by the FLRW metric  \cite{EhlersGerenSachs}. We remark that the scalar curvature of \eqref{eq:intro:Metric-x} vanishes when $q=\frac{1}{2}$.

The study of equation \eqref{eq:intro:wave-equation} is relevant for the analysis of wave-type systems near FLRW spacetimes. In particular, \eqref{eq:intro:wave-equation} can be viewed as a ``poor man's'' linearisation of Einstein equations around \eqref{eq:intro:Metric-x}. 

\subsection{Initial value problem}\label{sec:IVP}

In the present article, we consider an initial data hypersurface $\Sigma_{\tau_0}$ that is partially null and connects the centre $\{r=0\}$ to future null infinity $\I^+$. The spacelike part and the null part, respectively denoted by $\S_{\tau_0}$ and $\C_{\tau_0}$, are separated by $\{r=\rho\}$ for an arbitrary $\rho>0$, as depicted in Figure \ref{fig:Sigma-tau}. 

 Moreover, the function $\tau$ and the hypersurface $\Sigma_\tau$, defined in Section \ref{sec:metric-wave-foliation}, are such that $\Sigma_{\tau'}$ is the hypersurface of all points $p$ with $\tau(p)=\tau'$. We also set $\R^{\tau'}_{\tau}=\cup_{\bar \tau\in [\tau,\tau']} \Sigma_{\bar\tau}$; see Figure \ref{fig:Sigma-tau}.
\begin{figure}
    \centering\includegraphics[width=0.3\textwidth]{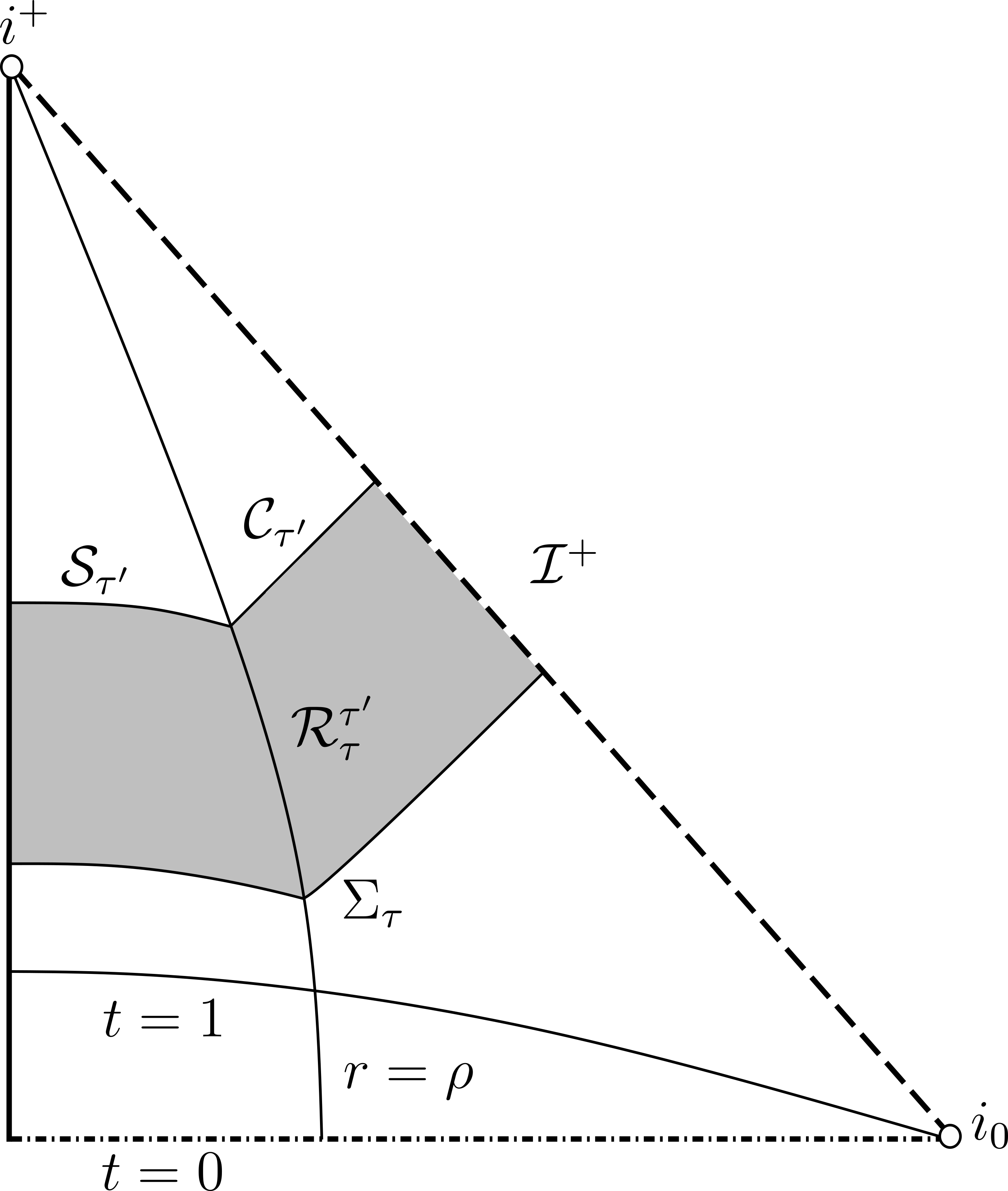}
        \caption{Penrose diagram of FLRW spacetimes \eqref{eq:intro:Metric-x} for $0<q<1$, with the spacelike hypersurface $\{t=1\}$, the hypersurface $\Sigma_\tau$, and the spacelike singularity at $\{t=0\}$.}\label{fig:Sigma-tau}
    \end{figure}
    
The pair $(\psi_0,\psi_1)$ is called a \emph{compactly supported smooth initial data set} on $\Sigma_{\tau_0}$ if $\psi_0$ is a smooth function on both $\S_{\tau_0}$ and $\C_{\tau_0}$ with compact support, $\psi_1$ is a compactly supported smooth function on $\S_{\tau_0}$, and moreover, there exists a smooth function $\Psi$ on $\R^{\tau_1}_{\tau_0}$ for some $\tau_1$ such that 
\begin{align*}
    \Psi|_{\Sigma_{\tau_0}}=\psi_0\,,\qquad \d_t\Psi|_{\S_{\tau_0}}=\psi_1\,.
\end{align*}
It is worth mentioning that compactly supported smooth Cauchy data on $\{t=\text{constant}\}$, for example, induces such smooth initial data on $\Sigma_{\tau_0}$ by considering $\rho$ large enough. Note that the assumption of compact support is introduced to simplify the arguments in Section \ref{sec:decay} but it may be relaxed by instead requiring that an appropriate weighted energy flux of $\psi_0$ and $\psi_1$ along $\Sigma_{\tau_0}$ is bounded.

In this paper, a solution $\psi$ to the wave equation always means a smooth function satisfying the initial value problem
\begin{align}\label{eq:intro:IVP-wave}
    \begin{cases}
    \Box_{g_q}\psi=0 \,,\\
     \psi|_{\Sigma_{\tau_0}}=\psi_0\,,\\
    \d_t\psi|_{\S_{\tau_0}}=\psi_1\,,
    \end{cases}
\end{align}
for a compactly supported smooth initial data set $(\psi_0,\psi_1)$ on $\Sigma_{\tau_0}$.

\subsection{Main results}
The main results of the present paper concern boundedness and decay properties of solutions of \eqref{eq:intro:IVP-wave}, following the Dafermos--Rodnianski $r^p$-method \cite{DafRodNewMethod}. The main ingredients of this method are 
\begin{enumerate}
    \item Energy boundedness;
    \item Integrated local energy decay (ILED);
    \item A hierarchy of $r^p$-weighted estimates.
\end{enumerate}
The above results for the FLRW spacetimes \eqref{eq:intro:Metric-x} are presented in Theorems \ref{thm:intro:boundedness}, \ref{thm:itro:Morawetz}, and \ref{thm:intro:rp}, respectively. Consequently, the pointwise and energy decay of $\psi$, as stated in Corollary \ref{cor:general:decay}, follows from these three theorems.

The Dafermos--Rodnianski $r^p$-method has been generalised to a broad class of asymptotically flat spacetimes by Moschidis \cite{Moschidis16rp}. Further refinements of the $r^p$-method can be found in \cites{AAG18late,AAG18vector, Schlue13HigherDim}. See also the recent paper \cite{VdMoortel25rp}.

The $r^p$-method is particularly well suited to non-linear problems. See, for example, the work of Dafermos--Holzegel--Rodnianski--Taylor \cite{DHRTQuasilinearWave} on non-linear wave equations around various stationary asymptotically flat spacetimes. In the absence of \emph{trapped null geodesics}, the above three ingredients for linear waves lead to a proof of small data global solutions (see case (i) in \cite{DHRTQuasilinearWave}*{Section~3.4.1}, where the $r^p$-estimate is required only for some fixed $p>0$). In the presence of trapped null geodesics, the method of \cite{DHRTQuasilinearWave} requires an additional ``physical space top order identity''. Note also the work of Keir \cite{Keir18WeakNull} on non-linear wave equations around Minkowski space satisfying \emph{weak null condition}.

The present setting considered with a non-linearity satisfying a suitable \emph{null condition}, is comparable to the former case of \cite{DHRTQuasilinearWave}.

The energy flux considered in this paper, denoted by $\E_q^\varepsilon[\psi](\tau)$, is a ``twisted'' weighted energy flux of $\psi$ through $\Sigma_\tau$, where $q$ encodes a $t$-weight corresponding to the scale factor $t^q$ in \eqref{eq:intro:Metric-x}. The energy flux $\E_q^\varepsilon[\psi](\tau)$ includes  an arbitrarily small $\varepsilon\geq 0$ loss in the $t$-weight factors. We write $\E_q[\psi]:=\E_q^0 [\psi]$ in the case of no loss. The energy $\E_q^\varepsilon[\psi](\tau)$ is referred to as a twisted energy in view of the fact that the $\d_t$ derivative of $\psi$ appears only after $\psi$ is multiplied by $t^q$. See already \eqref{eq:intro:energy-sigma-tau} for the precise definition of $\E_q^\varepsilon[\psi](\tau)$.

In the radiation case, corresponding to $q=\frac{1}{2}$, the wave equation can be transformed to the standard wave equation on the Minkowski space, and many results can be directly invoked from Minkowski space. A representation formula, for instance, is available in the radiation case. The approach taken in this paper does not rely on this transformation to Minkowski space, however. 

Nevertheless, the case $q=\frac{1}{2}$ is somewhat the simplest.  As such, the following results, particularly Theorem \ref{thm:intro:rp}, are sharpest for $q$ close to $\frac{1}{2}$ and $0$. In fact, $q=\frac{1}{2}$ can be viewed as a threshold, with the statements and proofs differing for $q>\frac{1}{2}$ and $q\leq\frac{1}{2}$. 

The following theorem establishes the boundedness of the twisted energy flux $\E_q^\varepsilon[\psi] (\tau)$ associated to the solution $\psi$. It is important to emphasise that $\E_q^\varepsilon[\psi] (\tau)$ already includes $t$-dependent weight factors, with $q=\frac{1}{2}$ exhibiting the strongest $t$-weight, see Remark \ref{rmk:DefEnergy}. 
\begin{thm}[Energy boundedness of waves on FLRW]\label{thm:intro:boundedness}
    Suppose $0<q<1$, and $\psi$ solves the initial value problem \eqref{eq:intro:IVP-wave}. Then for any $\tau_2>\tau_1\geq \tau_0$, we have
    \begin{equation}\label{eq:intro:EB}
          \E_q[\psi] (\tau_2) \lesssim   \E_q[\psi] (\tau_1)\,. 
    \end{equation}
    Moreover, for any $\varepsilon>0$, we also have $\E_q^\varepsilon[\psi] (\tau_2) \lesssim   \E_q^\varepsilon[\psi] (\tau_1)\,.$
\end{thm}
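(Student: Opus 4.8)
The plan is to prove energy boundedness by the vector field multiplier method, using a carefully chosen $t$-weighted, ``twisted'' timelike vector field $X = f(t)\,\partial_t$ (or a null-adapted version near $\I^+$) together with the ``twisted'' derivative $\tilde\partial_t \psi := t^q \partial_t(t^{-q}\psi)$, and then applying the divergence theorem over the region $\R^{\tau_2}_{\tau_1}$. The first step is to recall the energy-momentum tensor $T_{\mu\nu}[\psi]$ of $\Box_{g_q}$ and compute the deformation tensor of the multiplier; because $\partial_t$ fails to be Killing, the bulk term $\nabla^\mu(T_{\mu\nu}X^\nu)$ does not vanish, and one gets an identity of the schematic form
\begin{equation*}
    \int_{\Sigma_{\tau_2}} J^X[\psi]\cdot n_{\Sigma_{\tau_2}} \;+\; \int_{\R^{\tau_2}_{\tau_1}} (\text{bulk}) \;=\; \int_{\Sigma_{\tau_1}} J^X[\psi]\cdot n_{\Sigma_{\tau_1}} \;+\; \int_{\I^+\cap \R^{\tau_2}_{\tau_1}} (\text{flux at infinity})\,.
\end{equation*}
The key algebraic point is to choose the $t$-weight in $X$ (and an accompanying lower-order/zeroth-order modification $w(t)\psi\,\partial_t\psi$ to handle the $\frac{3q}{t}\partial_t\psi$ first-order term in \eqref{eq:intro:wave-equation}) so that, after the twisting, the bulk integrand has a favourable sign — i.e.\ it is a nonnegative quadratic in $\tilde\partial_t\psi$, the spatial gradient, and $\psi/t$ — for the whole range $0<q<1$. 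This is the step where the threshold $q=\tfrac12$ will be visible: the precise power of $t$ that makes the bulk nonnegative (and the boundary flux coercive and equivalent to $\E_q[\psi]$) will differ in sign of a coefficient across $q=\tfrac12$, so the argument naturally splits into $q>\tfrac12$ and $q\le\tfrac12$.

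Next I would verify that the boundary fluxes $\int_{\Sigma_\tau} J^X[\psi]\cdot n_{\Sigma_\tau}$ are, up to uniform constants independent of $\tau$, equivalent to the twisted energy $\E_q[\psi](\tau)$ defined in \eqref{eq:intro:energy-sigma-tau}; this is a pointwise computation using the explicit form of $\Sigma_\tau$ and its normal from Section \ref{sec:metric-wave-foliation}, carried out separately on the spacelike piece $\S_\tau$ and the null piece $\C_\tau$. On the null cone the natural flux involves the good null derivative and the angular derivatives with the appropriate $R$-weights, and one must check these appear with the right signs. I would then argue that the flux through $\I^+$ has a good sign (it is a nonnegative boundary term, being an outgoing-energy flux), so it can be discarded from the right-hand side, and that there is no past null boundary to worry about since $\R^{\tau_2}_{\tau_1}$ is bounded to the future of $\Sigma_{\tau_1}$. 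Dropping the (nonnegative) bulk term and the (nonnegative) $\I^+$ flux then yields $\E_q[\psi](\tau_2)\lesssim \E_q[\psi](\tau_1)$.

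For the $\varepsilon>0$ statement, the plan is to repeat the argument with the $t$-weight in $X$ degraded by $t^{\pm\varepsilon}$ (matching the $\varepsilon$-loss built into $\E_q^\varepsilon$); the extra factor produces an additional term in the bulk of definite sign proportional to $\varepsilon$ times a nonnegative quadratic, which only helps (or, in the borderline direction, is absorbed by the $O(1)$ positive bulk already present), so the same divergence-theorem argument goes through with constants depending on $\varepsilon$. The main obstacle I anticipate is the first step: finding the correct twisted multiplier and the zeroth-order correction $w(t)$ so that the spacetime bulk term is genuinely nonnegative across the entire open range $0<q<1$ — in particular controlling the cross terms between $\tilde\partial_t\psi$, $|\slashed\nabla\psi|$ (with the $t^{-2q}$ metric weight), and the Hardy-type term involving $\psi/t$, where one likely needs a Hardy inequality on $\Sigma_\tau$ to bound the zeroth-order term and close the coercivity; getting the constants to cooperate uniformly up to (but not including) the endpoints $q=0,1$ is the delicate part, and is presumably where the $\varepsilon$-loss version is introduced to buy room near the endpoints.
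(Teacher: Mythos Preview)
Your overall framework---twisted vector-field multiplier, divergence theorem over $\R^{\tau_2}_{\tau_1}$, case split at $q=\tfrac12$, discarding the nonnegative $\I^+$ flux---is correct and matches the paper. There is, however, a real gap in your plan for $q>\tfrac12$.

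You propose to keep a single twist throughout and vary only the $t$-power in $X$ together with a scalar modification $w(t)$. With the natural twist $\beta=t^{-q}$ and $X=t^q\partial_t$, the bulk term is $(1-q)t^{q-1}\V_\beta\,\psi^2$ with $\V_\beta=-q(2q-1)/t^2$; this has the wrong sign for $q>\tfrac12$, and no reweighting of $X$ alone repairs it. The paper's fix is to switch the \emph{twisting function} itself to $\beta'=t^{q-1}$ (with multiplier $X=t^{2-3q-\varepsilon}\partial_t$) and then ``untwist'' the resulting time-derivative term via the algebraic identity
\[
t^{2q-1-\varepsilon}\bigl|\partial_t(t^{1-q}\psi)\bigr|^2=t^{1-\varepsilon}|\partial_t\psi|^2+\tfrac{(1-q)(1-q+\varepsilon)}{t^{1+\varepsilon}}\psi^2+(1-q)\,\partial_t\bigl(t^{-\varepsilon}\psi^2\bigr),
\]
which extracts a positive $\psi^2$ contribution that exactly absorbs the negative potential term, leaving a nonnegative bulk plus an exact derivative (which becomes a boundary term). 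This is a pointwise identity, not a Hardy inequality; a Hardy argument on $\Sigma_\tau$ would not deliver the required constants here. One then converts the resulting estimate back to the $t^{-q}$-twisted energy $\E_q^\varepsilon$ via the elementary comparison $|\partial_m(t^q\psi)|^2\lesssim t^{4q-2}|\partial_m(t^{1-q}\psi)|^2+t^{2q-2}\psi^2$, which is precisely the origin of the weight $\mu_q=4q-2$.

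Two smaller points. First, the paper's twist is $\beta=t^{-q}$, so the twisted derivative is $t^{-q}\partial_t(t^q\psi)$, not $t^q\partial_t(t^{-q}\psi)$ as you wrote; the direction matters for the sign of $\V_\beta$. Second, the $\varepsilon$-loss has nothing to do with the endpoints $q\to0,1$: the $\varepsilon=0$ boundedness already holds across the full range $0<q<1$, and the $\varepsilon>0$ variant is recorded only because it is the form consumed later by the decay hierarchy in the regime $\sigma_q\geq1$.
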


The next theorem, to be referred to as ILED,  is of \textit{Morawetz-type} and \textit{local} in the sense that it is more effective when applied to regions with bounded $r$. Before stating the theorem, for any $\varepsilon\geq 0$, define
    \begin{align}\label{eq:intro:mu}
        \mu_q^\varepsilon:=\begin{cases}
        0& \qquad q\leq\frac{1}{2}\\
        4q-2+\varepsilon & \qquad q>\frac{1}{2}
    \end{cases}\,,
    \end{align}
    which captures the difference of $t$-weight factors in different cases. Also, set $\mu_q=\mu_q^0$.
\begin{thm}[Integrated local energy decay for waves on FLRW]\label{thm:itro:Morawetz}
    Let $\delta>0$ and $0<q<1$. If $\psi$ solves the initial value problem \eqref{eq:intro:IVP-wave}, then for any $\tau_2>\tau_1\geq \tau_0$, the following estimate holds\footnote{Here $\rslnabla$ denotes the standard covariant derivative on the unit sphere $\SS^2$, and the norm of $\rslnabla\psi$ is also measured via the standard metric on $\SS^2$.}
    \begin{align}
          \int_{\R^{\tau_2}_{\tau_1}} t^{-\mu_q} \left(\frac{t^{q}|\d_t(t^q\psi)|^2}{1+r^{3+\delta}} +\frac{t^{q}|\d_r \psi|^2}{(1+r)^{1+\delta}} + \frac{t^q}{r^3}|\rslnabla\psi|^2+\frac{t^{q}\psi^2}{1+r^{3+\delta}} \right) \, dxdt&\lesssim  \E_q[\psi] (\tau_1)\,.\label{eq:intro:ILED}
      \end{align}
      Moreover, the above estimate remains valid if we replace $\mu_q$ with $\mu_q^\varepsilon$ and $\E_q[\psi] (\tau_1)$ with $\E_q^\varepsilon[\psi] (\tau_1)$ for any $0<\varepsilon\ll 1-q$.
\end{thm}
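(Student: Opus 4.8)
The plan is to use the classical Morawetz multiplier method, suitably adapted to the FLRW geometry and in particular to the twisted energy framework underlying Theorem \ref{thm:intro:boundedness}. First I would rewrite \eqref{eq:intro:wave-equation} in terms of the twisted unknown: since the energy flux $\E_q^\varepsilon[\psi]$ records $\d_t(t^q\psi)$ rather than $\d_t\psi$, it is natural to introduce a conformal-type rescaling (something like $\phi = t^{\alpha}\psi$ with $\alpha$ chosen so that the first-order $\d_t$ term is removed or becomes favourably signed) and derive the equation satisfied by $\phi$. One expects a wave equation of the schematic form $-\d_{tt}\phi + t^{-2q}\Delta_{\RR^3}\phi + (\text{potential})\,\phi = 0$ with a time-dependent potential whose sign depends on $q$ — this is where the threshold $q=\tfrac12$ and the parameter $\mu_q$ enter. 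In the radiation case $q=\tfrac12$ the potential is essentially absent (consistent with the vanishing scalar curvature remarked in the introduction), while for $q>\tfrac12$ a loss encoded by $\mu_q^\varepsilon = 4q-2+\varepsilon$ must be paid, hence the $t^{-\mu_q}$ weight on the left side of \eqref{eq:intro:ILED}.

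The second step is to apply a Morawetz vector field multiplier adapted to the area radius $R = t^q r$, of the form $X = f(r)\,\d_r$ together with a lower-order correction $\tfrac12 (\div_{(3)} X)\,\phi\,\partial$-type term, contracting $\Box_{g_q}\phi\cdot(X\phi + \ldots)$ and integrating over the region $\R^{\tau_2}_{\tau_1}$. The radial weight $f$ should be chosen (as in the Minkowski case, e.g. $f$ bounded, $f' \gtrsim (1+r)^{-1-\delta}$, $f/r$ controlling the angular term, and a piece producing the zeroth-order term $\psi^2/(1+r^{3+\delta})$) so that the bulk integrand is coercive in all four quantities appearing in \eqref{eq:intro:ILED}; the $\delta>0$ loss is the usual price for the $r^{-3-\delta}$ (rather than $r^{-3}$) weight on $\d_t$ and $\psi$. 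Because the spheres $\{r=\text{const}\}$ have area $\sim t^{2q} r^2$ and the angular Laplacian carries the factor $t^{-2q}$, the angular term naturally appears with weight $t^q/r^3$ after accounting for the volume element $t^{3q}\,dx\,dt$; I would keep careful track of all powers of $t$ and $t^q$ so that the left-hand side matches the stated weights. The boundary terms generated on $\Sigma_{\tau_1}$, $\Sigma_{\tau_2}$, and at the centre $\{r=0\}$ and at $\I^+$ must then be shown to be controlled by $\E_q[\psi](\tau_1)$: the future boundary term on $\Sigma_{\tau_2}$ is handled using the energy boundedness of Theorem \ref{thm:intro:boundedness}, the centre term vanishes by smoothness, and the $\I^+$ term has a favourable sign (or is absorbed) thanks to the choice of $f$ bounded.

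The main obstacle I anticipate is twofold. First, controlling the lower-order (zeroth-order) terms: the Morawetz identity will produce terms involving $\psi^2$ with time-dependent coefficients coming from the potential in the $\phi$-equation, and for $q>\tfrac12$ these have an unfavourable sign; absorbing them requires either a Hardy inequality in the radial variable or exploiting the extra $t^{-\mu_q^\varepsilon}$ decay, paying the small $\varepsilon>0$ loss — this is exactly the point where the hypothesis $0<\varepsilon \ll 1-q$ is used, ensuring the exponents stay in the range where the time integrals and the twisted energy remain compatible. Second, handling the near-centre region $r \lesssim \rho$ where $f(r)\,\d_r$ is singular as a smooth vector field: one must regularise $f$ near $r=0$ (e.g. $f \sim r$ there) and verify that the resulting error terms are lower order and controllable, a standard but delicate point. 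A subsidiary technical issue is that $\Sigma_\tau$ is only partially null, so the divergence theorem must be applied with the correct (possibly degenerate) volume and boundary measures on the null portion $\C_\tau$; here the $r^p$-type weights and the fact that the Morawetz current has the right sign on outgoing null cones should make the null boundary contribution controllable or of the correct sign. Once the coercive bulk estimate and the boundary bounds are assembled, \eqref{eq:intro:ILED} follows, and the $\mu_q^\varepsilon$, $\E_q^\varepsilon$ version is obtained by running the same argument with the $\varepsilon$-shifted weights throughout.
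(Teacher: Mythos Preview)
Your high-level plan (twisted framework, Morawetz multiplier $f(r)\,\d_r$ with a lower-order correction, boundary terms absorbed by Theorem~\ref{thm:intro:boundedness}) is broadly the paper's route for $0<q\le\tfrac12$, and in that range the main refinement you are missing is concrete rather than conceptual: the paper runs the twisted divergence identity \eqref{eq:TEMT:DivergenceIdentity} with $\beta=t^{-q}$ first for $Y_1=\d_r$, $z_1=1/r$ (which isolates the angular term $\frac{t^q}{r^3}|\rslnabla\psi|^2$), and then for $Y_2=(1+r)^{-\delta}\d_r$, $z_2=\tfrac{1}{r(1+r)^\delta}$. The potential $\V=-q(2q-1)t^{-2}$ produces a bad zeroth-order bulk term with the wrong sign; this is handled not by a Hardy inequality but by the exact \emph{untwisting identity}
\[
t^{q}|\d_t(t^q\psi)|^2-t^{3q}\V\psi^2=t^{3q}|\d_t\psi|^2+\d_t\bigl(qt^{3q-1}\psi^2\bigr),
\]
which converts the offending term into a favourable $|\d_t\psi|^2$ bulk plus a total time derivative absorbed into the boundary fluxes.

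For $\tfrac12<q<1$ there is a genuine gap in your plan. A purely $r$-dependent multiplier $f(r)\,\d_r$ does not produce the weight $t^{-\mu_q^\varepsilon}$; the paper instead uses the \emph{$t$-weighted} multipliers $Y_1=t^{2-4q-\varepsilon}\d_r$ and $Y_2=t^{2-4q-\varepsilon}(1+r)^{-\delta}\d_r$ (with matching $z_i$). The price is new bulk error terms coming from $\d_t$ of the $t$-weight, schematically $(4q-2+\varepsilon)\,t^{1-2q-\varepsilon}\,\d_r\psi\,\d_t(t^q\psi)$, and these \emph{cannot} be absorbed into the Morawetz bulk or handled by a radial Hardy inequality as you suggest. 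Instead the paper controls them by a separate spacetime estimate,
\[
\int_{\R^{\tau_2}_{\tau_1}} t^{1-2q-\varepsilon}\bigl(|\nabla\psi|^2+|\d_t(t^q\psi)|^2+t^{2q-2}\psi^2\bigr)\,dxdt\;\lesssim\;\E_q^\varepsilon[\psi](\tau_1),
\]
which is proved \emph{within} the energy-boundedness argument (Proposition~\ref{prop:sup:energy-bnd}, estimate \eqref{eq:sup:Eng1-bulk}) using the alternative twisting $\beta'=t^{q-1}$ and the multiplier $X=t^{2-3q-\varepsilon}\d_t$. This auxiliary bulk bound is the mechanism by which the $\varepsilon$-loss actually enters, and without it your scheme for $q>\tfrac12$ will not close.
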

We remark that the above theorem also holds with improved weight factors in certain ranges of $q$. For the best proven estimates in each case, refer to Propositions \ref{prop:sub:ILED} and \ref{prop:sup:Morawetz}.

The next theorem concerns a hierarchy of $r^p$-weighted estimates in a region near future null infinity $\I^+$. There is a restriction on the allowed range of $p$ depending on $q$, encoded by
\begin{equation}\label{eq:intro:sigma_q}
    \sigma_q:= 2(\frac{q|1-2q|}{|1-q|^2})^\frac{1}{2}\,.
\end{equation}
Observe that $\sigma_q$ vanishes for Minkowski spacetime and the radiation-filled universe, that is $q=0$ and $q=\frac{1}{2}$. Also, note that $\sigma_q\geq1$ for $q\geq \frac{1+2\sqrt 2}{7}$ and $q=\frac{1}{3}$.

\begin{thm}[$r^p$-estimates for waves on FLRW]\label{thm:intro:rp}
For $0<q<1$, assume $\psi$ is a solution to the initial value problem \eqref{eq:intro:IVP-wave}. Moreover suppose $\tau_2>\tau_1\geq \tau_0$, and 
\begin{equation}\label{eq:intro:range-p}
   0<p< \max\{1, 2-\sigma_q\}.
\end{equation}
If $\D^{\tau_2}_{\tau_1}=\R^{\tau_2}_{\tau_1}\cap \{r>\rho\}$ and $\varphi=rt^q\psi$, then we have\footnote{Here $\d_v=t^q \d_t +\d_r$ is the derivative in the null direction tangent to $\C_\tau$. Refer to Section \ref{sec:metric-wave-foliation}.}
    \begin{align}\label{eq:intro:rpEstimate}
    \begin{split}
        \int_{\D^{\tau_2}_{\tau_1}} t^{-\mu_q}r^{p-1} \left( |\d_v \varphi|^2 +r^2|\d_v(t^q \psi)|^2+ \frac{1}{r^2}|\rslnabla\varphi|^2+ \frac{1}{r^2}\varphi^2\right) \, dudvd\omega + \int_{\C_{\tau_2}}t^{-\mu_q} r^p|\d_v\varphi|^2 \, dvd\omega \\ \lesssim \int_{\C_{\tau_1}} t^{-\mu_q}r^p|\d_v\varphi|^2 \, dvd\omega +  \E_q[\psi] (\tau_1)\,,
    \end{split}
\end{align}
in which $\mu_q=\mu^0_q$ and $\E_q[\psi] (\tau_1)=\E^0_q[\psi] (\tau_1)$.
\end{thm}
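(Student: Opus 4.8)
The plan is to derive the $r^p$-hierarchy by multiplying the wave equation, recast in terms of $\varphi = rt^q\psi$, by a weight of the form $t^{-\mu_q}r^p\,\d_v\varphi$ and integrating over the region $\D^{\tau_2}_{\tau_1}$. First I would record the equation satisfied by $\varphi$: conjugating $\Box_{g_q}\psi = 0$ by the factor $rt^q$ and using the double-null structure with $\d_v = t^q\d_t + \d_r$ (and the conjugate null vector), one obtains a transport-type equation for $\d_v\varphi$ along the $v$-direction with source terms that are, schematically, an angular Laplacian term $\frac{1}{r^2}\rslDelta\varphi$, a zeroth-order potential term of size $r^{-2}$ times $\varphi$ coming from the $t^{2q}$-rescaled flat Laplacian and the $\frac{3q}{t}\d_t$ damping, and genuinely ``bad'' terms proportional to $\d_v(t^q\psi)$ and lower order which are the FLRW-specific obstruction not present in Minkowski. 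The coefficient $t^{-\mu_q}$ is chosen precisely so that the worst $t$-growth from these FLRW error terms is absorbed — this is why $\mu_q$ switches behavior at $q=\tfrac12$, and why the admissible range of $p$ is capped by $2-\sigma_q$: the constant $\sigma_q$ in \eqref{eq:intro:sigma_q} measures the strength of the zeroth-order potential relative to the $r^p$-weight, and positivity of the resulting bulk quadratic form forces $p < 2-\sigma_q$ when $\sigma_q > 0$ (while for $\sigma_q=0$, i.e. $q\in\{0,\tfrac12\}$, one recovers the full Minkowski-type range $p<2$, hence the $\max\{1,2-\sigma_q\}$).

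The key steps, in order, are: (i) multiply the $\varphi$-equation by $t^{-\mu_q}r^p\,\d_v\varphi$ and integrate by parts in $v$ to produce the boundary fluxes on $\C_{\tau_1}$ and $\C_{\tau_2}$ and the bulk term $\sim \int t^{-\mu_q} p\,r^{p-1}|\d_v\varphi|^2$; (ii) handle the angular term: integration by parts in $\omega$ converts $\int t^{-\mu_q}r^{p-2}(\rslDelta\varphi)\,\d_v\varphi$ into $\sim -\tfrac12\d_v\!\big(\int t^{-\mu_q}r^{p-2}|\rslnabla\varphi|^2\big)$ plus a good $(2-p)r^{p-3}|\rslnabla\varphi|^2$ bulk term (positive since $p<2$) plus a commutator with $\d_v t^{-\mu_q}$ which is controlled because $\d_v t^{-\mu_q}$ has a favorable sign or is lower order; (iii) handle the zeroth-order potential similarly, producing a bulk term of the form $\tfrac12(2-p-\sigma_q)\,t^{-\mu_q}r^{p-3}\varphi^2$ up to the delicate completing-the-square step that fixes the $\sigma_q$ threshold; (iv) estimate the FLRW error terms involving $\d_v(t^q\psi)$ and lower order by Cauchy–Schwarz, absorbing part into the good bulk terms and part into the right-hand side using Theorem \ref{thm:intro:boundedness} (energy boundedness) and Theorem \ref{thm:itro:Morawetz} (ILED) — this is where the $t^{-\mu_q}$ weight and the ILED bulk integrand (which contains exactly $t^{-\mu_q}t^q|\d_t(t^q\psi)|^2$-type quantities with $r$-weights) are tailored to match; (v) recombine to get, for $p$ in the stated range, a bulk term controlling $|\d_v\varphi|^2$, $r^2|\d_v(t^q\psi)|^2$, $r^{-2}|\rslnabla\varphi|^2$, and $r^{-2}\varphi^2$ against the $\C_{\tau_1}$ flux plus $\E_q[\psi](\tau_1)$; (vi) since the argument is run on $\D^{\tau_2}_{\tau_1}$ which has both a null boundary piece $\C_{\tau_i}$ and a spacelike/interior interface at $\{r=\rho\}$, control the flux across $\{r=\rho\}$ by the ILED estimate of Theorem \ref{thm:itro:Morawetz} restricted to bounded $r$, closing the estimate.

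The main obstacle I expect is step (iv) combined with the borderline algebra of step (iii): unlike Minkowski, the equation for $\varphi$ has a zeroth-order term with the ``wrong'' sign structure and a source $\d_v(t^q\psi)$ that does not decay fast enough in $t$ on its own, so one must carefully exploit the precise $t$-power $\mu_q$ (which itself depends on whether $q\lessgtr\tfrac12$) and play the $r^{p-3}\varphi^2$ bulk term against the potential — a genuine Hardy-type / completing-the-square argument whose positivity is exactly the constraint $p < 2-\sigma_q$. A secondary subtlety is that the various integrations by parts in $v$ produce commutator terms with $\d_v(t^{-\mu_q})$ and $\d_v(r^p)$; one must check these all have the right sign or are borrowable, and that the $q>\tfrac12$ versus $q\le\tfrac12$ dichotomy (already flagged in the excerpt as the reason statements and proofs differ across the threshold) is handled by two parallel computations rather than one uniform estimate. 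The compact support assumption on the data, and the fact that $\psi$ is smooth up to $\I^+$, ensure all boundary terms at $r\to\infty$ vanish so no extra decay input is needed there.
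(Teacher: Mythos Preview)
Your overall multiplier strategy --- multiply the $\varphi$-equation by $t^{-\mu_q}r^p\,\d_v\varphi$, integrate by parts, and control the $\{r=\rho\}$ interface via ILED --- matches the paper's. But there are two genuine gaps and one misconception.

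First, the equation for $\varphi = rt^q\psi$ is cleaner than you describe: it reads $-\d_{uv}\varphi + r^{-2}\rslDelta\varphi - \W\varphi = 0$ with $\W = q(1-2q)t^{2q-2}$. There are no first-order source terms proportional to $\d_v(t^q\psi)$; the only FLRW-specific obstruction is the zeroth-order potential $\W\varphi$. The term $r^2|\d_v(t^q\psi)|^2$ in the final bulk is recovered afterwards by untwisting $\d_v\varphi$, not by a separate error estimate.

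Second, your error-absorption step (iv) misses the key geometric input. After multiplying by $-2t^{-\mu_q}r^p\d_v\varphi$ and adding a $\gamma$-multiple of a zeroth-order identity, the sole error is $2t^{-\mu_q}r^{p-2}(\gamma + r^2|\W|)\,\varphi\,\d_v\varphi$. To absorb this by Cauchy--Schwarz against the good bulk terms $p\,r^{p-1}|\d_v\varphi|^2$ and $(2-p)\gamma\,r^{p-3}\varphi^2$ you need the sharp pointwise bound $r^2t^{2q-2} \leq (1-q)^{-2}$, which holds only in $\U^+=\{u>0\}$; this is precisely where $\sigma_q = 2\sqrt{\gamma}$ with $\gamma=\frac{q|1-2q|}{(1-q)^2}$ enters, yielding the range $\sigma_q<p<2-\sigma_q$. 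In the complementary region $\{u_0\leq u\leq 0\}$ the constant is worse and the paper closes instead by a Gr\"onwall argument in $\tau$ on the null flux $\int_{\C_\tau} t^{-\mu_q}r^p|\d_v\varphi|^2$, harmless since $\tau$ is bounded there. ILED is not used to absorb the bulk error; it enters only for the $\{r=\rho\}$ flux, via a mean-value trick on $\{r\sim\rho\}$.

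Third, and most seriously, the above gives only $\sigma_q<p<2-\sigma_q$, which is empty when $\sigma_q\geq 1$ (e.g.\ $q\geq\frac{1+2\sqrt 2}{7}$ or $q=\frac{1}{3}$). To get the full range $0<p<1$ across all $q$ --- and hence the $\max\{1,2-\sigma_q\}$ in the statement --- the paper uses a genuinely different computation: it expands $p\,r^{p-1}|\d_v\varphi|^2$, the commutator $\mu_q\,t^{q-1-\mu_q}r^p|\d_v\varphi|^2$, and the error $2r^p\W\varphi\,\d_v\varphi$ in terms of the \emph{untwisted} quantities $t^{2q}|\d_v(r\psi)|^2$ and $\psi^2$, and finds that their sum is manifestly non-negative (plus total derivatives) for $0<p\leq 1$. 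This is not completing-the-square on the twisted quantities but an algebraic identity revealed only after untwisting, and it is the essential missing ingredient in your plan.
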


We remark that in the radiation case, \textit{i.\@e.\@}\ $q=\frac{1}{2}$, and the Minkowski spacetime, suitably adjusted $r^p$-estimates are valid for all $0\leq p \leq 2$; see estimate \eqref{eq:radiation:r2Estimate}. The above theorem, if restricted to $q\sim \frac{1}{2}$, can be viewed as a perturbation of the radiation case with small loss. 

Given that the three components of the $r^p$-method have now been established, the corollary below follows directly from general functional inequalities discussed in Section \ref{sec:functional_ineq}.
\begin{cor}
    [Decay of waves on FLRW]\label{cor:general:decay}
    Suppose $\varepsilon>0$, $0<q<1$ and $q\neq \frac{1}{3}$, and consider $\sigma>\sigma_q$. If $\psi$ satisfies the initial value problem \eqref{eq:intro:IVP-wave}, then the following statements hold for all $\tau\geq \tau_0$.
    \begin{enumerate}
        \item \textbf{(Decay of energy)}. The energy flux $  \E_q^\varepsilon[\psi] (\tau)$ and $\E_q[\psi] (\tau)=\E_q^0[\psi] (\tau)$ decay as
        \begin{align*}
              \E_q[\psi] (\tau)&\lesssim \frac{\E_q[\psi_0,\psi_1]}{\tau^{2}}& {\rm{ if }} \quad& q=\frac{1}{2}\,,\\
              \E_q[\psi] (\tau)&\lesssim \frac{\E_q[\psi_0,\psi_1]}{\tau^{2-\sigma}}&{\rm{ if }}\quad & q<\frac{1+2\sqrt2}{7}\,,\\
              \E_q^\varepsilon[\psi] (\tau)&\lesssim \frac{\E_q[\psi_0,\psi_1]}{\tau}&{\rm{ if }}\quad &\frac{1+2\sqrt2}{7}\leq q<1\,,
        \end{align*}
        where $\E_q[\psi_0,\psi_1]$ denotes a weighted initial energy of $\psi$ defined in \eqref{eq:decay:initial-energy}.
        \item \textbf{(Pointwise decay of \boldmath$\psi$)}. The solution $\psi$ satisfies
        \begin{align*}
            |\psi|&\lesssim \frac{\bar\E_q[\psi_0,\psi_1]^\frac{1}{2}}{  t^q \tau \sqrt{1+r}}\,, &|\psi|&\lesssim\frac{\bar\E_q[\psi_0,\psi_1]^\frac{1}{2}}{(1+r)  t^q  \sqrt \tau}&{\rm{ if }}\quad &q=\frac{1}{2}\,,\\
            |\psi|&\lesssim\frac{ \bar\E_q[\psi_0,\psi_1]^\frac{1}{2} }{ t^q  \tau^{1-\frac{\sigma}{2}}\sqrt{1+r}}\,,&|\psi|&\lesssim \frac{\bar\E_q[\psi_0,\psi_1]^\frac{1}{2}}{(1+r)  t^q  \sqrt {\tau^{1-\frac{\sigma}{2}}}}&{\rm{ if }}\quad&q<\frac{1}{2}\,,\\
            |\psi|&\lesssim \frac{\bar\E_q[\psi_0,\psi_1]^\frac{1}{2}}{ t^{1-q}\tau^{1-\frac{\sigma}{2}}  \sqrt{1+r}}\,,&|\psi|&\lesssim \frac{\bar\E_q[\psi_0,\psi_1]^\frac{1}{2}}{(1+r)  t^{1-q}  \sqrt {\tau^{1-\frac{\sigma}{2}}}}&{\rm{ if }}\quad&\frac{1}{2}<q<\frac{1+2\sqrt 2}{7}\,,\\
            |\psi|&\lesssim \frac{\bar\E_q[\psi_0,\psi_1]^\frac{1}{2}}{ t^{1-q-\frac{\varepsilon}{2}}\sqrt \tau  \sqrt{1+r}}\,,&|\psi|&\lesssim \frac{ \bar\E_q[\psi_0,\psi_1]^\frac{1}{2}}{(1+r)  t^{1-q-\varepsilon}  }&{\rm{ if }}\quad&\frac{1+2\sqrt 2}{7}\leq q<1\,,
            \end{align*}
        where $\bar \E_q[\psi_0,\psi_1]$ is a high order weighted initial energy of $\psi$ containing derivatives up to order three defined in \eqref{eq:decay:h-initial-energy}.
        In particular, we have
        \begin{equation}\label{eq:intro:decay-pointwise-t}
            \begin{aligned}
                |\psi|&\lesssim \frac{1}{  t}\bar\E_q[\psi_0,\psi_1]^\frac{1}{2} &{\rm{ if }}\quad &q=\frac{1}{2}\,,\\
            |\psi|&\lesssim\frac{  \tau^{\frac{\sigma}{2}}}{ t}\bar\E_q[\psi_0,\psi_1]^\frac{1}{2}&{\rm{ if }}\quad&q<\frac{1}{2}\,,\\
            |\psi|&\lesssim\frac{  \tau^{\frac{\sigma}{2}}}{ t^{2-2q}}\bar\E_q[\psi_0,\psi_1]^\frac{1}{2}&{\rm{ if }}\quad&\frac{1}{2}<q<\frac{1+2\sqrt 2}{7}\,,\\
            |\psi|&\lesssim\frac{  \tau^{\frac{1}{2}}}{ t^{2-2q-\varepsilon}}\bar\E_q[\psi_0,\psi_1]^\frac{1}{2}&{\rm{ if }}\quad&\frac{1+2\sqrt 2}{7}\leq q<1\,.
            \end{aligned}
        \end{equation}
        \item \textbf{(Pointwise decay of derivatives of $\psi$)}. The time derivative of the solution $\psi$ satisfies 
        \begin{align*}
            |\d_t\psi|&\lesssim \frac{\bar{\bar\E}[\psi_0,\psi_1]^\frac{1}{2} }{ \sqrt \tau t^{2q} (1+r)}&{\rm{ if }}\quad &q=\frac{1}{2}\,,\\
            |\d_t\psi|&\lesssim\frac{\bar{\bar\E}_q[\psi_0,\psi_1]^\frac{1}{2}}{ \tau^{-\frac{\sigma}{2}} t^{2q}  (1+r)} &{\rm{ if }}\quad&q\leq\frac{1}{2}\,,\\
            |\d_t\psi|&\lesssim \frac{\bar{\bar\E}_q[\psi_0,\psi_1]^\frac{1}{2} }{\tau^{-\frac{\sigma}{2}}  t^{1}  (1+r)}&{\rm{ if }}\quad&\frac{1}{2}<q<\frac{1+2\sqrt 2}{7}\,,\\
            |\d_t\psi|&\lesssim \frac{\bar{\bar\E}_q[\psi_0,\psi_1]^\frac{1}{2} }{\tau^{-\frac{1}{2}}  t^{1-\frac{\varepsilon}{2}}  (1+r)}&{\rm{ if }}\quad&\frac{1+2\sqrt 2}{7}\leq q<1\,,
            \end{align*}
        with $\bar{\bar \E}[\psi_0,\psi_1]$ and $\bar{\bar \E}_q[\psi_0,\psi_1]$ denoting high order weighted initial energy fluxes of $\psi$ containing derivatives up to order four and five defined in \eqref{eq:decay:hh-initial-energy-radiation} and \eqref{eq:decay:hh-initial-energy}, respectively. Moreover, the following estimates hold for null derivatives $\d_v$ and $\d_u$.
        \begin{align*}
            |\d_v\psi|&\lesssim \frac{\bar{\bar\E}_q[\psi_0,\psi_1]^\frac{1}{2} \log \tau } {t^q (1+r)^2 }\,, &|\d_u\psi|&\lesssim\frac{\bar{\bar\E}_q[\psi_0,\psi_1]^\frac{1}{2} \log \tau}{t^q (1+r) }&{\rm{if }}\quad &q=\frac{1}{2}\,,\\
            |\d_v\psi|&\lesssim\frac{ \bar{\bar\E}_q[\psi_0,\psi_1]^\frac{1}{2} }{\tau^{-\frac{\sigma}{2}}  t^q  (1+r)^2}\,,&|\d_u\psi|&\lesssim \frac{\bar{\bar\E}_q[\psi_0,\psi_1]^\frac{1}{2}}{ \tau^{-\frac{\sigma}{2}}  t^q  (1+r)}&{\rm{ if }}\quad&q\leq\frac{1}{2}\,,\\
            |\d_v\psi|&\lesssim\frac{ \bar{\bar\E}_q[\psi_0,\psi_1]^\frac{1}{2} }{\tau^{-\frac{\sigma}{2}} t^{1-q}  (1+r)^2}\,,&|\d_u\psi|&\lesssim \frac{\bar{\bar\E}_q[\psi_0,\psi_1]^\frac{1}{2}}{ \tau^{-\frac{\sigma}{2}} t^{1-q}  (1+r)}&\rm{ if }\quad&\frac{1}{2}<q<\frac{1+2\sqrt 2}{7}\,,\\
            |\d_v\psi|&\lesssim\frac{ \bar{\bar\E}_q[\psi_0,\psi_1]^\frac{1}{2} }{\tau^{-\frac{1}{2}} t^{1-q-\frac{\varepsilon}{2}}  (1+r)^2}\,,&|\d_u\psi|&\lesssim \frac{\bar{\bar\E}_q[\psi_0,\psi_1]^\frac{1}{2}}{ \tau^{-\frac{1}{2}} t^{1-q-\frac{\varepsilon}{2}}  (1+r)}&\rm{ if }\quad&\frac{1+2\sqrt 2}{7}\leq q<1\,.
            \end{align*}
    \end{enumerate}
\end{cor}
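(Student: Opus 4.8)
The plan is to feed the three ingredients now in hand — energy boundedness (Theorem~\ref{thm:intro:boundedness}), integrated local energy decay (Theorem~\ref{thm:itro:Morawetz}), and the $r^p$-hierarchy (Theorem~\ref{thm:intro:rp}) — into the abstract functional inequalities of Section~\ref{sec:functional_ineq}, in three stages: first extract the energy decay of part~(1); then promote it to the pointwise bounds of part~(2) by commuting with the isometries of $g_q$ and applying Sobolev embedding; finally obtain the derivative bounds of part~(3) by commuting in addition with the conformal vector field $T:=t^q\d_t$.

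\emph{Energy decay.} Applying \eqref{eq:intro:rpEstimate} at exponent $p$ and again at a slightly smaller exponent, the bulk term $\int_{\D^{\tau_2}_{\tau_1}}t^{-\mu_q}r^{p-1}(\cdots)$ on one level controls the $\C_\tau$-flux $\int_{\C_\tau}t^{-\mu_q}r^{p}|\d_v\varphi|^2$ on the next; adjoining the ILED estimate \eqref{eq:intro:ILED}, which absorbs the contribution of $\{r\lesssim 1\}$, produces $\int_{\tau_1}^{\tau_2}\E_q[\psi](\tau)\,d\tau\lesssim \E_q[\psi_0,\psi_1]$ together with its analogue one level up. A mean-value choice of a dyadic sequence $\tau_n\simeq 2^n\tau_0$ then forces $\E_q[\psi](\tau_n)$ to decay at the rate set by the highest admissible $p$, and Theorem~\ref{thm:intro:boundedness} propagates this to all $\tau$. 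By \eqref{eq:intro:range-p} that exponent is $\max\{1,2-\sigma_q\}$ (up to an $\varepsilon$-loss when it is not attained), which reproduces the trichotomy of part~(1): the improved estimate \eqref{eq:radiation:r2Estimate} reaching $p\le 2$ gives $\tau^{-2}$ for $q=\tfrac12$; for $q<\tfrac{1+2\sqrt2}{7}$ one has $2-\sigma_q>1$ and obtains $\tau^{-(2-\sigma)}$; and for $q\ge\tfrac{1+2\sqrt2}{7}$, where $\sigma_q\ge 1$, the hierarchy only reaches $p$ close to $1$, and since $1-q$ must stay bounded below one pays the $\varepsilon$ and lands at $\tau^{-1}$. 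The $t^{-\mu_q}$ (or $t^{-\mu_q^\varepsilon}$) weights are carried along by the foliation-adapted functional inequalities of Section~\ref{sec:functional_ineq}.

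\emph{Pointwise decay.} The rotations $\Omega_{ij}=x^i\d_j-x^j\d_i$ and the translations $\d_{x^i}$ are Killing for $g_q$, so $\Omega^\alpha\d_x^\beta\psi$ again solves \eqref{eq:intro:IVP-wave} and the energy decay above applies to it with data bounded by the higher-order weighted energy $\bar\E_q[\psi_0,\psi_1]$. Writing $\varphi=rt^q\psi$, a one-dimensional Sobolev (Hardy/trace) inequality along the outgoing null direction $\d_v$ on $\C_\tau$ bounds $\varphi$ pointwise in $r$ by the $r^p$-flux, with an $r$-loss depending on $p$, and the embedding $L^\infty(\SS^2)\hookrightarrow H^2(\SS^2)$ applied to $\varphi,\rslnabla\varphi,\rslnabla^2\varphi$ --- equivalently to $\Omega^{\le 2}\psi$ --- upgrades this to a pointwise bound on $\psi$ in the wave zone; in the region $\{r\lesssim 1\}$ one instead uses the $t^q\psi^2/(1+r^{3+\delta})$ term of \eqref{eq:intro:ILED} together with elliptic estimates and the energy decay. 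Inserting the rates from the previous stage yields the stated bounds, the two versions in each line arising from optimising the $\d_v$-Sobolev $r$-loss against the decay rate at two consecutive levels of the hierarchy: the $\sqrt{1+r}$-weighted bound keeps the full $\tau$-decay, the $(1+r)$-weighted bound trades it for $\sqrt{\tau}$. The change of $t$-exponent from $t^{q}$ to $t^{1-q}$ across $q=\tfrac12$ reflects the switch of the dominant weight in $\E_q$ (cf.\@ Remark~\ref{rmk:DefEnergy}).

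\emph{Derivative decay, and the main obstacle.} For part~(3) one commutes additionally with $T=t^q\d_t$, which is a conformal Killing field of $g_q$ (it is $\d_\eta$ in conformal time $d\eta=t^{-q}dt$, and $g_q$ is $t^{2q}$ times a translation-invariant metric). A direct computation gives the clean identity $[\Box_{g_q},T]\psi=-2q(1-q)t^{-2}\,(T\psi)$ on solutions --- all second-order error terms cancel after using $\Box_{g_q}\psi=0$ --- so $T\psi$ satisfies $\Box_{g_q}(T\psi)=-2q(1-q)t^{-2}(T\psi)$, a wave equation with a potential of size $t^{-2}$, which is integrable in $t$ and absorbed by a Gr\"onwall argument inside the weighted energy estimates. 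Running the two previous stages on $T\psi$ (and on $\d_{x^i}\psi$, which solves \eqref{eq:intro:IVP-wave} exactly) controls $\d_t\psi=t^{-q}T\psi$ and, via $\d_v=T+\d_r$, $\d_u=T-\d_r$ with $\d_r\psi=\tfrac{x^i}{r}\d_i\psi$, the null derivatives; $\d_v$ being the good null direction it gains an extra power of $1+r$ from the $r^p$-flux, whereas $\d_u$ only inherits the $\psi$-rate, and the borderline $p=2$ case at $q=\tfrac12$ produces the $\log\tau$ loss. The higher-order initial energies $\bar{\bar\E}[\psi_0,\psi_1]$, $\bar{\bar\E}_q[\psi_0,\psi_1]$, with derivatives up to order four and five, are exactly what absorb the potential and commutator errors at each step. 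I expect the delicate point to be this weight bookkeeping: checking that the twisted, $t$-weighted, $\mu_q$-shifted (and $\varepsilon$-lossy) estimates of Theorems~\ref{thm:intro:boundedness}--\ref{thm:intro:rp} still close when run with the $T$-commutator potential and with $\Delta\psi$-type sources, uniformly across the threshold $q=\tfrac12$, where $\mu_q$ turns on, and as $q\to 1$, where the $\varepsilon$-loss is forced --- everything else being the routine insertion of the three theorems into the functional inequalities of Section~\ref{sec:functional_ineq}.
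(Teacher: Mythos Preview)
Your strategy for parts~(1) and~(2) matches the paper's: feed Theorems~\ref{thm:intro:boundedness}--\ref{thm:intro:rp} into Lemma~\ref{lem:general:hierarchy-estimate} (with the interpolation Lemma~\ref{lem:interpolation-r^p}) for the energy decay, then integrate $\d_v(t^{q-\mu_q^\varepsilon/2}\psi)$ along $\C_\tau$ for the $\sqrt{1+r}$ bound and $\d_v\varphi$ for the $(1+r)$ bound, commute with $\Omega_i$ and $\d_{x^i}$, and apply the Sobolev inequalities of Proposition~\ref{prop:general:Sobolev}.

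For part~(3), however, the paper does \emph{not} commute with $T=t^q\d_t$ (except in the special case $q=\tfrac12$, where $\d_t(\sqrt t\,\psi)$ is again an exact solution; cf.\ Remark~\ref{rmk:radiation:commuting} and Proposition~\ref{prop:decay:commuting-radiation}). Instead it uses the wave equation itself as the pointwise identity
\[
\d_{uv}\varphi \;=\; \tfrac{1}{r^2}\rslDelta\varphi - t^{2q}\V\,\varphi,
\]
bounds the right-hand side via the pointwise decay of $\psi$, $\Omega_i\Omega_j\psi$ and $\d_{x^i}\d_{x^j}\psi$ already obtained in part~(2) (using Lemma~\ref{lem:general:angular-derivatives} and $rt^{q-1}\lesssim 1$), and then integrates in $u$ from $\Sigma_{\tau_0}$ to obtain $|\d_v\varphi|$, and in $v$ from $\{r=0\}$ or $\S_{\tau_0}$ to obtain $|\d_u\varphi|$. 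The extra factor of $(1+r)^{-1}$ for $\d_v\psi$ comes precisely from this: the $u$-integral of $\tau^{\sigma/2-1}(1+r)^{-3/2}$ at fixed $v$ gains nothing in $\tau$ but keeps the $(1+r)^{-3/2}$, whereas the $v$-integral at fixed $u$ cannot gain in $r$; the $\log\tau$ at $q=\tfrac12$ is exactly the $\sigma=0$ integral $\int^\tau \tau'^{-1}d\tau'$.

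Your $T$-commutation route can recover the $\d_t\psi$ bound --- your identity $[\Box_{g_q},T]\psi=-2q(1-q)t^{-2}(T\psi)$ on solutions is correct, and the resulting potential has the same $t^{-2}$ scaling as $\V$, so the estimates of Section~\ref{sec:EnergyEstimates} can in principle be rerun --- but it does \emph{not} give the improved $\d_v\psi$ rate. Writing $\d_v=T+\d_r$ and estimating each summand by its own pointwise decay yields only the $\d_u$-rate, since the gain for $\d_v$ is a cancellation between $T\psi$ and $\d_r\psi$ that is invisible term-by-term; and ``the $r^p$-flux'' only controls $\int r^p|\d_v\varphi|^2\,dv$, not $|\d_v\varphi|$ pointwise, so you would still need either to commute with $\d_v$ and close a separate hierarchy, or to revert to the paper's direct integration of $\d_{uv}\varphi$.
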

Note that, in the ``wave zone'' (that is, for bounded $\tau$), the estimates \eqref{eq:intro:decay-pointwise-t} are optimal for $q\leq\frac{1}{2}$ and is almost optimal for $q\sim \frac{1}{2}$.
\begin{rmk}[The case $q=\frac{1}{3}$]\label{rmk:1/3}
    The case $q=\frac{1}{3}$ is excluded in Corollary \ref{cor:general:decay} as the proof given in this article requires $\sigma_q < 1$ (recall that $\sigma_{\frac{1}{3}}=1$). However, since Theorem \ref{thm:intro:rp} remains valid for $0<p<1$ when $q=\frac{1}{3}$, a modified version of Corollary \ref{cor:general:decay}, with an arbitrarily small loss, can also be established for $q=\frac{1}{3}$. This is analogous to the situation when $q\geq\frac{1+2\sqrt 2}{7}$. See Remark \ref{rmk:decay:1/3}.
\end{rmk}
\begin{rmk}[Definition of the energy flux]\label{rmk:DefEnergy}
    The twisted energy flux considered in this paper is defined as
\begin{align}\label{eq:intro:energy-sigma-tau}
        \begin{split}
        \E_q^\varepsilon[\psi] (\tau):= &\int_{\S_{\tau}}t^{-\mu_q^\varepsilon}t^{2q} \left( |\d_t(t^{q}\psi)|^2+|\d_r\psi|^2+\frac{1}{r^2}|\d_r(r\psi)|^2+\frac{1}{r^2}|\rslnabla\psi|^2+  t^{2q-2}\psi^2 + \frac{1}{1+r^2}\psi^2\right)\,dx \\
            &+\int_{\C_{\tau}} t^{-\mu_q^\varepsilon}\left(\frac{1}{r^2}|\d_v(rt^q\psi)|^2+|\d_v(t^q\psi)|^2+\frac{t^{2q}}{r^2}|\rslnabla\psi|^2+t^{4q-2} \psi^2+ \frac{t^{2q}}{r^2}\psi^2\right)\,r^2dvd\omega\,,
        \end{split}
\end{align}
with $\mu_q^\varepsilon$ as in \eqref{eq:intro:mu} for any $\varepsilon\geq 0$. For definition of null and angular derivatives refer to Section \ref{sec:metric-wave-foliation}.
\end{rmk}
\begin{rmk}[Non-sharp decay from energy boundedness]\label{rmk:intro:decay-from-EB}
    Note that a non-sharp pointwise decay can be induced just from the energy boundedness, Theorem \ref{thm:intro:boundedness}. For the sake of simplicity, we consider a simplified formulation of energy flux defined on the spacelike hypersurfaces $\{t=\text{constant}\}$ rather than on $\Sigma_\tau$. Then, for $0<q<1$, any $t_2>t_1$, and for any solution $\psi$ originating from compactly supported initial data on $\{t=\text{constant}\}$, the following energy boundedness holds\footnote{with $|\nabla \psi|^2=\sum_{i=1,2,3}|\d_{x^i}\psi|^2$.}
    \begin{align}\label{eq:intro:standard-energy}
    \begin{split}
        \int_{\{t=t_2\}} &t^{2q-\mu_q}\left(|\d_t (t^q\psi)|^2+|\nabla \psi|^2+t^{2q-2}|2q-1|\psi^2 \right)\,dx \\&\lesssim\int_{\{t=t_1\}} t^{2q-\mu_q}\left(|\d_t (t^q\psi)|^2+|\nabla \psi|^2+t^{2q-2}|2q-1|\psi^2 \right)\,dx \,.
    \end{split}
    \end{align}
    The estimate above obviously implies the decay of an unweighted energy flux. Furthermore, commuting with $\d_{x^i}$ for $i=1,2,3$ and using Sobolev inequalities, Proposition \ref{prop:general:Sobolev}, the above estimate immediately yields the following pointwise decay
    \begin{align*}
        |\psi(t,x)|&\lesssim \frac{
            \tilde \E_q[\psi_0,\psi_1]}{ t^{q}}\quad &\text{ if }\quad 0<q\leq \frac{1}{2}\,,\\
          |\psi(t,x)|&\lesssim\frac{ 
            \tilde \E_q[\psi_0,\psi_1]}{ t^{1-q}}\quad &\text{ if }\quad \frac{1}{2}<q<1\,,
    \end{align*}
    with $\tilde \E_q[\psi_0,\psi_1]$ denoting a suitable weighted higher order initial energy on $\{t=t_0\}$. Observe that the decay rate described above is weaker than those established in Corollary \ref{cor:general:decay}. Since the standard energy fluxes along the spacelike hypersurfaces $\{t=\text{constant}\}$ in Minkowski spacetime are conserved,
    the energy decay \eqref{eq:intro:standard-energy} in FLRW spacetimes can be interpreted as a consequence of \emph{cosmological expansion}. On the other hand, the stronger decay presented in Corollary \ref{cor:general:decay} additionally incorporates the \emph{dispersive} properties of waves. See also Remark \ref{rmk:torus}.

   Moreover, by refining the use of Sobolev inequalities and commutators, the pointwise decay estimates above can be improved while still relying only on the energy boundedness result, namely Theorem \ref{thm:intro:boundedness}. Indeed we have,
    \begin{align*}
        |\psi(t,x)|&\lesssim \frac{
            \tilde \E_q[\psi_0,\psi_1]}{ t^{q}\sqrt{1+r}}\quad &\text{ if }\quad 0<q\leq \frac{1}{2}\,,\\
          |\psi(t,x)|&\lesssim\frac{ 
            \tilde \E_q[\psi_0,\psi_1]}{ t^{1-q}\sqrt{1+r}}\quad &\text{ if }\quad \frac{1}{2}<q<1\,.
    \end{align*}
    See Remark \ref{rmk:decay:pointwise-general}.
\end{rmk}
\begin{rmk}[Waves on FLRW with $\TT^3$ spatial topology]\label{rmk:torus}
    Estimate \eqref{eq:intro:standard-energy} remains valid even when the spatial topology is changed from $\RR^3$ to \emph{compact} $\TT^3$ topology, where dispersive effects are absent. Again commuting with $\d_{x^i}$ and using the Sobolev inequality \eqref{eq:general:Sobolev:TT} yield
    \begin{align*}
        \Big\vert\psi(t,x)-\frac{1}{|\TT^3|}\int_{\TT^3}\psi\,dx\Big\vert&\lesssim \frac{
            \tilde \E_q[\psi_0,\psi_1]}{ t^{q}}\quad &\text{ if }\quad 0<q\leq \frac{1}{2}\,,\\
          \Big\vert\psi(t,x)-\frac{1}{|\TT^3|}\int_{\TT^3}\psi\,dx\Big\vert&\lesssim\frac{ 
            \tilde \E_q[\psi_0,\psi_1]}{ t^{1-q}}\quad &\text{ if }\quad \frac{1}{2}<q<1\,.
    \end{align*}
    We remark that subtracting the average is essential, as the constant functions satisfy the wave equation.
\end{rmk}
\subsection{Related works}
There have been a number of previous works on wave-type equations on FLRW spacetimes \eqref{eq:intro:Metric-x}, some of which we briefly mention below.
\subsubsection{Explicit representation formulae via transformation}
The mathematical study of the wave equation on FLRW spacetimes can be traced back to Klainerman--Sarnak's work \cite{KlainermanSarnak} on the dust model, \textit{i.\@e.\@}\ $q=\frac{2}{3}$, with spatial curvature $K\in\{-1,0,1\}$, where they provided explicit representation formula for the wave equation. In the spatially flat case, they considered the transformation $\Psi=t^\frac{1}{3}\d_t(t\psi)$ and reduced the wave equation on FLRW to the standard wave equation on Minkowski, where a representation formula in terms of spherical means is given. Building on this representation formula for the FLRW dust model, \cite{AbbasiCraig} later showed that waves do not necessarily satisfy the strong Huygens principle, and moreover, in the flat case, they decay at the rate $t^{-1}$. See also \cites{GKY10WaveDust, GalstianYagdjian14WaveDust}.

The transformation method in \cite{KlainermanSarnak} has been applied to more general FLRW spacetimes in \cite{NatarioRossetti}, where Nat{\'a}rio--Rossetti found explicit representation formula for solutions of wave equation in a number of cosmological spacetimes including the radiation-filled FLRW universe, \textit{i.\@e.\@}\ $q=\frac{1}{2}$. In particular, they proved the $t^{-1}$ decay rate for waves in the spatially flat radiation case by transforming the equation to the wave equation on Minkowski spacetime where representation formulas are available. Furthermore, Nat{\'a}rio--Rossetti, using Wirth's work \cite{Wirth04} on ``weakly dissipative waves'', showed that solutions to the wave equation $\Box_{g_q} \psi=0$ on the spatially flat FLRW spacetimes arising from initial data belonging to appropriate Sobolev spaces decay as
\begin{align}\label{eq:intro:decay-rate}
    |\psi(t,\cdot)|\lesssim \begin{cases}
        t^{-1}\qquad &0\leq q\leq \frac{2}{3}\,,\\
        t^{-3(1-q)}\qquad &\frac{2}{3}<q<1\,,\\
        (\log t)^{-\frac{2}{3}}\qquad &q=1\,.
    \end{cases}
\end{align}
They also formulated a similar conjecture in the spatially hyperbolic case. The decay rate \eqref{eq:intro:decay-rate} across the entire decelerated regime was proved in \cite{NatarioRossetti} by transforming the wave equation on FLRW spacetimes into a wave equation with \emph{weak dissipation} on Minkowski space, for which the results were already known from \cite{Wirth04}. The details of this approach are presented in the following lines.

\subsubsection{Conformal weakly dissipative wave equation}
Observe that by defining the \emph{conformal time} coordinate 
$$t^*=\int \frac{dt}{a(t)}=\frac{t^{1-q}}{1-q},$$
the metric \eqref{eq:intro:Metric-x} becomes
\begin{align*}
    g_q=a^2(t^*)\left(-(dt^*)^2  + (dx^1)^2+(dx^2)^2+(dx^3)^2\right)\,,
\end{align*}
which clearly shows that FLRW spacetimes are \emph{conformally flat}. In $(t^*,x^1,x^2,x^3)$ coordinates, 
the wave operator turns to
\begin{align}\label{eq:intro:wave-operator-conformal}
    \Box_{g_q} \Psi = \frac{1}{t^{2q}}\left( -\d_{t^*}\d_{t^*} \Psi -\frac{2q}{(1-q) t^*}\d_{t^*} \Psi +\Delta_{\RR^3}\Psi \right)\,.
\end{align}
One therefore can study the ``weakly dissipative'' wave equation
\begin{equation}\label{eq:intro:weakly-dissipative-wave}
    -\d_{t^*}\d_{t^*} \Psi -\frac{\alpha}{ t^*}\d_{t^*} \Psi +\Delta_{\RR^3}\Psi=0\,,
\end{equation}
 on Minkowski spacetime for $t^*\geq 1$ and with $\alpha=\frac{2q}{1-q}$, instead of studying $\Box_{g_q} \psi=0$. The study of equation \eqref{eq:intro:weakly-dissipative-wave} goes back to \cite{Matsumura} and \cite{Uesaka}. In fact, Uesaka \cite{Uesaka} proved that if $\Psi$ solves equation \eqref{eq:intro:weakly-dissipative-wave} with compactly supported initial data on $\{t^*={\text{constant}}\}$, then the standard energy decays as
 $$\int_{\RR^3} \left(|\d_{t^*}\Psi|^2+|\nabla \Psi |^2\right)\,dx\lesssim  \frac{\fC}{{t^*}^{\min \{2,\alpha\}}}\,,$$
 which corresponds to $t^{-\min\{2q,2(1-q)\}}$ decay rate for energy of solution $\psi$ of the FLRW wave equation.
 It is worth mentioning that Uesaka's result and method are comparable and compatible with Theorem \ref{thm:intro:boundedness} and its proof. See Remark \ref{rmk:intro:decay-from-EB}.

Wirth \cite{Wirth04} showed that  the Fourier image $\hat \Psi (t^*,\xi)$ of the solution $\Psi$ solves an ODE that can be transformed to Bessel's differential equation, and therefore,  an explicit representation of solutions to \eqref{eq:intro:weakly-dissipative-wave} in terms of Bessel functions was derived. In \cite{Wirth04}, the special structure of the representation and the properties of the Bessel functions were carefully exploited to obtain $L^{\bar p}$–$L^{\bar q}$ estimates for the solution. The decay rate \eqref{eq:intro:decay-rate} was proved in \cite{NatarioRossetti} by letting $\bar p=1$ and $\bar q=\infty$ in this $L^{\bar p}$–$L^{\bar q}$ estimate.

 For equation \eqref{eq:intro:weakly-dissipative-wave} with $|\Psi|^\beta$-non-linearity, refer to \cite{He25SemiLinearDamping}*{and references therein}.
\subsubsection{Numerical works}
The numerical work of Rossetti--Va{\~n}{\'o}-Vi{\~n}uales \cite{RossettiNumerical} considers the spherically symmetric solutions of the FLRW wave equation with Gaussian pulses as initial data. For the spatially flat case, their numerical data is in agreement with the decay rate presented in \eqref{eq:intro:decay-rate} and suggests that the time derivative of $\psi$ decays as
\begin{align*}
    |\d_t\psi(t,\cdot)|\sim \begin{cases}
        t^{-1-q}\qquad &0\leq q\leq \frac{3}{4}\,,\\
        t^{3q-4}\qquad &\frac{3}{4}<q<1\,,\\
        t^{-1}(\log t)^{-\frac{5}{2}}\qquad &q=1\,,\\
        t^{1-2q}\, \qquad &q>1\,.
    \end{cases}
\end{align*}
In addition, \cite{RossettiNumerical} suggests the following decay rate for the ``tail'' of the wave in the interior region for $q\neq \frac{1}{2}$
\begin{align*}
    \sup_{\{r<\delta t^*\}} |\psi|\sim \frac{1}{t^{3-3q}}\,, \qquad  \sup_{\{r<\delta t^*\}} |\d_t\psi|\sim \frac{1}{t^{4-3q}}\,, 
\end{align*}
with $\delta \in (0,1)$. Note that in the radiation case, the strong Huygens principle holds.

Also see the recent paper \cite{Rosseti25hyperboloidal}.
\subsubsection{Non-linear stability and instability results}
We briefly mention results on non-linear wave-type problems around FLRW spacetimes that are particularly relevant to the study of equation \eqref{eq:intro:wave-equation}.

It has been shown by Taylor \cite{Taylor24FLRWRadiation} that spacetime \eqref{eq:intro:Metric-x} (with $q=\frac{1}{2})$ is future stable as a solution to the spherically symmetric Einstein--massless Vlasov equation.

Speck \cite{Speck2013RelEuler} considered the relativistic Euler equation with a linear equation of state on fixed FLRW spacetimes $M=[1,\infty)_t\times \RR^3$ with metrics including \eqref{eq:intro:Metric-x}. In the decelerated regime, where the scale factor $a(t)$ satisfies $\ddot a<0$, he showed that when the \emph{sound speed} $c_s$ vanishes (the dust model), homogeneous isotropic ``background'' solutions\footnote{If the metric background is given by \eqref{eq:intro:Metric-x}, then the homogeneous isotropic background solutions are given by the four-velocity $v=\d_t$ and energy density $\rho=\bar\rho t^{-3q(1+c_s^2)}$ with $\bar \rho >0$ and $c_s$ denoting the sound speed.} are stable if the scale factor $a(t)$ grows suitably fast (including \eqref{eq:intro:Metric-x} with $q> \frac{1}{2}$). He also noted that the case of the radiation equation of state can be conformally transformed to Minkowski space where the celebrated work of Christodoulou \cite{Christodoulou07ShockFormation} on the \emph{shock formation} can be applied. In particular, when $c_s^2=\frac{1}{3}$, the background solutions are unstable for the entire family of decelerated expanding metrics \eqref{eq:intro:Metric-x}.
Moreover, Speck \cite{Speck2013RelEuler} showed the stability for a wider range of sound speeds in the \emph{accelerated regime}, where $\ddot a(t)>0$.

Similar results in a slightly different setting have been recently proved. On the background manifold $M=(1,\infty)_t\times \TT^3$ with the metric \eqref{eq:intro:Metric-x}, Fajman, Ofner, Oliynyk, and Wyatt  \cite{FOOW25StabilityEuler} showed that the homogenous background solutions are \emph{stable} if $c_s^2<1-\frac{2}{3q}$. When this condition is violated, numerical evidence suggests instability; \cite{FMOOW24PhaseTransition}. See also the numerical study \cite{Marshall25NumericEE} of the coupled Einstein--Euler equation.

In the accelerated regime, $\ddot a>0$, where the fast expansion can help with the ``stabilisation'', more stability results are known. For the stability of Euler--Einstein near FLRW spacetimes with positive $\Lambda$ refer to \cites{RodnianskiSpeck09irrotational, Speck12StabilityEE, HadvzicSpeck15Dust, LubbeKroon13Conformal, Oliynyk16PositiveLambda}. See also \cite{Fournodavlos22ScalarField} and \cite{Ringstorm13FutureUniverse}. For stability of the flat Milne spacetime with linear expansion, see \cites{AnderssonMoncrief11VacuumMilne, AnderssonFajman20MilneEV, BFK19KaluzaKlienMilne, BarzegarFajman22Milne,  FOW2024MilneDust}.

See also the recent papers \cite{TW25SemilinearWave} and \cite{NY25KlienGordon} on semi-linear wave and Klien--Gordon equations in FLRW spacetimes. See \cite{CNO19WaveAccelerated} for the study of the wave equation in FLRW spacetimes with accelerated expansion, \textit{i.\@e.\@}\ $\dot a(t)>0$.
\subsection{Overview of proofs of main results}
In this section, we provide an overview of the proof and the fundamental ideas underlying the main results in the current paper, notably Theorems \ref{thm:intro:boundedness}, \ref{thm:itro:Morawetz} and \ref{thm:intro:rp}. Moreover, we compare the radiation case to the well known results on Minkowski space. Then, we demonstrate the differences between the cases $q<\frac{1}{2}$ and $q>\frac{1}{2}$.

For the sake of simplicity, we here consider simplified energy fluxes on hypersurfaces of the form $\{t=\text{constant}\}$, similar to those in Remark \ref{rmk:intro:decay-from-EB}. Accordingly, set $\R=[t_1,t_2]\times\RR^3$.
\subsubsection{Waves on Minkowski space}
To begin with, we recall some established results regarding the boundedness and decay of waves on Minkowski space, formally corresponding to $q=0$ in \eqref{eq:intro:Metric-x}. Suppose $\psi$ is a solution to the standard wave equation on Minkowski space originated from compactly supported initial data on $\{t=t_0\}$. The notation we use here for the coordinate functions and regions in Minkowski space is the same as that used in FLRW spacetimes, formally setting $q=0$.

We first recall the \emph{energy-momentum tensor} $\TT_{\mu\nu}[\psi]$ associated to $\psi$, defined as 
\begin{equation*}
    \TT_{\mu\nu}[\psi]=\nabla_\mu \psi \nabla_\nu \psi -\frac{1}{2}g_{\mu\nu}\nabla^\gamma \psi \nabla_\gamma \psi\,.
\end{equation*}
If one lets 
$J^X_\mu [\psi ] = \TT_{\mu \nu} [\psi] X^\nu$ for any smooth vector field $X$, then one is left with the \emph{divergence identity}
\begin{equation}\label{eq:intro:divergence-identity}
     \nabla^\mu J^X_\mu [\psi] =  \nabla_X\psi \Box_g \psi+ {}^{(X)}\pi^{\mu \nu} \TT_{\mu \nu} [\psi]  \,,
\end{equation}
with ${}^{(X)}\pi_{\mu \nu}$ denoting the deformation tensor of $X$, defined in Section \ref{sec:TEMT}. The former term on the right-hand side obviously vanishes for solutions to the wave equation. After suitable choices of vector field $X$, divergence identity \eqref{eq:intro:divergence-identity} leads to energy estimates for solutions $\psi$ to the wave equation after integration over suitable spacetime regions.

By letting $X$ to be the Killing timelike vector field $T=\d_t$, \eqref{eq:intro:divergence-identity} leads to the following conservation law  after integrating over the region $\R$,
\begin{equation}\label{eq:intro:EB-Mink}
        \int_{\{t=t_2\}}  \left( |\d_t  \psi|^2 +  |\nabla \psi|^2\right) \, dx= \int_{\{t=t_1\}}  \left( |\d_t \psi|^2 +  |\nabla \psi|^2\right)\, dx\,.
\end{equation}
The above identity should be compared to \eqref{eq:intro:standard-energy} and Theorem \ref{thm:intro:boundedness}.
Similarly, a modified version of the divergence identity \eqref{eq:intro:divergence-identity} can also be used to derive the ILED estimate for $\psi$. Indeed, vector field multipliers $X_1=\d_r$ and $X_2=\frac{1}{(1+r)^\delta} \d_r$ for an arbitrarily  small $\delta>0$ along with suitable \emph{modification functions}, give rise to
\begin{align}\label{eq:intro:ILED-Mink}
    \int_{\R} \left(\frac{1}{r^3}|\rslnabla \psi|^2 + \frac{1}{1+r^{1+\delta}} \left( |\d_t  \psi|^2 + |\d_r\psi|^2 \right) + \frac{1}{1+ r^{3+\delta}} \psi ^2 \right)\; dxdt&\lesssim \int_{\{t=t_1\}}  \left( |\d_t \psi|^2 +  |\nabla \psi|^2\right)\,dx\,.
\end{align}
Note that the energy boundedness \eqref{eq:intro:EB-Mink} and the ILED estimate \eqref{eq:intro:ILED-Mink} are consequences of the (modified) divergence identity \eqref{eq:intro:divergence-identity}, which is a pointwise identity. Therefore, the choice of the integration domain is flexible, provided that its boundary components are either \emph{null} or \emph{spacelike}. In particular, estimates \eqref{eq:intro:EB-Mink} and \eqref{eq:intro:ILED-Mink} can likewise be formulated for partially null hypersurfaces $\Sigma_\tau$, the bulk region $\R^{\tau_2}_{\tau_1}$, and the energy flux $\E[\psi](\tau)$ associated to energy current $J^T_\mu[\psi]$. 

To derive the Dafermos-Rodnianski $r^p$-estimates on Minkowski \cite{DafRodNewMethod}, one can use the divergence identity \eqref{eq:intro:divergence-identity} for the energy current $J^X_\mu[r\psi]$ with $X=r^{p-1}\d_v$. After integration over the region $\D^{\tau_2}_{\tau_1}$, as defined in Theorem \ref{thm:intro:rp}, this choice of vector field multiplier along with suitable modification functions yields
\begin{align}\label{eq:intro:rp-Mink}
    \begin{split}
        \int_{\D^{\tau_2}_{\tau_1}} r^{p-1} \left( p|\d_v (r\psi)|^2 + \frac{p-2}{r^2}|\rslnabla(r\psi)|^2\right) \, dudvd\omega + \int_{\C_{\tau_2}} r^p|\d_v(r\psi)|^2 \, dvd\omega \\ \lesssim \int_{\C_{\tau_1}} r^p|\d_v(r\psi)|^2 \, dvd\omega +  \E[\psi] (\tau_1)\,,
    \end{split}
\end{align}
for any $0\leq p \leq 2$, in which $\E[\psi] (\tau_1)$ is the energy flux associated to vector field $T$ on $\Sigma_{\tau_1}$. Note that in the estimate above, other terms arising from boundary components of $\D^{\tau_2}_{\tau_1}$ have been controlled by \eqref{eq:intro:EB-Mink} and \eqref{eq:intro:ILED-Mink}.

Once estimates \eqref{eq:intro:EB-Mink}, \eqref{eq:intro:ILED-Mink} and \eqref{eq:intro:rp-Mink} are achieved, a general hierarchical lemma, see Lemma \ref{lem:general:hierarchy-estimate}, gives 
\begin{equation*}
    \E[\psi] (\tau)\lesssim \frac{1}{\tau^2}\E[\psi_0,\psi_1]\,,
\end{equation*}
for any $\tau>\tau_0$ with $\bar\E[\psi_0,\psi_1]$ being an appropriate weighted initial energy of $\psi$. The pointwise decay of $\psi$ then follows from standard Sobolev inequalities (see Proposition \ref{prop:general:Sobolev} later) and commuting with angular derivatives $\Omega_{i}$ for $i=1,2,3$. In fact, we have
\begin{equation*}
    |\psi|\lesssim \frac{\bar\E[\psi_0,\psi_1]}{(1+r)\sqrt \tau},\qquad |\psi|\lesssim \frac{\bar\E[\psi_0,\psi_1]}{\sqrt {1+r} \tau}\,,
\end{equation*}
in which $\bar\E[\psi_0,\psi_1]$ is an appropriate weighted initial energy containing derivatives of $\psi$ up to order three.
\subsubsection{The radiation case $q=\frac{1}{2}$}\label{subsubsec:radiation}
In FLRW spacetime, the energy flux under consideration is modified by introducing a twisted derivative, replacing $\d_t\psi$ with $\frac{1}{t^q}\d_t(t^q \psi)$.\footnote{This adjustment is motivated by the underlying structure of the wave equation in the radiation case, where such a formulation naturally arises. Indeed, the wave equation \eqref{eq:intro:wave-equation} can be rewritten for $q=\frac{1}{2}$ as
\begin{equation*}
    \Box_{\frac{1}{2}} \psi= -\frac{1}{t} \d_t(\sqrt t \d_t(\sqrt t \psi)) + \frac{1}{t} \Delta_{\RR^3} \psi=0\,.
\end{equation*}}
This is done by introducing the \emph{twisted energy-momentum tensor}
\begin{align}\label{eq:intro:overview:TEMT}
    \tilde{\TT}^\beta_{\mu\nu}[\psi]:=\beta^2 \left[ {\nabla}_\mu (\beta^{-1}\psi) {\nabla}_\nu (\beta^{-1}\psi) - \frac{1}{2} g_{\mu \nu} \left( {\nabla}^\gamma (\beta^{-1} \psi) {\nabla}_\gamma (\beta^{-1} \psi) + 
    \V_\beta \beta^{-2} \psi^2 \right)\right]\,,
\end{align}
for a \emph{twisting function} $\beta$ and with the \emph{potential} $\V_\beta=-\frac{\Box_g \beta}{\beta}$. The twisted version of the divergence identity \eqref{eq:intro:divergence-identity} for vector field $X$ and the \emph{twisted energy current} $\tilde{J}^{\beta,X}_\mu [\psi]=\tilde{\TT}^\beta_{\mu \nu} [\psi] X^\nu$ then reads
\begin{equation}\label{eq:intro:twisted-divergence-identity}
    \nabla^\mu \tilde{J}^{\beta,X}_\mu [\psi] = \beta \nabla_X(\frac{\psi}{\beta}) \Box_g \psi+ {}^{(X)}\pi^{\mu \nu} \tilde{\TT}^\beta_{\mu \nu} [\psi] - \frac{1}{2} X^\nu \left(\frac{\nabla_\mu (\beta^2 \V_\beta)}{\beta^2} \psi^2 + \nabla_\mu( \beta^2)  {\nabla}^\gamma (\beta^{-1} \psi) {\nabla}_\gamma (\beta^{-1} \psi)\right) \,,
\end{equation}
which again leads to energy estimates for solutions $\psi$ after suitable choices of vector field $X$ and twisting function $\beta$ and integration over suitable spacetime regions.

For the case $q=\frac{1}{2}$, in fact, if we let $\beta=\frac{1}{a(t)}=\frac{1}{\sqrt t}$ and $X=\sqrt t \d_t$, then the terms on the right-hand side of \eqref{eq:intro:twisted-divergence-identity} sum to zero, yielding the following conservation law
\begin{equation}\label{eq:intro:radiation:conservedE}
        \int_{\{t=t_2\}} t \left( |\d_t(\sqrt t \psi)|^2 +  |\nabla \psi|^2\right) \, dx= \int_{\{t=t_1\}} t \left( |\d_t(\sqrt t \psi)|^2 +  |\nabla \psi|^2\right)\, dx\,.
\end{equation}
Note that the potential $\V_\beta$ vanishes for $\beta=\frac{1}{\sqrt t}$.

In order to establish the ILED estimate \eqref{eq:intro:ILED} for $q=\frac{1}{2}$, we again use \eqref{eq:intro:twisted-divergence-identity} with $\beta=\frac{1}{\sqrt t}$, vector field multipliers $X_1=\d_r$ and $X_2=\frac{1}{(1+r)^\delta} \d_r$, and suitable modification functions.

The $r^p$-estimates can also be derived  in a manner similar to the Minkowski case. Indeed, using \eqref{eq:intro:twisted-divergence-identity} for the twisted energy current $\tilde J^X_\mu[r\psi]$ with $X=\sqrt t r^{p-1}\d_v$---where $\partial_v$ is defined with respect to the FLRW metric, and thus differs from its Minkowski counterpart---together with suitable modification functions, yields \eqref{eq:intro:rpEstimate} for $0 \leq p \leq 2$.

The energy decay estimate and the pointwise decay then follow similarly from the abstract hierarchical lemma \ref{lem:general:hierarchy-estimate} and Sobolev inequalities.
\subsubsection{The case $q\neq\frac{1}{2}$}
In the general case, there is no obvious choice of $\beta$ and $X$ that makes the right-hand side of \eqref{eq:intro:twisted-divergence-identity} vanish for solutions of the wave equation---a property that makes the case $q = \frac{1}{2}$ comparatively the simplest. If we let $\beta=\frac{1}{a(t)}=\frac{1}{t^q}$, the potential $\V_\beta= -\frac{q(2q-1)}{t^2}$ is no longer vanishing. Furthermore, $V_\beta$ changes sign as $q$ crosses $\frac{1}{2}$. This change in the sign of the potential $\V_\beta$ constitutes the primary distinction between the cases $q\leq\frac{1}{2}$ and $q>\frac{1}{2}$. In particular, if one let $X$ in \eqref{eq:intro:twisted-divergence-identity} to be $T=t^q\d_t$, then we have
\begin{equation*}
     \nabla^\mu \tilde{J}^{\beta,X}_\mu [\psi]= (1-q) t^{q-1}\V_\beta \psi^2\,,
\end{equation*}
for any wave solution $\psi$. After integration over $\R$, we arrive at
\begin{align*}
    &(1-q)\int_\R  t^{q-1} \V_\beta\psi^2\,t^{3q} dtdx+ \frac{1}{2}\int_{\{t=t_2\}} \left(t^{2q} |\d_t(t^q\psi)|^2 + t^{2q}|\nabla \psi|^2 + t^{4q}\V_\beta  \psi^2 \right)\, dx\\
        &=\frac{1}{2}\int_{\{t=t_1\}} \left(t^{2q} |\d_t(t^q\psi)|^2 +t^{2q} |\nabla \psi|^2 + t^{4q}\V_\beta  \psi^2\right) \, dx\,.
\end{align*}
The above identity clearly establishes a uniform energy bound for the case $q\leq\frac{1}{2}$, as $\V_\beta\geq0$, and coincides with the conservation law \eqref{eq:intro:radiation:conservedE} when $q=\frac{1}{2}$. However, it falls short to establish the energy boundedness estimate in the case $q>\frac{1}{2}$. So, we introduce an alternative twisting function for $q>\frac{1}{2}$. In fact, if we let $\beta'=t^{q-1}$, then \eqref{eq:intro:twisted-divergence-identity} for $X=t^{2-3q}\d_t$ along with suitable modifications leads to the following energy estimate
\begin{align*}
    &\int_\R t|\d_t \psi|^2 \, dtdx+ \int_{\{t=t_2\}} \left(t^{2q}|\d_t(t^{1-q}\psi)|^2 + t^{2-2q}|\nabla \psi|^2 + \psi^2 \right)\, dx\\
        &\lesssim \int_{\{t=t_1\}}  \left(t^{2q} |\d_t(t^{1-q}\psi)|^2 + t^{2-2q}|\nabla \psi|^2 + \psi^2 \right) \, dx\,.
\end{align*}
Note that once the above estimate is established, it can be rewritten in terms of the twisting function $\beta=\frac{1}{t^q}$ with different weight factors involving $\mu_q$, as in \eqref{eq:intro:energy-sigma-tau} and \eqref{eq:intro:standard-energy}.

The derivation of the ILED estimate \eqref{eq:intro:ILED} again relies on the divergence identity \eqref{eq:intro:twisted-divergence-identity} for $\beta=\frac{1}{t^q}$, using vector field multipliers $X_1=\d_r$ and $X_2=\frac{1}{(1+r)^\delta} \d_r$ in the case $q<\frac{1}{2}$. When $q>\frac{1}{2}$, we use multipliers $X'_1=t^{2-4q}\d_r$ and $X'_2=\frac{t^{2-4q}}{(1+r)^\delta} \d_r$, along with suitable modifications.

To derive the $r^p$-estimates, once again we use \eqref{eq:intro:twisted-divergence-identity} for the twisted energy current $\tilde J^X_\mu[r\psi]$ with $X=t^{-\mu_q}t^q r^{p-1}\d_v$. After integrating over $\D^{\tau_2}_{\tau_1}$, introducing suitable modifications, and controlling the boundary terms, we arrive at 
\begin{align*}
    &\int_{\D^{\tau_2}_{\tau_1}}t^{-{\mu_q}}r^{p-1} \left( p|\d_v \varphi|^2 + \frac{2-p}{r^2}|\rslnabla\varphi|^2+\frac{\gamma (2-p)}{r^2}\varphi^2\right) \, dudvd\omega+ \int_{\C_{\tau_2}} t^{-{\mu_q}}r^p|\d_v\varphi|^2 \, dvd\omega\\&\leq  \int_{\D^{\tau_2}_{\tau_1}} 2t^{-\mu_q}r^{p-2}(\gamma + r^2t^{2}\V_\beta) |\varphi|| \d_v \varphi| \, dudvd\omega  +\int_{\C_{\tau_1}} t^{-{\mu_q}}r^p|\d_v\varphi|^2 \, dvd\omega+\fC\E_q[\psi](\tau_1)\,,
\end{align*}
in which $\varphi=t^qr\psi$, $\gamma=\frac{q|1-2q|}{|1-q|^2}$, and $\fC>0$. The former term on the right-hand side should be viewed as an error term that needs to be controlled by the spacetime terms on the left-hand side. This error term involves both $t$ and $r$-dependent weight factors, requiring us to consider it separately in different spacetime regions. To do so, in the region $\{u>0\}$, we use the fact that $rt^{q-1}<\frac{1}{1-q}$. While in the region $\{u\leq0\}$ with bounded $\tau$, we use Grönwall's inequality.

Observe that the power of $r$- and $t$-weight factors are functions of $p$ and $q$, respectively. As a result, estimating terms with weights depending on both $t$ and $r$ imposes a dependence of the admissible range of $p$ on $q$, a relation encoded by $\sigma_q$ in Theorem \ref{thm:intro:rp}. Recall that in the cases $q=0$ and $q=\frac{1}{2}$, there is no error term and $\sigma_q=0$. For more details, refer to Section \ref{sec:EnergyEstimates}.

The energy decay and the pointwise decay estimates then follow as before. See Section \ref{sec:decay} for more details.

\subsection{Organisation of the paper}
Section \ref{se:preliminaries} provides all the required material for the rest of the paper, such as necessary information about the wave equation on FLRW spacetimes, definitions of coordinate functions, and various general functional inequalities.
A detailed proof of Theorems \ref{thm:intro:boundedness}, \ref{thm:itro:Morawetz} and \ref{thm:intro:rp} is given in Section \ref{sec:EnergyEstimates}. The proofs are divided into cases $q\leq\frac{1}{2}$ and $q>\frac{1}{2}$. Once Theorems \ref{thm:intro:boundedness}, \ref{thm:itro:Morawetz} and \ref{thm:intro:rp} are obtained, the results in Corollary \ref{cor:general:decay} then follow from general functional inequalities provided in Section \ref{sec:functional_ineq}. A detailed explanation of this derivation is given in Section \ref{sec:decay}.

 \subsection*{Acknowledgments}
 I am grateful to my advisor Martin Taylor for his invaluable support and guidance throughout this project and for introducing me to this area.
 This work was supported by the Engineering and Physical Sciences Research Council [EP/S021590/1]. The EPSRC Centre for Doctoral Training in Geometry and Number Theory (The London School of Geometry and Number Theory), University College London. I am also thankful to Imperial College London.
\section{Preliminaries}\label{se:preliminaries}
In this section, we present the prerequisites of the paper. Section \ref{sec:TEMT} introduces the (twisted) energy-momentum tensor and the associated divergence identities, which serve as key tools in the following sections. We then define the coordinate functions and the foliation of the spacetime in Section \ref{sec:metric-wave-foliation}. Section \ref{sec:wave-FLRW} is dedicated to the wave equation on FLRW spacetimes and its twisted representation. Lastly, in Section \ref{sec:functional_ineq}, we present all the general functional inequalities that we use in other sections.

\subsection{The twisted energy-momentum tensor}\label{sec:TEMT}
In this section, we consider the covariant wave equation $\Box_g\psi=g^{\mu\nu}\nabla_\mu\nabla_\nu \psi=0$ for a general Lorentzian metric $g$ on a 4-dimensional manifold $\M$. Recall the \emph{energy-momentum tensor} $\TT_{\mu\nu}[\psi]$ associated to $\psi$, defined as 
\begin{equation*}
    \TT_{\mu\nu}[\psi]=\nabla_\mu \psi \nabla_\nu \psi -\frac{1}{2}g_{\mu\nu}\nabla^\gamma \psi \nabla_\gamma \psi\,,
\end{equation*}
and the \emph{divergence identity} $\nabla^\nu \TT_{\mu\nu}[\psi]= \d_\mu \psi \Box_g \psi$ which is of great importance for proving energy estimates in the this paper; see \cite{DHRTQuasilinearWave}*{Section~3.3}.
We here introduce the twisted energy-momentum tensor $\tilde \TT^\beta_{\mu\nu}[\psi]$ associated with $\psi$ and the twisting function $\beta$, which will play a crucial role in the following sections. See \cite{DHRTQuasilinearWave}*{Appendix~ A.2} for the proof of identities stated in this section.

To proceed, consider a non-zero \emph{twisting function} $\beta$ and define the twisted covariant derivatives as 
\begin{equation*}\label{eq:twisted-covariant-derivatives}
    \tilde{\nabla}_\mu (\,\cdot \,):= \beta \nabla_\mu (\beta^{-1} \,\cdot \,), \qquad \tilde{\nabla}^\dagger_\mu(\,\cdot \,):= - \beta^{-1}\nabla_\mu (\beta \,\cdot \,)\,.
\end{equation*}
With these new operators, one can rewrite the wave equation as
\begin{equation}\label{eq:general:twisted-wave}
    0=\Box_g \psi= - \tilde\nabla_\mu^\dagger \tilde \nabla^\mu \psi - \V_\beta \psi=g^{\mu\nu}\beta^{-1}\nabla_\mu (\beta^2 \nabla_\nu(\beta^{-1}\psi))- \V_\beta \psi\,,
\end{equation}
in which $\V_\beta=-\frac{\Box_g \beta}{\beta}$. We now define the \emph{twisted energy-momentum tensor} for any function $\psi:\M \to \RR$ as 
\begin{align}\label{eq:twisted-energy-momentum-tensor}
    \tilde{\TT}^\beta_{\mu\nu}[\psi]:=&\tilde{\nabla}_\mu \psi \tilde{\nabla}_\nu \psi - \frac{1}{2}g_{\mu\nu} \left( \tilde{\nabla}^\gamma \psi \tilde{\nabla}_\gamma \psi +\V_\beta \psi^2\right)\\
    =&\beta^2 \left[ {\nabla}_\mu (\beta^{-1}\psi) {\nabla}_\nu (\beta^{-1}\psi) - \frac{1}{2} g_{\mu \nu} \left( {\nabla}^\gamma (\beta^{-1} \psi) {\nabla}_\gamma (\beta^{-1} \psi) + 
    \V_\beta \beta^{-2} \psi^2 \right)\right] \notag.
\end{align}
Note that by letting $\beta=1$, we recover the usual energy-momentum tensor $\TT_{\mu\nu}$. Similar to $\TT_{\mu\nu}[\psi]$, the twisted energy-momentum $\tilde \TT^\beta_{\mu\nu}[\psi]$ benefits from satisfying a corresponding divergence property. Indeed, $\nabla^\mu \tilde \TT^\beta_{\mu\nu}[\psi]= \tilde\nabla_\nu \psi \Box_g \phi + \tilde S^\beta_\nu [\psi]$ where 
\begin{align*}
    \tilde{S}^\beta_\mu [\psi] = \frac{\tilde{\nabla}^\dagger_\mu (\beta \V_\beta)}{2\beta} \psi^2 + \frac{\tilde{\nabla}^\dagger_\mu \beta}{2\beta}  \tilde{\nabla}^\nu \psi \tilde{\nabla}_\nu \psi \,.
\end{align*}
In particular, if $\beta=1$, then $\tilde S ^\beta_\mu[\psi]=0$ and $\TT_{\mu\nu}[\psi]$ is  divergence free provided that $\psi$ solves the wave equation.

For a smooth \textit{vector field multiplier} $X$, we next define the associated \emph{twisted energy current} 
\begin{align*}
    \tilde{J}^{\beta,X}_\mu [\psi ] &= \tilde{\TT}^\beta_{\mu \nu} [\psi] X^\nu \, ,
\end{align*}
and
\begin{equation*}
    \tilde{K}^{\beta,X} [\psi] = {}^{(X)}\pi^{\mu \nu} \tilde{\TT}^\beta_{\mu \nu} [\psi] + X^\nu \tilde{S}^\beta_\nu [\psi] \,  ,
\end{equation*}
with ${}^{(X)}\pi^{\mu \nu}=\frac{1}{2}(\mathcal{L}_X g)_{\mu\nu}$ denoting the \emph{deformation tensor} of $X$, to infer
\begin{align}\label{eq:TEMT:twisted-X-identity}
    \nabla^\mu \tilde{J}^{\beta,X}_\mu [\psi] = \tilde \nabla_X\psi \Box_g \psi+ \tilde{K}^{\beta,X} [\psi] \,.
\end{align}
Moreover, for a smooth \textit{modification function} $w:\M\to \RR$ we have
\begin{align*}
    \nabla^\mu \tilde{J}_\mu^{\beta,w}[\psi]  = w\psi \Box_g\psi + \tilde{K}^{\beta,w}[\psi]\,,
\end{align*}
with
\begin{align*}
     \tilde{K}^{\beta,w}[\psi]  &:= w \beta^2 \nabla^\mu (\beta^{-1} \psi) \nabla_\mu (\beta^{-1} \psi) + \left(\V w  -\beta^{-1} \nabla^\mu \beta \nabla_\mu w - \frac{1}{2}  \Box_g w \right) \psi^2 \, ,  \notag \\
     \tilde J^{\beta,w}_\mu[\psi] &:= w \beta \psi \nabla_\mu (\beta^{-1} \psi) - \frac{1}{2} \psi^2 \nabla_\mu w\,. \notag
\end{align*}
Finally for a vector field multiplier $X$ and a modification function $w$, we define
$\tilde{K}^{\beta,X,w} [\psi]:=\tilde{K}^{\beta,X} [\psi] +\tilde{K}^{\beta,w} [\psi]$, and similarly $\tilde{J}^{\beta,X,w}_\mu[\psi]:=\tilde J^{\beta,X}_\mu [\psi]+\tilde J^{\beta,w}_\mu [\psi]$ to obtain the divergence identity
\begin{equation}\label{eq:TEMT:DivergenceIdentity}
    \nabla^\mu \tilde J ^{\beta, X,w}_\mu [\psi]=\left( \tilde\nabla_X \psi + w\psi \right)\Box_g \psi +\tilde K ^{\beta,X,w} [\psi]\,.
\end{equation}
In the rest of the paper, we drop the dependence on $\beta$ in $\V_\beta$, $\tilde{\TT}^\beta_{\mu\nu}$, $\tilde J^{\beta,X,w}$ and $\tilde K^{\beta,X,w}$ whenever the twisting function $\beta$ is clearly mentioned.

It is worth mentioning that divergence identities above can also be recovered in an untwisted framework, that is, using $\TT_{\mu\nu}[\psi]$ instead of $ \tilde\TT_{\mu\nu}[\psi]$. In fact, this is achieved by modifying the energy currents with $q$ and $\varpi$ as defined in 
\cite{DHRTQuasilinearWave}*{Section~ 3.3}. 
\subsection{Foliation and geometry of FLRW spacetimes}\label{sec:metric-wave-foliation}
The rest of the current article focuses on the manifold
$M= (0,\infty)_t\times \RR^3_x$ with the Lorentizian metric \eqref{eq:intro:Metric-x} which in standard spherical coordinates $(t,r,\theta,\phi)$ can be rewritten as
\begin{align*}\label{eq:intro:qMetric-x}
    g_q=-dt^2  + t^{2q}dr^2+ R^2 dw^2\,,
\end{align*}
in which $d\omega^2$ denotes the round metric on $\SS^2$ and $R(t,r)=t^{q} r$ is the \emph{area radius}. We henceforth suppress the $q$ subscript in $g_q$.
\subsubsection{Double null coordinates}
We define the null coordinate functions $u$ and $v$ as
\begin{align*}
    v=\frac{1}{2}(\frac{t^{1-q}}{1-q}+r)\,,\qquad u=\frac{1}{2}(\frac{t^{1-q}}{1-q}-r)\,.
\end{align*}
Note that $(u,v,\theta,\phi)$ defines a double null coordinate system on $M$, and function $t$ and $r$ can be recovered as
\begin{align*}
    t=\left((1-q)(v+u)\right)^{\frac{1}{1-q}}\,,\qquad r=v-u\,.
\end{align*}
The null coordinates lead to
\begin{align*}
    \d_v=t^q \d_t+\d_r\,, \qquad \d_u=t^q\d_t -\d_r\,, \qquad \d_t=\frac{1}{2t^q}(\d_v+\d_u)\,, \qquad \d_r=\frac{1}{2}(\d_v-\d_u)\,,
\end{align*}
and
\begin{align*}
    dv=\frac{1}{2t^q}dt+\frac{1}{2}dr\,,\qquad du=\frac{1}{2t^q}dt-\frac{1}{2}dr\,.
\end{align*}
Thus, in double null coordinates, the metric takes the form
\begin{equation*}
    g=-4t^{2q}dvdu+R^2 d\omega^2=t^{2q}\left(-4dvdu + r^ d\omega^2\right)\,,
\end{equation*}
which shows that FLRW metrics are conformal to the Minkowski metric. Moreover, the volume form is given by
\begin{align*}
    d\Vol_{\text{FLRW}}=t^{3q} dt dx d\omega= 2t^{4q}r^2 dudvd\omega\,,
\end{align*}
where $d\omega$ is the standard volume form of the unit sphere $\SS^2$.
\subsubsection{$\Sigma_\tau$ and foliation of $M$}
As discussed in Section \ref{sec:IVP}, we foliate the manifold $M$ with hypersurfaces that are partially null. To define this foliation, first let $\rho>0$ and define
\begin{equation*}
    \tau:= \begin{cases}
        \frac{1}{2} \frac{t^{1-q}}{1-q} & r\leq \rho,\\
        u+\frac{1}{2} \rho & r\geq \rho\,.
    \end{cases}
\end{equation*}
Now we foliate the spacetimes by level sets of $\tau$, that is
$$\Sigma_{\tau}:=\underbrace{\left(\{ \frac{t^{1-q}}{2(1-q)}= \tau\}\times\{0\leq r \leq \rho\} \times \SS^2\right)}_{\S_{\tau}} \cup \underbrace{\left(\{u=\tau -\frac{\rho}{2}\}\times\{\rho \leq r \}\times \SS^2\right)}_{\C_{\tau}}\,.$$ 
See Figure \ref{fig:Sigma-tau}. Note that $\rho$ can be arbitrarily large. However, in Section \ref{sec:EnergyEstimates}, we modify $\rho$ by considering $\bar \rho$, which is arbitrarily close to $\rho$. We also let $\R^{\tau_2}_{\tau_1}$ to be the region enclosed by ${\I^{\tau_2}_{\tau_1}}$, $\Sigma_{\tau_1}$ and $\Sigma_{\tau_2}$ with ${\I^{\tau_2}_{\tau_1}}$ denoting $\I^+ \cap \{\tau_1 - \frac{\rho}{2} \leq u \leq \tau_2 - \frac{\rho}{2}\}$. Thus, $$\R^{\tau_2}_{\tau_1}=\cup_{\tau_1 \leq\tau\leq \tau_2}\Sigma_\tau= J^-(\Sigma_{\tau_2})\cap J^+(\Sigma_{\tau_1}).$$
Finally, let $\D^{\tau_2}_{\tau_2}$ denote the far region near null infinity $\R^{\tau_2}_{\tau_1}\cap \{r>\rho\}$.

Since our primary focus is on the initial value problem \eqref{eq:intro:IVP-wave}, suppose that the initial data are considered on $\Sigma_{\tau_0}$ with $\tau_0=\frac{1}{2}\frac{t_0^{1-q}}{1-q}= u_0 +\frac{\rho}{2}$. Therefore, we would only consider the set $\M\subset M$ of points with $\tau\geq \tau_0 $, that is, the \emph{future domain of dependence} of the initial hypersurface $\Sigma_{\tau_0}$.

Multiple times in the next section, we integrate the divergence identity \eqref{eq:TEMT:DivergenceIdentity} over $\R^{\tau_2}_{\tau_1}$ and apply the divergence theorem to obtain energy estimates. The following lemma provides the appropriate volume forms on each component of the boundary of $\R^{\tau_2}_{\tau_1}$. The proof is straightforward and follows from the general divergence theorem and the definition of $\R^{\tau_2}_{\tau_1}$.
\begin{lem}[Divergence theorem]\label{lem:general:divergenceThm}
    Suppose $\psi$ satisfies the wave equation. For any smooth twisting function $\beta$, smooth vector field multiplier $X$, and smooth modification function $w$, if we integrate of \eqref{eq:TEMT:DivergenceIdentity} over $\R^{\tau_2}_{\tau_1}$ with respect to $d\Vol_{\rm{FLRW}}$, we have
    \begin{align*}
        \int_{\R^{\tau_2}_{\tau_1}} \tilde K^{X, w} [\psi] \,t^{3q} dtdx= &-\int_{\S_{\tau_2}} \tilde J^{X,w}_t [\psi] \,t^{3q}dx+\int_{\S_{\tau_1}} \tilde J^{X,w}_t [\psi] \,t^{3q}dx \\&-\int_{\C_{\tau_2}} \tilde J^{X,w}_v [\psi] \, t^{2q}r^2 dvd\omega+ \int_{\C_{\tau_1}} \tilde J^{X,w}_v [\psi] \, t^{2q}r^2 dvd\omega-\int_{\I^{\tau_2}_{\tau_1}} \tilde J^{X,w}_u [\psi] \, t^{2q}r^2 dud\omega\,.
    \end{align*}
\end{lem}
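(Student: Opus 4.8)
The plan is to reduce the statement to the divergence theorem on the manifold-with-corners $\R^{\tau_2}_{\tau_1}$, the only delicate point being the treatment of the ideal boundary $\I^+$.

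First I would invoke the hypothesis that $\psi$ solves the wave equation: the term $(\tilde\nabla_X\psi+w\psi)\Box_g\psi$ in the divergence identity \eqref{eq:TEMT:DivergenceIdentity} then vanishes identically, leaving $\nabla^\mu\tilde J^{X,w}_\mu[\psi]=\tilde K^{X,w}[\psi]$ pointwise on $\M$. Integrating this against $d\Vol_{\rm{FLRW}}$ over $\R^{\tau_2}_{\tau_1}$ and using $(\nabla^\mu V_\mu)\,d\Vol_{\rm{FLRW}}=d(\iota_V d\Vol_{\rm{FLRW}})$ for any vector field $V$, Stokes' theorem rewrites the left-hand side of the asserted identity as $\int_{\d\R^{\tau_2}_{\tau_1}}\iota_{\tilde J^{X,w}[\psi]}d\Vol_{\rm{FLRW}}$, with the boundary carrying the orientation induced from $\R^{\tau_2}_{\tau_1}$.

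Next I would decompose $\d\R^{\tau_2}_{\tau_1}$, following the definition of $\R^{\tau_2}_{\tau_1}$ in Section \ref{sec:metric-wave-foliation} (see Figure \ref{fig:Sigma-tau}), into the spacelike caps $\S_{\tau_1},\S_{\tau_2}$ (level sets of $t$), the null cones $\C_{\tau_1},\C_{\tau_2}$ (level sets of $u$), and the piece $\I^{\tau_2}_{\tau_1}\subset\I^+$ (the limit $v\to\infty$ at bounded $u$); the codimension-two corners $\{r=\rho\}\cap\Sigma_{\tau_i}$ and the smooth axis $\{r=0\}$ contribute nothing. On each piece I would restrict $\iota_{\tilde J}d\Vol_{\rm{FLRW}}$, using the two forms $d\Vol_{\rm{FLRW}}=t^{3q}\,dt\wedge dx=2t^{4q}r^2\,du\wedge dv\wedge d\omega$ recorded in Section \ref{sec:metric-wave-foliation}: upon restriction only the contraction in the differential ``missing'' from that hypersurface survives, so one is left with $\tilde J^t\,t^{3q}\,dx$ on $\S_{\tau_i}$, a constant multiple of $\tilde J^u\,dv\,d\omega$ on $\C_{\tau_i}$, and a constant multiple of $\tilde J^v\,du\,d\omega$ on $\I^{\tau_2}_{\tau_1}$. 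Converting the contravariant components into the covariant ones appearing in the statement via $g^{tt}=-1$, $g^{uv}=-\tfrac{1}{2t^{2q}}$ — so $\tilde J^t=-\tilde J_t$, $\tilde J^u=-\tfrac{1}{2t^{2q}}\tilde J_v$, $\tilde J^v=-\tfrac{1}{2t^{2q}}\tilde J_u$ — these densities become, up to sign, exactly $\tilde J_t\,t^{3q}\,dx$, $\tilde J_v\,t^{2q}r^2\,dv\,d\omega$, and $\tilde J_u\,t^{2q}r^2\,du\,d\omega$. Tracking the induced orientations then fixes the signs: the ``future'' pieces $\S_{\tau_2},\C_{\tau_2}$ of $\Sigma_{\tau_2}$ enter with a minus sign, the ``past'' pieces $\S_{\tau_1},\C_{\tau_1}$ with a plus sign, and $\I^{\tau_2}_{\tau_1}$ with a minus sign, which is precisely the identity in the lemma.

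The only genuinely non-routine point — and the one I would be most careful about — is that $\I^+$ is not part of $M$ but is attached via conformal compactification, so Stokes' theorem cannot be applied literally to a region meeting it. I would deal with this by first running the divergence theorem on the truncated region $\R^{\tau_2}_{\tau_1}\cap\{v\le V\}$, whose boundary replaces $\I^{\tau_2}_{\tau_1}$ by the bona fide null hypersurface $\{v=V\}\cap\R^{\tau_2}_{\tau_1}$, and then letting $V\to\infty$. This requires $\int_{\{v=V\}\cap\R^{\tau_2}_{\tau_1}}\iota_{\tilde J}d\Vol_{\rm{FLRW}}$ to converge to the stated flux through $\I^{\tau_2}_{\tau_1}$; one legitimate route is to pass to the conformally rescaled quantity on the conformally flat picture of Section \ref{sec:metric-wave-foliation}, where $\I^+$ is a smooth boundary and the ordinary divergence theorem applies directly, and another — adequate here, since $\psi$ always arises from compactly supported data on $\Sigma_{\tau_0}$ — is to check directly that, for the specific multipliers $X$ and modification functions $w$ used in Section \ref{sec:EnergyEstimates}, the integrand on $\{v=V\}$ has a finite limit as $V\to\infty$ with no remainder. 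Apart from this limiting step the proof is the textbook divergence theorem on a manifold with corners.
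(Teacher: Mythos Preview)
Your proposal is correct and follows essentially the same route as the paper, which simply states that ``the proof is straightforward and follows from the general divergence theorem and the definition of $\R^{\tau_2}_{\tau_1}$'' without further elaboration. Your treatment is in fact more careful than the paper's: you explicitly compute the boundary contractions in both the $(t,x)$ and $(u,v)$ frames, and you address the limiting argument at $\I^+$ via truncation at $\{v=V\}$, a point the paper leaves implicit (and which, as you note, is unproblematic here because the initial data are compactly supported).
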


The next lemma provides a method for estimating weighted terms with factors depending on both $t$ and $r$, across different regions. To this end, for any $\epsilon>0$, we first cover $\M$ with three regions as
$$\M=\{\tau_0\leq\tau\}\subseteq \U^+ \cup \U^\epsilon_{u_0} \cup\U_\text{cmp} \,,$$
where
\begin{align*}
    \U^+:=& \M\cap\{u>0\}\,,\\
    \U_{u_0}^\epsilon:=& \M\cap\{ -|u_0|\leq u\leq 0\}\cap\{2|u_0|(1+\frac{1}{\epsilon})\leq r\}\,,\\
    \U_\text{cmp}:=& \M\cap\{\tau_0<\tau\}\cap \{u\leq 0\}\cap \{r\leq2|u_0|(1+\frac{1}{\epsilon})\}\,.
\end{align*}
Note that $\U_{u_0}^\epsilon$ and $\U_\text{cmp}$ might be empty depending on the choice of $\tau_0$. See Figure \ref{fig:three-regions}.
\begin{figure}
\centering\includegraphics[width=0.3\textwidth]{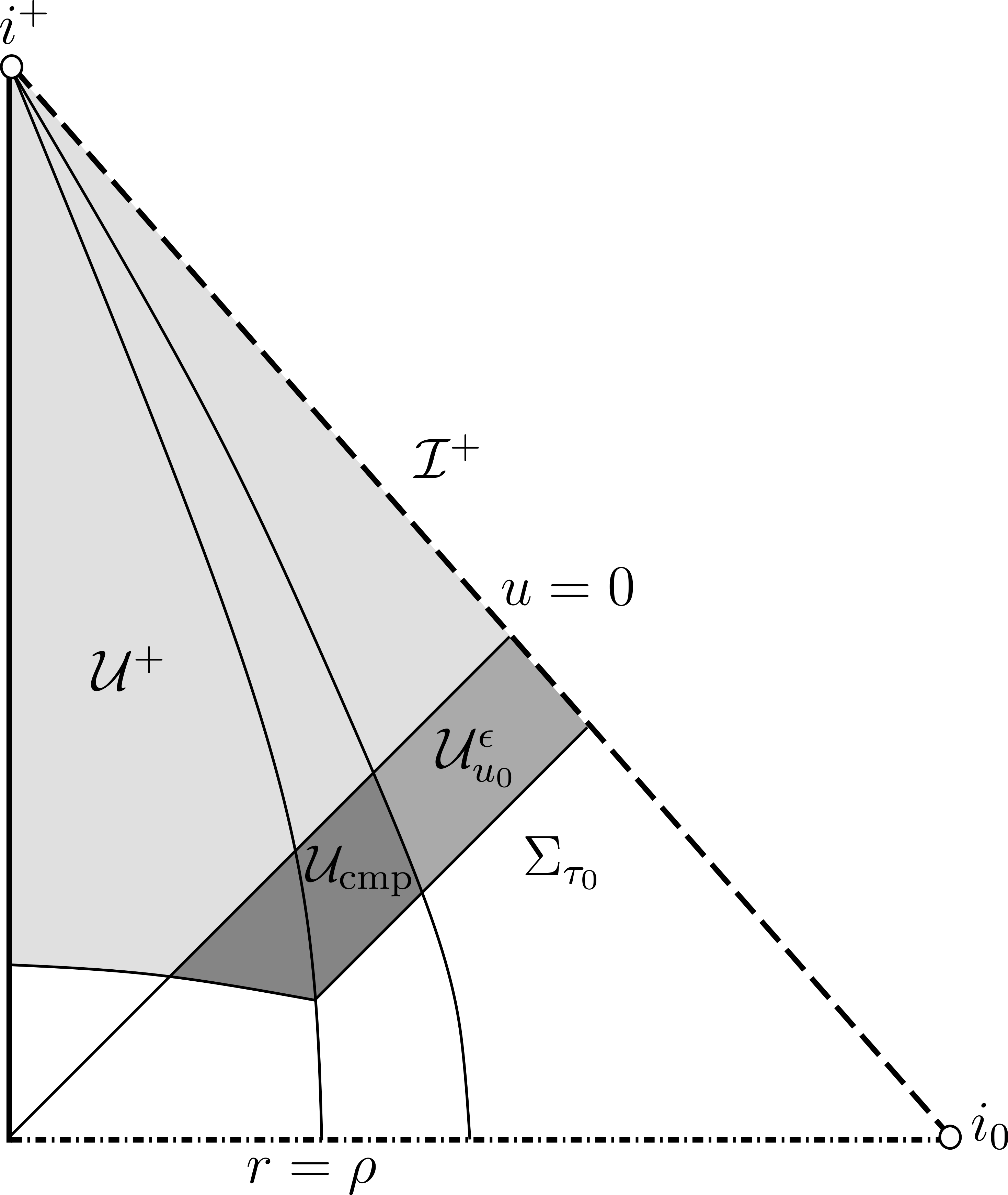}
        \caption{The future domain of dependence of the initial hypersurface $\Sigma_{\tau_0}$ is covered by the regions $\U^+$, $\U^\epsilon_{u_0}$, and the compact set $\U_\text{cmp}$.}\label{fig:three-regions}
\end{figure}
\begin{lem}\label{lem:general:t-r} For any initial data hypersurface $\Sigma_{\tau_0}$, there exists $\fC>0$ such that for any point in $\M$ we have
\begin{equation}\label{eq:general:t-r-lemma}
    rt^{q-1}\leq \fC\,.
\end{equation}
Moreover, for any $\epsilon>0$ the following holds.
\begin{enumerate}
    \item For all points in $\U^+$
\begin{equation}\label{eq:sub:uPositive}
        rt^{q-1}\leq \frac{1}{1-q}\,.
    \end{equation}
    \item On $\U_{u_0}^\epsilon$  
    \begin{equation*}
        rt^{q-1}\leq \frac{1+\epsilon}{1-q}.
    \end{equation*}
    \item The set $\U_\text{cmp}$ is compact and for any points in $\U_\text{cmp}$
    \begin{equation}\label{eq:sub:exchange-bounded}
        \frac{1}{1-q}\leq rt^{q-1}\leq 2|u_0|(1+\frac{1}{\epsilon}) t^{q-1}_0\,.
    \end{equation}
\end{enumerate}
\end{lem}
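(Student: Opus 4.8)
The statement to prove is Lemma \ref{lem:general:t-r}, which asserts the uniform bound $rt^{q-1}\leq\fC$ on $\M$ together with three sharper region-by-region estimates. The plan is to translate everything into the null coordinate $u$ and the radial coordinate $r$, since the geometry of $\M$ and of the three covering regions is most transparent there, and then to exploit the monotonicity structure of the level sets of $\tau$.

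\textbf{Setup and the key inequality.} Recall that $\frac{t^{1-q}}{2(1-q)}=u+\frac{r}{2}$ from the definition of $u$ and $v$ (indeed $u+v=\frac{t^{1-q}}{1-q}$ and $v-u=r$), so that
\begin{equation*}
    t^{1-q}=2(1-q)\Bigl(u+\frac{r}{2}\Bigr)=(1-q)(2u+r).
\end{equation*}
Hence $rt^{q-1}=\dfrac{r}{t^{1-q}}=\dfrac{r}{(1-q)(2u+r)}=\dfrac{1}{1-q}\cdot\dfrac{r}{2u+r}=\dfrac{1}{1-q}\cdot\dfrac{1}{1+2u/r}$. Everything now reduces to controlling the ratio $u/r$ from below: $rt^{q-1}\leq\frac{1}{1-q}$ precisely when $u\geq0$, which immediately gives part (1). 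For the uniform bound and parts (2)--(3) one needs a lower bound on $u$ (equivalently on $2u/r$) valid on $\M$, which is where the initial hypersurface enters: on $\Sigma_{\tau_0}$ one has $u\geq u_0$ on the null part $\C_{\tau_0}$ and $\tau=\tau_0$ on the spacelike part, and since $\M$ is the future of $\Sigma_{\tau_0}$ every point of $\M$ has $u\geq u_0$ (because $u$ is non-decreasing to the future along $\M$, being essentially an advanced null coordinate, and on $\S_{\tau_0}$ one has $r\leq\rho$ so $u=\frac{t^{1-q}}{2(1-q)}-\frac r2\geq \tau_0-\frac\rho2=u_0$). Therefore $2u/r\geq 2u_0/r$, and this is bounded below as soon as $r$ is bounded above or $u_0\geq0$; when $u_0<0$ and $r$ is unbounded one instead notes $2u_0/r\to 0^-$, so $\frac{r}{2u+r}\leq\frac{r}{2u_0+r}$ stays bounded. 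Concretely, the uniform constant can be taken as $\fC=\frac{1}{1-q}\sup_{\M}\frac{r}{2u+r}$; this supremum is finite because on $\{u\geq0\}$ it is $\leq 1$ and on $\{u_0\leq u\leq 0\}$ it is finite provided $r$ stays away from $-2u$, which it does since on $\M$ one has $r\geq 2|u_0|$ wherever $u\leq 0$ is close to $u_0$ — more carefully, one splits as in the three regions.

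\textbf{The three regions.} Part (1) is the computation above: on $\U^+$, $u>0$ forces $\frac{r}{2u+r}<1$, hence $rt^{q-1}<\frac{1}{1-q}$. Part (2): on $\U_{u_0}^\epsilon$ one has $-|u_0|\leq u\leq 0$ and $r\geq 2|u_0|(1+\frac1\epsilon)$, so $2u+r\geq -2|u_0|+r\geq r\bigl(1-\frac{2|u_0|}{r}\bigr)\geq r\bigl(1-\frac{2|u_0|}{2|u_0|(1+1/\epsilon)}\bigr)=r\cdot\frac{1/\epsilon}{1+1/\epsilon}=r\cdot\frac{1}{1+\epsilon}$, whence $rt^{q-1}=\frac{1}{1-q}\cdot\frac{r}{2u+r}\leq\frac{1}{1-q}(1+\epsilon)$. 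Part (3): $\U_\text{cmp}$ is the intersection of the closed set $\{\tau_0<\tau\}$... actually $\{\tau_0\le\tau\}$, the closed half-space $\{u\leq 0\}$, and the closed slab $\{r\leq 2|u_0|(1+\frac1\epsilon)\}$; on this set $r\geq\rho>0$ is not automatic, but $t$ is bounded above (since $u\leq 0$ and $r$ bounded force $t^{1-q}=(1-q)(2u+r)\leq(1-q)\cdot 2|u_0|(1+\frac1\epsilon)$, a bound independent of the point) and below ($\tau\geq\tau_0$ gives $t\geq t_0$), so $\U_\text{cmp}$ is a closed bounded subset of the manifold, hence compact; the bound $\frac{1}{1-q}\leq rt^{q-1}\leq 2|u_0|(1+\frac1\epsilon)t_0^{q-1}$ follows from $u\leq 0\Rightarrow 2u+r\leq r\Rightarrow\frac{r}{2u+r}\geq 1$ for the lower bound, and from $r\leq 2|u_0|(1+\frac1\epsilon)$ together with $t\geq t_0$ (so $t^{q-1}\leq t_0^{q-1}$ since $q-1<0$) for the upper bound. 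The uniform bound \eqref{eq:general:t-r-lemma} then follows by taking $\fC=\max\{\frac{1}{1-q},\ \frac{1+\epsilon}{1-q},\ 2|u_0|(1+\frac1\epsilon)t_0^{q-1}\}$ for any fixed $\epsilon>0$ (say $\epsilon=1$), since the three regions cover $\M$.

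\textbf{Main obstacle.} The only genuinely non-trivial point is the claim that $\M\subseteq\U^+\cup\U_{u_0}^\epsilon\cup\U_\text{cmp}$ and that $\U_\text{cmp}$ is compact; both ultimately rest on the fact that $u\geq u_0$ throughout $\M=J^+(\Sigma_{\tau_0})$. I would establish this directly from the definitions: a point of $\M$ lies on some $\Sigma_\tau$ with $\tau\geq\tau_0$, and on $\C_\tau$ one has $u=\tau-\frac\rho2\geq\tau_0-\frac\rho2=u_0$, while on $\S_\tau$ one has $r\leq\rho$ and $t^{1-q}=2(1-q)\tau$, so $u=\frac{t^{1-q}}{2(1-q)}-\frac r2=\tau-\frac r2\geq\tau_0-\frac\rho2=u_0$, using $\tau\geq\tau_0$ and $r\leq\rho$ (and on $\S_\tau$, $r$ ranges in $[0,\rho]$, so this needs $\tau-\frac\rho2\geq\tau_0-\frac\rho2$, i.e. $\tau\geq\tau_0$, fine). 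Once $u\geq u_0$ is in hand, the case distinction $u>0$ / ($u\leq 0$ and $r$ large) / ($u\leq 0$ and $r$ bounded) exhausts $\M$ by inspection, and compactness of $\U_\text{cmp}$ follows from the two-sided bound on $t$ derived above. Everything else is the elementary algebra of the displayed identity $t^{1-q}=(1-q)(2u+r)$.
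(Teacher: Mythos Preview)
Your proof is correct and amounts to exactly what the paper has in mind: the paper's own proof is the single sentence ``The lemma follows easily from the definitions,'' and your argument simply unpacks this via the identity $t^{1-q}=(1-q)(2u+r)$, which reduces each claimed bound to an elementary inequality on $u$ and $r$. The only minor wrinkle you noticed---whether $\U_\text{cmp}$ is defined with $\tau_0<\tau$ or $\tau_0\leq\tau$ for closedness---is a cosmetic issue in the paper's definition rather than a gap in your reasoning.
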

\begin{proof}
    The lemma follows easily from the definitions.
\end{proof}
\subsubsection{Angular momentum derivatives}\label{sec:angular-derivatives}
We introduce the \emph{angular momentum} Killing vector fields
\begin{equation*}\label{eq:general:angular-operators}
    \Omega_1=-x^3\d_{x^2}+x^2\d_{x^3}\,,\qquad\Omega_2=-x^1\d_{x^3}+x^3\d_{x^1}\,,\qquad\Omega_3=-x^2\d_{x^1}+x^1\d_{x^2}\,.
\end{equation*}
Observe that $\Omega_i$ with $i=1,2,3$ commutes with $\d_r$, $\d_u$, and $\d_v$, and moreover\footnote{Throughout this paper, we use the notation $A\lesssim B$ to indicate that there exists a universal constant  $C>0$ such that $A\leq C B$.}

\begin{align}\label{eq:general:angularderivatives-sphere}
    &|\rslnabla \psi|^2 = \sum_{i=1,2,3} |\Omega_i \psi|^2\,,\\
    &|\rslnabla{}^2 \psi|^2 \lesssim \sum_{i,j=1,2,3} |\Omega_i \Omega_j \psi|^2\,,
\end{align}
where $\rslnabla$ is the covariant derivative on the unit sphere $\SS^2$ and $|\cdot|$ denotes the standard norm on $\SS^2$.

Note that in FLRW spacetimes, for the sphere $S_{r,t}=\{t\}\times \{r\}\times \SS^2$ in the standard spherical coordinates $(t,r,\theta,\phi)$, we have 
$$|\slnabla_{S_{r,t}}\psi|_{\slashed{g}_{S_{r,t}}}= \frac{1}{r t^q}|\rslnabla\psi|\,,$$
where $\slashed{g}_{S_{r,t}}$ is the induced metric on $S_{r,t}$.
\subsection{Wave equation on FLRW spacetimes}\label{sec:wave-FLRW}
In this paper, the wave equation on FLRW is mainly represented in a twisted form. If we let $\beta=a(t)^{-1}=t^{-q}$ in \eqref{eq:general:twisted-wave}, we then are left with
\begin{equation*}
    \Box_g \psi= t^q g^{\mu\nu} \nabla_\mu(\frac{1}{t^{2q}}\nabla_\nu(t^q\psi)) - \V \psi=0\,,
\end{equation*}
in which $\V=-\frac{1}{\beta}\Box_g \beta=-\frac{q(2q-1)}{t^2}$. The above equation can also be rewritten as
\begin{equation}\label{eq:Setup:qWaveTwisted}
    \Box_g \psi= -\frac{1}{t^{2q}} \d_t(t^q \d_t(t^q \psi)) + \frac{1}{t^{2q}} \Delta_{\RR^3} \psi - \V \psi\,,
\end{equation}
 which exhibits the twisted structure of the equation. We emphasise the change in the sign of $\V$,
 \begin{align*}
     \V &=0 \qquad &\text{ if }\, q=\frac{1}{2}\,, \\
     \V &>0 \qquad &\text{ if }\, q<\frac{1}{2}\,, \\
     \V &<0 \qquad &\text{ if }\, q>\frac{1}{2}\,.
 \end{align*}
 Furthermore, the wave equation can be represented in double null coordinates as
\begin{equation}\label{eq:general:wave-double-null}
    \Box_g \psi= -\frac{1}{t^{3q}r}\d_{uv}(t^q r\psi)+\frac{1}{t^{2q}r^2}\rslDelta \psi - \V\psi\,,
\end{equation}
with $\rslDelta$ denoting the Laplacian induced on $\SS^2$.

The main twisting function considered in this article is $\beta=\frac{1}{t^q}$. This function serves as the default twisting function throughout the next sections unless explicitly stated otherwise.

\begin{rmk}[Commutation with twisted time derivative]\label{rmk:radiation:commuting}
    It is worth remarking that equation \eqref{eq:Setup:qWaveTwisted} indicates that in the radiation case, we have
    \begin{equation*}
        \Box_g ( \d_t(\sqrt t \psi))= \d_t(\sqrt t \Box_g \psi) + \frac{1}{\sqrt t} \Box_g \psi.
    \end{equation*}
    In other words, if $\psi$ solves the wave equation, then $\d_t(\sqrt t \psi)$ is also a solution --- a property that holds specifically for $q=\frac{1}{2}$. In general, we instead have
    \begin{equation*}
        \Box_g (\d_t(t^q\psi))=\d_t(t^q \Box_g \psi) + 2q t^{q-1}\Box_g \psi + 2(q-1) \V t^{q-1} \psi\,.
    \end{equation*}
\end{rmk}
\subsection{Functional inequalities}\label{sec:functional_ineq}
In this section, we present the general functional inequalities that are used throughout the paper.
\begin{lem}[Abstract hierarchical estimates]\label{lem:general:hierarchy-estimate}
Let $0\leq\sigma<1$. Assume $f$, $h_1$, and $h_{2-\sigma}$ are positive continuous functions, satisfying 
    \begin{align}
    f(\tau_2)&\lesssim f(\tau_1)\,,\label{eq:general:0-EE}\tag{0-EE}\\
    \int^{\tau_2}_{\tau_1} f(\tau) \, d\tau + h_1(\tau_2)&\lesssim h_1(\tau_1) + f(\tau_1)\,,\label{eq:general:1-EE}\tag{1-EE}\\
    \left(\int^{\tau_2}_{\tau_1} h_1^{\frac{1}{1-\sigma}}(\tau) \, d\tau\right)^{1-\sigma} + h_{2-\sigma}(\tau_2)&\lesssim h_{2-\sigma}(\tau_1) + f(\tau_1)\label{eq:general:2-EE}\tag{2-EE}\,,
\end{align}
for any $\tau_2>\tau_1>\tau_0$. Then, for any $\tau>\tau_0$ we have
\begin{equation*}
    f(\tau)\lesssim \frac{1}{\tau^{2-\sigma}}\fC\,,
\end{equation*}
and 
\begin{equation*}
    h_1(\tau)\lesssim \frac{1}{\tau^{1-\sigma}}\fC\,,
\end{equation*}
where $\fC=h_1(\tau_0) + f(\tau_0)+\left(  h_{2-\sigma}(\tau_0) + f(\tau_0)\right)^{\frac{1}{1-\sigma}
   } $.
\end{lem}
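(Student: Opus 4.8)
The strategy is the standard Dafermos--Rodnianski dyadic pigeonhole argument, carried out in two stages: first extract decay of $h_1$ from the pair \eqref{eq:general:1-EE}--\eqref{eq:general:2-EE}, then feed this into \eqref{eq:general:0-EE}--\eqref{eq:general:1-EE} to extract decay of $f$. I would work on a dyadic sequence $\tau_n = 2^n \tau_0$ and set $I_n = [\tau_n, \tau_{n+1}]$. The key mechanism is that an integrated quantity bounded on $I_n$ by a fixed budget must, by the mean value theorem, be small at \emph{some} point of $I_n$, and this point can then serve as the new base time for the next application of the hierarchy.

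\emph{Step 1: decay of $h_1$.} Apply \eqref{eq:general:2-EE} with $\tau_1 = \tau_0$, $\tau_2 = \tau$ to get $\int_{\tau_0}^\infty h_1^{1/(1-\sigma)}\,d\tau \lesssim h_{2-\sigma}(\tau_0) + f(\tau_0) =: A_\sigma$. Hence $\int_{I_n} h_1^{1/(1-\sigma)}\,d\tau \lesssim A_\sigma$, so there is $\tau_n' \in I_n$ with $h_1^{1/(1-\sigma)}(\tau_n') \lesssim A_\sigma / |I_n| \simeq A_\sigma / \tau_n$, i.e.\ $h_1(\tau_n') \lesssim (A_\sigma/\tau_n)^{1-\sigma} \simeq A_\sigma^{1-\sigma}\tau_n^{-(1-\sigma)}$. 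Now I need to upgrade this ``good slice'' bound to a bound at \emph{every} $\tau \geq \tau_n'$. For this, apply \eqref{eq:general:1-EE} with $\tau_1 = \tau_n'$ and arbitrary $\tau_2 = \tau \geq \tau_n'$: this gives $h_1(\tau) \lesssim h_1(\tau_n') + f(\tau_n')$. The term $h_1(\tau_n')$ is controlled; for $f(\tau_n')$ I use \eqref{eq:general:0-EE} to bound it by $f(\tau_0)$ — but that does not decay, so instead I must \emph{also} propagate decay of $f$ simultaneously, or observe that \eqref{eq:general:1-EE} applied on $[\tau_{n-1}', \tau_n']$ gives $\int_{\tau_{n-1}'}^{\tau_n'} f\,d\tau \lesssim h_1(\tau_{n-1}') + f(\tau_{n-1}')$, producing a good slice for $f$. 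The cleanest organisation is to run the two pigeonhole arguments in tandem: from the good slice $\tau_{n}'$ one gets, via \eqref{eq:general:1-EE} on $I_n'$-type intervals, that $\int f$ over a dyadic interval is $\lesssim A_\sigma^{1-\sigma}\tau_n^{-(1-\sigma)}$ (using that $h_1$ is already small at the left endpoint), hence a further good slice where $f$ itself is $\lesssim A_\sigma^{1-\sigma}\tau_n^{-(2-\sigma)}$, and then \eqref{eq:general:0-EE} propagates this $f$-bound forward to all later times.

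\emph{Step 2: combine and bookkeep.} Assembling: for $\tau \in [\tau_n, \tau_{n+1}]$ with $n$ large, chaining \eqref{eq:general:0-EE} from a good slice in the previous dyadic block gives $f(\tau) \lesssim \tau_n^{-(2-\sigma)}\bigl(h_1(\tau_0)+f(\tau_0)+A_\sigma^{1-\sigma}\bigr) \lesssim \tau^{-(2-\sigma)} \fC$ with $\fC = h_1(\tau_0)+f(\tau_0)+\bigl(h_{2-\sigma}(\tau_0)+f(\tau_0)\bigr)^{1/(1-\sigma)}$, and similarly \eqref{eq:general:1-EE} from a good slice gives $h_1(\tau)\lesssim \tau^{-(1-\sigma)}\fC$; small $\tau$ (finitely many dyadic blocks, or $\tau$ near $\tau_0$) is handled trivially since $f, h_1$ are continuous hence bounded there and $\tau^{-(2-\sigma)}$, $\tau^{-(1-\sigma)}$ are bounded below. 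One should be slightly careful that the implicit constants in the dyadic chaining do not compound geometrically: because each step in the chain advances by one dyadic scale and the propagation estimates \eqref{eq:general:0-EE}, \eqref{eq:general:1-EE} are applied only \emph{once} each (from the last good slice directly to $\tau$, not iteratively slice-by-slice), the constant stays uniform — this is the usual point where one avoids an exponential loss.

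\emph{Main obstacle.} The delicate point is the interplay of the two hierarchies in Step 1: one cannot get decay of $h_1$ at all times purely from \eqref{eq:general:2-EE} (which only gives an integrated bound and hence good slices), and upgrading to an everywhere-bound via \eqref{eq:general:1-EE} reintroduces $f$, which a priori does not decay. Resolving this requires extracting decay of $f$ \emph{first} (or concurrently) from the good slices of its own integrated bound in \eqref{eq:general:1-EE}, and threading the two pigeonhole sequences together so that each provides the base point the other needs — while keeping all constants scale-independent. Everything else is routine dyadic estimation.
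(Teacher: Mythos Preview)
Your proposal is correct and follows the same dyadic pigeonhole strategy as the paper. One organizational point: the paper sidesteps what you flag as the ``main obstacle'' by first applying \eqref{eq:general:1-EE} on $[\tau_0,\tau]$ together with \eqref{eq:general:0-EE} to obtain the preliminary decay
\[
f(\tau)\lesssim \frac{1}{\tau-\tau_0}\int_{\tau_0}^{\tau} f(\tau')\,d\tau' \lesssim \frac{h_1(\tau_0)+f(\tau_0)}{\tau}
\]
outright, before any dyadic argument. With this in hand the $f(\tau_n')$ term that worried you already contributes only $\tau_n^{-1}\leq\tau_n^{-(1-\sigma)}$, so the two hierarchies decouple: one round of pigeonholing on \eqref{eq:general:2-EE} gives the good slices for $h_1$, and one more application of \eqref{eq:general:1-EE} from those slices upgrades $f$ to $\tau^{-(2-\sigma)}$. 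Your tandem threading reaches the same conclusion but is more intricate than necessary. A minor bookkeeping slip: \eqref{eq:general:2-EE} bounds $\bigl(\int h_1^{1/(1-\sigma)}\bigr)^{1-\sigma}$ by $A_\sigma$, so the integral itself is $\lesssim A_\sigma^{1/(1-\sigma)}$, and the good-slice bound becomes $h_1(\tau_n')\lesssim A_\sigma\,\tau_n^{-(1-\sigma)}$ rather than $A_\sigma^{1-\sigma}\tau_n^{-(1-\sigma)}$.
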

\begin{proof}
    First note that for $\tau\geq 2\tau_0$, \eqref{eq:general:0-EE} and \eqref{eq:general:1-EE} together imply
\begin{equation*}
    f(\tau)\lesssim\frac{1}{\tau - \tau_0} \int^{\tau}_{\tau_0} f(\tau') \, d\tau' \lesssim \frac{1}{\tau}   \underbrace{\left( h_1(\tau_0) + f(\tau_0) \right)}_{:=\fC_1}.
\end{equation*}
Also, for bounded $\tau\leq 2\tau_0$, \eqref{eq:general:0-EE} gives $f(\tau)\lesssim \frac{1}{\tau}\fC_1$.
To improve this decay rate, consider a dyadic sequence $\{\tau_n\}_n$, that is, $\tau_{n+1}\sim \tau_n\sim \tau_{n+1}-\tau_n\sim 2^n$. From the mean value theorem and \eqref{eq:general:2-EE}, we deduce that there exists $\bar \tau \in (\tau_n, \tau_{n+1})$ such that
\begin{equation*}
   \bar \tau_n h_1^{\frac{1}{1-\sigma}}(\bar \tau_n)\lesssim  \int^{\tau_{n+1}}_{\tau_n} h_1^{\frac{1}{1-\sigma}}(\tau') \, d\tau'\lesssim \left(  h_{2-\sigma}(\tau_n) + f(\tau_n)\right)^{\frac{1}{1-\sigma}
   } \lesssim  (h_{2-\sigma}(\tau_0) + f(\tau_0))^{\frac{1}{1-\sigma}}=:\fC_2^{\frac{1}{1-\sigma}},
\end{equation*}
which implies $h_1(\tau_n)\lesssim \frac{1}{\tau_n^{1-\sigma}} \fC_2$. Applying \eqref{eq:general:0-EE} and \eqref{eq:general:1-EE} again gives
\begin{equation*}
    (\tau_{n+1}-\tau_{n}) f(\tau_{n+1})\lesssim \int^{\tau_{n+1}}_{\tau_n} f(\tau') \, d\tau' \lesssim h_1(\tau_n) + f(\tau_n) \lesssim \frac{1}{\tau_n^{1-\sigma}} \fC_2 + \frac{1}{\tau_n} \fC_1\leq \frac{\fC}{\tau_n^{1-\sigma}}\,,
\end{equation*}
with $\fC=\fC_1+\fC_2$.
Finally, since for any $\tau$ there exists $n$ such that $\tau_{n+1}\leq \tau <\tau_{n+2}$, we have
$$f(\tau) \lesssim f(\tau_{n+1}) \lesssim \frac{\fC}{\tau^{2-\sigma}}\,,$$
and
$$h_1(\tau) \lesssim h_1(\tau_{n+1}) \lesssim \frac{\fC}{\tau^{2-\sigma}}\,,$$
as desired.
\end{proof}
\begin{lem}[Interpolation]\label{lem:interpolation-r^p}
    Let $\varphi:\M\to \RR$ be any differentiable function and $\Sigma_\tau$ and $\d_v$ as in \ref{sec:metric-wave-foliation}. Define
    $$h_p(\tau)=\int_{\Sigma_\tau} r^p |\d_v\varphi|^2\, d\Sigma_\tau\,.$$
    Then the interpolation inequality
    \begin{equation}
        h_1(\tau)\leq h_{1-\sigma}^{1-\sigma}(\tau) h_{2-\sigma}^{\sigma}(\tau) \,,
    \end{equation}
    holds for any $0<\sigma<1$.
\end{lem}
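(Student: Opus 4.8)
The plan is to derive this purely from Hölder's inequality on the measure space $(\Sigma_\tau, d\Sigma_\tau)$, with no input specific to FLRW geometry. The key observation is the algebraic splitting of the weight: since $(1-\sigma)^2 + (2-\sigma)\sigma = 1-2\sigma+\sigma^2+2\sigma-\sigma^2 = 1$ and $(1-\sigma)+\sigma=1$, one can write the integrand of $h_1(\tau)$ pointwise on $\Sigma_\tau$ as
\begin{equation*}
    r\,|\d_v\varphi|^2 = \left(r^{1-\sigma}|\d_v\varphi|^2\right)^{1-\sigma}\left(r^{2-\sigma}|\d_v\varphi|^2\right)^{\sigma}\,.
\end{equation*}

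Next I would apply Hölder's inequality with the conjugate exponents $p=\tfrac{1}{1-\sigma}$ and $p'=\tfrac1\sigma$ (which are indeed conjugate since $(1-\sigma)+\sigma=1$, and both lie in $(1,\infty)$ because $0<\sigma<1$), to the two nonnegative factors $\left(r^{1-\sigma}|\d_v\varphi|^2\right)^{1-\sigma}$ and $\left(r^{2-\sigma}|\d_v\varphi|^2\right)^{\sigma}$. Integrating over $\Sigma_\tau$ against $d\Sigma_\tau$ then yields
\begin{equation*}
    h_1(\tau)=\int_{\Sigma_\tau} r\,|\d_v\varphi|^2\, d\Sigma_\tau \leq \left(\int_{\Sigma_\tau} r^{1-\sigma}|\d_v\varphi|^2\, d\Sigma_\tau\right)^{1-\sigma}\left(\int_{\Sigma_\tau} r^{2-\sigma}|\d_v\varphi|^2\, d\Sigma_\tau\right)^{\sigma} = h_{1-\sigma}^{1-\sigma}(\tau)\, h_{2-\sigma}^{\sigma}(\tau)\,,
\end{equation*}
which is exactly the claimed bound.

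There is essentially no obstacle here; the statement is a direct instance of the standard interpolation of $L^p$ norms (or equivalently log-convexity of $p\mapsto\int r^p|\d_v\varphi|^2$), and the only points worth a remark are that $r\geq 0$ on $\Sigma_\tau$ so all the weights are well defined and nonnegative, and that the inequality is to be read in $[0,+\infty]$, so no finiteness hypothesis on $h_{1-\sigma}$ or $h_{2-\sigma}$ is needed (if either right-hand factor is infinite the bound is trivial, and if both are finite the Hölder step is justified). The differentiability of $\varphi$ is only used to make $\d_v\varphi$ meaningful on $\Sigma_\tau$.
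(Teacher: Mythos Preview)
Your proof is correct and is exactly the approach the paper takes: the paper's own proof consists of the single sentence that the result follows from a standard H\"older inequality with exponents $\tfrac{1}{1-\sigma}$ and $\tfrac{1}{\sigma}$, and you have simply written out the details of that step.
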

\begin{proof}
    The result follows from a standard Hölder inequality with exponents $\frac{1}{1-\sigma}$ and $\frac{1}{\sigma}$.
\end{proof}
\begin{prop}[Sobolev inequalities]\label{prop:general:Sobolev}
Suppose $f:\RR^3\to \RR$, $g:\SS^2\to\RR$, $h:\TT^3\to \RR$ are smooth functions, and $f$ has a compact support. Then, the following Sobolev inequalities hold
    \begin{align}
        &\|f\|^2_{L^\infty(\RR^3)} \lesssim \int_{\RR^3} \left(|\nabla f(x)|^2+|\nabla^2 f(x)|^2\right)\,dx\,,\label{eq:general:Sobolev:RR}\\
        &\|g\|^2_{L^\infty(\SS^2)}\lesssim \int_{\SS^2} \left(|g(\omega)|^2+|\rslnabla g(\omega)|^2+|\rslnabla{}^2 g(\omega)|^2\right)\,d\omega\,,\label{eq:general:Sobolev:SS}\\
        &\Big\Vert h(x)-\frac{1}{|\TT^3|}\int_{\TT^3}h(y)\,dy\Big\Vert^2_{L^\infty(\SS^2)}\lesssim \int_{\TT^3} |\nabla^2 h(x)|\,dx\,.\label{eq:general:Sobolev:TT}
    \end{align}
    Moreover, for any fixed $s>0$, the following local version of Sobolev inequality holds as well
    \begin{align}
        \|f\|^2_{L^\infty(B_s)} \lesssim s^{-3} \int_{B_{2s}} \left(|f(x)|^2+s^2|\nabla f(x)|^2+s^4|\nabla^2 f(x)|^2\right)\,dx\,.\label{eq:general:Sobolev:Local}
    \end{align}
\end{prop}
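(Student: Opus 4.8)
The plan is to recognise all four inequalities as standard Sobolev embeddings and to prove them by a single Fourier-analytic mechanism (spectral on $\SS^2$). The common point is a dimension count: the reciprocal of the natural second-order frequency weight is square-integrable, respectively summable, in each ambient space. On $\RR^3$ this reads $\int_{\RR^3}(|\xi|^2+|\xi|^4)^{-1}\,d\xi<\infty$, which is finite precisely because the integrand is $\sim|\xi|^{-2}$ near the origin (integrable since $3>2$) and $\sim|\xi|^{-4}$ at infinity (integrable since $3<4$); on $\TT^3$ it reads $\sum_{0\neq k\in\ZZ^3}|k|^{-4}<\infty$; and on $\SS^2$ it reads $\sum_{\ell\geq0}(2\ell+1)\big(1+\ell(\ell+1)\big)^{-2}<\infty$.

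First I would treat \eqref{eq:general:Sobolev:RR}. Since $f\in C^\infty_c(\RR^3)$, Fourier inversion gives $\|f\|_{L^\infty(\RR^3)}\leq\|\hat f\|_{L^1(\RR^3)}$, and writing $|\hat f|=\big(|\xi|^2+|\xi|^4\big)^{-1/2}\cdot\big(|\xi|^2+|\xi|^4\big)^{1/2}|\hat f|$ and applying Cauchy--Schwarz, the first factor contributes the finite constant above while Plancherel identifies the square of the second with $\|\nabla f\|_{L^2(\RR^3)}^2+\|\Delta f\|_{L^2(\RR^3)}^2\lesssim\|\nabla f\|_{L^2(\RR^3)}^2+\|\nabla^2 f\|_{L^2(\RR^3)}^2$; squaring yields \eqref{eq:general:Sobolev:RR}. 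It is worth noting that the compact support of $f$ is exactly what allows $\|f\|_{L^2}$ to be absent on the right-hand side.

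Next, \eqref{eq:general:Sobolev:SS} and \eqref{eq:general:Sobolev:TT} run in parallel. For the sphere I would decompose $g=\sum_{\ell\geq0}g_\ell$ into the eigenspaces of $-\rslDelta$ (eigenvalue $\ell(\ell+1)$), use the spherical-harmonic addition theorem $\sum_{|m|\leq\ell}|Y_{\ell m}(\omega)|^2\equiv\frac{2\ell+1}{4\pi}$ to obtain the pointwise bound $|g_\ell(\omega)|\leq\big(\frac{2\ell+1}{4\pi}\big)^{1/2}\|g_\ell\|_{L^2(\SS^2)}$, and then apply Cauchy--Schwarz in $\ell$ against the weight $1+\ell(\ell+1)$; the resulting weighted sum equals $\|(1-\rslDelta)g\|_{L^2(\SS^2)}^2$, which after one integration by parts and the elementary pointwise bound $|\rslDelta g|\lesssim|\rslnabla^2 g|$ is $\lesssim\int_{\SS^2}\big(g^2+|\rslnabla g|^2+|\rslnabla^2 g|^2\big)\,d\omega$. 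For the torus the argument is identical with Fourier series replacing spherical harmonics: the subtracted average is nothing but the $k=0$ Fourier coefficient, so $h-\frac{1}{|\TT^3|}\int_{\TT^3}h=\sum_{k\neq0}\hat h_k e^{2\pi ik\cdot x}$, and Cauchy--Schwarz with weight $|k|^2$ together with $\sum_{k\neq0}|k|^{-4}<\infty$ and Plancherel bounds $\big\|h-\frac{1}{|\TT^3|}\int_{\TT^3}h\big\|_{L^\infty(\TT^3)}^2$ by a constant multiple of $\|\nabla^2 h\|_{L^2(\TT^3)}^2$; here removing the zero mode is essential, since the constants are annihilated by every derivative.

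Finally, the local estimate \eqref{eq:general:Sobolev:Local} I would deduce from \eqref{eq:general:Sobolev:RR} by rescaling and a fixed cut-off. Fix once and for all $\chi\in C^\infty_c(B_2)$ with $\chi\equiv1$ on $B_1$, and set $g(y):=\chi(y)\,f(sy)$ for $y\in\RR^3$; then $g\in C^\infty_c(B_2)$ and $\|f\|_{L^\infty(B_s)}=\|g\|_{L^\infty(B_1)}\leq\|g\|_{L^\infty(\RR^3)}$. Applying \eqref{eq:general:Sobolev:RR} to $g$, expanding $\nabla g$ and $\nabla^2 g$ by the product rule (the derivatives of $\chi$ being bounded by a universal constant, which also produces a zeroth-order contribution in $f$), and changing variables $x=sy$, the Jacobian $s^{-3}$ and the chain-rule powers of $s$ combine exactly into the weighted integrand $s^{-3}\big(|f|^2+s^2|\nabla f|^2+s^4|\nabla^2 f|^2\big)$ over $B_{2s}$, which is \eqref{eq:general:Sobolev:Local}. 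None of these steps is genuinely difficult; the only points demanding care are the homogeneity of the right-hand sides — the absence of $\|f\|_{L^2}$ in \eqref{eq:general:Sobolev:RR} uses both the compact support and the dimensional window $2<3<4$, and on $\TT^3$ the constant mode must be excised by hand — and, in \eqref{eq:general:Sobolev:Local}, keeping track of the powers of $s$ through the scaling.
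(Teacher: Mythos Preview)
Your proof is correct and follows essentially the same Fourier-analytic route as the paper: the paper splits the $\RR^3$ integral into $B_1$ and its complement with separate weights $|\xi|^{-1}$ and $|\xi|^{-2}$ (rather than your combined $(|\xi|^2+|\xi|^4)^{-1/2}$), handles $\TT^3$ identically to you, and obtains the local version by the same rescaling-plus-cutoff argument. For the sphere the paper simply cites Christodoulou, so your spherical-harmonic argument is in fact more complete than what the paper provides.
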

\begin{proof}
    To prove the first estimate, let $\hat f (\xi)$ denotes the Fourier transform of $f$. Observe that for $\alpha<\frac{3}{2}$ and $\beta>\frac{3}{2}$, we have 
    \begin{align*}
        |f(x)|&\lesssim \int_{\RR^3} |\hat f(\xi)| \,d\xi = \int_{B_1} |\hat f(\xi)| \,d\xi+\int_{\RR^3\setminus B_1} |\hat f(\xi)| \,d\xi\\
        &\lesssim \left(\int_{B_1} |\xi|^{-2\alpha}\,d\xi\int_{B_1} |\hat f(\xi)| |\xi|^{2\alpha} \,d\xi \right)^\frac{1}{2}+\left( \int_{\RR^3\setminus B_1} |\xi|^{-2\beta}\,d\xi\int_{\RR^3\setminus B_1} |\hat f(\xi)| |\xi|^{2\beta} \,d\xi)\right)^\frac{1}{2}\\
        &\lesssim \Vert f\Vert _{\mathring{H}^\alpha(\RR^3)} (\int_0^1 r^{2-2\alpha}\,dr)^\frac{1}{2}+ \Vert f\Vert _{\mathring{H}^\beta(\RR^3)} (\int_1^\infty r^{2-2\beta}\,dr)^\frac{1}{2}\lesssim \Vert f\Vert _{\mathring{H}^\alpha(\RR^3)}+\Vert f\Vert _{\mathring{H}^\beta(\RR^3)}\,.
    \end{align*}
    So if $\alpha=1$ and $\beta=2$, one then is left with the required estimate.
    To prove the local version, simply consider $f_s(x)=f(sx)$ and use the Sobolev inequality for $f_s \phi$ where $\phi\in C^\infty_0(B_2)$ is a cut-off function with $f=1$ in the ball $B_1$. 
    
    For the proof of \eqref{eq:general:Sobolev:SS}, refer to \cite{Christodoulou07ShockFormation}*{Section~5.2}.
    
    Finally, to prove the estimate on $\TT^3$, recall the Fourier series of $h(x)$, and observe that for $\alpha>\frac{3}{2}$ we have
    \begin{align*}
        |h(x)-\frac{1}{|\TT^3|} \int_{\TT^3} f(y)\,dy|&\lesssim \sum_{k\in \ZZ^3, k\neq 0} |\hat f(k) e^{-ik\cdot x|}|\lesssim \sum_{k\in \ZZ^3, k\neq 0} |\hat f(k)| |k|^\alpha |k|^{-\alpha}
        \\&\lesssim \left(\sum_{k\in \ZZ^3, k\neq 0} |\hat f(k)|^2 |k|^{2\alpha}\right)^\frac{1}{2}\left(\sum_{k\in \ZZ^3, k\neq 0}|k|^{-2\alpha}\right)^\frac{1}{2}\lesssim \|h\|_{\mathring{H}^\alpha(\TT^3)}\,,
    \end{align*}
    which completes the proof if we let $\alpha=2$.
\end{proof}
\begin{lem}[Grönwall's inequality]\label{lem:Gronwall} Let $f:[a,b]\to\RR$ be a continuous function. Assume that there exist a constant $\fC$ and a non-negative function $g(t)$ such that for all $t\in [a,b]$
\begin{align*}
    f(t)\leq \fC + \int_a^t g(t)f(t)\,dt\,.
\end{align*}
Then, for all $t\in[a,b]$
\begin{align*}
    f(t)\leq \fC \exp\left({\int_a^t g(t)\,dt}\right)\,.
\end{align*}
\end{lem}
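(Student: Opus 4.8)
The plan is the classical integrating-factor argument. I would introduce the auxiliary function
\[
G(t) := \fC + \int_a^t g(s) f(s)\,ds,
\]
where the integration variable has been renamed so as not to clash with the endpoint $t$; with this notation the hypothesis reads $f(t) \le G(t)$ for all $t\in[a,b]$. First I would note that since $f$ and $g$ are continuous, the integrand $gf$ is continuous, so by the fundamental theorem of calculus $G$ is continuously differentiable on $[a,b]$ with $G'(t) = g(t) f(t)$. Combining this with $f(t) \le G(t)$ and $g(t) \ge 0$ yields the pointwise differential inequality $G'(t) \le g(t) G(t)$ on $[a,b]$.

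Next I would remove the $g(t)G(t)$ term with an integrating factor. Setting $\Lambda(t) := \exp\bigl(-\int_a^t g(s)\,ds\bigr)$ and $H(t) := G(t)\Lambda(t)$, a direct computation using $\Lambda'(t) = -g(t)\Lambda(t)$ gives $H'(t) = \bigl(G'(t) - g(t)G(t)\bigr)\Lambda(t) \le 0$ for all $t \in [a,b]$, since $\Lambda > 0$. Hence $H$ is non-increasing on $[a,b]$, so $H(t) \le H(a) = G(a)\cdot 1 = \fC$ for every $t$. Rearranging gives $G(t) \le \fC\exp\bigl(\int_a^t g(s)\,ds\bigr)$, and since $f(t) \le G(t)$ by hypothesis, the claimed bound $f(t) \le \fC\exp\bigl(\int_a^t g(s)\,ds\bigr)$ follows at once.

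There is essentially no obstacle here — this is the textbook proof. The only points requiring minor care are: keeping the dummy integration variable distinct from the upper limit in the definition of $G$; invoking continuity of $f$ and $g$ precisely where it is used, namely to ensure $G$ is differentiable so that the integrating-factor manipulation is legitimate; and observing that the sign hypothesis $g \ge 0$ is exactly what makes $\Lambda > 0$ and converts $f \le G$ into the differential inequality for $G$. (One could alternatively avoid any smoothness discussion by iterating the integral inequality and summing the resulting series, but given the continuity assumptions already in place the integrating-factor route is the cleanest.)
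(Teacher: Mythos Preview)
Your proof is correct and is precisely the standard integrating-factor argument; the paper itself does not supply a proof, merely stating that it ``is standard.'' One minor quibble: you invoke continuity of $g$, but the lemma as stated only asserts $g\ge 0$ --- in practice this is harmless, since some integrability of $g$ is implicit for the hypothesis to make sense, and in the paper's applications $g$ is in fact a constant.
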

The proof of Lemma \ref{lem:Gronwall} is standard.
\begin{lem}\label{lem:general:angular-derivatives}
    Let $f:\RR^3\to \RR$ be a smooth function.
    \begin{align*}
        |\slDelta_{\partial B_r} f|\lesssim \frac{1}{(1+r)^2} \sum_{i=1,2,3}\sum_{j=1,2,3}\left( |\Omega_i\Omega_j f|+|\d_{x^i}\d_{x^j} f|\right)\,.
    \end{align*}
\end{lem}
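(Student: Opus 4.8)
The plan is to reduce the estimate to the elementary operator identity $\sum_{i=1}^{3}\Omega_i^2=r^2\,\slDelta_{\partial B_r}$, valid on smooth functions on $\RR^3\setminus\{0\}$, where $\slDelta_{\partial B_r}$ is the Laplace--Beltrami operator of the Euclidean sphere of radius $r$ through the given point and $\d_r$ denotes the radial derivative at fixed angle. This is the $\mathfrak{so}(3)$-Casimir identity; I would derive it from the polar decomposition of the flat Laplacian, $\Delta_{\RR^3}=\d_r^2+\tfrac2r\d_r+\slDelta_{\partial B_r}$, together with $\d_r=\tfrac1r x^i\d_{x^i}$ and $\Omega_i=\epsilon_{ijk}x^j\d_{x^k}$, by a short computation of $\sum_i\Omega_i^2$ in Cartesian coordinates. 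Combined with the trivial bound $|\sum_i\Omega_i^2 f|\le\sum_{i,j}|\Omega_i\Omega_j f|$ this already gives $|\slDelta_{\partial B_r}f|\le r^{-2}\sum_{i,j}|\Omega_i\Omega_j f|$, and since $r^{-2}\le 4(1+r)^{-2}$ for $r\ge 1$, the claimed inequality follows there; only the bounded range $0<r\le 1$, where $(1+r)^{-2}\asymp 1$, remains.

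On $\{0<r\le 1\}$ I would instead use the polar decomposition in the form $\slDelta_{\partial B_r}f=\Delta_{\RR^3}f-\d_r^2 f-\tfrac2r\d_r f$. The first two terms are harmless: $\Delta_{\RR^3}f=\sum_k\d_{x^k}^2 f$ and $\d_r^2 f=\tfrac1{r^2}x^ix^j\,\d_{x^i}\d_{x^j}f$, with $\tfrac1{r^2}x^ix^j$ bounded, so both are $\lesssim\sum_{i,j}|\d_{x^i}\d_{x^j}f|$. The remaining term $\tfrac2r\d_r f=\tfrac2{r^2}x^i\d_{x^i}f$ carries an apparent $r^{-1}$ singularity at the centre, and handling it is the crux: I would Taylor-expand the first derivatives about the origin, $\d_{x^i}f(x)=\d_{x^i}f(0)+\int_0^1 x^j\,\d_{x^j}\d_{x^i}f(sx)\,ds$, so that the integral term is controlled by $\sup_{B_1}|\nabla^2 f|\lesssim\sum_{i,j}|\d_{x^i}\d_{x^j}f|$, while the remaining contribution $\tfrac1{r^2}x^i\d_{x^i}f(0)$ is reorganised against the Casimir identity, using that each $\Omega_i$ is a linear vector field vanishing at $\{r=0\}$ (hence $\Omega_i\Omega_j f$ vanishes there as well) --- this is precisely the cancellation that keeps $\slDelta_{\partial B_r}f$ bounded near the centre and lets the mixed Cartesian Hessian dominate the borderline term.

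I expect this final step to be the main obstacle. Away from the centre the statement is routine --- the Casimir identity, the polar decomposition, and the $\{r\ge 1\}$ bound require nothing delicate --- but near $\{r=0\}$ the coordinate sphere $\partial B_r$ degenerates, the naive $r^{-2}$-weight is too strong, and one has to check carefully that the weaker weight $(1+r)^{-2}$ \emph{together with} the Cartesian second-order terms genuinely dominates the first-order radial contribution $\tfrac1r\d_r f$. If one prefers to route the far region through the intrinsic spherical Hessian rather than $\sum_i\Omega_i^2$, the passage from $|\rslnabla{}^2 f|$ to $\sum_{i,j}|\Omega_i\Omega_j f|$ is already recorded in \eqref{eq:general:angularderivatives-sphere}.
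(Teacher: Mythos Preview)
Your large-$r$ argument is correct and equivalent to the paper's: the paper writes $|\slDelta_{\partial B_r}f|=r^{-2}|\rslDelta f|\le r^{-2}|\rslnabla{}^2 f|$ and then invokes \eqref{eq:general:angularderivatives-sphere}, while you use the Casimir identity $\sum_i\Omega_i^2=\rslDelta$; these are the same estimate.

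For small $r$, however, your proposed resolution has a genuine gap. The Taylor remainder $\int_0^1 x^j\,\partial_{x^j}\partial_{x^i}f(sx)\,ds$ involves second derivatives of $f$ at points $sx$, not at $x$, so this cannot give a \emph{pointwise} bound of the required form. More seriously, the term $\tfrac{2x^i}{r^2}\partial_{x^i}f(0)$ is generically of order $r^{-1}$, and there is no cancellation available: the Casimir identity you invoke says nothing about this contribution, and $\Omega_i\Omega_j f$ vanishing at the origin does not help---you need a bound at $x$, not at $0$.

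In fact the lemma is false as stated. Take $f=x^1$: then $\slDelta_{\partial B_r}f=-2x^1/r^2$, while $\partial_{x^i}\partial_{x^j}f\equiv 0$ and $\sum_{i,j}|\Omega_i\Omega_j f|=2|x^1|+|x^2|+|x^3|$. At the point $(r,0,0)$ the left-hand side is $2/r$ and the right-hand side is $2r/(1+r)^2$, so the inequality fails as $r\to 0$. The paper's own proof has exactly this defect: the asserted bound $|\slDelta_{\partial B_r}f|\le|\Delta_{\RR^3}f|$ for small $r$ is wrong, since $\Delta_{\RR^3}x^1=0$.

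What the paper actually needs (in \eqref{eq:decay:uv-derivative}) is a bound on $\tfrac{t^q}{r}|\rslDelta\psi|=t^q r\,|\slDelta_{\partial B_r}\psi|$, and near the origin this quantity is controlled by $t^q\sum_i|\partial_{x^i}\psi|+t^qr\sum_{i,j}|\partial_{x^i}\partial_{x^j}\psi|$ via the polar decomposition and $|\partial_r\psi|\le\sum_i|\partial_{x^i}\psi|$. Since $\partial_{x^i}\psi$ is again a solution of the wave equation, the first-derivative decay from Proposition~\ref{prop:decay:PointwiseDecay} (already contained in $\bar{\bar{\E}}_q$) closes the argument. So the application is salvageable, but the lemma should be reformulated---either by inserting an extra factor of $r$ on the left, or by adding $\sum_i|\partial_{x^i}f|$ to the right-hand side.
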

\begin{proof}
    For large $r$ observe that 
    \begin{align*}
        |\slDelta_{\partial B_r} f |=\frac{1}{r^2} |\rslDelta f| \leq \frac{1}{r^2}|\rslnabla^2 f|\lesssim \frac{1}{(1+r)^2} \sum_{i=1,2,3}\sum_{j=1,2,3}  |\Omega_i\Omega_j f|\,.
    \end{align*}
    On the other hand, for small $r$, we have
    \begin{align*}
        |\slDelta_{\partial B_r} f|\leq |\Delta_{\RR^3} f| \leq \sum_{i=1,2,3}\sum_{j=1,2,3} |\d_{x^i}\d_{x^j} f| \lesssim \frac{1}{(1+r)^2} \sum_{i=1,2,3}\sum_{j=1,2,3} |\d_{x^i}\d_{x^j} f|\,.
    \end{align*}
\end{proof}
\section{Energy Estimates}\label{sec:EnergyEstimates}
This section is dedicated to providing the proof of Theorems \ref{thm:intro:boundedness}, \ref{thm:itro:Morawetz} and \ref{thm:intro:rp}.  The statements and proofs are divided into two cases, namely $0<q\leq \frac{1}{2}$ and $\frac{1}{2}<q<1$.
Propositions \ref{prop:sub:EB} and \ref{prop:sup:energy-bnd} complete the proof of Theorem \ref{thm:intro:boundedness}, Propositions \ref{prop:sub:ILED} and \ref{prop:sup:Morawetz} imply Theorem \ref{thm:itro:Morawetz}, and finally, Theorem \ref{thm:intro:rp} simply follows from Propositions \ref{prop:sub:rp} and \ref{prop:sup:rp}. 

Recall that case $q=\frac{1}{2}$ considered the simplest for deriving energy estimates; see Section \ref{subsubsec:radiation}. In this section, however, the radiation case is treated together with the case $0<q<\frac{1}{2}$.

The proofs of the following propositions rely on the twisted divergence identity \eqref{eq:TEMT:DivergenceIdentity} with suitable choices of the twisting function $\beta$, vector field multiplier $X$, modification function $w$, and the integration region $\R$.
\subsection{The case $0<q\leq\frac{1}{2}$}
Recall $\V=-\frac{q(2q-1)}{t^2}$ and define
\begin{align}\label{eq:sub:energy-sigma-tau}
    \begin{split}
        \E_q[\psi](\tau):= &\int_{\S_{\tau}}   t^{2q} \left( |\d_t(t^q \psi)|^2 +|\d_r\psi|^2+ \frac{1}{r^2}|\d_r(r \psi)|^2+\frac{1}{r^2}|\rslnabla\psi|^2 +t^{2q}\V\psi^2 +\frac{1}{1+r^2}\psi^2\right) \,dx \\
        +&\int_{\C_{\tau}} \left(|\d_v(t^q \psi)|^2+ \frac{1}{r^2}|\d_v(t^qr \psi)|^2 + \frac{t^{2q}}{r^2}|\rslnabla \psi|^2 + t^{4q} \V \psi^2 + \frac{t^{2q}}{r^2}\psi^2\right)  r^2dvd\omega\,.
    \end{split}
    \end{align}
    Notice that \eqref{eq:sub:energy-sigma-tau} is equivalent to energy flux \eqref{eq:intro:energy-sigma-tau} when $0<q< \frac{1}{2}$, as $\mu_q^\varepsilon=0$ in this range. When $q=\frac{1}{2}$, the terms involving the potential $\V$ vanish, making \eqref{eq:sub:energy-sigma-tau} slightly different from \eqref{eq:intro:energy-sigma-tau}. However, in view of Lemma \ref{lem:general:t-r}, we are able to control $t^{2q-2}\psi^2$ with $\frac{1}{1+r^2}\psi^2$. Therefore, a uniform energy bound for \eqref{eq:sub:energy-sigma-tau} is sufficient to prove Theorem \ref{thm:intro:boundedness} for $0<q\leq \frac{1}{2}$. The following proposition is devoted to proving this energy bound in this range.
\begin{prop}[Energy boundedness for $0<q\leq\frac{1}{2}$]\label{prop:sub:EB}
    Let $\psi$ be a solution to the initial value problem \eqref{eq:intro:IVP-wave} with $0<q\leq\frac{1}{2}$. Then for any $\tau_2>\tau_1$ we have
    \begin{align}
          \E_q[\psi](\tau_2)&\lesssim\E_q[\psi](\tau_1)\,,\label{eq:sub:Eng-boundedness}\\
          \int_{\I^{\tau_2}_{\tau_1}} \left(|\d_u(t^q \psi)|^2+ \frac{1}{r^2}|\d_u(t^qr \psi)|^2 + \frac{t^{2q}}{r^2}|\rslnabla \psi|^2 + t^{4q} \V \psi^2 + \frac{t^{2q}}{r^2}\psi^2\right)  \, r^2dvd\omega&\lesssim\E_q[\psi](\tau_1)\label{eq:sub:nullInfty-bound}\,.
      \end{align}
\end{prop}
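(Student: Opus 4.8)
The plan is to apply the twisted divergence identity \eqref{eq:TEMT:DivergenceIdentity} with the default twisting function $\beta=t^{-q}$, the timelike multiplier $X=T:=t^q\d_t$, and (if needed) a modification function $w$ chosen to absorb the zeroth-order terms, then integrate over $\R^{\tau_2}_{\tau_1}$ and invoke the divergence theorem, Lemma \ref{lem:general:divergenceThm}. First I would compute the deformation tensor ${}^{(T)}\pi_{\mu\nu}$ for $T=t^q\d_t$ with respect to the FLRW metric and the corresponding bulk term $\tilde K^{T}[\psi]={}^{(T)}\pi^{\mu\nu}\tilde\TT^\beta_{\mu\nu}[\psi]+T^\nu \tilde S^\beta_\nu[\psi]$. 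As indicated in the overview, the result should be that for a wave solution $\nabla^\mu \tilde J^{\beta,T}_\mu[\psi]=(1-q)t^{q-1}\V\psi^2$, which is manifestly non-negative when $q\leq\frac12$ since $\V=-q(2q-1)t^{-2}\geq0$ there. This sign is exactly what makes the spacetime term on the left-hand side have the good sign, so no modification function is even required for the $\S_\tau$ part; $w$ may still be convenient to package the $\frac{1}{1+r^2}\psi^2$ term and the $\frac1{r^2}|\d_r(r\psi)|^2$ term in the definition \eqref{eq:sub:energy-sigma-tau}, which come from rewriting $|\nabla\psi|^2$ in spherical coordinates as $|\d_r\psi|^2+\frac1{r^2 t^{2q}}|\rslnabla\psi|^2$ and using a Hardy/Poincaré-type inequality near $r=0$.

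Next I would identify the boundary integrands. The divergence theorem produces boundary terms on $\S_{\tau_2}$, $\S_{\tau_1}$ (spacelike pieces, contributing $-\tilde J^{T,w}_t[\psi]t^{3q}dx$), on $\C_{\tau_2}$, $\C_{\tau_1}$ (null pieces, contributing $\mp\tilde J^{T,w}_v[\psi]t^{2q}r^2 dvd\omega$), and on $\I^{\tau_2}_{\tau_1}$ (the $-\tilde J^{T,w}_u[\psi]t^{2q}r^2 dud\omega$ term at future null infinity). The key algebraic check is that $-\tilde J^{T}_t[\psi]$ on a $\{t=\text{const}\}$-type slice equals, up to positive constants, the quantity $t^{2q}(|\d_t(t^q\psi)|^2+|\d_r\psi|^2+\frac1{r^2t^{2q}}|\rslnabla\psi|^2+t^{2q}\V\psi^2)$, matching the $\S_\tau$ integrand in \eqref{eq:sub:energy-sigma-tau}; similarly $-\tilde J^{T}_v[\psi]r^2$ on the null piece reproduces $|\d_v(t^q\psi)|^2+\frac{t^{2q}}{r^2}|\rslnabla\psi|^2+t^{4q}\V\psi^2$ after a null-frame decomposition of $\tilde\TT^\beta_{\mu\nu}$, using $g=t^{2q}(-4dvdu+r^2 d\omega^2)$ and the null identities from Section \ref{sec:metric-wave-foliation}. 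Because $T=t^q\d_t=\frac12(\d_v+\d_u)$ is future-directed causal and $\tilde\TT^\beta$ satisfies the dominant energy condition wherever $\V\geq0$, every boundary flux through the null pieces and through $\I^+$ is non-negative; this immediately yields both \eqref{eq:sub:Eng-boundedness} and the $\I^+$-flux bound \eqref{eq:sub:nullInfty-bound}, since the $\I^+$ term lands on the coercive side once we move it across. The remaining terms in \eqref{eq:sub:energy-sigma-tau} not directly appearing — the $\frac1{r^2}|\d_v(t^q r\psi)|^2$ term on $\C_\tau$ and the $\frac1{r^2}|\d_r(r\psi)|^2$, $\frac{1}{1+r^2}\psi^2$, $\frac{t^{2q}}{r^2}\psi^2$ terms — I would recover via Hardy inequalities along $\C_\tau$ and along $\S_\tau$ (noting $\d_v(t^qr\psi)=r\d_v(t^q\psi)+t^q\psi$, so controlling $|\d_v(t^q\psi)|^2$ and $\frac{t^{2q}}{r^2}\psi^2$ suffices), together with Lemma \ref{lem:general:t-r} to trade $t^{2q-2}\psi^2$ against $\frac1{1+r^2}\psi^2$ in the $q=\frac12$ case as the text already anticipates.

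I would organise the write-up as: (i) state the choice $\beta=t^{-q}$, $X=T=t^q\d_t$, compute $\tilde K^{T}[\psi]=(1-q)t^{q-1}\V\psi^2\geq0$; (ii) compute the boundary currents $\tilde J^{T}_t$, $\tilde J^{T}_v$, $\tilde J^{T}_u$ in the relevant adapted frames and check positivity/coercivity; (iii) apply Lemma \ref{lem:general:divergenceThm}, drop the non-negative bulk term, and read off \eqref{eq:sub:Eng-boundedness} and \eqref{eq:sub:nullInfty-bound}; (iv) upgrade to the full energy \eqref{eq:sub:energy-sigma-tau} using Hardy inequalities and Lemma \ref{lem:general:t-r}. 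The main obstacle I anticipate is step (ii): carefully verifying that the boundary fluxes are not merely sign-definite but \emph{coercive} in exactly the combination of derivatives appearing in \eqref{eq:sub:energy-sigma-tau} — in particular that the cross terms generated by the twisting ($\d_t(t^q\psi)$ versus bare $\d_t\psi$) and by the conformal factor $t^{2q}$ in front of $dvdu$ assemble correctly, and that near $\I^+$ (where $r\to\infty$ and the conformal structure degenerates) the flux through $\I^{\tau_2}_{\tau_1}$ is genuinely finite and has the right sign rather than hiding a divergence. A secondary subtlety is the corner/matching behaviour at $\{r=\rho\}$ where $\Sigma_\tau$ transitions from spacelike to null, but since the divergence identity is a pointwise identity and the hypersurface is Lipschitz there, no genuine boundary contribution arises and this can be handled by the standard limiting argument.
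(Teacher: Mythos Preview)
Your core strategy---choose $\beta=t^{-q}$ and $X=T=t^q\d_t$, compute $\tilde K^{T}[\psi]=(1-q)t^{q-1}\V\psi^2\geq0$ for $q\leq\frac12$, apply Lemma~\ref{lem:general:divergenceThm}, and read off the non-negative boundary fluxes on $\S_\tau$, $\C_\tau$, $\I^+$---is exactly what the paper does in its first step, and the flux on $\I^{\tau_2}_{\tau_1}$ indeed lands on the coercive side as you say.

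The one genuine difference is your step (iv). To capture the $\frac{1}{r^2}|\d_r(r\psi)|^2$ term on $\S_\tau$ and the $\frac{1}{r^2}|\d_v(t^q r\psi)|^2$ term on $\C_\tau$ that appear in \eqref{eq:sub:energy-sigma-tau}, the paper does \emph{not} invoke Hardy inequalities. Instead it runs a \emph{second} energy identity with the twisting function $\beta_2=(t^q r)^{-1}$ and the same multiplier $X=t^q\d_t$; since $\V_{\beta_2}=\V$, the bulk term is again $(1-q)t^{q-1}\V\psi^2\geq0$, and the boundary fluxes now contain the $r$-twisted derivatives directly. The remaining zeroth-order pieces $\frac{t^{2q}}{r^2}\psi^2$ and $\frac{t^{2q}}{1+r^2}\psi^2$ are then obtained from the purely algebraic identity $t^{2q}\psi^2\lesssim r^2|\d_m(t^q\psi)|^2+|\d_m(t^q r\psi)|^2$ (with $m=r,v,u$), which simply expresses $t^q\psi=\d_m(t^q r\psi)-r\,\d_m(t^q\psi)$, combining the two energies. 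Your Hardy route is viable in principle---the Hardy inequality along $\C_\tau$ has the favourable boundary sign at $r=\rho$, and the trace there can be fed into the $\S_\tau$ integration-by-parts---but it requires tracking the $r=\rho$ boundary term between the spacelike and null pieces. The paper's second-twist approach sidesteps that bookkeeping entirely and delivers both energies as clean conservation-type inequalities.
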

\begin{proof}
    To begin with, set $\beta_1=t^{-q}$ and $X_1=t^q\d_t$. Observe that
    \begin{align}
        \begin{split}
            &\tilde J^X_t [\psi]=\frac{1}{2t^{q}}\left(|\d_t(t^q\psi)|^2+|\nabla \psi|^2\right)+ \frac{1}{2}t^q\V\psi^2\, ,\\
        & \tilde \TT_{\mu\nu}[\psi]{}^{(X)}\pi^{\mu\nu}= \frac{q}{t^{1+q}} \left(|\d_t(t^q \psi)|^2-|\nabla \psi|^2 \right) - 2q t^{q-1}\V \psi^2\,,\\
        &X^\mu\tilde S_\mu[\psi] =\frac{q}{t^{1+q}} \left(-|\d_t(t^q \psi)|^2+|\nabla \psi|^2 \right) + (q+1) t^{q-1}\V \psi^2\,,\label{eq:sub:pointwise-X-S-boundedness}
        \end{split}
    \end{align}
    and
    \begin{equation}\label{eq:sub:pointwise-K_boundedness}
        \tilde K^X[\psi]= (1-q) t^{q-1}\V \psi^2\,.
    \end{equation}
    We also have
    \begin{align*}
        \tilde J^{\beta_1,X_1}_v [\psi]&= \frac{1}{2}(\tilde \TT_{vv}+\tilde \TT_{vu})= \frac{1}{2}\left( t^{-2q} |\d_v(t^q \psi)|^2 + \frac{1}{r^2}|\rslnabla \psi|^2 + t^{2q} \V \psi^2\right)\,,\\
        \tilde J^{\beta_1,X_1}_u [\psi]&= \frac{1}{2}(\tilde \TT_{uv}+\tilde\TT_{uu})= \frac{1}{2}\left( t^{-2q} |\d_u(t^q \psi)|^2 + \frac{1}{r^2}|\rslnabla \psi|^2 + t^{2q} \V \psi^2\right)\,.
    \end{align*}
    Thus, the twisted divergence identity \eqref{eq:TEMT:DivergenceIdentity}, after integration over the region $\R^{\tau_2}_{\tau_1}$ and using Lemma \ref{lem:general:divergenceThm}, yields
    \begin{align}\label{eq:sub:Conservation1-null}
    \begin{split}
        &\int_{\S_{\tau_2}}   t^{2q} \left( |\d_t(t^q \psi)|^2 + |\nabla\psi|^2 +t^{2q}\V\psi^2 \right) \,dx +\int_{\C_{\tau_2}} \left( |\d_v(t^q \psi)|^2 + \frac{t^{2q}}{r^2}|\rslnabla \psi|^2 + t^{4q} \V \psi^2 \right)  \, r^2dvd\omega \\
        &+ \int_{\I^{\tau_2}_{\tau_1}} \left( |\d_u(t^q \psi)|^2 + \frac{t^{2q}}{r^2}|\rslnabla \psi|^2 + t^{4q} \V \psi^2 \right)  \, r^2dud\omega +\int_{\R^{\tau_2}_{\tau_1}} t^{4q-1}\V\psi^2\, dxdt \\&\lesssim \int_{\S_{\tau_1}}   t^{2q} \left( |\d_t(t^q \psi)|^2 +  |\nabla\psi|^2 +t^{2q}\V\psi^2 \right) \,dx +\int_{\C_{\tau_1}} \left( |\d_v(t^q \psi)|^2 + \frac{t^{2q}}{r^2}|\rslnabla \psi|^2 + t^{4q} \V \psi^2 \right)  \, r^2dvd\omega.
    \end{split}
    \end{align}
    In order to obtain an energy estimate for the remaining terms of $\E_q[\psi](\tau)$, consider a new twisting function $\beta_2=\frac{1}{t^q r}$. We mark that the potential $\V_{\beta_2}$ associated to this new twisting function is still $\V=-\frac{q(2q-1)}{t^2}$. The same multiplier $X_2=t^q\d_t$ leads to the same bulk term, \textit{i.\@e.\@}\ $\tilde K^{\beta_2,X_2}[\psi]= (1-q) t^{q-1}\V \psi^2$ which gives rise to
    \begin{align}\label{eq:sub:rtqTwisted}
    \begin{split}
        \int_{\R^{\tau_2}_{\tau_1}} t^{4q-1}\V\psi^2\, dxdt &+ \int_{\S_{\tau_2}}   t^{2q} \left( |\d_t(t^q \psi)|^2 + \frac{1}{r^2}|\d_r(r \psi)|^2+\frac{1}{r^2}|\rslnabla\psi|^2 +t^{2q}\V\psi^2 \right) \,dx \\
        &+\int_{\C_{\tau_2}} \left( \frac{1}{r^2}|\d_v(t^qr \psi)|^2 + \frac{t^{2q}}{r^2}|\rslnabla \psi|^2 + t^{4q} \V \psi^2 \right)  \, r^2dvd\omega \\
        & + \int_{\I^{\tau_2}_{\tau_1}} \left( \frac{1}{r^2}|\d_u(rt^q \psi)|^2 + \frac{t^{2q}}{r^2}|\rslnabla \psi|^2 + t^{4q} \V \psi^2 \right)  \, r^2dud\omega \\&\lesssim \int_{\S_{\tau_1}}   t^{2q} \left( |\d_t(t^q \psi)|^2 +   \frac{1}{r^2}|\d_r(r \psi)|^2+\frac{1}{r^2}|\rslnabla\psi|^2 +t^{2q}\V\psi^2 \right) \,dx \\
        &+\int_{\C_{\tau_1}} \left(\frac{1}{r^2} |\d_v(t^q r\psi)|^2 + \frac{t^{2q}}{r^2}|\rslnabla \psi|^2 + t^{4q} \V \psi^2 \right)  \, r^2dvd\omega\,.
    \end{split}
    \end{align}
    To estimate the only remaining term, note that
    \begin{equation}\label{eq:sub:engbound-untwst}
        {t^{2q}}\psi^2\lesssim r^2|\d_m(t^q \psi)|^2 +|\d_m(t^q r\psi)|^2\,,
    \end{equation}
    where $m$ is either $v,u,$ or $r$. Combining the above with \eqref{eq:sub:Conservation1-null} and \eqref{eq:sub:rtqTwisted}, we arrive at \eqref{eq:sub:Eng-boundedness} and \eqref{eq:sub:nullInfty-bound}.
\end{proof}
The next proposition proves the ILED estimate \eqref{eq:intro:ILED} for $q\leq\frac{1}{2}$.
\begin{prop}[Integrated local energy decay for $0<q\leq\frac{1}{2}$]\label{prop:sub:ILED}
If $\psi$ solves the initial value problem \eqref{eq:intro:IVP-wave} with $0<q\leq\frac{1}{2}$, then for any $\tau_2>\tau_1$
      \begin{align}
          \int_{\R^{\tau_2}_{\tau_1}} \left( \frac{t^{3q}|\d_t\psi|^2}{(1+r)^{1+\delta}}+\frac{t^{q}|\d_t(t^q\psi)|^2}{1+r^{3+\delta}} +\frac{t^{q}|\d_r \psi|^2}{(1+r)^{1+\delta}} + \frac{t^q}{r^3}|\rslnabla\psi|^2+\frac{t^{q}\psi^2}{1+r^{3+\delta}} \right) \, dxdt&\lesssim \E_q[\psi](\tau_1)\,.\label{eq:sub:Morawetz}
      \end{align}
      Moreover, when $q=\frac{1}{2}$, we have
      \begin{align}
          \int_{\R^{\tau_2}_{\tau_1}} \left(  \frac{\sqrt t}{1+r^{1+\delta}} \left( |\d_t (\sqrt{t} \psi)|^2 + |\d_r\psi|^2 \right) +\frac{\sqrt t}{r^3}|\rslnabla \psi|^2 + \frac{\sqrt t}{1+ r^{3+\delta}} \psi ^2 \right)\; dxdt&\lesssim \E_{\frac{1}{2}}[\psi](\tau_1).\label{eq:radiation:Morawetz}
      \end{align}
\end{prop}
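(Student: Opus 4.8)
The plan is to apply the twisted divergence identity \eqref{eq:TEMT:DivergenceIdentity} with the default twisting function $\beta=t^{-q}$, using the radial vector field multipliers $X_1=\d_r$ and $X_2=(1+r)^{-\delta}\d_r$ together with carefully chosen modification functions $w_1,w_2$, and then to integrate over $\R^{\tau_2}_{\tau_1}$. The choice $X=f(r)\d_r$ is the standard Morawetz multiplier, and the role of the modification function $w$ is to absorb the borderline $\psi^2$ term coming from $\Box_g w$ and to supply a coercive zeroth-order contribution; concretely one takes $w\sim \tfrac{1}{2}\d_r(f) + \tfrac{f}{r}$ (the classical choice that turns the bulk into the Laplacian of $r\psi$ on the sphere). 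First I would compute the deformation tensor ${}^{(X)}\pi^{\mu\nu}$ and the source $\tilde S^\beta_\mu$ for $X_i=f_i(r)\d_r$ on the FLRW metric $g=-dt^2+t^{2q}\,dr^2+t^{2q}r^2\,d\omega^2$, and record $\tilde K^{\beta,X_i,w_i}[\psi]$ explicitly. The key structural point is that, because $\d_r$ is a conformal-type vector field for the spatial slice and the $t$-dependence of the metric is an overall conformal factor $a^2=t^{2q}$, the bulk term $\tilde K$ will have a definite sign for the angular part ($\sim \tfrac{t^q}{r^3}|\rslnabla\psi|^2$ from $X_1$) and, after adding the $w$-modification, a good $\psi^2$ term; the potential $\V=-q(2q-1)/t^2\ge0$ for $q\le\tfrac12$, which is exactly the sign needed so that the twisted potential term does not obstruct coercivity — this is the crucial place where the restriction $q\le\tfrac12$ enters.

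The second ingredient is the treatment of the $\d_t(t^q\psi)$ and $\d_r\psi$ terms. The multiplier $X_1=\d_r$ does not control $|\d_r\psi|^2$ near $r=0$ with a non-degenerate weight, so one uses $X_2=(1+r)^{-\delta}\d_r$ to recover the weight $(1+r)^{-1-\delta}$ on both $|\d_r\psi|^2$ and (after a Hardy/interpolation argument) on $|\d_t(t^q\psi)|^2$ with the stronger degeneracy $r^{-3-\delta}$ near infinity. The first line of \eqref{eq:sub:Morawetz}, the term $t^{3q}|\d_t\psi|^2(1+r)^{-1-\delta}$, requires one extra observation: writing $\d_t\psi = t^{-q}\d_t(t^q\psi) - q t^{-1}\psi$ and using Lemma \ref{lem:general:t-r} (so that $rt^{q-1}\lesssim 1$, hence $t^{-2}\psi^2$ is controlled by $(1+r)^{-2}t^{2q-2}\psi^2$ up to constants, and $t^{3q}t^{-2q}=t^q$) lets me convert the already-established $t^q|\d_t(t^q\psi)|^2(1+r^{3+\delta})^{-1}$ and $t^q\psi^2(1+r^{3+\delta})^{-1}$ bulk terms into the desired $t^{3q}|\d_t\psi|^2(1+r)^{-1-\delta}$ bound. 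All boundary terms on $\S_{\tau_i}$, $\C_{\tau_i}$ and $\I^{\tau_2}_{\tau_1}$ produced by the divergence theorem are then controlled: the future boundaries have a good sign or are absorbed, and the past boundaries $\Sigma_{\tau_1}$ together with $\I^{\tau_2}_{\tau_1}$ are bounded by $\E_q[\psi](\tau_1)$ using Proposition \ref{prop:sub:EB} and the energy flux definition \eqref{eq:sub:energy-sigma-tau}.

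Finally, for the sharpened estimate \eqref{eq:radiation:Morawetz} in the radiation case $q=\tfrac12$, the potential $\V$ vanishes, so the twisted energy-momentum tensor for $\beta=t^{-1/2}$ is genuinely divergence-free along the flow of $X$ up to the deformation-tensor term, and the conformal transformation to Minkowski is available. Here one simply repeats the Minkowski Morawetz argument (as sketched for \eqref{eq:intro:ILED-Mink}) in the twisted variables $\tilde\psi = t^{1/2}\psi$, obtaining the non-degenerate weights $(1+r^{1+\delta})^{-1}$ on $|\d_t(\sqrt t\,\psi)|^2$ and $|\d_r\psi|^2$ with the overall $\sqrt t$ factor dictated by the volume form $t^{3q}\,dt\,dx = t^{3/2}\,dt\,dx$ and the $t^{-2q}=t^{-1}$ from the twisted derivatives; again the right-hand boundary terms are absorbed and the left-hand ones dominated by $\E_{1/2}[\psi](\tau_1)$.

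I expect the main obstacle to be the bookkeeping near the centre $r=0$ and near $\I^+$ simultaneously: one must choose the pair $(X_i,w_i)$ and the small parameter $\delta$ so that the bulk term is pointwise nonnegative (coercive) in \emph{all} of $|\d_r\psi|^2$, $|\d_t(t^q\psi)|^2$, $r^{-3}|\rslnabla\psi|^2$ and $\psi^2$ with the stated $r$- and $t$-weights, without the cross terms spoiling the sign; this is the classical difficulty of Morawetz estimates and is where the interplay between the sign of $\V$ (hence the hypothesis $q\le\tfrac12$) and the Hardy inequality must be handled with care.
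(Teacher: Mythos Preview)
Your broad strategy---twisted framework with $\beta=t^{-q}$, the two radial multipliers $X_1=\d_r$ and $X_2=(1+r)^{-\delta}\d_r$, and modification functions $w_i\sim \tfrac{1}{r}(1+r)^{-\delta}$---matches the paper, and your treatment of the case $q=\tfrac12$ is correct. But for $q<\tfrac12$ there is a genuine gap in your sign analysis of the potential and in the conversion you propose.

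You write that $\V=-q(2q-1)/t^2\ge0$ is ``exactly the sign needed so that the twisted potential term does not obstruct coercivity.'' In the Morawetz bulk this is backwards. Computing $\tilde K^{X_2,w_2}[\psi]$, the terms with weight $\tfrac{\delta}{2}(1+r)^{-1-\delta}$ combine to
\[
-\,t^{-2q}|\d_t(t^q\psi)|^2 - t^{-2q}|\d_r\psi|^2 + t^{-2q}r^{-2}|\rslnabla\psi|^2 + \V\psi^2,
\]
so after moving to the left you face $|\d_t(t^q\psi)|^2+|\d_r\psi|^2 - t^{2q}\V\psi^2$. With $\V>0$ the $\psi^2$ term \emph{subtracts}, and it carries the slow weight $(1+r)^{-1-\delta}$, which the zeroth-order term from $-\tfrac12\Box_g w_2$ (with weight $\sim(1+r^{3+\delta})^{-1}$) cannot absorb. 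The paper says this explicitly: ``when $q\neq\tfrac12$, the final term which contains $\V$ has an unfavourable sign.''

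Your proposed fix---recovering $\tfrac{t^{3q}|\d_t\psi|^2}{(1+r)^{1+\delta}}$ from the already-established $\tfrac{t^q|\d_t(t^q\psi)|^2}{1+r^{3+\delta}}$---goes in the wrong direction: the inequality $|\d_t\psi|^2\lesssim t^{-2q}|\d_t(t^q\psi)|^2+t^{-2}\psi^2$ would require $\tfrac{t^q|\d_t(t^q\psi)|^2}{(1+r)^{1+\delta}}$, which is strictly stronger than what you claim to have. The missing ingredient is the pointwise \emph{untwisting identity}
\[
t^q|\d_t(t^q\psi)|^2 - t^{3q}\V\psi^2 \;=\; t^{3q}|\d_t\psi|^2 + \d_t\bigl(q\,t^{3q-1}\psi^2\bigr),
\]
which converts the bad pair into the good untwisted $t^{3q}|\d_t\psi|^2$ plus a total time derivative (producing only harmless boundary terms). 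This directly yields the $\tfrac{t^{3q}|\d_t\psi|^2}{(1+r)^{1+\delta}}$ bulk; the twisted derivative with the stronger weight $(1+r^{3+\delta})^{-1}$ then follows \emph{afterwards} from $|\d_t(t^q\psi)|^2\lesssim t^{2q}|\d_t\psi|^2+t^{2q-2}\psi^2$ together with $(1+r^{3+\delta})^{-1}\le(1+r)^{-1-\delta}$ and boundedness of $t^{2q-2}$.
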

\begin{proof}
    In order to show \eqref{eq:sub:Morawetz}, set $\beta_1=\frac{1}{t^q}$, $Y_1=\d_r$ and $z_1=\frac{1}{r}$.  We then are left with
    \begin{align*}\label{eq:sub:KY1z1}
        \begin{split}
            & Y_1^{\beta_1,\nu}\tilde S_\nu =0\,,\qquad \tilde K^{\beta_1,Y_1,z_1} [\psi] = \frac{1}{t^{2q} r^3}|\rslnabla \psi |^2\,,\\
            &\tilde J^{\beta_1,Y_1,z_1}_t=\frac{1}{r t^q} \psi \d_t(t^q \psi) + \frac{1}{t^q} \d_t(t^q \psi) \d_r \psi\,,\qquad \tilde J ^{\beta_1,Y_1,z_1}_r=\frac{1}{2} \left( |\d_t (t^q \psi)|^2 + |\frac{1}{r} \d_r(r\psi)|^2 - \frac{1}{r^2}|\rslnabla \psi|^2 -t^{2q}\V\psi^2\right)\,,\\
            &\tilde J^{\beta_1,Y_1,z_1}_v=\frac{1}{2t^{2q}}|\d_v(t^q \psi)|^2 - \frac{1}{2r^2}|\rslnabla\psi|^2+\frac{1}{2}t^{2q}\V\psi^2+ \frac{1}{rt^q} \psi\d_v(t^q \psi)+ \frac{1}{2r^2}\psi^2\,,\\
            &\tilde J^{\beta_1,Y_1,z_1}_u=-\frac{1}{2t^{2q}}|\d_u(t^q \psi)|^2 + \frac{1}{2r^2}|\rslnabla\psi|^2-\frac{1}{2}t^{2q}\V\psi^2+\frac{1}{r t^q} \psi\d_u(t^q \psi)- \frac{1}{2r^2}\psi^2\,.
        \end{split}
    \end{align*}
    The divergence identity \eqref{eq:TEMT:DivergenceIdentity} applied on $\R^{\tau_2}_{\tau_1}\setminus \{r<\epsilon\}$ for small $\epsilon\ll\rho$, leads to
    \begin{align*}
         &\int_{\R^{\tau_2}_{\tau_1}} \frac{t^q}{r^3}|\rslnabla \psi |^2 \, dxdt+\int_{\S_{\tau_2}}  \tilde J^{\beta_1,Y_1,z_1}_t  \,t^{3q} dx + \int_{\C_{\tau_2}}  \tilde J^{\beta_1,Y_1,z_1}_v \, t^{2q}r^2 dvd\omega\\&+ \int_{\I^{\tau_2}_{\tau_1}}  \tilde J^{\beta_1,Y_1,z_1}_u \, t^{2q}r^2 dud\omega+ \epsilon^2\int_ {\d \{r<\epsilon\}} \tilde{J}^{\beta_1,Y_1,z_1}_r t^{3q}  \, dt d\omega\\&=  \int_{\S_{\tau_1}}  \tilde J^{\beta_1,Y_1,z_1}_t  \,t^{3q} dx + \int_{\C_{\tau_1}}  \tilde J^{\beta_1,Y_1,z_1}_v \, t^{2q}r^2 dvd\omega \,.
    \end{align*}
    Now note that the only terms in $\tilde{J}^{\beta_1,Y_1,z_1}_r$ with bad sign are bounded on the compact region $\d \{r<\epsilon\}$, and they vanish as $\epsilon\to 0$. Therefore, after passing to the limit as $\epsilon\to 0$, we arrive at
    \begin{align*}
        \int_{\R^{\tau_2}_{\tau_1}} \frac{t^q}{ r^3}|\rslnabla \psi |^2\,dxdt \lesssim \E_q[\psi](\tau_1)\,,
    \end{align*}
    thanks to \eqref{eq:sub:Eng-boundedness} and \eqref{eq:sub:nullInfty-bound} for controlling all the boundary terms. We now use the above estimate involving the angular derivative to control the other derivatives by letting $\beta_1=t^{-q}$, $Y_2=\frac{1}{(1+r)^\delta}\d_r$, and $z_2=\frac{1}{r(1+r)^\delta}$ for any positive $\delta$. In this setting, we have
    \begin{align*}
         \tilde K^{\beta_1,Y_2, z_2} [\psi]=& \frac{1}{t^{2q}r^3(1+r)^\delta} |\rslnabla \psi|^2 +\frac{-\delta}{2t^{2q} (1+r)^{1+\delta}}\left( |\d_t(t^q \psi)|^2 + |\d_r \psi|^2 - \frac{1}{r^2} |\rslnabla \psi|^2 \right) \\&- \frac{1}{2} \psi^2 \Box_g z_2+\frac{\delta}{2(1+r)^{1+\delta}}\V\psi^2\,.
    \end{align*}
    Note that in the radiation case, the last term above vanishes as $\V=0$, and all the terms involving $\rslnabla \psi|^2$ in $\tilde K ^{\beta_1,Y_2,z_2}[\psi]$ have a sign opposite to other terms in $\tilde K ^{\beta_1,Y_2,z_2}[\psi]$. So, after integration and using the divergence identity, we arrive at \eqref{eq:radiation:Morawetz}. However, when $q\neq\frac{1}{2}$, the final term which contains $\V$ has an unfavourable sign. To resolve this issue, we untwist the time derivative. In fact, after taking the volume form into account, we have
    \begin{equation*}
        t^q |\d_t(t^q \psi)|^2-t^{3q}\V \psi^2 = t^{3q}|\d_t\psi|^2+ \d_t(qt^{3q-1}\psi^2)\,.
    \end{equation*}
    Therefore, from the divergence identity \eqref{eq:TEMT:DivergenceIdentity}, we infer
    \begin{align*}
        &\int_{\R^{\tau_2}_{\tau_1}} \left( \frac{t^{q}}{(1+r)^{1+\delta}}(t^{2q}|\d_t\psi|^2 + |\d_r \psi|^2)+t^{3q}\psi^2\Box_g z_2 \right)\, dxdt+ \int_{\S_{\tau_2}} \frac{t^{3q-1}\psi^2}{(1+r)^{1+\delta}}\,dx+\int_{\C_{\tau_2}} \frac{t^{3q-1}\psi^2}{(1+r)^{1+\delta}}\,r^2dvd\omega\\
        &\lesssim \int_{\R^{\tau_2}_{\tau_1}} \frac{t^q}{r^3}|\rslnabla\psi|^2\, dxdt+ \sum_{i=1,2} \left(\int_{\S_{\tau_i}}  |\tilde J^{\beta_1,Y_2,z_2}_t|  \,t^{3q} dx + \int_{\C_{\tau_i}}  |\tilde J^{\beta_1,Y_2,z_2}_v| \, t^{2q}r^2 dvd\omega \right)+ \int_{\I^{\tau_2}_{\tau_1}}  |\tilde J^{\beta_1,Y_2,z_2}_u| \, t^{2q}r^2 dud\omega\\&+\int_{\S_{\tau_1}} \frac{t^{3q-1}}{(1+r)^{1+\delta}}\psi^2\,dx+\int_{\C_{\tau_1}} \frac{t^{3q-1}}{(1+r)^{1+\delta}}\psi^2\,r^2dvd\omega\,.
    \end{align*}
    One more time, estimates \eqref{eq:sub:Eng-boundedness} and \eqref{eq:sub:nullInfty-bound} control the boundary terms in the second line in the above estimate. For the last two boundary term, recall Lemma \ref{lem:general:t-r}, and observe
    \begin{align*}
        \int_{\S_{\tau_1}} \frac{t^{3q-1}}{(1+r)^{1+\delta}}\psi^2\,dx+\int_{\C_{\tau_1}} \frac{t^{3q-1}}{(1+r)^{1+\delta}}\psi^2\,r^2dvd\omega \lesssim \int_{\S_{\tau_1}} \frac{t^{2q}}{1+r^{2}}\psi^2\,dx+\int_{\C_{\tau_1}} \frac{t^{2q}}{r^2}\psi^2\,r^2dvd\omega\lesssim \E_q[\psi](\tau_1)\,.
    \end{align*}
    Thus,
    \begin{align*}
        \int_{\R^{\tau_2}_{\tau_1}} \left( \frac{t^{q}}{(1+r)^{1+\delta}}(t^{2q}|\d_t\psi|^2 + |\d_r \psi|^2)+t^{3q}\psi^2\Box_g z_2 \right)\, dxdt\lesssim\E_q[\psi](\tau_1)\,.
    \end{align*}
     To estimate the zeroth-order term in \eqref{eq:sub:Morawetz}, note that 
    \begin{equation}\label{eq:sub:Boxz2}
        \frac{1}{t^{2q}(1+ r^{3+\delta})} \psi ^2 \lesssim \psi^2 \Box_g z_2.
    \end{equation}
    Finally, in order to bound the twisted time derivative in \eqref{eq:sub:Morawetz}, observe
    \begin{align*}
        \int_{\R^{\tau_2}_{\tau_1}} \frac{t^q}{1+r^{3+\delta}} |\d_t(t^q \psi)|^2\,dxdt&\lesssim \int_{\R^{\tau_2}_{\tau_1}}\left( \frac{t^q}{1+r^{3+\delta}} \psi^2 + \frac{t^{3q}}{(1+r)^{1+\delta}} |\d_t\psi|^2\right)\,dxdt\lesssim\E_q[\psi](\tau_1) \,.
    \end{align*}
\end{proof}
Before stating the next proposition, we recall that $\D^{\tau_2}_{\tau_2}=\R^{\tau_2}_{\tau_1}\cap \{r>\rho\}$ for any $\tau_2>\tau_1\geq \tau_0$ and that $\sigma_q=2\frac{(q|1-2q|)^\frac{1}{2}}{|1-q|^2}$, as defined in \eqref{eq:intro:sigma_q}.
\begin{prop}[$r^p$-estimates for waves on FLRW with $0<q\leq\frac{1}{2}$]\label{prop:sub:rp}
Suppose $\psi$ is a solution to the initial value problem \eqref{eq:intro:IVP-wave} for $0<q\leq\frac{1}{2}$. Then, for any $\tau_2>\tau_1\geq \tau_0$ and  $0<p< 2-\sigma_q$, we have
    \begin{align}\label{eq:sub:rpEstimate}
    \begin{split}
        &\int_{\D^{\tau_2}_{\tau_1}} r^{p-1} \left( |\d_v \varphi|^2 +r^2|\d_v(t^q \psi)|^2+ \frac{1}{r^2}|\rslnabla\varphi|^2+ \frac{1}{r^2}\varphi^2\right) \, dudvd\omega + \int_{\C_{\tau_2}} r^p|\d_v\varphi|^2 \, dvd\omega \\&+  \int_{\I^{\tau_2}_{\tau_1}} r^{p-2}\left(|\rslnabla\varphi|^2 +\varphi^2\right)\, dud\omega
        \lesssim \int_{\C_{\tau_1}} r^p|\d_v\varphi|^2 \, dvd\omega +\E_{q}[\psi](\tau_1)\,,
    \end{split}
\end{align}
in which $\varphi=rt^q\psi$. Moreover, in the case $q=\frac{1}{2}$, in addition to the estimate \eqref{eq:sub:rpEstimate}, we also have
\begin{align}\label{eq:radiation:r2Estimate}
    \begin{split}
        \int_{\D^{\tau_2}_{\tau_1}} r  |\d_v \varphi|^2 \, dudvd\omega + \int_{\C_{\tau_2}} r^2|\d_v\varphi|^2 \, dvd\omega +  \int_{\I^{\tau_2}_{\tau_1}} |\rslnabla\varphi|^2 \, dud\omega
        \lesssim \int_{\C_{\tau_1}} r^2|\d_v\varphi|^2 \, dvd\omega +\E_{\frac{1}{2}}[\psi](\tau_1)\,,
    \end{split}
\end{align}
which corresponds to $p=2$.
\end{prop}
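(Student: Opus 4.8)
The plan is to carry out the Dafermos--Rodnianski $r^p$-argument on the far region $\D^{\tau_2}_{\tau_1}$, starting from the double-null form of the equation \eqref{eq:general:wave-double-null}: for $\varphi=rt^q\psi$ one has
\[
\d_u\d_v\varphi=\frac1{r^2}\rslDelta\varphi-t^{2q}\V\varphi,\qquad t^{2q}\V=-q(2q-1)t^{2q-2}=q(1-2q)t^{2q-2}\ge0,
\]
the last inequality holding since $q\le\frac12$. Concretely, I would apply the twisted divergence identity \eqref{eq:TEMT:DivergenceIdentity} with twisting function $\beta=t^{-q}$, vector field multiplier $X=t^qr^{p-1}\d_v$ and a suitable modification function, localised by a cutoff in $r$ near $\rho$, and integrate over $\D^{\tau_2}_{\tau_1}$; this amounts to multiplying the $\varphi$-equation by $2r^p\d_v\varphi$ and integrating with respect to $du\,dv\,d\omega$.

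Integration by parts in $u$, using $\d_u(r^p|\d_v\varphi|^2)=-pr^{p-1}|\d_v\varphi|^2+2r^p\d_v\varphi\,\d_u\d_v\varphi$, produces the good bulk term $pr^{p-1}|\d_v\varphi|^2$ together with the fluxes $\int_{\C_{\tau_2}}r^p|\d_v\varphi|^2$ (kept on the left) and $\int_{\C_{\tau_1}}r^p|\d_v\varphi|^2$ (on the right). Integrating the angular term by parts on $\SS^2$ and then in $v$ gives the bulk term $(2-p)r^{p-3}|\rslnabla\varphi|^2$, which has a good sign since $p<2$, together with the nonnegative boundary fluxes at $\I^{\tau_2}_{\tau_1}$. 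Integrating the potential term by parts in $v$, using $\d_vt=t^q$, produces a bulk term $\propto(2-2q)q(1-2q)\,t^{3q-3}r^p\varphi^2$ of \emph{favourable} sign, plus the leftover error term $pq(1-2q)\int_{\D^{\tau_2}_{\tau_1}}t^{2q-2}r^{p-1}\varphi^2=pq(1-2q)\int_{\D^{\tau_2}_{\tau_1}}(t^{q-1}r)^2\,r^{p-3}\varphi^2$. All cutoff-derivative terms and boundary terms on $\{r=\rho\}$ are supported in a bounded-$r$ slab and, after converting volume forms, are controlled by the ILED estimate \eqref{eq:sub:Morawetz} together with Proposition \ref{prop:sub:EB}; the boundary terms at $\I^{\tau_2}_{\tau_1}$ carry a weight $r^{p-2}$ and vanish there for $p<2$.

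It remains to absorb the error term $pq(1-2q)\int(t^{q-1}r)^2r^{p-3}\varphi^2$, and this is exactly where the restriction $p<2-\sigma_q$ appears. Write $\gamma:=\frac{q(1-2q)}{(1-q)^2}=(\sigma_q/2)^2$ and split $\D^{\tau_2}_{\tau_1}=(\D^{\tau_2}_{\tau_1}\cap\{u>0\})\cup(\D^{\tau_2}_{\tau_1}\cap\{u\le0\})$. On $\{u>0\}$, Lemma \ref{lem:general:t-r}(1) gives $(t^{q-1}r)^2\le(1-q)^{-2}$, so the error is at most $p\gamma\int r^{p-3}\varphi^2$; a one-dimensional Hardy inequality along each $\C_\tau$ (recall $\d_v$ is tangent to $\C_\tau$, where $r=v-u$) bounds $\int_{\C_\tau}r^{p-3}\varphi^2$ by $\frac4{(2-p)^2}\int_{\C_\tau}r^{p-1}|\d_v\varphi|^2$ plus a boundary term at $r=\rho$ controlled by \eqref{eq:sub:Morawetz}, so the error is absorbed into $p\int r^{p-1}|\d_v\varphi|^2$ precisely when $\frac{4\gamma}{(2-p)^2}<1$, i.e.\@ $p<2-\sigma_q$. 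On $\{u\le0\}$, which forces $\tau\le\rho/2$ and hence $\tau_2-\tau_1$ bounded, the factor $(t^{q-1}r)^2$ is merely bounded by a constant depending on $\tau_0$ and $\rho$; setting $F(\tau)=\int_{\C_\tau}r^p|\d_v\varphi|^2$, the Hardy inequality then yields an estimate of the form $F(\tau_2)\lesssim F(\tau_1)+\E_q[\psi](\tau_1)+\int_{\tau_1}^{\tau_2}F(\tau)\,d\tau$, and Grönwall's inequality (Lemma \ref{lem:Gronwall}) closes the estimate on this region; feeding the resulting bound on $\sup_\tau F(\tau)$ back into the identity recovers the spacetime integral there. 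Patching along $\{u=0\}$, and observing that the remaining quantities in \eqref{eq:sub:rpEstimate} are pointwise dominated by what has been estimated — e.g.\@ $r^2|\d_v(t^q\psi)|^2=|\d_v\varphi-r^{-1}\varphi|^2\lesssim|\d_v\varphi|^2+r^{-2}\varphi^2$ — completes the proof of \eqref{eq:sub:rpEstimate}.

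In the radiation case $q=\frac12$ one has $\V\equiv0$ and $\sigma_q=0$, so the error term is absent and the entire argument goes through for every $0<p\le2$; at the endpoint $p=2$ the bulk angular term $(2-p)r^{p-3}|\rslnabla\varphi|^2$ degenerates, but the angular flux now survives as $\int_{\I^{\tau_2}_{\tau_1}}|\rslnabla\varphi|^2$ and, together with the bulk term $2r|\d_v\varphi|^2$ and the $\C_{\tau_i}$ fluxes $r^2|\d_v\varphi|^2$, yields \eqref{eq:radiation:r2Estimate}. The main obstacle is the potential-generated error term: it is not uniformly absorbable over the whole far region, and handling it requires the dichotomy above, with the sharp constant comparison $4\gamma<(2-p)^2$ dictating the admissible range of $p$ in the region $\{u>0\}$ and a Grönwall argument covering the complementary region $\{u\le0\}$ of bounded $\tau$.
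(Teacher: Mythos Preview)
Your approach is correct and takes a genuinely different route from the paper's for the error absorption. The paper does not integrate the potential term by parts in $v$; instead it adds $\gamma$ times the identity $\d_v(r^{p-2}\varphi^2)+(2-p)r^{p-3}\varphi^2=2r^{p-2}\varphi\d_v\varphi$ to the basic multiplier identity, producing a good zeroth-order bulk term $\gamma(2-p)r^{p-3}\varphi^2$, and then bounds the combined cross term $2r^{p-2}(\gamma+r^2\W)\varphi\,\d_v\varphi$ by Young's inequality on $\{u>0\}$. This pointwise absorption forces the \emph{two-sided} constraint $\sigma_q<p<2-\sigma_q$; to cover the remaining range $0<p<1$ the paper carries out a second, separate computation in which $\d_v\varphi$ is untwisted to $t^q\d_v(r\psi)$ and all terms are reorganised to have manifestly good signs. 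Your Hardy-based absorption, by contrast, needs only the one-sided condition $4\gamma<(2-p)^2$, i.e.\ $p<2-\sigma_q$, and so obtains the full range in a single pass; the Gr\"onwall treatment of the region $\{u\le0\}$ is essentially the same in both proofs. What the paper's route buys is that it never leaves the realm of pointwise identities (no Hardy), at the cost of the extra case split in $p$.

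Two minor points to tighten. First, the $\I^+$ fluxes $r^{p-2}|\rslnabla\varphi|^2$ and (from your $\d_v(r^p\W\varphi^2)$) $r^p\W\varphi^2\sim\gamma\,r^{p-2}\varphi^2$ carry a \emph{good} sign and are part of the conclusion~\eqref{eq:sub:rpEstimate}; they should be kept on the left, not said to vanish. Second, for $q=\tfrac12$ one has $\W\equiv0$, so the $\varphi^2$ flux at $\I^+$ in \eqref{eq:sub:rpEstimate} does not arise from your potential integration by parts; recover it either by adding a small multiple of the zeroth-order identity above (as the paper does), or a posteriori by integrating $\d_v(r^{p-2}\varphi^2)$ over $\D^{\tau_2}_{\tau_1}$ and using the already-controlled bulk quantities $\int r^{p-3}\varphi^2$ and $\int r^{p-1}|\d_v\varphi|^2$.
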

\begin{proof}
    To begin with, recall from \eqref{eq:general:wave-double-null} that $\varphi=t^qr\psi$ satisfies 
    \begin{equation*}
        -\d_{uv} \varphi + \frac{1}{r^2} \rslDelta\varphi - \W \varphi=0\,,
    \end{equation*}
    in which $\W=t^{2q}\V=q(1-2q)t^{2q-2}>0$. Multiply the above by $-2r^p \d_v \varphi$ to get
    \begin{align}
        &\d_u(r^p |\d_v \varphi|^2)  - \rsldiv( 2r^{p-2}\d_v\varphi \rslnabla\varphi) + \d_v\left(r^{p-2}|\rslnabla \varphi|^2\right)\notag \\
        &+ p r^{p-1}|\d_v \varphi|^2- (p-2)r^{p-3}|\rslnabla\varphi|^2+2r^p \W \varphi\d_v \varphi=0\,.\label{eq:sub:rpmethod-pointwise1}
    \end{align}    
    In addition, the following identity holds for any smooth function $\varphi$,
    \begin{equation}\label{eq:sub:rp-zeroth-order-term}
        \d_v(r^{p-2} \varphi^2)+(2-p)r^{p-3}\varphi^2 - 2r^{p-2}\varphi\d_v \varphi=0\,.
    \end{equation}
    Now we add the above equation times $\gamma>0$ to \eqref{eq:sub:rpmethod-pointwise1} to get
    \begin{align}\label{eq:sub:r^p-pointwise}
    \begin{split}
        &\d_u(r^p |\d_v \varphi|^2)  - \rsldiv( 2r^{p-2}\d_v\varphi \rslnabla\varphi) + \d_v(r^{p-2}|\rslnabla \varphi|^2+\gamma r^{p-2} \varphi^2) \\
        &+ p r^{p-1}|\d_v \varphi|^2+(2-p)r^{p-3}\left(|\rslnabla\varphi|^2+\gamma\varphi^2\right)=2r^{p-2}\left(\gamma- r^2\W  \right) \varphi\d_v \varphi\,.
    \end{split}
    \end{align}
    In the identity above, all terms on the left-hand side are either boundary terms or have a favourable  sign. The only error term appears on the right-hand side and should be controlled by the other terms.

    We first assume that $\tau_2>\tau_1\geq\frac{\rho}{2}$, \textit{i.\@e.\@}\  $u\geq 0$.
    Before integrating over $\D^{\tau_2}_{\tau_1}$, recall that Lemma \ref{lem:general:t-r} implies that for any point in $\U^+$, we have 
    $$r^2t^{2q-2}\leq \frac{1}{(1-q)^2}\,.$$
    In this region, if we let
     $\gamma=\frac{q|1-2q|}{|1-q|^2}$, we have
    \begin{align}\label{eq:sub:error-pointwise}
         2r^{p-2}\left(\gamma+ r^2|\W|  \right) |\varphi||\d_v \varphi|\leq 2 r^{p-2}\left(\gamma+ \frac{q|1-2q|}{|1-q|^2}  \right)|\varphi||\d_v \varphi|\leq  2\gamma^\frac{3}{2} r^{p-3}|\varphi|^2+2 \gamma^ \frac{1}{2}r^{p-1}|\d_v \varphi|^2\,.
    \end{align}
     So, as long as $2\sqrt \gamma< p<2 -2\sqrt \gamma$, the error term in \eqref{eq:sub:r^p-pointwise} can be controlled by the left-hand side of \eqref{eq:sub:r^p-pointwise} in $\D^{\tau_2}_{\tau_1}\cap \U^+$. Thus, 
    \begin{align}
    \begin{split}
        &\int_{\D^{\tau_2}_{\tau_1}} r^{p-1} \left( |\d_v \varphi|^2 + \frac{1}{r^2}|\rslnabla\varphi|^2+ \frac{1}{r^2}\varphi^2\right) \, dudvd\omega + \int_{\C_{\tau_2}} r^p|\d_v\varphi|^2 \, dvd\omega + \int_{\I^{\tau_2}_{\tau_1}} r^{p-2}\left(|\rslnabla\varphi|^2 +\varphi^2\right)\, dud\omega\\
        &\lesssim \int_{\C_{\tau_1}} r^p|\d_v\varphi|^2 \, dvd\omega + \underbrace{\int_{\{r=\rho, \tau_1\leq \tau\leq \tau_2\}} r^{p-2}\left(|\rslnabla\varphi|^2 +\varphi^2\right) \, dud\omega- \int_{\{r=\rho, \tau_1\leq \tau\leq \tau_2\}} r^p|\d_v\varphi|^2 \, dvd\omega}_{I}\,,
    \end{split}\label{eq:sub:rp-one-errors}
\end{align}
for any $\tau_2>\tau_1\geq\frac{\rho}{2}$.

We can now use the ILED estimate \eqref{eq:sub:Morawetz} to estimate the boundary term $I$ on $\{r=\rho\}$ as the power of $r$ in the weight factors does not matter when $r=\rho$. To do so, first note that we can modify the value of $\rho$ in order to use the mean value theorem on the region $\{r\sim \rho\}=\{\rho - \epsilon \leq r \leq \rho +\epsilon\}$ for a small $\epsilon>0$. In fact, there exists $\rho'\in\{r\sim \rho\}$ such that
\begin{align}
\begin{split}
    |I|&\lesssim\ \int_{\{r=\rho'\}} \left( \frac{t^q}{r^3} |\rslnabla\psi|^2 + \frac{1}{t^q}|\d_v (t^q \psi)|^2 + t^q \psi^2 \,\right) r^2 dtd\omega \\
   &\lesssim  \int_{\{r\sim\rho\}} \left(\frac{t^q}{r^3}|\rslnabla \psi|^2 + \frac{t^q}{1+r^{1+\delta}}  |\d_r\psi|^2  + \frac{t^q}{1+ r^{3+\delta}} \left(|\d_t (t^q \psi)|^2 + \psi ^2\right)\right) \; r^2dtdrd\omega \\&\stackrel{\eqref{eq:sub:Morawetz}}{\lesssim}  \E_{q}[\psi](\tau_1)\,.\label{eq:sub:MeanValueThm}
\end{split}
\end{align}
For simplicity of notation, we continue to use $\rho$ in place of $\rho'$.

To control term $r^{p+1}\d_v(t^q\psi)$ in \eqref{eq:sub:rpEstimate}, note that
\begin{align*}
    r^{p+1}|\d_v(t^q\psi)|^2=r^{p-1}|\d_v \varphi|^2+ (p-1) r^{p-3}\varphi^2 -\d_v(r^{p-2}\varphi^2)\,,
\end{align*}
leading to
\begin{align}\label{eq:sub:rp-untwst-v}
    \int_{\D^{\tau_2}_{\tau_1}} r^{p+1}|\d_v(t^q\psi)|^2\,dudvd\omega &\lesssim \int_{\D^{\tau_2}_{\tau_1}} r^{p-1}|\d_v \varphi|^2+ r^{p-3}\varphi^2 \,dudvd\omega + I\\
    &\lesssim \int_{\C_{\tau_1}} r^p|\d_v\varphi|^2 \, dvd\omega + \E_{q}[\psi](\tau_1)\,,
\end{align}
which completes the proof for $2\sqrt \gamma< p<2 -2\sqrt \gamma$ and $\tau_2>\tau_1\geq\frac{\rho}{2}$. Note that $\sqrt\gamma =1 $ when $q=\frac{1}{3}$.

Now we assume that $\tau_0\leq \tau_1<\tau_2\leq\frac{\rho}{2}$. Recall again that Lemma \ref{lem:general:t-r} implies that there exists $\fC$ such that $rt^{q-1}\leq \fC$. Similar to \eqref{eq:sub:error-pointwise}, we find that for any small $\epsilon$, there exist $\fC^*$ sufficiently large such that
    \begin{align*}
         2r^{p-2}\left(\gamma+ r^2|\W|  \right) |\varphi||\d_v \varphi|\leq 2 r^{p-2}\left(\gamma+ \fC^2q|1-2q|  \right)|\varphi||\d_v \varphi|\leq \epsilon r^{p-3}|\varphi|^2+ \fC^*r^{p-1}|\d_v \varphi|^2\,.
    \end{align*}
Now if we assume $\epsilon<(2-p)\gamma$, the term $r^{p-3}\varphi^2$ can be controlled by the term on the left-hand side of \eqref{eq:sub:r^p-pointwise}. Integrating over $\D^{\tau_2}_{\tau_1}\setminus \U^+$ yields
\begin{align}
        &\int_{\D^{\tau_2}_{\tau_1}} r^{p-1} \left( |\d_v \varphi|^2 + \frac{1}{r^2}|\rslnabla\varphi|^2+ \frac{1}{r^2}\varphi^2\right) \, dudvd\omega + \int_{\C_{\tau_2}} r^p|\d_v\varphi|^2 \, dvd\omega + \int_{\I^{\tau_2}_{\tau_1}} r^{p-2}\left(|\rslnabla\varphi|^2 +\varphi^2\right)\, dud\omega\\
        &\lesssim \int_{\C_{\tau_1}} r^p|\d_v\varphi|^2 \, dvd\omega + \int_{\{r=\rho\}} r^{p-2}\left(|\rslnabla\varphi|^2 +\varphi^2\right) \, dud\omega- \int_{\{r=\rho\}} r^p|\d_v\varphi|^2 \, dvd\omega+\fC^*\int_{\D^{\tau_2}_{\tau_1}} r^{p-1}|\d_v \varphi|^2\, dudvd\omega\,,\\
        &\stackrel{\eqref{eq:sub:MeanValueThm}}{\leq} \fC^*\left(\int_{\C_{\tau_1}} r^p|\d_v\varphi|^2 \, dvd\omega+\int_{\D^{\tau_2}_{\tau_1}} r^{p-1}|\d_v \varphi|^2\, dudvd\omega+\E_{q}[\psi](\tau_1)\right)\,,\label{eq:sub:rp-Gronwall-term}
\end{align}
for $\tau_0\leq \tau_1<\tau_2\leq\frac{\rho}{2}$ with constant $\fC^*$ that may vary between lines.
Thus, if we let $f(\tau)=\int_{\C_{\tau}} r^p|\d_v\varphi|^2 \, dvd\omega$, then Grönwall's inequality, Lemma \ref{lem:Gronwall}, implies
\begin{align*}
    f(\tau_2)&\leq \fC^*\left(\int_{\C_{\tau_1}} r^p|\d_v\varphi|^2 \, dvd\omega+\int^{\tau_2}_{\tau_1} f(\tau)\,d\tau+\E_{q}[\psi](\tau_1)\right)\leq \fC^*\left(\int_{\C_{\tau_1}} r^p|\d_v\varphi|^2 \, dvd\omega+\E_{q}[\psi](\tau_1)\right) e^{\fC^*(\tau_2-\tau_1)}\,.
\end{align*}
Note that the exponential term is bounded as $\tau_2\leq\frac{\rho}{2}$. Therefore, the spacetime term in \eqref{eq:sub:rp-Gronwall-term} is estimated as follows
\begin{align*}
    \int_{\D^{\tau_2}_{\tau_1}} r^{p-1}|\d_v &\varphi|^2\, dudvd\omega\leq \int_{\D^{\tau_2}_{\tau_1}} r^{p}|\d_v \varphi|^2\, dudvd\omega\leq \int_{\tau_1}^{\tau_2} f(\tau)\,d\tau\\&\leq \fC^*\left(\int_{\C_{\tau_1}} r^p|\d_v\varphi|^2 \, dvd\omega+\E_{q}[\psi](\tau_1)\right) \int_{\tau_1}^{\tau_2} e^{\fC^*(\tau-\tau_1)}\,d\tau\lesssim\int_{\C_{\tau_1}} r^p|\d_v\varphi|^2 \, dvd\omega+\E_{q}[\psi](\tau_1)\,.
\end{align*}
In order to bound the term involving $r^{p+1}\d_v(t^q\psi)$, note that \eqref{eq:sub:rp-untwst-v} holds in general. Thus, the above estimate completes the proof of \eqref{eq:sub:rpEstimate} for $2\sqrt \gamma< p<2 -2\sqrt \gamma$ and all $\tau_0\leq\tau_1<\tau_2$. Note that the above argument fails to prove $\eqref{eq:sub:rpEstimate}$ for any $p$ when $q=\frac{1}{3}$, as in this case, $\gamma=1$.

Nevertheless, the following, we derive \eqref{eq:sub:rpEstimate} with $0<p<1$ for all $0<q<\frac{1}{2}$. To do so, we untwist $\d_v\varphi$ into $t^q\d_v(r\psi)$ in \eqref{eq:sub:rpmethod-pointwise1}. We rearrange the error term as well. In fact, we get
\begin{align*}
    p r^{p-1}|\d_v \varphi|^2=&pt^{2q}r^{p-1}|\d_v(r\psi)|^2+q^2pr^{p+1}t^{4q-2}\psi^2+ qpr^{p-1}t^{3q-1}\d_v(r^2\psi^2)\\=&pt^{2q}r^{p-1}|\d_v(r\psi)|^2+q^2pr^{p+1}t^{4q-2}\psi^2+ \d_v\left( qpr^{p+1}t^{3q-1}\psi^2\right) \\&-qp(p-1)r^pt^{3q-1}\psi^2-pq(3q-1)r^{p+1}t^{4q-2}\psi^2\,,
\end{align*}
and
\begin{align*}
    2r^p \W \varphi\d_v \varphi=&\d_v\left(r^{p+2} t^{2q} \W  \psi^2\right)-pq(1-2q)r^{p+1}t^{4q-2}\psi^2-2(q-1)r^{p+2}t^{3q-1}\W  \psi^2\,,
\end{align*}
which together imply
\begin{align*}
    p r^{p-1}|\d_v \varphi|^2+ 2r^p \W \varphi\d_v \varphi=& \d_v\left(r^{p+2} t^{2q} \W  \psi^2 + qpr^{p+1}t^{3q-1}\psi^2\right)\\&+pr^{p-1}t^{2q}|\d_v(r\psi)|^2 + qp(1-p)r^p t^{3q-1}\psi^2 + 2(1-q)r^{p+2} t^{3q-1}\W \psi^2\,.
\end{align*}
Notice that all terms in the last line above are non-negative if $0\leq p\leq1$. In addition, we can add the untwisted version of \eqref{eq:sub:rp-zeroth-order-term},
\begin{equation*}
        \d_v(r^p t^{2q}\psi^2)+(2-p) r^{p-1} t^{2q}\psi^2 -2 qr^p t^{3q-1}\psi^2-2 r^{p-1}t^{2q}\psi\d_v(r\psi)=0\,,
    \end{equation*}
times $\gamma'>0$ to \eqref{eq:sub:rpmethod-pointwise1} in order to deduce
\begin{align*}
        &\d_u(r^p |\d_v \varphi|^2)+ \d_v\left(r^{p-2}t^{2q}|\rslnabla(r \psi)|^2+r^{p+2} t^{2q} \W  \psi^2 + qpr^{p+1}t^{3q-1}\psi^2+\gamma' r^pt^{2q}\psi^2\right)  - \rsldiv( 2r^{p-2}\d_v\varphi \rslnabla\varphi)\\& +(2-p)r^{p-3}t^{2q}|\rslnabla(r\psi)|^2+pt^{2q}r^{p-1}|\d_v(r\psi)|^2 +\left(qp(1-p) t^{q-1}+ 2(1-q)r^{2} t^{q-1}\W +\gamma' (2-p)r^{-1}\right)r^pt^{2q}\psi^2
        \\&=2\gamma' qr^p t^{3q-1}\psi^2+2\gamma' r^{p-1}t^{2q}\psi\d_v(r\psi)\,,
    \end{align*}
    in which everything on the left-hand side is either a boundary term or non-negative for $0\leq p \leq 1$. Both terms on the right-hand side are error terms. Nevertheless, they can be controlled by the terms on the other side as long as $0<p<1$ and $\gamma'<\frac{p(1-p)}{2}$. So, for any $0<p<1$ there exists $\gamma'>0$ that enables us to estimate the error terms. Therefore, integrating over $\D^{\tau_2}_{\tau_1}$ and discarding non-essential terms leads to
    \begin{align*}
    \begin{split}
        &\int_{\D^{\tau_2}_{\tau_1}} r^{p-1} \left(t^{2q} |\d_v (r\psi)|^2 + \frac{1}{r^2}|\rslnabla\varphi|^2+ \frac{1}{r^2}\varphi^2\right) \, dudvd\omega + \int_{\C_{\tau_2}} r^p|\d_v\varphi|^2 \, dvd\omega + \int_{\I^{\tau_2}_{\tau_1}} r^{p-2}\left(|\rslnabla\varphi|^2 +\varphi^2\right)\, dud\omega\\
        &\lesssim \int_{\C_{\tau_1}} r^p|\d_v\varphi|^2 \, dvd\omega + \underbrace{\int_{\{r=\rho\}} r^{p-2}\left(|\rslnabla\varphi|^2 +\varphi^2 + rt^{q-1}\varphi^2 +r^2 t^{2q-2}\varphi^2\right) \, dud\omega- \int_{\{r=\rho\}} r^p|\d_v\varphi|^2 \, dvd\omega}_{II}\,,
    \end{split}
\end{align*}
for $0<p<1$. To bound $II$, first notice that the power of $r$ and $\frac{1}{t}$ in weights are irrelevant as they are bounded\footnote{Throughout this paper, we use “bounded” to mean bounded up to a universal constant, hidden in the $\lesssim$ notation.} by $\rho$ and $t_0$ in this region. Therefore, the mean-value argument can again be applied to deduce $II\lesssim\E_q[\psi](\tau_1)$.  
The last piece of the proof is to control the twisted derivative $r^{p-1}|\d_v \varphi|^2$. This can easily be obtained by recalling 
\begin{equation*}
    r^{p-1}|\d_v \varphi|^2\lesssim r^{p-1} \left(t^{2q} |\d_v (r\psi)|^2 + t^{2q-2}\varphi^2\right)\stackrel{\eqref{eq:general:t-r-lemma}}{\lesssim} r^{p-1} t^{2q} |\d_v (r\psi)|^2 + r^{p-3}\varphi^2\,.
\end{equation*}
So, we finally obtain \eqref{eq:sub:rpEstimate} for $0<p<2-\sigma_q$ if we let $\sigma_q=2(\frac{q|1-2q|}{|1-q|^2})^\frac{1}{2}$.
Lastly, note that \eqref{eq:radiation:r2Estimate} follows if we directly integrate \eqref{eq:sub:rpmethod-pointwise1} over the region $\D^{\tau_2}_{\tau_1}$.
\end{proof}
\subsection{The case $q>\frac{1}{2}$}
Let $\varepsilon\geq 0$ be any non-negative number, and define 
\begin{align}\label{eq:sup:energy-sigma-tau}
        \begin{split}
            \E_q^\varepsilon[\psi](\tau):= &\int_{\S_{\tau}}t^{2-2q-\varepsilon} \left( |\d_t(t^{q}\psi)|^2+|\d_r\psi|^2+\frac{1}{r^2}|\d_r(r\psi)|^2+\frac{1}{r^2}|\rslnabla\psi|^2+t^{2q-2}\psi^2 + \frac{1}{1+r^2}\psi^2\right)\,dx \\
            &+\int_{\C_{\tau}} t^{2-4q-\varepsilon}\left(\frac{1}{r^2}|\d_v(rt^q\psi)|^2+|\d_v(t^q\psi)|^2+\frac{t^{2q}}{r^2}|\rslnabla\psi|^2+t^{4q-2} \psi^2+ \frac{t^{2q}}{r^2}\psi^2\right)\,r^2dvd\omega\,.
        \end{split}
    \end{align}
Notice that the energy flux \eqref{eq:sup:energy-sigma-tau} is the same as \eqref{eq:intro:energy-sigma-tau} since $\mu_q^\varepsilon=4q-2+\varepsilon$ for $\frac{1}{2}<q<1$. The following proposition establishes the boundedness of the above energy flux.
\begin{prop} [$\Sigma_\tau$-energy boundedness for waves on FLRW with $q>\frac{1}{2}$]\label{prop:sup:energy-bnd}
    We suppose $\psi$ solves the wave equation with $\frac{1}{2}<q<1$. Then for any $\tau_2>\tau_1\geq\tau_0$ and any non-negative $\varepsilon\geq 0$, we have
    \begin{align}
        \E_q^\varepsilon[\psi](\tau_2)&\lesssim\E_q^\varepsilon[\psi](\tau_1)\,,\label{eq:sup:Eng1-boundedness}\\
        \int_{\I^{\tau_2}_{\tau_1}} t^{2-4q-\varepsilon}\left(\frac{1}{r^2}|\d_u(rt^q\psi)|^2+|\d_u(t^q\psi)|^2+\frac{t^{2q}}{r^2}|\rslnabla\psi|^2+t^{4q-2} \psi^2+ \frac{t^{2q}}{r^2}\psi^2\right)\,r^2dud\omega&\lesssim\E_q^\varepsilon[\psi](\tau_1)\,.\label{eq:sup:Eng1-Null}
    \end{align}
    Moreover, for any non-negative $\varepsilon\geq0$, we also have
    \begin{align}
        \label{eq:sup:Eng1-bulk}
        \int_{\R^{\tau_2}_{\tau_1}}  t^{1-2q-\varepsilon}\left( |\nabla \psi|^2+ |\d_t (t^q\psi)|^2 + t^{2q-2} \psi ^2 \right)\, dxdt \lesssim \E_q^\varepsilon[\psi](\tau_1)\,.
    \end{align}
\end{prop}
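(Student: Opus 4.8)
The plan is to run the twisted divergence identity \eqref{eq:TEMT:DivergenceIdentity} with a twisting function adapted to the sign of the potential when $q>\tfrac12$. The default choice $\beta=t^{-q}$ produces $\V=-q(2q-1)t^{-2}<0$, which has the wrong sign to yield a coercive spacetime term from the multiplier $T=t^q\d_t$; I would instead use the secondary twisting function $\beta'=t^{q-1}$, whose potential $\V_{\beta'}=2(q-1)(2q-1)t^{-2}$ is still negative but enters the estimates more favourably. The key point is that with $\beta'$ and the multiplier $X=t^{2-3q-\varepsilon}\d_t$, the $\Sigma_\tau$-flux $\tilde J^{\beta',X}_t$, after multiplication by the volume weight $t^{3q}$, equals $\tfrac12\bigl(t^{2q-\varepsilon}|\d_t(t^{1-q}\psi)|^2+t^{2-2q-\varepsilon}|\nabla\psi|^2+t^{2-\varepsilon}\V_{\beta'}\psi^2\bigr)$, whose kinetic terms carry exactly the weights appearing in $\E^\varepsilon_q$ (see \eqref{eq:sup:energy-sigma-tau}); here one passes between $\d_t(t^{1-q}\psi)$ and $\d_t(t^{q}\psi)$ by the algebraic identity $t^{2q}|\d_t(t^{1-q}\psi)|^2=t^{2-2q}|\d_t(t^{q}\psi)|^2+(1-2q)\,t\,\d_t(\psi^2)+(1-2q)\psi^2$.

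Next I would add a modification function $w=c\,t^{1-3q-\varepsilon}$ and compute $\tilde K^{\beta',X,w}[\psi]$ explicitly. A direct computation shows the coefficients of $|\d_t(t^{1-q}\psi)|^2$ and of $|\nabla\psi|^2$ in $\tilde K^{\beta',X}$ are nonnegative for $q>\tfrac12$, $\varepsilon\ge0$, while the $\psi^2$-coefficient is negative; the modification $w$ adds a positive $\psi^2$-contribution (together with a negative contribution to the $|\d_t|^2$-coefficient and a positive one to the $|\nabla\psi|^2$-coefficient). For $\varepsilon>0$ the constant $c$ can then be chosen in a nonempty range — strictly above the value that neutralises the bad $\psi^2$-term and strictly below $4q-2+\tfrac\varepsilon2$, at which the $|\d_t|^2$-term degenerates — so that, after re-expressing in terms of $\d_t(t^q\psi)$ and using Lemma \ref{lem:general:t-r}, one obtains $\tilde K^{\beta',X,w}[\psi]\gtrsim t^{1-3q-\varepsilon}\bigl(|\d_t(t^q\psi)|^2+t^{-2q}|\nabla\psi|^2+t^{2q-2}\psi^2\bigr)$, which multiplied by $t^{3q}$ is the integrand of \eqref{eq:sup:Eng1-bulk}. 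Simultaneously one checks that the modified fluxes $\tilde J^{\beta',X,w}_t$ on $\S_\tau$ and $\tilde J^{\beta',X,w}_v$ on $\C_\tau$ (and likewise on $\I^{\tau_2}_{\tau_1}$) are coercive for the first group of terms in $\E^\varepsilon_q$: the indefinite term $t\,\d_t(\psi^2)$ on $\S_\tau$ is absorbed into the quadratic flux terms, and the sign-indefinite $\psi^2$-terms of the null flux are controlled by Hardy-type inequalities along $\C_\tau$, exactly as in the proof of Proposition \ref{prop:sub:EB}.

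Given this, I would integrate \eqref{eq:TEMT:DivergenceIdentity} over $\R^{\tau_2}_{\tau_1}$ and apply Lemma \ref{lem:general:divergenceThm}. The $\Box_g\psi$-term on the right-hand side vanishes, the future-boundary flux through $\S_{\tau_2}$, $\C_{\tau_2}$ and $\I^{\tau_2}_{\tau_1}$ is coercive (giving \eqref{eq:sup:Eng1-Null} and the left-hand side of \eqref{eq:sup:Eng1-boundedness}), the bulk contributes $\int_{\R^{\tau_2}_{\tau_1}}\tilde K^{\beta',X,w}$ (giving \eqref{eq:sup:Eng1-bulk}), and the past-boundary flux through $\S_{\tau_1}$, $\C_{\tau_1}$ is bounded by $\E^\varepsilon_q[\psi](\tau_1)$. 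To recover the remaining terms of $\E^\varepsilon_q$ — in particular $\tfrac1{r^2}|\d_r(r\psi)|^2$, $\tfrac1{r^2}|\rslnabla\psi|^2$ and $\tfrac1{1+r^2}\psi^2$ — I would repeat the argument with the rescaled twisting function $\beta_2=\tfrac1{t^q r}$ (same potential, same multiplier $X$), exactly as Proposition \ref{prop:sub:EB} treats the case $q\le\tfrac12$, and close using the elementary pointwise bound $t^{2q}\psi^2\lesssim r^2|\d_m(t^q\psi)|^2+|\d_m(t^q r\psi)|^2$ together with Lemma \ref{lem:general:t-r} to trade $t$-weights for $r$-weights where needed.

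The main obstacle I expect is the borderline case $\varepsilon=0$. There the admissible constant degenerates to the single endpoint $c=4q-2$, at which both the $|\d_t|^2$- and $\psi^2$-coefficients of $\tilde K^{\beta',X,w}$ vanish and only the $|\nabla\psi|^2$-bulk term survives directly. Recovering the full bulk estimate \eqref{eq:sup:Eng1-bulk} for $\varepsilon=0$ then requires either a weighted Hardy inequality in the $t$-variable — turning control of $\int_{\R^{\tau_2}_{\tau_1}}t^{1-3q}|\d_t\psi|^2$ and of the surviving flux into control of $\int_{\R^{\tau_2}_{\tau_1}}t^{-1-3q}\psi^2$ and $\int_{\R^{\tau_2}_{\tau_1}}t^{1-3q}|\d_t(t^q\psi)|^2$ — or a monotone-convergence argument letting $\varepsilon\downarrow0$ in the strict estimates just established, using that $\E^\varepsilon_q[\psi](\tau_1)\to\E^0_q[\psi](\tau_1)$ for compactly supported data; in either route one must verify that the implicit constants stay uniform as $\varepsilon\to0$.
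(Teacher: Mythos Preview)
Your overall strategy---twist with $\beta'=t^{q-1}$, multiply by $X=t^{2-3q-\varepsilon}\d_t$, then run a second twist with a $1/r$ factor---matches the paper. The execution differs in one key place: the paper does \emph{not} introduce a modification function to force coercivity. Instead it applies the pointwise untwisting identity
\[
t^{2q-1-\varepsilon}|\d_t(t^{1-q}\psi)|^2 \;=\; t^{1-\varepsilon}|\d_t\psi|^2 + \tfrac{(1-q)(1-q+\varepsilon)}{t^{1+\varepsilon}}\psi^2 + (1-q)\,\d_t\bigl(t^{-\varepsilon}\psi^2\bigr)
\]
\emph{after} multiplying by the volume weight and integrates the total derivative $\d_t(t^{-\varepsilon}\psi^2)$ over the spacetime region $\R^{\tau_2}_{\tau_1}$. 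That moves it to the boundary, where it combines with the bad $\V_1\psi^2$ flux term and produces a manifestly nonnegative \emph{identity} (not just an inequality), simultaneously for bulk and flux, for every $\varepsilon\ge 0$. Your flux-coercivity step (``the indefinite term $t\,\d_t(\psi^2)$ on $\S_\tau$ is absorbed into the quadratic flux terms'') is the place where this matters: on a single leaf $\S_\tau$ that term is not a divergence and cannot be absorbed pointwise; it becomes one only via the spacetime integration. Your modification-function route with $w=ct^{1-3q-\varepsilon}$ can be made to work for $\varepsilon>0$ by completing the square in the flux, but it is more delicate and leaves the constants $\varepsilon$-dependent.

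Two smaller corrections. The secondary twisting function should be $\beta_2=t^{q-1}/r$, not $1/(t^q r)$: the latter has potential $\V=-q(2q-1)t^{-2}$ rather than $\V_1=2(q-1)(2q-1)t^{-2}$, and only $\V_1$ is compatible with the untwisting identity above. And for the borderline $\varepsilon=0$ bulk estimate your monotone-convergence idea fails for exactly the reason you suspect: the bulk $|\nabla\psi|^2$- and $\psi^2$-coefficients carry a prefactor $\tfrac{\varepsilon}{2}$, so the implicit constant blows up. The paper instead runs two further divergence identities at $\varepsilon=0$: first $\beta=1$ with $w=t^{1-3q}$, giving $\int t^{1-2q}|\nabla\psi|^2 \lesssim \int t\,|\d_t\psi|^2 + \text{boundary}$ (the right side already controlled by the $\varepsilon=0$ identity); then $\beta=t^{-q}$ with the same $w$, giving $\int\bigl(t^{1-2q}|\d_t(t^q\psi)|^2 + t^{-1}\psi^2\bigr) \lesssim \int t^{1-2q}|\nabla\psi|^2 + \text{boundary}$. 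Chaining these closes \eqref{eq:sup:Eng1-bulk} at $\varepsilon=0$ without any limiting argument.
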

\begin{proof}
    Although estimates \eqref{eq:sup:Eng1-boundedness}-\eqref{eq:sup:Eng1-bulk} are formulated with twisting function $t^{-q}$, to begin the proof we need to use a different twisting function. Therefore, set $\beta_1=t^{q-1}$ with the associated potential $\V_1=-\frac{\Box_g\beta_1}{\beta_1}=2(q-1)(2q-1)t^{-2}<0$ for $\frac{1}{2}<q<1$, and consider the vector field $X=t^{2-3q-\varepsilon}\d_t$. A simple computation shows

    \begin{align}
        \tilde K^{\beta_1,X}[\psi]=& (4q-2+\frac{\varepsilon}{2})t^{-q-1-\varepsilon}|\d_t( t^{1-q}\psi)|^2+ \frac{\varepsilon}{2} t^{1-5q-\varepsilon}|\nabla\psi|^2 +(1-q+\frac{\varepsilon}{2})t^{1-3q}\V_1 \psi^2\,,\label{eq:sup:engbound1-beta1-bulk}
    \end{align}
    and
    \begin{align}
        &\tilde J^{\beta_1,X}_t[\psi]=\frac{t^{-\varepsilon}}{2}\left( t^{-q}|\d_t(t^{1-q}\psi)|^2+t^{2-5q}|\nabla\psi|^2+t^{2-3q}\V_1 \psi^2\right)\,,\notag\\
        &\tilde J^{\beta_1,X}_v[\psi]=\frac{t^{-\varepsilon}}{2}\left(t^{-2q}|\d_v(t^{1-q}\psi)|^2 + \frac{t^{2-4q}}{r^2}|\rslnabla\psi|^2+ t^{2-2q} \V_1 \psi^2\right)\,,\notag\\
        &\tilde J^{\beta_1,X}_u[\psi]=\frac{t^{-\varepsilon}}{2}\left(t^{-2q}|\d_u(t^{1-q}\psi)|^2 + \frac{t^{2-4q}}{r^2}|\rslnabla\psi|^2+ t^{2-2q} \V_1 \psi^2\right)\,.\notag
    \end{align}
    Note that since $\V_1<0$, none of the above quantities are necessarily non-negative. However, we can untwist the space-time term to obtain non-negative energy flux and bulk terms. Indeed, if we also take the volume factor $t^{3q}$ into account, we have
    \begin{equation}\label{eq:sup-engbound-untwst}
        t^{2q-1-\varepsilon}|\d_t(t^{1-q}\psi)|^2=t^{1-\varepsilon}|\d_t\psi|^2 + \frac{(1-q)(1-q+\varepsilon)}{t^{1+\varepsilon}}\psi^2+(1-q)\d_t(t^{-\varepsilon}\psi^2)\,.
    \end{equation}
    Thus, integration over $\R^{\tau_2}_{\tau_1}$ and the divergence identity \eqref{eq:TEMT:DivergenceIdentity} give
    \begin{align}\label{eq:sup:EngTws1-q}
        \begin{split}
            &\int_{\R^{\tau_2}_{\tau_1}} \left((4q-2+\frac{\varepsilon}{2})t^{1-\varepsilon}|\d_t\psi|^2 + \frac{\varepsilon}{2} t^{1-2q-\varepsilon}|\nabla\psi|^2 +\frac{\varepsilon}{2}(1-q)(3q-1+\varepsilon)t^{-1-\varepsilon}\psi^2\right) \,dtdx \\
        &+ \frac{1}{2}\int_{\S_{\tau_2}}t^{-\varepsilon} \left( t^{2q}|\d_t(t^{1-q}\psi)|^2+t^{2-2q}|\nabla\psi|^2+ 2(1-q)(2q-1+\frac{\varepsilon}{2})  \psi^2\right)\,dx \\
        &+\frac{1}{2}\int_{\C_{\tau_2}} t^{-\varepsilon}\left(|\d_v(t^{1-q}\psi)|^2+\frac{t^{2-2q}}{r^2}|\rslnabla\psi|^2+ 2(1-q)(2q-1+\frac{\varepsilon}{2})\psi^2\right)\,r^2dvd\omega\\
        &+\frac{1}{2}\int_{\I^{\tau_2}_{\tau_1}}t^{-\varepsilon} \left(|\d_u(t^{1-q}\psi)|^2+\frac{t^{2-2q}}{r^2}|\rslnabla\psi|^2+ 2(1-q)(2q-1+\frac{\varepsilon}{2})  \psi^2\right)\,r^2dud\omega\\
        &=\frac{1}{2}\int_{\S_{\tau_1}} t^{-\varepsilon}\left( t^{2q}|\d_t(t^{1-q}\psi)|^2+t^{2-2q}|\nabla\psi|^2+ 2(1-q)(2q-1+\frac{\varepsilon}{2})  \psi^2\right)\,dx \\
        &+\frac{1}{2}\int_{\C_{\tau_1}} t^{-\varepsilon}\left(|\d_v(t^{1-q}\psi)|^2+\frac{t^{2-2q}}{r^2}|\rslnabla\psi|^2+ 2(1-q)(2q-1+\frac{\varepsilon}{2})  \psi^2\right)\,r^2dvd\omega\,.
        \end{split}
    \end{align}
    Note that all terms in the above identity are non-negative. In order to rewrite this energy identity in terms of the twisting function $\beta=t^{-q}$, observe that
    \begin{align}
    \begin{split}\label{eq:sup:engbound-untwst-m}
        |\d_m(t^q \psi)|^2 \lesssim t^{4q-2} |\d_m(t^{1-q} \psi)|^2+ |\d_m t^{2q-1}|^2  (t^{1-q}\psi)^2,\\
          t^{4q-2} |\d_m(t^{1-q} \psi)|^2\lesssim|\d_m(t^q \psi)|^2+ |\d_m t^{2q-1}|^2  (t^{1-q}\psi)^2,
    \end{split}
    \end{align}
    in which $m$ is either $v,u,$ or $t$. Therefore, if we let
    \begin{align*}
        \check\E_q^\varepsilon[\psi](\tau):= &\int_{\S_{\tau}}  t^{-\varepsilon} \left( t^{2-2q} |\d_t(t^q \psi)|^2 +t^{2-2q}|\nabla \psi|^2 + \psi^2\right) \,dx \\
        +&\int_{\C_{\tau}}t^{-\varepsilon} \left( t^{2-4q}|\d_v(t^q \psi)|^2 + \frac{t^{2-2q}}{r^2}|\rslnabla \psi|^2 + \psi^2 \right)  \, r^2dvd\omega\,,
    \end{align*}
    we then infer that
    \begin{align}\label{eq:sup:check-energy-estimate}
        \begin{split}
            \check\E_q^\varepsilon[\psi](\tau_2)&\lesssim\check\E_q^\varepsilon[\psi](\tau_1)\,,\\
            \int_{\I^{\tau_2}_{\tau_1}} t^{-\varepsilon}\left( t^{2-4q}|\d_u(t^q \psi)|^2 + \frac{t^{2-2q}}{r^2}|\rslnabla \psi|^2 + \psi^2 \right)  \, r^2dud\omega &\lesssim\check\E_q^\varepsilon[\psi](\tau_1)\,.
        \end{split}
    \end{align}
    Moreover, the basic twisting inequality
    $$ |\d_t(t^q \psi)|^2 \lesssim t^{2q} |\d_t \psi|^2+ q^2 t^{2q-2}|\psi|^2,$$
    enable us to also control the term $|\d_t(t^q \psi)|^2$ in \eqref{eq:sup:Eng1-bulk}. Since other bulk terms are already estimated in the above energy identity, for $\varepsilon>0$, we get
    \begin{align*}
        \int_{\R^{\tau_2}_{\tau_1}}  t^{1-2q-\varepsilon}\left( |\nabla \psi|^2+ |\d_t (t^q\psi)|^2 + t^{2q-2} \psi ^2 \right)\, dxdt \lesssim \check \E_q^\varepsilon[\psi](\tau_1)\,.
    \end{align*}
    The above finishes the proof of \eqref{eq:sup:Eng1-bulk} for $\varepsilon>0$ as
    $\check\E_q^\varepsilon[\psi](\tau)\lesssim \E_q^\varepsilon[\psi](\tau)$. Before proving \eqref{eq:sup:Eng1-bulk} with $\varepsilon=0$, we estimate complete the proof of \eqref{eq:sup:Eng1-boundedness} and \eqref{eq:sup:Eng1-Null}.
    
    Similarly, we use the twisting function $\beta_2=\frac{t^{q-1}}{r}$ to derive an additional energy estimate. The associated potential remains the same, $\V_2=\V_1=2(q-1)(2q-1)t^{-2}<0$, and we continue to use the same vector field $X=t^{2-3q}\d_t$. Observe that for $\beta_2$, we have
    \begin{equation*}
        \tilde K^{\beta_2,X}[\psi]= (4q-2+\frac{\varepsilon}{2})t^{-q-1-\varepsilon}|\d_t( t^{1-q}\psi)|^2+\frac{\varepsilon}{2}\frac{ t^{1-5q-\varepsilon}}{r^2}|\d_r\psi|^2+ \frac{\varepsilon}{2}\frac{ t^{1-5q-\varepsilon}}{r^2}|\rslnabla\psi|^2 +(1-q+\frac{\varepsilon}{2})t^{1-3q}\V_1 \psi^2\,.
    \end{equation*}
    Because of identity \eqref{eq:sup-engbound-untwst}, we are again able to untwist $\tilde K^{\beta_2,X}[\psi]$ and obtain non-negative terms. We therefore infer
    \begin{align*}
        &\int_{\S_{\tau_2}} t^{-\varepsilon}\left( t^{2q}|\d_t(t^{1-q}\psi)|^2+\frac{t^{2-2q}}{r^2}|\d_r(r\psi)|^2+\frac{t^{2-2q}}{r^2}|\rslnabla\psi|^2+\psi^2\right)\,dx \\
        &+\int_{\C_{\tau_2}} t^{-\varepsilon}\left(\frac{1}{r^2}|\d_v(rt^{1-q}\psi)|^2+\frac{t^{2-2q}}{r^2}|\rslnabla\psi|^2+ \psi^2\right)\,r^2dvd\omega\\
        &+\int_{\I^{\tau_2}_{\tau_1}} t^{-\varepsilon}\left(\frac{1}{r^2}|\d_u(rt^{1-q}\psi)|^2+\frac{t^{2-2q}}{r^2}|\rslnabla\psi|^2+ \psi^2\right)\,r^2dud\omega\\ &\lesssim\int_{\S_{\tau_1}} t^{-\varepsilon}\left( t^{2q}|\d_t(t^{1-q}\psi)|^2+\frac{t^{2-2q}}{r^2}|\d_r(r\psi)|^2+\frac{t^{2-2q}}{r^2}|\rslnabla\psi|^2+\psi^2\right)\,dx \\
        &+\int_{\C_{\tau_1}} t^{-\varepsilon}\left(\frac{1}{r^2}|\d_v(rt^{1-q}\psi)|^2+\frac{t^{2-2q}}{r^2}|\rslnabla\psi|^2+ \psi^2\right)\,r^2dvd\omega\,.
    \end{align*}
    Notice that \eqref{eq:sup:engbound-untwst-m} is still true if we replace $\psi$ with $r\psi$. Therefore, we can again rewrite the above estimate to arrive at an energy estimate twisted with $\beta=t^{-q}$. Combining this twisted energy estimate with \eqref{eq:sup:check-energy-estimate} leads to
    \begin{align}\label{eq:sup:EngTwsq}
        \begin{split}
            &\int_{\S_{\tau_2}} t^{-\varepsilon}\left( t^{2-2q}|\d_t(t^{q}\psi)|^2+t^{2-2q}|\d_r\psi|^2+\frac{t^{2-2q}}{r^2}|\d_r(r\psi)|^2+\frac{t^{2-2q}}{r^2}|\rslnabla\psi|^2+\psi^2 \right)\,dx \\
        &+\int_{\C_{\tau_2}} t^{-\varepsilon}\left(\frac{t^{2-4q}}{r^2}|\d_v(rt^q\psi)|^2+t^{2-4q}|\d_v(t^q\psi)|^2+\frac{t^{2-2q}}{r^2}|\rslnabla\psi|^2+ \psi^2\right)\,r^2dvd\omega\\
        &+\int_{\I^{\tau_2}_{\tau_1}} t^{-\varepsilon}\left(\frac{t^{2-4q}}{r^2}|\d_u(rt^q\psi)|^2+t^{2-4q}|\d_u(t^q\psi)|^2+\frac{t^{2-2q}}{r^2}|\rslnabla\psi|^2+ \psi^2\right)\,r^2dud\omega\\ &\lesssim\int_{\S_{\tau_1}} t^{-\varepsilon}\left( t^{2-2q}|\d_t(t^{q}\psi)|^2+t^{2-2q}|\d_r\psi|^2+\frac{t^{2-2q}}{r^2}|\d_r(r\psi)|^2+\frac{t^{2-2q}}{r^2}|\rslnabla\psi|^2+\psi^2\right)\,dx \\
        &+\int_{\C_{\tau_1}} t^{-\varepsilon}\left(\frac{t^{2-4q}}{r^2}|\d_v(rt^q\psi)|^2+t^{2-4q}|\d_v(t^q\psi)|^2+\frac{t^{2-2q-\varepsilon}}{r^2}|\rslnabla\psi|^2+ \psi^2\right)\,r^2dvd\omega\,.
        \end{split}
    \end{align}
    To finish the proof of \eqref{eq:sup:Eng1-boundedness} and \eqref{eq:sup:Eng1-Null}, notice that we can estimate the zeroth-order term $\frac{t^{2-2q}}{1+r^2}\psi^2$ in $\E_q^\varepsilon[\psi]$ by recalling \eqref{eq:sub:engbound-untwst}. 

    In order to establish \eqref{eq:sup:Eng1-bulk} for $\varepsilon=0$, first let $\beta_3=1$ and $w=t^{1-3q}$. We then are left with
    \begin{align*}
        K^w[\psi]&= t^{1-3q}\left(-|\d_t\psi|^2+\frac{1}{t^{2q}} |\nabla \psi|^2\right)\,,\qquad
        &J^w_t[\psi]=t^{1-3q}\psi \d_t\psi - \frac{1-3q}{2t^{3q}}\psi^2\,,\\
        J^w_v[\psi]&=t^{1-3q}\psi \d_v\psi - \frac{1-3q}{2t^{2q}}\psi^2\,,\qquad
        &J^w_u[\psi]=t^{1-3q}\psi \d_u\psi - \frac{1-3q}{2t^{2q}}\psi^2\,,
    \end{align*}
    which lead to
    \begin{align}
        \begin{split}\label{eq:sup:EngEps0_1}
            \int_{\R^{\tau_2}_{\tau_1}} t^{1-2q} |\nabla \psi|^2 \,dtdx\lesssim \int_{\R^{\tau_2}_{\tau_1}} t|\d_t\psi|^2 \,dtdx +\int_{\I^{\tau_2}_{\tau_1}} \left(t^{2-2q}|\d_u\psi|^2+\psi^2\right)\,dvd\omega \\+\sum_{i=1,2}\left( \int_{\S_{\tau_i}} \left(t^{2}|\d_t\psi|^2+\psi^2\right)\,dx
        +\int_{\C_{\tau_i}} \left(t^{2-2q}|\d_v\psi|^2+\psi^2\right)\,dvd\omega\right)\,.
        \end{split}
    \end{align}
    All the terms on the right-hand side can be controlled by identity \eqref{eq:sup:EngTws1-q} with $\varepsilon=0$, thanks to the fact that
    \begin{align*}
        t^{2-2q}|\d_m \psi|^2 \lesssim |\d_m (t^{1-q}\psi)|^2 + |\d_m(t^{1-q})|^2 \psi^2\,,
    \end{align*}
    with $m$ denoting either $t, v$ or $u$.
    Next, we again use the twisting function $\beta=t^{-q}$ along with the modification function $w=t^{1-3q}$ to arrive at 
    \begin{align*}
        \tilde K^{\beta,w}[\psi]&= t^{1-5q}\left(-|\d_t(t^q\psi)|^2+|\nabla \psi|^2\right)+\frac{q(3q-1)}{t^{1+3q}}\psi^2\,,\qquad
        &\tilde J^{\beta,w}_t[\psi]=t^{1-4q}\psi \d_t(t^q\psi) - \frac{1-3q}{2t^{3q}}\psi^2\,,\\
        \tilde J^{\beta,w}_v[\psi]&=t^{1-4q}\psi \d_v(t^q\psi) - \frac{1-3q}{2t^{2q}}\psi^2\,,\qquad
        &\tilde J^{\beta,w}_u[\psi]=t^{1-4q}\psi \d_u(t^q\psi) - \frac{1-3q}{2t^{2q}}\psi^2\,.
    \end{align*}
    After integration over $\R^{\tau_2}_{\tau_1}$, we have
    \begin{align}\label{eq:sup:EngEps0_2}
        \begin{split}
            \int_{\R^{\tau_2}_{\tau_1}} \left(t^{1-2q}|\d_t(t^q\psi)|^2 + \frac{1}{t}\psi^2\right) \,dtdx \lesssim \int_{\R^{\tau_2}_{\tau_1}} t^{1-2q} |\nabla \psi|^2 \,dtdx+\int_{\I^{\tau_2}_{\tau_1}} \left(t^{2-4q}|\d_u(t^q\psi)|^2+\psi^2\right)\,dvd\omega \\+\sum_{i=1,2}\left( \int_{\S_{\tau_i}} \left(t^{2-2q}|\d_t(t^q\psi)|^2+\psi^2\right)\,dx
        +\int_{\C_{\tau_i}} \left(t^{2-4q}|\d_v(t^q\psi)|^2+\psi^2\right)\,dvd\omega\right)\,.
        \end{split}
    \end{align}
    Note that all the boundary terms are bounded by \eqref{eq:sup:EngTwsq} with $\varepsilon=0$. Therefore, \eqref{eq:sup:EngEps0_1} and \eqref{eq:sup:EngEps0_2} imply \eqref{eq:sup:Eng1-bulk} for $\varepsilon=0$.
\end{proof}
The next proposition establishes the ILED estimate for the case $q\geq \frac{1}{2}$.
\begin{prop}[Integrated local energy decay on $\Sigma_\tau$ for waves on FLRW with $q>\frac{1}{2}$]\label{prop:sup:Morawetz}
Let $\varepsilon\geq0$ be any non-negative number and  $\delta>0$ be any small positive numbers. If $\psi$ solves the initial value problem \eqref{eq:intro:IVP-wave} for $\frac{1}{2}<q<1-\varepsilon$, then we have
    \begin{align}\label{eq:sup:Morawetz}
        \int_{\R^{\tau_2}_{\tau_1}} &\left(\frac{t^{2-3q-\varepsilon}}{  (1+r)^{1+\delta}}(|\d_t(t^q\psi)|^2+|\d_r\psi|^2)+  \frac{t^{2-3q-\varepsilon}}{r^3}|\rslnabla \psi |^2+\frac{t^{2-3q-\varepsilon}}{1+r^{3+\delta}}\psi^2+ \frac{t^{-q-\varepsilon}}{r} \psi^2\right)\,dxdt\lesssim \E_q^\varepsilon[\psi](\tau_1)\,.
    \end{align}
\end{prop}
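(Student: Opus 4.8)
The plan is to mirror the two-step multiplier argument of Proposition \ref{prop:sub:ILED}, keeping the default twisting function $\beta=t^{-q}$ and using the twisted divergence identity \eqref{eq:TEMT:DivergenceIdentity}, but replacing the radial multipliers $\d_r$ and $\frac1{(1+r)^\delta}\d_r$ by $X'_1=t^{2-4q-\varepsilon}\d_r$ and $X'_2=\frac{t^{2-4q-\varepsilon}}{(1+r)^\delta}\d_r$, together with the modification functions $w_1=\frac{t^{2-4q-\varepsilon}}{r}$ and $w_2=\frac{t^{2-4q-\varepsilon}}{r(1+r)^\delta}$. The time weight $t^{2-4q-\varepsilon}$ is chosen so that, after multiplying by the volume factor $t^{3q}$, the angular and derivative bulk weights become $t^{2-3q-\varepsilon}$ as in \eqref{eq:sup:Morawetz}, while the zeroth-order term ends up with the weight $t^{-q-\varepsilon}$. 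As in Proposition \ref{prop:sub:ILED}, each identity is integrated over $\R^{\tau_2}_{\tau_1}\setminus\{r<\epsilon\}$ and $\epsilon$ is sent to $0$, the bad-sign contributions on $\{r=\epsilon\}$ being bounded on a compact slice and vanishing in the limit.

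First I would use $X'_1$ and $w_1$. Since they carry no radial weight beyond the $\frac1r$ factor, the coefficients of $|\d_t(t^q\psi)|^2$ and $|\d_r\psi|^2$ in $\tilde K^{\beta,X'_1,w_1}[\psi]$ cancel exactly as for $q\le\frac12$, and the deformation tensor produces a clean angular term $\gtrsim\frac{t^{2-3q-\varepsilon}}{r^3}|\rslnabla\psi|^2$ after the volume factor. Two new features appear relative to $q\le\frac12$. First, there is a cross term $-\,\d_t(t^{2-4q-\varepsilon})\,t^{-q}\,\d_t(t^q\psi)\,\d_r\psi$, which after the volume factor is $O\!\big(t^{1-2q-\varepsilon}(|\d_t(t^q\psi)|^2+|\d_r\psi|^2)\big)$ and is therefore absorbed by the bulk estimate \eqref{eq:sup:Eng1-bulk} of Proposition \ref{prop:sup:energy-bnd}. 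Second, the coefficient of $\psi^2$ — which vanished for $q\le\frac12$ because $w_1=\frac1r$ was $t$-independent and $\Box_g\frac1r=0$ — now collects $\V w_1$, $-\frac{q}{t}\d_t w_1$, $-\frac12\Box_g w_1$ and the $\V$-term of ${}^{(X'_1)}\pi$; the first of these cancels the last, and a short computation shows the remainder is a positive multiple of $(2-4q-\varepsilon)(1-3q-\varepsilon)\,\frac{t^{-q-\varepsilon}}{r}\psi^2$, which is non-negative for $q>\frac12$. Controlling the boundary terms on $\S_\tau,\C_\tau,\I^{\tau_2}_{\tau_1}$ by \eqref{eq:sup:Eng1-boundedness}--\eqref{eq:sup:Eng1-Null} and by Lemma \ref{lem:general:t-r} (to trade the extra $t$-weights of $w_1$ against powers of $r$, as for the last two boundary terms in the proof of Proposition \ref{prop:sub:ILED}) then yields $\int_{\R^{\tau_2}_{\tau_1}}\big(\tfrac{t^{2-3q-\varepsilon}}{r^3}|\rslnabla\psi|^2+\tfrac{t^{-q-\varepsilon}}{r}\psi^2\big)\lesssim\E_q^\varepsilon[\psi](\tau_1)$.

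Next I would use $X'_2$ and $w_2$. As for $q\le\frac12$, the deformation tensor produces a weaker angular term $\frac{t^{2-3q-\varepsilon}}{r^3(1+r)^\delta}|\rslnabla\psi|^2$ (absorbed by the first step together with $\frac1{r^3(1+r)^\delta}\le\frac1{r^3}$), the $\Box_g w_2$ part produces $\frac{t^{2-3q-\varepsilon}}{1+r^{3+\delta}}\psi^2$, and the $|\d_t(t^q\psi)|^2,|\d_r\psi|^2$ terms appear with the negative coefficient $-\frac\delta2\,\frac{t^{2-3q-\varepsilon}}{(1+r)^{1+\delta}}$; moving these to the left after integration is what yields the derivative bulk of \eqref{eq:sup:Morawetz}. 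The potential term now carries an unfavourable sign (since $\V<0$ for $q>\frac12$), and I would treat it just as in Proposition \ref{prop:sub:ILED}, by untwisting the time derivative via $t^{-2q}|\d_t(t^q\psi)|^2-\V\psi^2=|\d_t\psi|^2+t^{-3q}\d_t(q\,t^{3q-1}\psi^2)$: this turns it into a non-negative $|\d_t\psi|^2$ term plus a total $t$-derivative whose integral is a boundary term again controlled by Proposition \ref{prop:sup:energy-bnd} and Lemma \ref{lem:general:t-r}. Re-twisting at the end through $|\d_t(t^q\psi)|^2\lesssim t^{2q}|\d_t\psi|^2+t^{2q-2}\psi^2$ generates the error $\frac{t^{-q-\varepsilon}}{(1+r)^{1+\delta}}\psi^2\lesssim\frac{t^{-q-\varepsilon}}{r}\psi^2$, which is exactly controlled by the Hardy term from the first step — so that term is not a mere byproduct but is genuinely needed to close the argument. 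Collecting the two steps gives \eqref{eq:sup:Morawetz}.

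The main obstacle, compared with the case $q\le\frac12$, is the bookkeeping of the zeroth-order $\psi^2$ terms: since the multipliers and modification functions all carry the same time weight $t^{2-4q-\varepsilon}$, the contributions of $\V w$, $\d_t w$ and $\Box_g w$ are now of the same homogeneity and must be combined with care, and after untwisting the time derivative one has to verify that the surviving combinations, as well as the boundary terms (which now carry $t$-powers such as $t^{q-1-\varepsilon}$), have the right sign and integrability on the whole range $\frac12<q<1-\varepsilon$; this is where the restriction $q<1-\varepsilon$ enters, and it is also responsible for the new Hardy term $\frac{t^{-q-\varepsilon}}{r}\psi^2$, which is absent from the $q\le\frac12$ estimate \eqref{eq:sub:Morawetz}.
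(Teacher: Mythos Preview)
Your proposal is correct and follows essentially the same strategy as the paper: the same twisting function $\beta=t^{-q}$, the same multipliers $Y_1=t^{2-4q-\varepsilon}\d_r$ and $Y_2=\frac{t^{2-4q-\varepsilon}}{(1+r)^\delta}\d_r$, the same modification functions $z_1=\frac{t^{2-4q-\varepsilon}}{r}$ and $z_2=\frac{t^{2-4q-\varepsilon}}{r(1+r)^\delta}$, the cross term absorbed by \eqref{eq:sup:Eng1-bulk}, and the sign of the zeroth-order coefficient $(2-4q-\varepsilon)(1-3q-\varepsilon)>0$ for $q>\frac12$ exactly as you say.

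The only substantive difference is in the second step. You propose to neutralise the unfavourable $\V$-term by untwisting the time derivative (as in Proposition~\ref{prop:sub:ILED}) and then re-twisting at the end using the Hardy term from step~1. The paper instead keeps the twisted form throughout: it places the bad $\V\psi^2$ contribution on the right (absorbed into $\frac{t^{-q-\varepsilon}}{r(1+r)^\delta}\psi^2$ via $\frac{1}{(1+r)^{1+\delta}}\le\frac{1}{r}$, then controlled by step~1), and on the left it exploits the positivity of the \emph{time part} of $\Box_g z_2$, namely $-\d_{tt}z_2-\tfrac{3q}{t}\d_t z_2=(4q-2+\varepsilon)(1-q-\varepsilon)\tfrac{t^{-4q-\varepsilon}}{r(1+r)^\delta}$, which directly yields an additional Hardy term together with the usual $\frac{t^{2-3q-\varepsilon}}{1+r^{3+\delta}}\psi^2$. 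Both routes are valid; the paper's avoids the untwist/re-twist detour and makes the restriction $q<1-\varepsilon$ appear transparently as the positivity condition on $t^{3q}\Box_g z_2$. One small point you gloss over: the initial boundary terms $\int_{\S_{\tau_1}}\frac{t^{1-q-\varepsilon}}{r}\psi^2$ require separate treatment for small $r$ (the paper uses the elementary inequality \eqref{eq:sub:engbound-untwst} there, not Lemma~\ref{lem:general:t-r}), but this is routine.
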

\begin{proof}
    To begin with, let $\beta=t^{-q}$, $Y_1=t^{2-4q-\varepsilon}\d_r$, and $z_1=\frac{t^{2-4q-\varepsilon}}{r}$ and recall that $\V=\frac{-q(2q-1)}{t^2}$. With this setting, we have $Y_1^\nu\tilde S_\nu =0$ and
    \begin{align*}
        &\tilde K^{Y_1,z_1} [\psi] =  \frac{t^{2-4q-\varepsilon}}{t^{2q} r^3}|\rslnabla \psi |^2+ (\frac{1-3q-\varepsilon}{2}) (2-4q-\varepsilon) \frac{t^{-4q-\varepsilon}}{r} \psi^2 + (2-4q-\varepsilon) t^{1-5q-\varepsilon}\d_r\psi \d_t(t^q\psi)\,.
    \end{align*}
    Moreover,
    \begin{align*}
        \begin{split}
            \tilde J^{Y_1,z_1}_t=&\frac{t^{2-5q-\varepsilon}}{r} \psi \d_t(t^q \psi)+ (\frac{-2+4q+\varepsilon}{2})\frac{t^{1-4q-\varepsilon}}{r}\psi^2 + t^{2-5q-\varepsilon} \d_t(t^q \psi) \d_r \psi\,,\\
            \tilde J^{Y_1,z_1}_v=&\frac{t^{2-6q-\varepsilon}}{2}|\d_v(t^q \psi)|^2 - \frac{t^{2-4q-\varepsilon}}{2r^2}|\rslnabla\psi|^2+\frac{1}{2}t^{2-2q-\varepsilon}\V\psi^2\\&+ \frac{t^{2-4q-\varepsilon}}{r} \psi\d_v(t^q \psi)+( \frac{t^{2-4q-\varepsilon}}{2r^2}- \frac{2-4q-\varepsilon}{2r}t^{1-3q-\varepsilon})\psi^2\,,\\
            \tilde J^{Y_1,z_1}_u=&-\frac{t^{2-6q-\varepsilon}}{2}|\d_u(t^q \psi)|^2 + \frac{t^{2-4q-\varepsilon}}{2r^2}|\rslnabla\psi|^2-\frac{1}{2}t^{2-2q-\varepsilon}\V\psi^2\\&+ \frac{t^{2-4q-\varepsilon}}{r} \psi\d_v(t^q \psi)+( -\frac{t^{2-4q-\varepsilon}}{2r^2}- \frac{2-4q-\varepsilon}{2r}t^{1-3q-\varepsilon})\psi^2\,.
        \end{split}
    \end{align*}
    Integration over $\R^{\tau_2}_{\tau_1}\setminus \{r<\epsilon\}$ and letting $\epsilon\to 0$ give rise to
    \begin{align}
    \begin{split}
        \int_{\R^{\tau_2}_{\tau_1}}& \left( \frac{t^{2-3q-\varepsilon}}{r^3}|\rslnabla \psi |^2+ \frac{t^{-q-\varepsilon}}{r} \psi^2\right)\,dxdt+\int_{\S_{\tau_2}}  \frac{t^{1-q-\varepsilon}}{r}\psi^2\,dx+\int_{\C_{\tau_2}}\ \frac{t^{1-q-\varepsilon}}{r}\psi^2\,r^2dvd\omega+\int_{\I^{\tau_2}_{\tau_1}}  \frac{t^{1-q-\varepsilon}}{r}\psi^2\,r^2dud\omega\\&\lesssim \int_{\R^{\tau_2}_{\tau_1}}  t^{1-2q-\varepsilon}\left( |\nabla \psi|^2+  |\d_t (t^q\psi)|^2  \right)\, dxdt+ \int_{\S_{\tau_1}} \frac{t^{1-q-\varepsilon}}{r}\psi^2\,dx+\int_{\C_{\tau_1}} \frac{t^{1-q-\varepsilon}}{r}\psi^2\,r^2dvd\omega\\
        &+\int_{\I^{\tau_2}_{\tau_1}} \left(t^{2-4q-\varepsilon}|\d_u(t^q\psi)|^2+\frac{t^{2-2q-\varepsilon}}{r^2}|\rslnabla\psi|^2+t^{-\varepsilon} \psi^2+ \frac{t^{2-2q-\varepsilon}}{r^2}\psi^2\right)\,r^2dud\omega+\E_q^\varepsilon[\psi](\tau_2)+\E_q^\varepsilon[\psi](\tau_1) \,.  \label{eq:sup:Morawetz-y1z1}
    \end{split}
    \end{align}
   Due to Proposition \ref{prop:sup:energy-bnd}, all the terms on the right-hand side above can be directly controlled by $\E_q^\varepsilon[\psi](\tau_1)$, except the initial boundary terms that contain $\frac{t^{1-q-\varepsilon}}{r}\psi^2$. Nevertheless, those terms are also bounded by the zeroth-order term in $\E_q^\varepsilon[\psi](\tau_1)$. Indeed, for large $r$, in view of Lemma \ref{lem:general:t-r}, we have
   \begin{align*}
        \int_{\S_{\tau_1}} \frac{t^{1-q-\varepsilon}}{r}\psi^2\,dx+\int_{\C_{\tau_1}} \frac{t^{1-q-\varepsilon}}{r}\psi^2\,r^2dvd\omega\lesssim  \int_{\S_{\tau_1}} \frac{t^{2-2q-\varepsilon}}{1+r^2}\psi^2\,dx+\int_{\C_{\tau_1}} \frac{t^{2-2q-\varepsilon}}{1+r^2}\psi^2\,r^2dvd\omega\lesssim \E_q^\varepsilon[\psi](\tau_1)\,.
   \end{align*}
   And for small $r$, we recall \eqref{eq:sub:engbound-untwst} to get
\begin{align*}
    \int_{\S_{\tau_1}} \frac{t^{1-q-\varepsilon}}{r}\psi^2\,dx\lesssim \int_{\S_{\tau_1}}\left(t^{1-q-\varepsilon}r|\d_r \psi|^2 +\frac{t^{1-q-\varepsilon}}{r}|\d_r( r\psi)|^2\right)\,dx\lesssim \int_{\S_{\tau_1}}t^{2-2q-\varepsilon}\left(|\d_r \psi|^2 +\frac{1}{r^2}|\d_r( r\psi)|^2\right)\,dx\,.
\end{align*}
The other term on $\C_{\tau_1}$ can also be controlled similarly.
   
    Controlling other derivatives over the region $\R^{\tau_2}_{\tau_1}$ requires multiplying with new vector field. To do so, let $\delta>0$, $Y_2=\frac{t^{2-4q-\varepsilon}}{(1+r)^\delta}\d_r$, and $z_2=\frac{t^{2-4q-\varepsilon}}{r(1+r)^\delta}$, and observe that
    \begin{align*}
                \tilde K^{Y_2} [\psi] =&  (4q-2+\varepsilon)\frac{t^{2-4q-\varepsilon}}{t (1+r)^\delta} t^{-q}\d_r\psi \d_t(t^q\psi)+ \frac{t^{2-4q-\varepsilon}}{t^{2q} r (1+r)^\delta}\left(|\d_t(t^q\psi)|^2-|\d_r\psi|^2-t^{2q}\V\psi^2\right)\\&-\frac{\delta t^{2-4q-\varepsilon}}{2t^{2q}  (1+r)^{1+\delta}}\left(|\d_t(t^q\psi)|^2+|\d_r\psi|^2-\frac{1}{r^2}|\rslnabla\psi|^2-t^{2q}\V\psi^2\right) \,,\\
                \tilde K^{z_2} [\psi] =& \frac{t^{2-4q-\varepsilon}}{t^{2q} r (1+r)^\delta}\left(-|\d_t(t^q\psi)|^2+|\d_r\psi|^2+\frac{1}{r^2}|\rslnabla\psi|^2\right)\\&+\left(\frac{t^{2-4q-\varepsilon}}{r(1+r)^\delta}\V -\frac{1}{2}\Box z_2 + q(4q-2+\varepsilon)\frac{t^{-4q-\varepsilon}}{r(1+r)^\delta}\right)\psi^2\,,
    \end{align*}
   Furthermore, we have
    \begin{align*}
        \begin{split}
            \tilde J^{Y_2,z_2}_t=&\frac{t^{2-5q-\varepsilon}}{r(1+r)^\delta} \psi \d_t(t^q \psi)+ (\frac{-2+4q+\varepsilon}{2})\frac{t^{1-4q-\varepsilon}}{r(1+r)^\delta}\psi^2 + \frac{t^{2-5q-\varepsilon}}{(1+r)^\delta} \d_t(t^q \psi) \d_r \psi\,,\\
            \tilde J^{Y_2,z_2}_v=&\frac{t^{2-6q-\varepsilon}}{2(1+r)^\delta}|\d_v(t^q \psi)|^2 - \frac{t^{2-4q-\varepsilon}}{2r^2(1+r)^\delta}|\rslnabla\psi|^2+\frac{1}{2(1+r)^\delta}t^{2-2q-\varepsilon}\V\psi^2\\&+ \frac{t^{2-4q-\varepsilon}}{r(1+r)^\delta} \psi\d_v(t^q \psi)+( \frac{t^{2-4q-\varepsilon}}{2r^2(1+r)^\delta}+\frac{\delta t^{2-4q-\varepsilon}}{2r(1+r)^{1+\delta}} - \frac{2-4q-\varepsilon}{2r(1+r)^\delta}t^{1-3q-\varepsilon})\psi^2\,,\\
            \tilde J^{Y_2,z_2}_u=&-\frac{t^{2-6q-\varepsilon}}{2(1+r)^\delta}|\d_u(t^q \psi)|^2 + \frac{t^{2-4q-\varepsilon}}{2r^2(1+r)^\delta}|\rslnabla\psi|^2-\frac{1}{2(1+r)^\delta}t^{2-2q-\varepsilon}\V\psi^2\\&+ \frac{t^{2-4q-\varepsilon}}{r(1+r)^\delta} \psi\d_v(t^q \psi)+( -\frac{t^{2-4q-\varepsilon}}{2r^2(1+r)^\delta}-\frac{\delta t^{2-4q-\varepsilon}}{2r(1+r)^{1+\delta}}- \frac{2-4q-\varepsilon}{2r(1+r)^\delta}t^{1-3q-\varepsilon})\psi^2\,.
        \end{split}
    \end{align*}
    As before, we integrate over $\R^{\tau_2}_{\tau_1}\setminus \{r<\epsilon\}$, let $\epsilon\to 0$, and use \eqref{eq:TEMT:DivergenceIdentity} to derive
    \begin{align*}
        \int_{\R^{\tau_2}_{\tau_1}} &\left(\frac{t^{2-3q-\varepsilon}}{  (1+r)^{1+\delta}}(|\d_t(t^q\psi)|^2+|\d_r\psi|^2)+t^{3q}\psi^2 \Box z_2\right)\,dxdt\\\lesssim& \int_{\R^{\tau_2}_{\tau_1}} \left(\frac{t^{2-3q-\varepsilon}}{  r^3(1+r)^\delta}|\rslnabla\psi|^2+ \frac{t^{1-2q-\varepsilon}}{(1+r)^\delta} (|\d_r \psi|^2+ |\d_t (t^q\psi)|^2)+\frac{t^{-q-\varepsilon}}{r(1+r)^\delta}\psi^2\right)\,dxdt\\
        &+\sum_{i=1,2} \left(\int_{\S_{\tau_i}}  |\tilde J^{Y_2,z_2}_t|  \,t^{3q} dx + \int_{\C_{\tau_i}}  |\tilde J^{Y_2,z_2}_v| \, t^{2q}r^2 dvd\omega \right)+ \int_{\I^{\tau_2}_{\tau_1}}  |\tilde J^{Y_2,z_2}_u| \, t^{2q}r^2 dud\omega\,.
    \end{align*}
    Now thanks to \eqref{eq:sup:Eng1-boundedness}, \eqref{eq:sup:Eng1-Null}, \eqref{eq:sup:Eng1-bulk}, and \eqref{eq:sup:Morawetz-y1z1} we infer
    \begin{align*}
        \int_{\R^{\tau_2}_{\tau_1}} &\left(\frac{t^{2-3q-\varepsilon}}{  (1+r)^{1+\delta}}(|\d_t(t^q\psi)|^2+|\d_r\psi|^2)+  \frac{t^{2-3q-\varepsilon}}{r^3}|\rslnabla \psi |^2+ \frac{t^{-q-\varepsilon}}{r} \psi^2+t^{3q}\psi^2 \Box z_2\right)\,dxdt\lesssim \E_q^\varepsilon[\psi](\tau_1)\,.
    \end{align*}
    To finish, just note that similar to \eqref{eq:sub:Boxz2}, we have 
    \begin{align*}
       (4q-2+\varepsilon)(1-q-\varepsilon)\frac{t^{-q-\varepsilon}}{r(1+r)^\delta} \psi^2 +\frac{t^{2-3q-\varepsilon}}{1+r^{3+\delta}}\psi^2\lesssim t^{3q}\psi^2\Box z_2\,,
    \end{align*}
    which is positive for $\frac{1}{2}<q<1-\varepsilon$.
\end{proof}

Before stating the next proposition, recall that $\sigma_q=2\sqrt \frac{q|1-2q|}{|1-q|^2}\geq 1$ for $q\geq \frac{1+2\sqrt 2}{7}$. 
\begin{prop}[$r^p$-estimates for waves on FLRW with $q>\frac{1}{2}$]\label{prop:sup:rp} 
Let $\varepsilon=0$ and $\mu_q=\mu_q^0=4q-2$
and for $\frac{1}{2}<q<1$, set $\sigma_q=2\sqrt \frac{q|1-2q|}{|1-q|^2}>0$. Then for any solution $\psi$ to the wave equation \eqref{eq:intro:IVP-wave}, any $\tau_2>\tau_1\geq \tau_0$, and any
$$0<p< \max\{1, 2-\sigma_q\}\,,$$
we have
    \begin{align}\label{eq:sup:rp}
    \begin{split}
        &\int_{\D^{\tau_2}_{\tau_1}} t^{-{\mu_q}}r^{p-1} \left( |\d_v \varphi|^2 + r^2|\d_v(t^q\psi)|^2+\frac{1}{r^2}|\rslnabla\varphi|^2+\frac{1}{r^2}\varphi^2\right) \, dudvd\omega + \int_{\C_{\tau_2}} t^{-{\mu_q}}r^p|\d_v\varphi|^2 \, dvd\omega \\&+  \int_{\I^{\tau_2}_{\tau_1}} t^{-{\mu_q}}r^{p-2}\left(|\rslnabla\varphi|^2 +\varphi^2\right)\, dud\omega
        \lesssim \int_{\C_{\tau_1}} t^{-{\mu_q}}r^p|\d_v\varphi|^2 \, dvd\omega +\E_q^0[\psi](\tau_1)\,,
    \end{split}
\end{align}
in which $\varphi=rt^q\psi$.
\end{prop}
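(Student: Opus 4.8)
The plan is to run the argument of Proposition \ref{prop:sub:rp} with an extra weight $t^{-\mu_q}$, $\mu_q=\mu_q^0=4q-2>0$, carried through every identity, using that the sign of this weight is now favourable in the bulk while the potential $\W=t^{2q}\V=-q(2q-1)t^{2q-2}$ is negative. Recall from \eqref{eq:general:wave-double-null} that $\varphi=t^q r\psi$ solves $-\d_{uv}\varphi+\frac1{r^2}\rslDelta\varphi-\W\varphi=0$. First I would multiply this by $-2t^{-\mu_q}r^p\d_v\varphi$ and add $\gamma t^{-\mu_q}$ times the zeroth-order identity \eqref{eq:sub:rp-zeroth-order-term}, exactly as in the derivation of \eqref{eq:sub:r^p-pointwise}. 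Since $\d_v t^{-\mu_q}=\d_u t^{-\mu_q}=-\mu_q t^{q-\mu_q-1}$, pulling $t^{-\mu_q}$ inside the null derivatives produces only the additional non-negative bulk terms $+\mu_q t^{q-\mu_q-1}r^{p}|\d_v\varphi|^2$, $+\mu_q t^{q-\mu_q-1}r^{p-2}(|\rslnabla\varphi|^2+\gamma\varphi^2)$, which I simply discard. One is left with the pointwise identity
\begin{align*}
\d_u\!\left(t^{-\mu_q}r^p|\d_v\varphi|^2\right)&-\rsldiv\!\left(2t^{-\mu_q}r^{p-2}\d_v\varphi\,\rslnabla\varphi\right)+\d_v\!\left(t^{-\mu_q}r^{p-2}(|\rslnabla\varphi|^2+\gamma\varphi^2)\right)\\
&+t^{-\mu_q}r^{p-1}\!\left(p|\d_v\varphi|^2+\tfrac{2-p}{r^2}(|\rslnabla\varphi|^2+\gamma\varphi^2)\right)\ \le\ 2t^{-\mu_q}r^{p-2}\!\left(\gamma+q|1-2q|\,r^2t^{2q-2}\right)|\varphi||\d_v\varphi|\,,
\end{align*}
whose right-hand side is the sole error term.

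Next I would treat the range $\sigma_q<p<2-\sigma_q$. For $\tau_1\ge\frac\rho2$, so that $\D^{\tau_2}_{\tau_1}\subseteq\U^+$, Lemma \ref{lem:general:t-r}(1) gives $r^2t^{2q-2}\le(1-q)^{-2}$, hence with $\gamma=\frac{q|1-2q|}{|1-q|^2}$ the error is $\le 4\gamma t^{-\mu_q}r^{p-2}|\varphi||\d_v\varphi|\le 2\gamma^{3/2}t^{-\mu_q}r^{p-3}\varphi^2+2\gamma^{1/2}t^{-\mu_q}r^{p-1}|\d_v\varphi|^2$, which is absorbed into the bulk precisely when $2\sqrt\gamma=\sigma_q<p<2-\sigma_q$. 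Integrating over $\D^{\tau_2}_{\tau_1}$ and applying the divergence theorem produces the boundary fluxes on $\C_{\tau_1},\C_{\tau_2},\I^{\tau_2}_{\tau_1}$ appearing in \eqref{eq:sup:rp}, plus terms on $\{r=\rho\}$; the latter are controlled, after a mean-value adjustment of $\rho$ as in \eqref{eq:sub:MeanValueThm}, by the ILED estimate \eqref{eq:sup:Morawetz} with $\varepsilon=0$, and the few boundary terms of unfavourable sign are likewise absorbed using \eqref{eq:sup:Morawetz} together with the energy bounds \eqref{eq:sup:Eng1-boundedness}–\eqref{eq:sup:Eng1-Null}. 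For bounded $\tau$, that is $\tau_0\le\tau_1<\tau_2\le\frac\rho2$, I only have $r^2t^{2q-2}\le\fC^2$ from Lemma \ref{lem:general:t-r}, so the error splits as $\epsilon t^{-\mu_q}r^{p-3}\varphi^2+\fC^*t^{-\mu_q}r^{p-1}|\d_v\varphi|^2$ with $\epsilon<(2-p)\gamma$ small; the $\varphi^2$ part is absorbed and the remaining spacetime term is closed by Grönwall's inequality, Lemma \ref{lem:Gronwall}, applied to $f(\tau)=\int_{\C_\tau}t^{-\mu_q}r^p|\d_v\varphi|^2$, with bounded exponential factor since $\tau\le\frac\rho2$. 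Combining the two regimes establishes \eqref{eq:sup:rp} for $\sigma_q<p<2-\sigma_q$.

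It remains to cover $0<p\le\sigma_q$, and in particular all $0<p<1$, which is the entire admissible range when $\sigma_q\ge1$. Here I would repeat the final, untwisting part of the proof of Proposition \ref{prop:sub:rp}: replace $\d_v\varphi$ by $t^q\d_v(r\psi)$ in the pointwise identity, regroup the algebra so that all non-error contributions on the left become non-negative for $0\le p\le1$, and add the untwisted zeroth-order identity times a suitable $\gamma'<\tfrac{p(1-p)}2$ so that all error terms are absorbable for $0<p<1$; the $t^{-\mu_q}$ weight again only generates favourably signed extra terms. The $\{r=\rho\}$ boundary is handled by the same mean-value/ILED argument, and the twisted flux $t^{-\mu_q}r^{p-1}|\d_v\varphi|^2$ is recovered from $|\d_v\varphi|^2\lesssim t^{2q}|\d_v(r\psi)|^2+t^{2q-2}\varphi^2$ and $rt^{q-1}\le\fC$. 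In all cases the term $\int_{\D^{\tau_2}_{\tau_1}}t^{-\mu_q}r^{p+1}|\d_v(t^q\psi)|^2$ is recovered from the algebraic identity $r^{p+1}|\d_v(t^q\psi)|^2=r^{p-1}|\d_v\varphi|^2+(p-1)r^{p-3}\varphi^2-\d_v(r^{p-2}\varphi^2)$, the extra $\mu_q$-term from the boundary split having the good sign. Since $(0,1)\cup(\sigma_q,2-\sigma_q)=(0,\max\{1,2-\sigma_q\})$, this completes the proof.

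The main obstacle is the error term $2t^{-\mu_q}r^{p-2}(\gamma+q|1-2q|\,r^2t^{2q-2})|\varphi||\d_v\varphi|$: because $|\W|$ carries both an $r$-growing weight and a $t$-decaying weight, its control is region-dependent, and it is the sharp constant $(1-q)^{-2}$ available only on $\U^+$ that forces the threshold $\sigma_q$ and the restriction on $p$; pairing this with the Grönwall closure on the compact region and the somewhat lengthy regrouping in the untwisting step for $0<p<1$ is where the real work lies. By contrast, the new weight $t^{-\mu_q}$ introduces no difficulty, as its derivatives feed non-negative bulk terms.
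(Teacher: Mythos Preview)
Your treatment of the range $\sigma_q<p<2-\sigma_q$ and of the Gr\"onwall closure for bounded $\tau$ is correct and matches the paper's argument essentially line for line; discarding the extra bulk terms $\mu_q t^{q-1-\mu_q}r^{p}|\d_v\varphi|^2$ and $\mu_q t^{q-1-\mu_q}r^{p-2}(|\rslnabla\varphi|^2+\gamma\varphi^2)$ is harmless there.

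The gap is in your treatment of $0<p<1$. You propose to ``repeat the final, untwisting part of Proposition~\ref{prop:sub:rp}'' with a parameter $\gamma'<\tfrac{p(1-p)}{2}$, again discarding the $\mu_q$-extras as merely favourable. That does not work once $q>\tfrac12$: the sub-radiation regrouping relied on $\W>0$ to make the term $2(1-q)r^{p+2}t^{3q-1}\W\psi^2$ non-negative, whereas here $\W<0$ and, after inserting $t^{-\mu_q}$, the offending contribution becomes $-2q^2(2q-1)t^{q-1}r^{p+2}\psi^2$. This cannot be absorbed by $qp(1-p)t^{1-q}r^{p}\psi^2$ or by the $\gamma'$-terms uniformly in $0<p<1$ (use $r^2t^{2q-2}\le(1-q)^{-2}$ on $\U^+$ and note $\tfrac{2q(2q-1)}{(1-q)^2}=\tfrac{\sigma_q^2}{2}$ can be arbitrarily large).

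The paper's fix is precisely \emph{not} to discard the extra bulk term $\mu_q t^{q-1-\mu_q}r^{p}|\d_v\varphi|^2$ in this step but to untwist it together with the other two pieces. Writing $\X=-\W>0$, one finds
\[
pt^{-\mu_q}r^{p-1}|\d_v\varphi|^2+\mu_q t^{q-1-\mu_q}r^{p}|\d_v\varphi|^2-2t^{-\mu_q}r^{p}\X\varphi\d_v\varphi
=\bigl(pt^{2-2q}r^{p-1}+2(2q-1)t^{1-q}r^{p}\bigr)|\d_v(r\psi)|^2+qp(1-p)t^{1-q}r^{p}\psi^2+\d_v(\cdots),
\]
which is manifestly non-negative for $0\le p\le1$ with \emph{no} error term and hence no need for any $\gamma'$. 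The $q^2(2q-1)t^{q-1}r^{p+2}\psi^2$ contributions from the $\X$-term and from the $\mu_q$-term cancel exactly. After this, the $\varphi^2$ bulk with the correct weight is recovered from the elementary inequality $\d_v(r^{p}t^{2-2q}\psi^2)+\tfrac{2-p}{2}r^{p-1}t^{2-2q}\psi^2\le(2-2q)t^{1-q}r^{p}\psi^2+2r^{p-1}t^{2-2q}|\d_v(r\psi)|^2$, and $t^{-\mu_q}r^{p-1}|\d_v\varphi|^2$ from $|\d_v\varphi|^2\lesssim t^{2q}|\d_v(r\psi)|^2+r^2\psi^2$ together with \eqref{eq:general:t-r-lemma}. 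So the crucial point you are missing is that for $q>\tfrac12$ the $\mu_q$-bulk is not a throwaway bonus but the mechanism that neutralises the sign flip of the potential.
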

\begin{proof}
    We start by considering $\varphi=t^qr\psi$ as in the proof of Proposition \ref{prop:sub:rp}. Recall that $\varphi$ satisfies 
    \begin{equation*}
        -\d_{uv} \varphi + \frac{1}{r^2} \rslDelta\varphi + \X \varphi=0\,,
    \end{equation*}
    where $\X=-t^{2q}\V=q(2q-1)t^{2q-2}>0$. Unlike in \eqref{eq:sub:rpmethod-pointwise1}, here we need a family of vector fields whose weights depend on $t$ as well. Therefore, multiply the above equation by $-2t^{-{\mu_q}}r^p \d_v \varphi$ to obtain
    \begin{align}
        \d_u(t^{-{\mu_q}}r^p |\d_v \varphi|^2)  - \rsldiv(2 t^{-{\mu_q}}r^{p-2}\d_v\varphi \rslnabla\varphi) + \d_v(t^{-{\mu_q}}r^{p-2}|\rslnabla \varphi|^2)\notag +{\mu_q} t^{q-1-{\mu_q}}r^p( |\d_v \varphi|^2+r^{-2} |\rslnabla \varphi|^2)\notag \\
        + pt^{-{\mu_q}} r^{p-1}|\d_v \varphi|^2+ (2-p)t^{-{\mu_q}}r^{p-3}|\rslnabla\varphi|^2=2t^{-{\mu_q}}r^p \X \varphi\d_v \varphi\,.\label{eq:sup:rp-pointwise1}
    \end{align}
    On the other hand, we can add the following identity
    \begin{equation*}
        \d_v(t^{-{\mu_q}}r^{p-2} \varphi^2)+(2-p)t^{-{\mu_q}}r^{p-3}\varphi^2 + {\mu_q} t^{q-1-{\mu_q}}r^{p-2}\varphi^2 = 2t^{-{\mu_q}}r^{p-2}\varphi\d_v \varphi\,.
    \end{equation*}
    times $\gamma>0$ to \eqref{eq:sup:rp-pointwise1} to get
    \begin{align}
        \begin{split}
            &\d_u(t^{-{\mu_q}}r^p |\d_v \varphi|^2)  - \rsldiv(2 t^{-{\mu_q}}r^{p-2}\d_v\varphi \rslnabla\varphi) + \d_v\left(t^{-{\mu_q}}r^{p-2}|\rslnabla \varphi|^2+\gamma t^{-{\mu_q}}r^{p-2} \varphi^2\right) \\&+{\mu_q} t^{q-1-{\mu_q}}r^p( |\d_v \varphi|^2+r^{-2} |\rslnabla \varphi|^2) + pt^{-{\mu_q}} r^{p-1}|\d_v \varphi|^2+(2-p)t^{-{\mu_q}}r^{p-3}\left(|\rslnabla\varphi|^2+ \gamma\varphi^2\right) \\&+ \gamma{\mu_q} t^{q-1-{\mu_q}}r^{p-2}\varphi^2=2t^{-{\mu_q}}r^{p-2} (\gamma+r^2\X )\varphi\d_v \varphi\,.
        \end{split}\label{eq:sup:rp-pointwise-gamma}
    \end{align}
    We remark that the right-hand side of the above equation is considered as an error term, and we aim to control it by the terms on the other side. We do this similar to what we do in the proof of Proposition \ref{prop:sub:rp}. So first assume $\frac{\rho}{2}\leq \tau_1<\tau_2$ or equivalently $u\geq 0$. Lemma \ref{lem:general:t-r} once again implies $r^2t^{2q-2}\leq \frac{1}{(1-q)^2}$ in the region $\U^+$. If we let $\gamma=\frac{q|1-2q|}{|1-q|^2}$, we then have
    \begin{align*}
         2t^{-{\mu_q}}r^{p-2}\left(\gamma+ r^2|\X|  \right) |\varphi||\d_v \varphi|&\leq 2 t^{-{\mu_q}}r^{p-2}\left(\gamma+ \frac{q|2q-1|}{|1-q|^2}  \right)|\varphi||\d_v \varphi|\leq  2\gamma^\frac{3}{2}t^{-{\mu_q}} r^{p-3}|\varphi|^2+2 \gamma^ \frac{1}{2}t^{-{\mu_q}}r^{p-1}|\d_v \varphi|^2\,,
    \end{align*}
    for any point in $\U^+$. Suppose $2\sqrt \gamma< p<2 -2\sqrt \gamma$. Integrating \eqref {eq:sup:rp-pointwise-gamma} over $\D^{\tau_2}_{\tau_1}$  for $\frac{\rho}{2}\leq \tau_1<\tau_2$ and using the above inequality lead to
    \begin{align*}
    \begin{split}
        &\int_{\D^{\tau_2}_{\tau_1}}( t^{q-1-{\mu_q}}r^p+ t^{-{\mu_q}}r^{p-1}) \left( |\d_v \varphi|^2 + \frac{1}{r^2}|\rslnabla\varphi|^2+\frac{1}{r^2}\varphi^2\right) \, dudvd\omega \\
        & + \int_{\I^{\tau_2}_{\tau_1}} t^{-{\mu_q}}r^{p-2}(|\rslnabla\varphi|^2+\varphi^2) \, dud\omega +\int_{\C_{\tau_2}} t^{-{\mu_q}}r^p|\d_v\varphi|^2 \, dvd\omega\\&\lesssim \int_{\C_{\tau_1}} t^{-{\mu_q}}r^p|\d_v\varphi|^2 \, dvd\omega +\underbrace{\int_{\{r=\rho,\tau_1\leq \tau\leq \tau_2\}} t^{-{\mu_q}}r^{p-2}(|\rslnabla\varphi|^2 +\varphi^2) \, dud\omega- \int_{\{r=\rho,\tau_1\leq \tau\leq \tau_2\}} t^{-{\mu_q}}r^p|\d_v\varphi|^2 \, dvd\omega}_{I}\,.
    \end{split}
    \end{align*}
    The error term $I$  above should be compared to that in \eqref{eq:sub:rp-one-errors}, and the ILED estimate \eqref{eq:sup:Morawetz} can indeed control it. Indeed, similar to \eqref{eq:sub:MeanValueThm} using the mean value theorem, we find
    \begin{align}\label{eq:sup:rp_I}
        I&\lesssim\int_{\R^{\tau_2}_{\tau_1}} \left(\frac{t^{2-3q}}{  (1+r)^{1+\delta}}(|\d_t(t^q\psi)|^2+|\d_r\psi|^2)+  \frac{t^{2-3q}}{r^3}|\rslnabla \psi |^2+\frac{t^{2-3q}}{1+r^{3+\delta}}\psi^2\right)\,dxdt\lesssim\E_q^0[\psi](\tau_1)\,.
    \end{align}
    Now suppose $\tau_0\leq \tau_1<\tau_2\leq \frac{\rho}{2}$. According to Lemma \ref{lem:general:t-r}, there exists $\fC$ such that $rt^{q-1}\leq \fC$. Therefore, for any small $\epsilon$, there exists $\fC^*$ such that
    \begin{align*}
         2t^{-{\mu_q}}r^{p-2}\left(\gamma+ r^2|\X|  \right) |\varphi||\d_v \varphi|&\leq 2 t^{-{\mu_q}}r^{p-2}\left(\gamma+ \fC^2 \frac{q|2q-1|}{|1-q|^2}  \right)|\varphi||\d_v \varphi|\leq  \epsilon t^{-{\mu_q}} r^{p-3}|\varphi|^2+\fC^* t^{-{\mu_q}}r^{p-1}|\d_v \varphi|^2\,.
    \end{align*}
    Thus, as long as $\epsilon < (2-p)\gamma$ and $\tau_0\leq \tau_1<\tau_2\leq \frac{\rho}{2}$, the above inequality and \eqref{eq:sup:rp-pointwise-gamma} give
    \begin{align}
    \begin{split}\label{eq:sup:rp:finite}
        &\int_{\D^{\tau_2}_{\tau_1}}( t^{q-1-{\mu_q}}r^p+ t^{-{\mu_q}}r^{p-1}) \left( |\d_v \varphi|^2 + \frac{1}{r^2}|\rslnabla\varphi|^2+\frac{1}{r^2}\varphi^2\right) \, dudvd\omega \\
        & + \int_{\I^{\tau_2}_{\tau_1}} t^{-{\mu_q}}r^{p-2}(|\rslnabla\varphi|^2+\varphi^2) \, dud\omega +\int_{\C_{\tau_2}} t^{-{\mu_q}}r^p|\d_v\varphi|^2 \, dvd\omega\\&\leq \fC^*\left(\int_{\D^{\tau_2}_{\tau_1}} \left( t^{-{\mu_q}}r^{p-1}|\d_v \varphi|^2\right) \, dudvd\omega+ \int_{\C_{\tau_1}} t^{-{\mu_q}}r^p|\d_v\varphi|^2 \, dvd\omega +\E_q^0[\psi](\tau_1)\right)\,.
    \end{split}
    \end{align}
    where by abusing the notation, we have used the same constant $\fC^*$. As in the proof of \ref{prop:sub:rp}, we use Grönwall's inequality, Lemma \ref{lem:Gronwall}, for the function $f(\tau)=\int_{\C_{\tau}} t^{-{\mu_q}}r^p|\d_v\varphi|^2 \, dvd\omega$ to get 
    \begin{align*}
        f(\tau_2)\leq \fC^*\left(\int_{\C_{\tau_1}} t^{-{\mu_q}}r^p|\d_v\varphi|^2 \, dvd\omega +\E_q^0[\psi](\tau_1)\right)e^{\fC^*(\tau_2-\tau_1)}\lesssim\int_{\C_{\tau_1}} t^{-{\mu_q}}r^p|\d_v\varphi|^2 \, dvd\omega +\E_q^0[\psi](\tau_1)\,.
    \end{align*}
    The above estimate holds for any $\tau_0\leq \tau_1<\tau_2\leq \frac{\rho}{2}$. Therefore, the bulk term on the right-hand side of \eqref{eq:sup:rp:finite} is also bounded by the boundary terms.
    
    We also need to estimate $ t^{-{\mu_q}} r^{p+1}|\d_v(t^q\psi)|^2$. So, note that
    \begin{align*}
        t^{-{\mu_q}} r^{p+1}|\d_v(t^q\psi)|^2+{\mu_q} t^{q-1-{\mu_q}}r^{p-2}\varphi^2=t^{-{\mu_q}}r^{p-1}|\d_v \varphi|^2+ (p-1) t^{-{\mu_q}}r^{p-3}\varphi^2 -\d_v(t^{-{\mu_q}}r^{p-2}\varphi^2)\,,
    \end{align*}
    which implies
    \begin{align}\label{eq:sup:untwisting-r}
        \int_{\D^{\tau_2}_{\tau_1}} t^{-{\mu_q}}r^{p+1}|\d_v(t^q\psi)|^2\,dudvd\omega &\lesssim \int_{\D^{\tau_2}_{\tau_1}} t^{-{\mu_q}}r^{p-1}|\d_v \varphi|^2+ t^{-{\mu_q}}r^{p-3}\varphi^2 \,dudvd\omega + I\notag\\
        &\lesssim \int_{\C_{\tau_1}} r^p|\d_v\varphi|^2 \, dvd\omega + \E_q^0[\psi](\tau_1)\,,
    \end{align}
    for all $0<p<2$. This completes the proof of \eqref{eq:sup:rp} for $\sigma_q<p<2-\sigma_q$ if we let $\sigma_q=2\sqrt \gamma= 2\sqrt \frac{q|1-2q|}{|1-q|^2}$. Note that this range of $p$ is empty if $q\geq \frac{1+2\sqrt 2}{7}$. 
    
    To prove $r^p$-estimates for $0<p<1$ across the sup-radiation regime ($\frac{1}{2}<q<1$), we untwist a few terms in identity \eqref{eq:sup:rp-pointwise1}. In fact, by expanding the following terms we obtain
    \begin{align*}
        -2t^{\mu_q}r^p \X \varphi \d_v\varphi&= -q(2q-1)\d_v(r^{p+2}\psi^2)-2q^2(2q-1)t^{q-1}r^{p+2}\psi^2+pq(2q-1)r^{p+1}\psi^2\,,\\
        pt^{-\mu_q}r^{p-1}|\d_v\varphi|^2&=pt^{2-2q}r^{p-1}|\d_v(r\psi)|^2+pq(2q-1)r^{p+1}\psi^2+qp(1-p)t^{1-q}r^p \psi^2 + qp\d_v(t^{1-q}r^{1+p}\psi^2)\,\\
        \mu_q t^{q-1-\mu_q}r^p |\d_v\varphi|^2&=2(2q-1)t^{1-q}r^p |\d_v r\psi|^2 + 2q^2 (2q-1)t^{q-1}r^{p+2} \psi^2\\&+2q(2q-1)\d_v(r^{p+2}\psi^2)-2pq(2q-1) r^{p+1} \psi^2\,.
    \end{align*}
    Thus,
    \begin{align*}
        &pt^{-\mu_q}r^{p-1}|\d_v\varphi|^2+\mu_q t^{q-1-\mu_q}r^p |\d_v\varphi|^2  -2t^{\mu_q}r^p \X \varphi \d_v\varphi\\&= \left(pt^{2-2q}r^{p-1}+2(2q-1)t^{1-q}r^p\right)|\d_v(r\psi)|^2+\d_v\left(q(2q-1)r^{p+2}\psi^2+qpt^{1-q}r^{1+p}\psi^2\right)+ qp(1-p)t^{1-q}r^p \psi^2\,.
    \end{align*}
    Therefore, \eqref{eq:sup:rp-pointwise1} turns to 
    \begin{align*}
        \d_u(t^{-{\mu_q}}r^p |\d_v \varphi|^2)  - \rsldiv( 2t^{-{\mu_q}}r^{p-2}\d_v\varphi \rslnabla\varphi) + \d_v\left(t^{-{\mu_q}}r^{p-2}|\rslnabla \varphi|^2+q(2q-1)r^{p+2}\psi^2+qpt^{1-q}r^{1+p}\psi^2\right) \\+{\mu_q} t^{q-1-{\mu_q}}r^{p-2} |\rslnabla \varphi|^2
        + (2-p)t^{-{\mu_q}}r^{p-3}|\rslnabla\varphi|^2+
        pt^{2-2q}r^{p-1}|\d_v(r\psi)|^2+2(2q-1)t^{1-q}r^p|\d_v(r\psi)|^2\\+ qp(1-p)t^{1-q}r^p \psi^2=0\,.
    \end{align*}
    Integrating over $\D^{\tau_2}_{\tau_1}$, for any $0<p\leq 1$ gives
    \begin{align*}
    \begin{split}
        &\int_{\D^{\tau_2}_{\tau_1}} r^{p-1}\left(t^{2-2q}|\d_v(r\psi)|^2+\frac{1}{t^{\mu_q}r^2}|\rslnabla \psi|^2 + (1-p)rt^{1-q}\psi^2\right) \, dudvd\omega  + \int_{\I^{\tau_2}_{\tau_1}} t^{-{\mu_q}}r^{p-2}|\rslnabla\varphi|^2 \, dud\omega\\& +\int_{\C_{\tau_2}} t^{-{\mu_q}}r^p|\d_v\varphi|^2 \, dvd\omega\lesssim \int_{\C_{\tau_1}} t^{-{\mu_q}}r^p|\d_v\varphi|^2 \, dvd\omega - \int_{\{r=\rho,\tau_1\leq \tau\leq \tau_2\}} t^{-{\mu_q}}r^p|\d_v\varphi|^2 \, dvd\omega\\& +\int_{\{r=\rho,\tau_1\leq \tau\leq \tau_2\}}\left( t^{-{\mu_q}}r^{p-2}|\rslnabla\varphi|^2 +r^{p+2}\psi^2+t^{1-q}r^{p+1} \psi^2\right) \, dud\omega\,.
    \end{split}
    \end{align*}
    The last two terms on the right-hand side can again be controlled by $\E^0_q[\psi](\tau_1)$, similar to \eqref{eq:sup:rp_I}. Also observe that
    \begin{align*}
        \d_v(r^p t^{2-2q} \psi^2) + \frac{2-p}{2} r^{p-1}t^{2-2q}\psi^2 \leq (2-2q)t^{1-q}r^p \psi^2 + 2 r^{p-1} t^{2-2q} |\d_v (r\psi)|^2\,,
    \end{align*}
    and
    \begin{align*}
            t^{2-4q}r^{p-1}|\d_v\varphi|^2\lesssim t^{2-2q}r^{p-1}|\d_v(r\psi)|^2 + r^{p+1}\psi^2\,.
        \end{align*}
    Using the above inequalities and \eqref{eq:general:t-r-lemma}, we obtain
    \begin{align*}
    \begin{split}
        &\int_{\D^{\tau_2}_{\tau_1}} r^{p-1}\left(t^{2-4q}|\d_v\varphi|^2+t^{2-2q}|\d_v(r\psi)|^2+\frac{1}{t^{\mu_q}r^2}|\rslnabla \psi|^2 + rt^{1-q}\psi^2+t^{2-2q}\psi^2\right) \, dudvd\omega \\
        & + \int_{\I^{\tau_2}_{\tau_1}} t^{-{\mu_q}}r^{p-2}\left(|\rslnabla\varphi|^2+\varphi^2\right) \, dud\omega +\int_{\C_{\tau_2}} t^{-{\mu_q}}r^p|\d_v\varphi|^2 \, dvd\omega\\&\lesssim \int_{\C_{\tau_1}} t^{-{\mu_q}}r^p|\d_v\varphi|^2 \, dvd\omega+ \E^0_q [\psi](\tau_1)\,.
    \end{split}
    \end{align*}
    Finally, \eqref{eq:sup:untwisting-r} provides the required estimate for the last remaining term \eqref{eq:sup:rp}. This completes the proof of \eqref{eq:sup:rp} for $0<p<1$ and $\frac{1}{2}<q<1$.
    \end{proof}
\section{Decay of waves}\label{sec:decay}
This section is dedicated to prove Corollary \ref{cor:general:decay}. The first part of the corollary follows from Proposition \ref{Prop:decay-energy}, while the second part is given by Proposition \ref{prop:decay:PointwiseDecay} (note that in view of \eqref{eq:general:t-r-lemma}, we have $\log(1+r)\lesssim t^{\frac{\varepsilon}{2}}$, and therefore \eqref{eq:decay:point-wise2} is sufficient for the case $q\geq \frac{1+2\sqrt 2}{7}$). The decay of first-order derivatives, as stated in the third part of Corollary \ref{cor:general:decay}, is proved in Propositions \ref{prop:decay:commuting-radiation} and \ref{prop:decay:derivatives}. All the propositions in this section can be regarded as consequences of the energy estimates in Section \ref{sec:EnergyEstimates} and general functional inequalities provided in Section \ref{sec:functional_ineq}.

Throughout this section, we set
\begin{equation}\label{eq:decay:parameters-q}
    \begin{aligned}
        \varepsilon&=0, \qquad &\mu_q^\varepsilon &=\mu_q=0, \qquad & \sigma &=0,\qquad & \mathrm{ if }&\, q=\frac{1}{2}\,,\\
        \varepsilon&=0, \qquad &\mu_q^\varepsilon &=\mu_q=0, \qquad & \sigma_q &<\sigma<1,\qquad & \mathrm{ if }&\, q<\frac{1}{2}\,,\\
        \varepsilon&=0, \qquad &\mu_q^\varepsilon &=\mu_q=4q-2, \qquad & \sigma_q &<\sigma<1,\qquad & \mathrm{ if }&\, \frac{1}{2}<q<\frac{1+2\sqrt 2}{7}\,,\\
        0<\varepsilon&\ll (1-q), \qquad &\mu_q^\varepsilon &=4q-2+\varepsilon, \qquad & \sigma &=1,\qquad & \mathrm{ if }&\, \frac{1+2\sqrt 2}{7}\leq q<1\,,
        \end{aligned}
\end{equation}
unless stated otherwise.
\subsection{Decay of the energy flux}
We start with showing the decay of energy flux $\E^\varepsilon_q [\psi](\tau)$, defined in \eqref{eq:intro:energy-sigma-tau}. Recall that $\sigma_q=2(\frac{q|1-2q|}{|1-q|^2})^\frac{1}{2}$. 

To state the proposition, first recall that for a solution $\psi$ to the initial value problem \eqref{eq:intro:IVP-wave}, $\varphi$ is defined as $t^qr \psi$. Now, for any $0<q<1$, $0\leq i\leq 2$ and $\varepsilon\geq 0$ define
\begin{align}
    f_q^\varepsilon[\psi](\tau)&:= \E_q^\varepsilon[\psi](\tau)\,,\label{eq:decay:def-f}\\
    h_{q,i}^\varepsilon[\psi](\tau)&:=\int_{\C_{\tau}} t^{-{\mu_q^\varepsilon}}r^i|\d_v\varphi|^2 \, dvd\omega\,,\label{eq:decay:def-hi}
\end{align}
and
    \begin{equation}\label{eq:decay:initial-energy}
        \begin{aligned}
        \E_q[\psi_0,\psi_1]&:=h_1^0[\psi](\tau_0)+h_{2-\sigma_q}^0[\psi](\tau_0)+f_q^0[\psi](\tau_0)\qquad& {\rm{ if }} \, \sigma_q&<1\,,\\
        \E_q[\psi_0,\psi_1]&:=h^0_{1-\frac{\varepsilon}{1-q}}[\psi](\tau_0)+f_q^0[\psi](\tau_0)\qquad&{\rm{ if }}\, \sigma_q&\geq1\,,
    \end{aligned}
    \end{equation}
   
Henceforth, we suppress the dependence of $f^\varepsilon_q[\psi]$ and $h_i^\varepsilon[\psi]$ on $\psi$, $q$, and $\varepsilon$ whenever it is clear from the context. Observe that in view of Lemma \ref{eq:general:t-r-lemma}, we have
\begin{equation}\label{eq:decay:h0-hEps}
    h_1^\varepsilon(\tau) \lesssim h^0 _{1-\frac{\varepsilon}{1-q}}(\tau)\,.
\end{equation}
\begin{prop}[Decay of twisted energy of waves on FLRW]\label{Prop:decay-energy}
    For $q\neq \frac{1}{3}$, let $\varepsilon$, $\sigma$, and $\mu_q^\varepsilon$ be as defined in \eqref{eq:decay:parameters-q}. Assume that $\psi$ solves the initial value problem \eqref{eq:intro:IVP-wave}. The energy flux $ \E^\varepsilon_q [\psi](\tau)$ defined in \eqref{eq:intro:energy-sigma-tau} decays as

    \begin{equation*}
        \begin{aligned}
        \E_q^0 [\psi](\tau)&\lesssim \frac{\E_q[\psi_0,\psi_1]}{\tau^{2-\sigma}} \quad&{\rm{ if }}\, \sigma_q&<1 \quad(\mathrm{\textit{i.\@e.\@}\ }q<\frac{1+2\sqrt 2}{7})\,,\\
        \E^\varepsilon_q [\psi](\tau)&\lesssim \frac{\E_q[\psi_0,\psi_1]}{\tau}\qquad &{\rm{ if }}\, \sigma_q&\geq1\quad(\mathrm{\textit{i.\@e.\@}\ }q\geq\frac{1+2\sqrt 2}{7})\,,
    \end{aligned}
    \end{equation*}
     where $\E_q[\psi_0,\psi_1]$ is defined in \eqref{eq:decay:initial-energy}.
\end{prop}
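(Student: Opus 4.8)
The plan is to obtain the energy decay by feeding the three ingredients of the $r^p$-method into the abstract hierarchical estimate, Lemma~\ref{lem:general:hierarchy-estimate}, for a suitable choice of the functions $f$, $h_1$, $h_{2-\sigma}$, and then reading off its conclusion. The near-monotonicity hypothesis \eqref{eq:general:0-EE} will come from the energy boundedness Theorem~\ref{thm:intro:boundedness}; the one-derivative improvement \eqref{eq:general:1-EE} from the ILED Theorem~\ref{thm:itro:Morawetz} combined with the $p=1$ case of the $r^p$-estimate Theorem~\ref{thm:intro:rp}; and the interpolated estimate \eqref{eq:general:2-EE} from the top of the $r^p$-hierarchy together with the interpolation Lemma~\ref{lem:interpolation-r^p}. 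Throughout, the passage from the flux $\E_q^\varepsilon[\psi](\tau)$ to a spacetime integral is via $\int_{\tau_1}^{\tau_2}\E_q^\varepsilon[\psi](\tau)\,d\tau$: on the null portion $\C_\tau$ one has $du=d\tau$, so this reproduces an integral over $\D^{\tau_2}_{\tau_1}$, whereas on the spacelike portion $\S_\tau$ the change of variables $d\tau=\tfrac12 t^{-q}dt$ produces exactly the $t$-weights on the left-hand side of the ILED.

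\textbf{The case $\sigma_q<1$ (that is $q<\tfrac{1+2\sqrt2}{7}$, $q\neq\tfrac13$).} Set $f=f_q^0=\E_q^0[\psi]$, $h_1=h_{q,1}^0$, $h_{2-\sigma}=h_{q,2-\sigma}^0$ with $\sigma\in(\sigma_q,1)$ (and $\sigma=0$, $h_{2-\sigma}=h_{q,2}^0$ in the radiation sub-case $q=\tfrac12$, using \eqref{eq:radiation:r2Estimate}). Then \eqref{eq:general:0-EE} is Theorem~\ref{thm:intro:boundedness}. For \eqref{eq:general:1-EE}, Theorem~\ref{thm:intro:rp} with $p=1$ (admissible because $1<2-\sigma_q$) yields both $h_1(\tau_2)\lesssim h_1(\tau_1)+\E_q^0[\psi](\tau_1)$ and control of the $\D^{\tau_2}_{\tau_1}$-bulk; a term-by-term comparison of that bulk with the $\C_\tau$-part of $\int_{\tau_1}^{\tau_2}\E_q^0[\psi](\tau)\,d\tau$, using $(rt^{q-1})^2\le\fC^2$ from Lemma~\ref{lem:general:t-r} to dominate the single mismatched term $t^{4q-2}r^2\psi^2$ by $t^{2q}\psi^2$, together with the ILED Theorem~\ref{thm:itro:Morawetz} for the $\S_\tau$-part, gives \eqref{eq:general:1-EE}. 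For \eqref{eq:general:2-EE}, Theorem~\ref{thm:intro:rp} with $p=2-\sigma<2-\sigma_q$ gives $h_{2-\sigma}(\tau_2)\lesssim h_{2-\sigma}(\tau_1)+\E_q^0[\psi](\tau_1)$, the uniform bound $\sup_{[\tau_1,\tau_2]}h_{2-\sigma}\lesssim h_{2-\sigma}(\tau_1)+\E_q^0[\psi](\tau_1)$, and control of $\int_{\D^{\tau_2}_{\tau_1}}t^{-\mu_q}r^{1-\sigma}|\d_v\varphi|^2$; then Lemma~\ref{lem:interpolation-r^p} gives $h_1^{1/(1-\sigma)}\le h_{1-\sigma}\,h_{2-\sigma}^{\sigma/(1-\sigma)}$, and integrating in $\tau$ (so $\int h_{1-\sigma}\,d\tau$ is the displayed $\D^{\tau_2}_{\tau_1}$-bulk) yields $\big(\int_{\tau_1}^{\tau_2}h_1^{1/(1-\sigma)}\,d\tau\big)^{1-\sigma}\lesssim h_{2-\sigma}(\tau_1)+\E_q^0[\psi](\tau_1)$, i.e.\ \eqref{eq:general:2-EE}. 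Lemma~\ref{lem:general:hierarchy-estimate} then gives $\E_q^0[\psi](\tau)\lesssim\fC\,\tau^{-(2-\sigma)}$ with $\fC=h_1(\tau_0)+f(\tau_0)+(h_{2-\sigma}(\tau_0)+f(\tau_0))^{1/(1-\sigma)}$; since the estimate is quadratic in $\psi$ and the equation linear, I rescale so that $\E_q[\psi_0,\psi_1]\le1$, use $h_{2-\sigma}(\tau_0)\le h_{2-\sigma_q}(\tau_0)\le\E_q[\psi_0,\psi_1]$ (for $\rho\ge1$) and $x^{1/(1-\sigma)}\le 2^{\sigma/(1-\sigma)}x$ for $x\le2$ to conclude $\fC\lesssim\E_q[\psi_0,\psi_1]$, and undo the rescaling.

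\textbf{The case $\sigma_q\ge1$ (that is $q\ge\tfrac{1+2\sqrt2}{7}$).} Now only the $\tau^{-1}$ part of Lemma~\ref{lem:general:hierarchy-estimate} is relevant — it uses only \eqref{eq:general:0-EE} and \eqref{eq:general:1-EE} — but \eqref{eq:general:1-EE} acquires an inert source term. Take $f=f_q^\varepsilon=\E_q^\varepsilon[\psi]$ with $0<\varepsilon\ll1-q$, $h_1=h_{q,\,1-\frac{\varepsilon}{1-q}}^0$, and set $D_0:=\E_q^0[\psi](\tau_0)$. Then \eqref{eq:general:0-EE} is the $\varepsilon$-loss part of Theorem~\ref{thm:intro:boundedness}; the $\S_\tau$-bulk of $\int_{\tau_1}^{\tau_2}\E_q^\varepsilon[\psi](\tau)\,d\tau$ is absorbed by the $\varepsilon$-loss ILED; and the $\C_\tau$-bulk, together with $h_1(\tau_2)$, is controlled by Theorem~\ref{thm:intro:rp} with $p=1-\tfrac{\varepsilon}{1-q}\in(0,1)$, where one writes $t^{-\mu_q^\varepsilon}=t^{-\varepsilon}t^{-\mu_q}\lesssim r^{-\varepsilon/(1-q)}t^{-\mu_q}$ (again Lemma~\ref{lem:general:t-r}, cf.\ \eqref{eq:decay:h0-hEps}) to match the $\varepsilon=0$ weights of Theorem~\ref{thm:intro:rp}, and bounds the $\E_q^0[\psi](\tau_1)$ on its right-hand side by $D_0$ using the $\varepsilon=0$ part of Theorem~\ref{thm:intro:boundedness}. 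This gives $\int_{\tau_1}^{\tau_2}f(\tau)\,d\tau+h_1(\tau_2)\lesssim h_1(\tau_1)+f(\tau_1)+D_0$. Taking $\tau_1=\tau_0$ yields $\int_{\tau_0}^{\tau}f\lesssim h_1(\tau_0)+D_0=\E_q[\psi_0,\psi_1]$, and \eqref{eq:general:0-EE} then gives $(\tau-\tau_0)f(\tau)\lesssim\int_{\tau_0}^{\tau}f$, hence $\E_q^\varepsilon[\psi](\tau)\lesssim\E_q[\psi_0,\psi_1]\,\tau^{-1}$ for $\tau\ge2\tau_0$, with the same bound for $\tau\le2\tau_0$ from $f(\tau)\lesssim f(\tau_0)\le D_0$.

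\textbf{Expected main obstacle.} The delicate point is not the abstract mechanism but the bookkeeping needed to justify that the ILED plus the appropriate $r^p$-estimate dominate $\int_{\tau_1}^{\tau_2}\E_q^\varepsilon[\psi](\tau)\,d\tau$ term by term — including the zeroth-order and $\tfrac1{r^2}|\d_r(r\psi)|^2$ contributions and the reconciliation of the various $t$- and $r$-weights ($\mu_q$ versus $\mu_q^\varepsilon$, the $\tau$-to-$(u,t)$ Jacobians, and the $r$-loss from $p<1$), all of which reduce to the uniform bound $rt^{q-1}\le\fC$ of Lemma~\ref{lem:general:t-r}. In the regime $\sigma_q\ge1$ the extra subtlety is the non-decaying source $\E_q^0[\psi](\tau_0)$, which forces one to bypass the black-box hierarchical lemma and rerun its pigeonhole argument directly.
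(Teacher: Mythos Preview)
Your proposal is correct and follows essentially the same route as the paper: verify the hypotheses of Lemma~\ref{lem:general:hierarchy-estimate} with $f=\E_q^\varepsilon[\psi]$ and $h_i=h_{q,i}^\varepsilon$ by combining energy boundedness, the ILED on $\{r\le\rho\}$, the $r^p$-estimate on $\D^{\tau_2}_{\tau_1}$ (with $p=1$ and $p=2-\sigma$), and the interpolation Lemma~\ref{lem:interpolation-r^p}; in the $\sigma_q\ge1$ regime both you and the paper shift the $r$-weight by $\varepsilon/(1-q)$ via \eqref{eq:decay:h0-hEps} and rerun only the pigeonhole step. Your rescaling argument to pass from the nonlinear constant $\fC$ in Lemma~\ref{lem:general:hierarchy-estimate} to the linear $\E_q[\psi_0,\psi_1]$ is a point the paper leaves implicit, and your flagging of the $\tfrac{1}{r^2}|\d_r(r\psi)|^2$ bookkeeping is apt---that term is handled on $\{r\le\rho\}$ by the identity $\int_{B_\rho}\tfrac{1}{r^2}|\d_r(r\psi)|^2\,dx=\int_{B_\rho}|\d_r\psi|^2\,dx+\rho\int_{\partial B_\rho}\psi^2\,d\omega$, the boundary contribution being absorbed by the mean-value argument at $r\sim\rho$.
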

\begin{proof}
    First, assume $q<\frac{1+2\sqrt 2}{7}$. The result follows from Lemma \ref{lem:general:hierarchy-estimate}, and we just need to show that its assumptions are valid. The key elements to do so are the energy estimates in Propositions \ref{prop:sub:EB}, \ref{prop:sub:ILED} and \ref{prop:sub:rp} in case $q\leq \frac{1}{2}$ and Propositions \ref{prop:sup:energy-bnd}, \ref{prop:sup:Morawetz} and \ref{prop:sup:rp} when $q>\frac{1}{2}$.
    
    First we consider $0<q\leq \frac{1}{2}$ and $q\neq \frac{1}{3}$ where $0<\sigma_q<1$. We therefore can let $p=1$ in \eqref{eq:sub:rpEstimate} to obtain
    \begin{align*}
        &\int_{\D^{\tau_2}_{\tau_1}}  \left( |\d_v \varphi|^2 +r^2|\d_v(t^q \psi)|^2+ \frac{1}{r^2}|\rslnabla\varphi|^2+ \frac{1}{r^2}\varphi^2\right) \, dudvd\omega + \int_{\C_{\tau_2}} r|\d_v\varphi|^2 \, dvd\omega \\&+  \int_{\I^{\tau_2}_{\tau_1}} \frac{1}{r^2}\left(|\rslnabla\varphi|^2 +\varphi^2\right)\, dud\omega
        \lesssim \int_{\C_{\tau_1}} r|\d_v\varphi|^2 \, dvd\omega + \E_q [\psi](\tau_1)\,,
    \end{align*}
for any $\tau_2>\tau_1\geq \tau_0$.
Next, we show how to control the bulk terms on the compact region $\R=\R^{\tau_2}_{\tau_1}\setminus D^{\tau_2}_{\tau_1}$. The ILED estimate \eqref{eq:sub:Morawetz} allow us to control weighted norms of $t^q\psi$ and its first derivatives, and since $r<\rho$ on this region, the power of $r$-weight is not an issue. However, we also need to control derivatives of $\varphi=rt^q\psi$. To do so, recall that the identity
\begin{equation}\label{eq:sub:untwist-r}
    |\d_v \varphi|^2= r^2|\d_v(  t^q\psi)|^2 + 2 r t^q \psi\d_v( t^q\psi)+ t^{2q}\psi^2\,,
\end{equation}
 leads to
 \begin{align*}
     \int_{\R^{\tau_2}_{\tau_1}\setminus D^{\tau_2}_{\tau_1}} |\d_v \varphi|^2 \, dudvd\omega \lesssim \int_{\R^{\tau_2}_{\tau_1}\setminus D^{\tau_2}_{\tau_1}} \left(|\d_v ( t^q \psi)|^2 + \frac{t^{2q}}{r^2}\psi^2 \right) \, r^2dudvd\omega\,.
 \end{align*}
The same is also true for the $u$-derivative. So summing them up an using \eqref{eq:general:t-r-lemma}, we arrive at
\begin{align}\label{eq:sub:p=1}
    \begin{split}
        &\int_{\R^{\tau_2}_{\tau_1}\setminus D^{\tau_2}_{\tau_1}} \left(|\d_v(t^q \psi)|^2+ |\d_u (t^q \psi)|^2 +\frac{1}{r^2}|\d_v\varphi|^2+ \frac{1}{r^2}|\d_u \varphi|^2 + \frac{t^{2q}}{r^2}|\rslnabla\psi|^2+ \frac{t^{2q}}{r^2}\psi^2+t^{4q-2}\psi^2 \right)\, r^2dudvd\omega \\
        &+\int_{\D^{\tau_2}_{\tau_1}}\left( \frac{1}{r^2} |\d_v \varphi|^2+|\d_v (t^q \psi)|^2 + \frac{t^{2q}}{r^2}|\rslnabla\psi|^2 + \frac{t^{2q}}{r^2}\psi^2\right) \, r^2 dudvd\omega +\int_{\C_{\tau_2}} r|\d_v\varphi|^2+t^{4q-2}\psi^2 \, dvd\omega\\
        &\lesssim \int_{\C_{\tau_1}} r|\d_v\varphi|^2 \, dvd\omega + \E_q [\psi](\tau_1)\,.
    \end{split}
\end{align}
 Thus, with the choices  $f=f^0$ and $h_i=h_i^0$ as defined in \eqref{eq:decay:def-f} and \eqref{eq:decay:def-hi}, estimate \eqref{eq:general:1-EE} follows directly from \eqref{eq:sub:p=1}. Moreover, \eqref{eq:sub:Eng-boundedness} coincides exactly with \eqref{eq:general:0-EE}.

To derive the last estimate, \textit{i.\@e.\@}\ \eqref{eq:general:2-EE}, we let $p=2-\sigma$ in \eqref{eq:sub:rpEstimate} to deduce
\begin{align*}
    \int^{\tau_2}_{\tau_1} h_{1-\sigma}(\tau)\,d\tau + h_{2-\sigma}(\tau_2) \lesssim h_{2-\sigma}(\tau_1) +f(\tau_1)\,.
\end{align*}
When $q=\frac{1}{2}$ and $\sigma=0$, we instead use \eqref{eq:radiation:r2Estimate}to derive the above estimate. Furthermore, with the help of Lemma \ref{lem:interpolation-r^p}, we get 
\begin{align}\label{eq:sub:interpolating_h1}
\begin{split}
    \int^{\tau_2}_{\tau_1} h_1^{\frac{1}{1-\sigma}}(\tau)\,d\tau &\lesssim \int^{\tau_2}_{\tau_1} h_{1-\sigma}(\tau)h^{\frac{\sigma}{1-\sigma}}_{2-\sigma}(\tau)\,d\tau \lesssim \left(h_{2-\sigma}(\tau_1) +f(\tau_1)\right)^{\frac{\sigma}{1-\sigma}}\int^{\tau_2}_{\tau_1} h_{1-\sigma}(\tau)\,d\tau\\&\lesssim\left(h_{2-\sigma}(\tau_1) +f(\tau_1)\right)^{\frac{\sigma}{1-\sigma}}(h_{2-\sigma}(\tau_1) +f(\tau_1))=\left(h_{2-\sigma}(\tau_1) +f(\tau_1)\right)^{\frac{1}{1-\sigma}}\,,
\end{split}
\end{align}
from which \eqref{eq:general:2-EE} follows for $\tau_2>\tau_1>\tau_0$. Thus, Lemma \ref{lem:general:hierarchy-estimate} gives
\begin{equation*}
    f(\tau)\lesssim \frac{\E_q[\psi_0,\psi_1]}{\tau^{2-\sigma}}\,,
\end{equation*}
in which $\sigma>\sigma_q=2(\frac{q|1-2q|}{|1-q|^2})^\frac{1}{2}$ and $\E_q[\psi_0,\psi_1]=h_1(\tau_0)+h_{2-\sigma}(\tau_0)+f(\tau_0)$.

Now suppose $\frac{1}{2}<q<\frac{1+2\sqrt 2}{7}$. 
Note that \eqref{eq:sup:Eng1-boundedness} provide
    \begin{align*}
        f(\tau_2)\lesssim f(\tau_1)\,,
    \end{align*}
    for $\tau_1<\tau_2$. So we recover \eqref{eq:general:0-EE}.

    To derive \eqref{eq:general:1-EE}, we let $p=1$ in \eqref{eq:sup:rp}. In addition, Lemma \ref{lem:general:t-r} gives us control over the term containing $t^{-\varepsilon}\psi^2$ as well. Therefore,
    \begin{align*}
        &\int_{\D^{\tau_2}_{\tau_1}}  t^{-{\mu_q^\varepsilon}}\left( |\d_v \varphi|^2 +r^2|\d_v(t^q \psi)|^2+ \frac{1}{r^2}|\rslnabla\varphi|^2+ \frac{1}{r^2}\varphi^2+ t^{2q-2} \varphi^2\right) \, dudvd\omega + \int_{\C_{\tau_2}} t^{-{\mu_q^\varepsilon}}r|\d_v\varphi|^2 \, dvd\omega \\&+  \int_{\I^{\tau_2}_{\tau_1}} t^{-{\mu_q^\varepsilon}} \frac{1}{r^2}\left(|\rslnabla\varphi|^2 +\varphi^2\right)\, dud\omega
        \lesssim \int_{\C_{\tau_1}} t^{-{\mu_q^\varepsilon}}r|\d_v\varphi|^2 \, dvd\omega + \E_q^\varepsilon[\psi](\tau_1)\,,
    \end{align*}
    with ${\mu_q^\varepsilon}=4q-2+\varepsilon>0$. So now we just need to use the ILED estimate for the compact region $\R^{\tau_2}_{\tau_1}\setminus D^{\tau_2}_{\tau_1}$. Before doing that, recall \eqref{eq:sub:untwist-r} to untwist $\varphi$ and derive
    \begin{align*}
         \int_{\R^{\tau_2}_{\tau_1}\setminus D^{\tau_2}_{\tau_1}} t^{2-2q-\varepsilon}|\d_v \varphi|^2 \, dudvd\omega \lesssim \int_{\R^{\tau_2}_{\tau_1}\setminus D^{\tau_2}_{\tau_1}} t^{2-2q-\varepsilon}\left(|\d_v ( t^q \psi)|^2 + \frac{t^{2q}}{r^2}\psi^2 \right) \, r^2dudvd\omega\,.
    \end{align*}
    Inequality above and \eqref{eq:sup:Morawetz} give
    \begin{align*}
        \int_{\R^{\tau_2}_{\tau_1}\setminus\D^{\tau_2}_{\tau_1}} &t^{2-4q-\varepsilon}\left(|\d_v(t^q\psi)|^2+|\d_u(t^q\psi)|^2+  \frac{t^{2q}}{r^2}|\rslnabla \psi |^2+t^{2q}\psi^2+ \frac{t^{4q-2}}{r} \psi^2\right)\,r^2dudvd\omega\lesssim  \E_q^\varepsilon[\psi](\tau_1)\,.
    \end{align*}
    So we obtain \eqref{eq:general:1-EE} as required.

    The last piece of Lemma \ref{lem:general:hierarchy-estimate} is obtained by setting $\sigma>\sigma_q=2(\frac{q|1-2q|}{|1-q|^2})^\frac{1}{2}$ and letting $p=2-\sigma$ in \eqref{eq:sup:rp}. Indeed, \eqref{eq:general:2-EE} follows similar to \eqref{eq:sub:interpolating_h1} with the help of interpolation Lemma \ref{lem:interpolation-r^p}. 
    
    Now suppose $q\geq\frac{1+2\sqrt 2}{7}$ and set $\epsilon=\frac{\varepsilon}{1-q}$. Recall that if we let $p=1-\epsilon$, then  from \eqref{eq:sup:rp} we infer that
    \begin{align*}
        &\int_{\D^{\tau_2}_{\tau_1}} t^{2-4q} r^{-\epsilon}\left( |\d_v \varphi|^2 + r^2|\d_v(t^q\psi)|^2+\frac{1}{r^2}|\rslnabla\varphi|^2+t^{2q}\psi^2\right) \, dudvd\omega + \int_{\C_{\tau_2}} t^{2-4q}r^{1-\epsilon}|\d_v\varphi|^2 \, dvd\omega \\&\lesssim \int_{\C_{\tau_1}} t^{2-4q}r^{1-\epsilon}|\d_v\varphi|^2 \, dvd\omega +\E_q^0[\psi](\tau_1)\,.
    \end{align*}
     Using estimate  \eqref{eq:general:t-r-lemma}, we can also control the term involving $\psi^2$ with different weight factors. Moreover, \eqref{eq:general:t-r-lemma} gives $t^{-\varepsilon}\lesssim\frac{1}{r^\epsilon}$. Thus, we obtain
    \begin{align*}
        &\int_{\D^{\tau_2}_{\tau_1}} t^{2-4q-\varepsilon}\left( |\d_v \varphi|^2 + r^2|\d_v(t^q\psi)|^2+\frac{1}{r^2}|\rslnabla\varphi|^2+t^{2q}\psi^2+t^{4q-2}r^2\psi^2\right) \, dudvd\omega + \int_{\C_{\tau_2}} t^{2-4q}r^{1-\frac{\varepsilon}{1-q}}|\d_v\varphi|^2 \, dvd\omega \\&\lesssim\int_{\C_{\tau_1}} t^{2-4q}r^{1-\frac{\varepsilon}{1-q}}|\d_v\varphi|^2 \, dvd\omega +\E_q^0[\psi](\tau_1)\,.
    \end{align*}
    For the compact region $\R^{\tau_2}_{\tau_1}\setminus\D^{\tau_2}_{\tau_1}$, we again use the ILED estimate \eqref{eq:sup:Morawetz} for to derive 
    \begin{align*}
        \int_{\R^{\tau_2}_{\tau_1}\setminus\D^{\tau_2}_{\tau_1}} &t^{2-2q-\varepsilon}\left(|\d_t(t^q\psi)|^2+|\d_r\psi|^2+  \frac{1}{r^2}|\rslnabla \psi |^2+\frac{1}{1+r^2}\psi^2+ t^{2q-2} \psi^2\right)\,r^2dudvd\omega\lesssim \E_q^\varepsilon[\psi](\tau_1)\lesssim \E_q^0[\psi](\tau_1)\,.
    \end{align*}
    So if we consider $f^\varepsilon(\tau)=f^\varepsilon_q[\psi](\tau)$ and $h_{1-\epsilon}(\tau)=h^0_{q,1-\epsilon}[\psi](\tau)$ as defined in \eqref{eq:decay:def-f} and \eqref{eq:decay:def-hi}, we get
    \begin{align*}
        f^\varepsilon(\tau_2)&\lesssim f^\varepsilon(\tau_1)\,,\\
        \int^{\tau_2}_{\tau_1} f^\varepsilon(\tau)\,d\tau + h_{1-\epsilon}(\tau_2)&\lesssim f^0(\tau_1) + h_{1-\epsilon} (\tau_1)\,, 
    \end{align*}
    for any $\tau_2>\tau_1\geq \tau_0$.
    Therefore, similar to the first part of the proof of Lemma \ref{lem:general:hierarchy-estimate}, the mean value theorem gives
    \begin{equation*}
        f^\varepsilon(\tau)\lesssim \frac{1}{\tau} \E_q[\psi_0,\psi_1]\,,
    \end{equation*}
    for any $\tau>\tau_0$ and with $\E_q[\psi_0,\psi_1]$ as defined in \eqref{eq:decay:initial-energy}.
\end{proof}
\begin{rmk}\label{rm:decay:hi}
    It is worth mentioning that for all $0<q<1$ and $q\neq\frac{1}{3}$, Lemma \ref{lem:general:hierarchy-estimate} and $r^p$-estimates along with \eqref{eq:decay:h0-hEps} imply that for $\varepsilon$ and $\sigma$ satisfying \eqref{eq:decay:parameters-q}, we also have
    \begin{align*}
        \begin{split}h^\varepsilon_0(\tau)\lesssim\frac{\E_q[\psi_0,\psi_1]}{\tau^{2-\sigma}}\,,\qquad h^\varepsilon_1(\tau)\lesssim\frac{\E_q[\psi_0,\psi_1]}{\tau^{1-\sigma}}\,,\qquad h^\varepsilon_{2-\sigma}(\tau) \lesssim \E_q[\psi_0,\psi_1]\,.
        \end{split}
    \end{align*}
\end{rmk}
\begin{rmk}\label{rmk:decay:1/3}
    When $\frac{1}{3}$, the $r^p$-estimate \eqref{eq:sub:rpEstimate} remains valid for $0<p<1$, which is the same range as in case $q\geq \frac{1+2\sqrt 2}{7}$. Therefore, similar results can be established by considering an energy flux including an arbitrarily small $\varepsilon>0$ loss in the power of $t$-weights. After establishing the energy boundedness and ILED estimate for the modified energy flux, we can use \eqref{eq:decay:h0-hEps} to obtain the energy decay.
\end{rmk}
\subsection{Pointwise decay}
The next statement concerns the pointwise decay of the solutions. To begin with, define
\begin{align}\label{eq:decay:h-initial-energy}
    \bar{\E}_q[\psi_0,\psi_1]=\E_q[\psi_0,\psi_1]+\sum_{j=0,1,2}\sum_{i=1,2,3}\E_q[\Omega_i^j\psi|_{\Sigma_{\tau_0}},\d_t\Omega_i^j\psi|_{\S_{\tau_0}}]+ \sum_{i=1,2,3} \E_q[\d_{x^i}\psi|_{\Sigma_{\tau_0}}, \d_t\d_{x^i}\psi|_{\S_{\tau_0}}]\,.
\end{align}
\begin{prop}[Pointwise decay of waves on FLRW ]\label{prop:decay:PointwiseDecay} For $q\neq \frac{1}{3}$, let $\varepsilon$, $\sigma$, and $\mu_q^\varepsilon$ be as defined in \eqref{eq:decay:parameters-q}. If $\psi$ solves the initial value problem \eqref{eq:intro:IVP-wave}, then
    \begin{align}\label{eq:decay:point-wise1}
        |\psi|\lesssim \frac{t^\frac{{\mu_q^\varepsilon}}{2} \tau^{\frac{\sigma}{2}}}{\tau t^q  \sqrt{1+r} }\bar\E_q[\psi_0,\psi_1]^\frac{1}{2}\,,
    \end{align}
    where $\bar \E_q[\psi_0,\psi_1]$ is defined as \eqref{eq:decay:h-initial-energy}. Moreover,
    \begin{equation}\label{eq:decay:point-wise2}
        \begin{aligned}
            |\psi|&\lesssim \frac{t^\frac{{\mu_q^\varepsilon}}{2} \tau^{\frac{\sigma}{2}}}{(1+r)  t^q  \sqrt \tau}\bar\E_q[\psi_0,\psi_1]^\frac{1}{2}\,,\qquad &\mathrm{ if }\quad &0<q<\frac{1+2\sqrt 2}{7}\,,\\
            |\psi|&\lesssim \frac{t^\frac{{\mu_q^\varepsilon}}{2} \log (1+r)}{(1+r)  t^q  }\bar\E_q[\psi_0,\psi_1]^\frac{1}{2}\,,\qquad &\mathrm{ if }\quad &\frac{1+2\sqrt 2}{7}\leq q <1\,.
        \end{aligned}
    \end{equation}
    In particular,
    \begin{align}\label{eq:decay:particular-t}
        |\psi|\lesssim \frac{1}{t^{1-(1-q)\frac{\sigma}{2}-\frac{{\mu_q^\varepsilon}}{2}}}\bar\E_q[\psi_0,\psi_1]^\frac{1}{2}\,.
    \end{align}
\end{prop}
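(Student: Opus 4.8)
The plan is to obtain both pointwise bounds from the energy decay of Proposition~\ref{Prop:decay-energy}, applied not only to $\psi$ but to a family of commuted functions, combined with the Sobolev inequalities of Proposition~\ref{prop:general:Sobolev}. The first point to record is the commutation structure: the angular momentum fields $\Omega_1,\Omega_2,\Omega_3$ are Killing for $g_q$, and $\d_{x^1},\d_{x^2},\d_{x^3}$ commute with $\Box_{g_q}$ because the coefficients in \eqref{eq:intro:wave-equation} depend on $t$ only. Hence $\Omega_i^j\psi$ ($j=0,1,2$) and $\d_{x^i}\psi$ all solve the initial value problem \eqref{eq:intro:IVP-wave}, with data given by the corresponding derivatives of the smooth extension $\Psi$, so by Proposition~\ref{Prop:decay-energy} and Remark~\ref{rm:decay:hi} their energies $\E_q^\varepsilon[\,\cdot\,](\tau)$ and $r^p$-fluxes $h^\varepsilon_p[\,\cdot\,](\tau)$ decay at the stated rates with initial constants bounded by $\bar\E_q[\psi_0,\psi_1]$ from \eqref{eq:decay:h-initial-energy}.

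Next I would treat the bounded-$r$ part. For $r\le\rho$ the hypersurface $\S_\tau$ is the Euclidean slice $\{t=\mathrm{const}\}$ with $t\sim\tau^{1/(1-q)}$, and the weighted energies $\E_q^\varepsilon[\psi](\tau)$ and $\sum_i\E_q^\varepsilon[\d_{x^i}\psi](\tau)$ control $t^{2q-\mu_q^\varepsilon}\int(|\psi|^2+|\nabla_x\psi|^2+|\nabla^2_x\psi|^2)\,dx$ over $\S_\tau\cap\{r\le 2\rho\}$, using the term $\tfrac{1}{1+r^2}\psi^2$ for the zeroth order and the identity $|\d_r f|^2+\tfrac{1}{r^2}|\rslnabla f|^2=|\nabla_x f|^2$ with $f=\psi,\d_{x^i}\psi$. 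The local Sobolev inequality \eqref{eq:general:Sobolev:Local} with $s\sim\rho$ then gives
\[
|\psi|^2\lesssim t^{\mu_q^\varepsilon-2q}\,\tau^{\sigma-2}\,\bar\E_q[\psi_0,\psi_1]\qquad\text{for }r\le\rho,
\]
which is \eqref{eq:decay:point-wise1}--\eqref{eq:decay:point-wise2} in this range since $1+r\sim1$; the same argument at $r=\rho$ also furnishes a pointwise bound for $\varphi=rt^q\psi$ on $\{r=\rho\}$, to be used as a boundary value.

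Then I would handle the region $r\ge\rho$ on $\C_\tau$. By the sphere Sobolev inequality \eqref{eq:general:Sobolev:SS} together with \eqref{eq:general:angularderivatives-sphere}, $|\psi|^2(t,r,\omega)\lesssim\sum_{j\le 2}\sum_i\int_{\SS^2}|\Omega_i^j\psi|^2\,d\omega$, so it suffices to bound these sphere integrals. For each one I would run a fundamental-theorem-of-calculus/Hardy estimate in the outgoing null variable $v$ along $\C_\tau$ (on which $v$ and $r$ differ by a constant), integrating from radius $r$ to the junction $r=\rho$ or toward $\I^+$, and applying a weighted Cauchy--Schwarz against the decaying fluxes $h^\varepsilon_p[\Omega_i^j\psi](\tau)$ of Remark~\ref{rm:decay:hi}. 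Choosing the Hardy weight to optimise the $\tau$-decay produces the $(1+r)^{-1/2}$-weighted bound \eqref{eq:decay:point-wise1}; choosing it to optimise the $r$-decay produces the $(1+r)^{-1}$-weighted bound \eqref{eq:decay:point-wise2}, and when $\sigma_q\ge1$ only exponents $p<1$ are available, so the borderline integral $\int r^{-1}\,dv$ yields the extra $\log(1+r)$. The boundary term at $r=\rho$ is absorbed by a mean-value argument in $\{r\sim\rho\}$ together with the ILED estimates (Propositions~\ref{prop:sub:ILED} and \ref{prop:sup:Morawetz}), exactly as in the proof of Proposition~\ref{prop:sub:rp}. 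Finally, \eqref{eq:decay:particular-t} follows by combining the two estimates: where $\tau\gtrsim t^{1-q}$ one inserts $\tau\sim t^{1-q}$ into \eqref{eq:decay:point-wise1}, and where instead $r\gtrsim t^{1-q}$ one uses the $(1+r)^{-1}$ gain of \eqref{eq:decay:point-wise2}, noting that $r+\tau\gtrsim t^{1-q}$ everywhere.

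The step I expect to be the main obstacle is the bookkeeping of the $t$-weights in the Hardy arguments along $\C_\tau$, where $t$ is not constant but increases with $v$: one must exploit this monotonicity and the inequality $rt^{q-1}\le\fC$ of Lemma~\ref{lem:general:t-r} to trade powers of $r$ for powers of $t$, so that the integrands line up with the weight $t^{-\mu_q^\varepsilon}t^{2q}$ built into $\E_q^\varepsilon$, and so that the treatment of the junction $r=\rho$ is uniform in $\tau$. Everything else is a routine combination of the decay of $\E_q^\varepsilon$ and of the $r^p$-fluxes with the Sobolev and commutator machinery already assembled in Sections~\ref{sec:functional_ineq} and~\ref{sec:EnergyEstimates}.
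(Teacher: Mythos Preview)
Your overall architecture---commute with $\Omega_i$ and $\d_{x^i}$, apply Proposition~\ref{Prop:decay-energy} to the commuted solutions, use sphere Sobolev for $r\ge\rho$ and local Sobolev for $r\le\rho$, and run a fundamental-theorem-of-calculus argument in $v$ along $\C_\tau$---matches the paper, and your treatment of \eqref{eq:decay:point-wise1} is essentially correct: the paper integrates $t^{q-\mu_q^\varepsilon/2}\psi$ outward toward $\I^+$, uses Cauchy--Schwarz with weight $r^{-2}$, and kills the boundary term at infinity using that $\int_{\SS^2}t^{2q-\mu_q^\varepsilon}\psi^2\,d\omega\to 0$ along a subsequence (from finiteness of the energy flux). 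Your ILED/mean-value remark for the boundary at $r=\rho$ is misplaced here---that device controls integrated fluxes, not pointwise values---but since you already have the pointwise bound at $r=\rho$ from the local Sobolev step, this is harmless.

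There is, however, a genuine gap in your plan for \eqref{eq:decay:point-wise2} when $\sigma<1$. If you integrate $\d_v\varphi$ from $r=\rho$ to $r_\circ$ and apply Cauchy--Schwarz with any weight $r^p$, the best you can obtain is $|\varphi(r_\circ)|^2\lesssim|\varphi(\rho)|^2+t_\circ^{\mu_q^\varepsilon}\bigl(\int_\rho^{r_\circ} r^{-p}\,dr\bigr)h^\varepsilon_p(\tau)$. For $p<1$ this is $\sim r_\circ^{1-p}\tau^{\sigma-2+p}$, which beats $\tau^{\sigma-1}$ only when $r_\circ\lesssim\tau$; for $p\ge 1$ you lose either a $\log r_\circ$ or all $\tau$-decay. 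Integrating $\varphi$ toward $\I^+$ does not help either, since the radiation field $\varphi|_{\I^+}$ is generically nonzero. The paper's device is to integrate $\d_v\varphi$ not from $\rho$ but from an \emph{intermediate} reference point $(t_a,r_a,\omega)$ on $\C_\tau\cap\{r=\tfrac14 u\}$, so that $r_a\sim\tau$; then with $p=2-\sigma$ the integral $\int_{r_a}^{r_\circ} r^{\sigma-2}\,dr\sim r_a^{\sigma-1}\sim\tau^{\sigma-1}$, and the starting value $|\varphi(t_a,r_a,\omega)|^2$ is controlled by the already-proved \eqref{eq:decay:point-wise1} at that point (where $r_a\sim\tau$ makes the two estimates equivalent). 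This splitting is the missing idea; once you insert it, your argument goes through. For $\sigma=1$ your scheme and the paper's agree, both producing the $\log(1+r)$ factor.
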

\begin{proof}
    First, we prove \eqref{eq:decay:point-wise1} and \eqref{eq:decay:point-wise2} with $r^{-1}$ instead of $(1+r)^{-1}$, which are equivalent for large $r$. Later we show the decay rate for small $r$ to avoid the degeneracy around $r=0$.
    
    In order to prove the \eqref{eq:decay:point-wise1}, consider any point $(t_\circ,r_\circ,\omega_\circ)$ in $\M$ with $r_\circ \geq \rho$. We then integrate 
    $$\d_v(t^{q-\frac{{\mu_q^\varepsilon}}{2}}\psi)=t^{-\frac{{\mu_q^\varepsilon}}{2}}\d_v(t^{q}\psi)-\frac{{\mu_q^\varepsilon}}{2} t^{2q-\frac{{\mu_q^\varepsilon}}{2}-1}\psi\,,$$ 
    with respect to $v$ from $v(r_\circ)$ to $v^*$ for fixed $u$ to get
    \begin{align}\label{eq:radiation:FirstPointwise1}
    \begin{split}
        | t_\circ^{q-\frac{{\mu_q^\varepsilon}}{2}} \psi(t_\circ,r_\circ,\omega_\circ)|^2&\leq  \Big| t^{q-\frac{{\mu_q^\varepsilon}}{2}} \psi\vert_{v=v^*}\Big|^2+\left(\int^{v^*}_{v(r_\circ)} (t^{-\frac{{\mu_q^\varepsilon}}{2}}\d_v (t^q \psi) + \frac{{\mu_q^\varepsilon}}{2} t^{-\frac{{\mu_q^\varepsilon}}{2}+2q-1}\psi )\; dv\right)^2 \\&\lesssim \Big| t^{q-\frac{{\mu_q^\varepsilon}}{2}} \psi\vert_{v=v^*}\Big|^2+ \frac{1}{r_\circ}\left(\int^{v^*}_{v(r_\circ)} t^{-{\mu_q^\varepsilon}}|\d_v(t^q\psi)|^2\;r^2 dv + \int^{v^*}_{v(r_\circ)} {\mu_q^\varepsilon}^2t^{-\varepsilon} \psi^2\; r^2dv\right)\,,
    \end{split}
    \end{align}
    remembering that ${\mu_q^\varepsilon}=0$ when $q\leq\frac{1}{2}$. If we now integrate over $\SS^2$ and let $v^*\to \infty$, the term on future null infinity vanishes because the energy boundedness estimates \eqref{eq:sub:Eng-boundedness} and \eqref{eq:sup:Eng1-boundedness} assert that there exists a sequence $\{v_n\}_n$ for each $u$ such that 
    \begin{align*}
        \lim_{n\to \infty} \int_{\SS^2\times\{v=v_n\}} t^{2q-\mu_q^\varepsilon}\psi^2 \;d\omega =0\,.
    \end{align*}
    Thus,    \begin{align}\label{eq:radiation:sphereIneq}
    \begin{split}
        \int_{\SS^2} | \psi(t_\circ,r_\circ,\omega)|^2 \;d\omega &\lesssim \frac{t_\circ^{\mu_q^\varepsilon}}{t^{2q}_\circ}\int_{\SS^2\times\{v=\infty\}} t^{2q-\mu_q^\varepsilon}|\psi|^2 \;d\omega+   \frac{t_\circ^{\mu_q^\varepsilon}}{r_\circ t^{2q}_\circ}\int_{\C_{\tau_\circ}} \left(t^{-{\mu_q^\varepsilon}}|\d_v(t^q\psi)|^2+ {\mu_q^\varepsilon}^2 t^{-\varepsilon} \psi^2\right)\;r^2 dvd\omega\\&\lesssim \frac{t_\circ^{\mu_q^\varepsilon}}{r_\circ t^{2q}_\circ}  \E_q^\varepsilon[\psi](\tau_\circ)\lesssim\frac{t_\circ^{\mu_q^\varepsilon}\E_q[\psi_0,\psi_1]}{r_\circ  t^{2q}_\circ \tau^{2-\sigma}_\circ}\,,
    \end{split}
    \end{align}
    thanks to Proposition \ref{Prop:decay-energy}. 
    
    Now recall that the wave equation commutes with angular operators $\Omega_i$, defined in Section \ref{sec:angular-derivatives}, for $i=1, 2, 3$. Hence, the above is also true for $\Omega_i^j \psi$ with $k=1,2,3$ and with suitable initial data sets. The Sobolev inequality in Proposition  \ref{prop:general:Sobolev} then gives the pointwise decay
    \begin{align}\label{eq:decay:sobolev}
        \|\psi(t_\circ, r_\circ, \cdot)\|_{L^\infty(\SS^2)}\lesssim \|\psi(t_\circ, r_\circ, \cdot)\|_{H^2(\SS^2)}\stackrel{\eqref{eq:general:angularderivatives-sphere}}{\lesssim}\sum_{j=0,1,2}\sum_{i=1,2,3}\|\Omega_i^j\psi\|_{L^2(\SS^2)}\lesssim \frac{t_\circ^{\frac{\mu_q^\varepsilon}{2}}\bar\E_q[\psi_0,\psi_1]^\frac{1}{2}}{\sqrt r_\circ t_\circ^q \tau_\circ^{1-\frac{\sigma}{2}}}\,.
    \end{align}
    
    For the other case, $r_\circ \leq \rho$, we integrate with respect to $r$ from $r_\circ$ to $r=\rho$ where the decay is known. In fact,  
    \begin{align*}
        | t_\circ^{q-\frac{{\mu_q^\varepsilon}}{2}} \psi(t_\circ,r_\circ,\omega_\circ)|^2&\lesssim \left(\int^\rho_{r_\circ} (t_\circ^{-\frac{{\mu_q^\varepsilon}}{2}}\d_v (t_\circ^q \psi) + \frac{{\mu_q^\varepsilon}}{2} t_\circ^{-\frac{{\mu_q^\varepsilon}}{2}+2q-1}\psi) \; dr\right)^2+| t_\circ^{q-\frac{{\mu_q^\varepsilon}}{2}} \psi(t_\circ,\rho,\omega_\circ)|^2 \\&\lesssim \frac{1}{r_\circ}\int^\rho_{r_\circ}( t_\circ^{2q-{\mu_q^\varepsilon}}|\d_t(t_\circ^q\psi)|^2+ t_\circ^{2q-{\mu_q^\varepsilon}}|\d_r\psi|^2+{\mu_q^\varepsilon}^2 t_\circ^{-\varepsilon} \psi^2)\;r^2 dr ,\\& + \frac{1}{\rho}\left(\int^\infty_{v(\rho)}( t^{-{\mu_q^\varepsilon}}|\d_v(t^q\psi)|^2\;r^2 dv + {\mu_q^\varepsilon}^2 t^{-\varepsilon} \psi^2)\; r^2dv\right)\,,
    \end{align*}
    in which we have used \eqref{eq:radiation:FirstPointwise1} to bound the term at $r=\rho$. Observe that the $\SS^2$-integral of the right-hand side above is bounded by $\frac{1}{r_\circ}  \E_q^\varepsilon[\psi](\tau_\circ)$. Commuting with angular operators and using the Sobolev inequality then again leads to the desired decay.

    To prove \eqref{eq:decay:point-wise2}, we can assume $r_\circ>\rho$ because otherwise the result simply follows by \eqref{eq:decay:point-wise1}. In addition, suppose $r_\circ\geq \frac{1}{4}u_\circ$ as the other case is trivial as well. Let $(t_a,r_a,\omega_a)$ denote the intersection of $\Sigma_{\tau_\circ}$ and the timelike hypersurface $\{r=\frac{1}{4}u\}$. We consider $\{r=\frac{1}{4}u\}$ because the pointwise inequalities are equivalent at this hypersurface. 
    Now we integrate $\d_v(\varphi)$  with respect to $v$ from $v(r_a)$ to $v(r_\circ)$ to derive
    \begin{align*}
        |\varphi(t_\circ,r_\circ, \omega_\circ)|^2 \lesssim& |\varphi(t_a,r_a, \omega_a)|^2 + \left(\int_{v(r_a)}^{v(r_\circ)} \d_v\varphi \, dv\right)^2 \\
        \lesssim& t_a^{2q}u^2_a |\psi(t_a,r_a, \omega_a)|^2 + t_\circ^{\mu_q^\varepsilon} \left(\int_{v(r_a)}^{v(r_\circ)} \frac{1}{r^{2-\sigma}}\, dv\right)\left(\int_{v(r_a)}^{v(r_\circ)} t^{-{\mu_q^\varepsilon}}r^{2-\sigma}|\d_v \varphi|^2 \, dv\right)\,.
    \end{align*}
    If $\sigma<1$, then the integral of $\frac{1}{r^{2-\sigma}}$ is bounded by $\frac{1}{r_a^{1-\sigma}}$. And if $\sigma=1$, it is bounded by $\log (1+r_\circ)$. For $\sigma<1$, if we integrate over $\SS^2$ and use Remark \ref{rm:decay:hi} and estimate \eqref{eq:radiation:sphereIneq} at point $(t_a,r_a, \omega_a)$, we find
    \begin{align*}
        \int_{\SS^2}  |\varphi(t_\circ,r_\circ, \omega)|^2\, d\omega&\lesssim t_\circ^{{\mu_q^\varepsilon}} t_a^{-{\mu_q^\varepsilon}}u_a^2 \frac{\E_q[\psi_0,\psi_1]}{r_a \tau^{2-\sigma}_a} + \frac{t_\circ^{{\mu_q^\varepsilon}}}{r_a^{1-\sigma}} h^\varepsilon_{2-\sigma}(\tau_\circ)\lesssim \frac{t_\circ^{\mu_q^\varepsilon}}{\tau_\circ^{1-\sigma}}\E_q[\psi_0,\psi_1]\,,
    \end{align*}
    thanks to the facts that $\tau_\circ=\tau_a\sim r_a$ and $u_a<\tau_a$. Once again, commuting with angular derivatives and Sobolev's inequality finally gives
    $$|r_\circ  t_\circ^q \psi(t_\circ,r_\circ, \omega_\circ)|\lesssim \frac{t_\circ^\frac{{\mu_q^\varepsilon}}{2} }{\sqrt {\tau_\circ^{1-\sigma}}}\bar \E_q[\psi_0,\psi_1]^\frac{1}{2}\,,$$
    which proves the required result for points with $r_\circ$ away from zero. The other case, $\sigma=1$, follows similarly.

    Now to remove the degeneracy at $r=0$, set $0<s<\frac{\rho}{2}$ and assume $r_\circ<s$. The local Sobolev inequality in Proposition \ref{prop:general:Sobolev} implies 
    \begin{align}\label{eq:decay:small-r}
        \|\psi(t_\circ,\cdot)\|^2_{L^\infty(B_s)}&\lesssim s^{-3}\int_{B_{2s}\times{\{t_\circ\}}} (|\psi|^2+s^2|\nabla\psi|^2+s^4|\nabla^2 \psi|^2)\,dx\\
        &\lesssim_s t^{{\mu_q^\varepsilon}-2q}_\circ \int_{\S_{\tau_\circ}} t_\circ^{2q-{\mu_q^\varepsilon}} (\frac{1}{(1+r_\circ)^2}|\psi|^2+|\nabla\psi|^2+|\nabla^2 \psi|^2)\,dx
        \\&\lesssim \frac{t^{{\mu_q^\varepsilon}-2q}_\circ}{\tau_\circ^{2-\sigma}} \left(\E_q [\psi_0,\psi_1]+\sum_{i=1,2,3} \E_q[\d_{x^i}\psi|_{\Sigma_{\tau_0}}, \d_t\d_{x^i}\psi|_{\Sigma_{\tau_0}}]\right)\,,
    \end{align}
    in which we have used the fact that the wave operator on FLRW commutes with $\d_{x^i}$ for $i=1,2,3$, and the decay estimates for twisted energy currents are also true for $\d_{x^i}\psi$ with appropriate initial data sets. Therefore, we have
    \begin{align*}
         \|\psi(t_\circ,\cdot)\|_{L^\infty(B_s)}&\lesssim \frac{t^\frac{{\mu_q^\varepsilon}}{2}_\circ\tau_\circ^\frac{\sigma}{2}}{t_\circ^q\tau_\circ}\bar\E_q[\psi_0,\psi_1]^\frac{1}{2}\lesssim \frac{t^\frac{{\mu_q^\varepsilon}}{2}_\circ\tau_\circ^\frac{\sigma}{2}}{(1+r)t_\circ^q\tau_\circ}\bar\E_q[\psi_0,\psi_1]^\frac{1}{2}\,,
    \end{align*}
    for $r\in B_s$ which completes the proof of \eqref{eq:decay:point-wise1} and \eqref{eq:decay:point-wise2}.

    Finally, to establish \eqref{eq:decay:particular-t}, first note that always $\tau\lesssim t^{1-q}$. When $r_\circ\leq\rho$, by definition $\tau_\circ\sim t_\circ^{1-q}$, and estimate \eqref{eq:decay:particular-t} easily follows from \eqref{eq:decay:point-wise1}. In the other case, we have $2\tau=\frac{t^{1-q}}{1-q}-r+\rho$. So if $r_\circ-\rho <2\tau_\circ$, then $\tau_\circ^{-1}\lesssim t_\circ ^{q-1}$, and the result is achieved by \eqref{eq:decay:point-wise1}. Otherwise, if $r_\circ-\rho\geq2\tau_\circ$, we then have $r^{-1}_\circ \lesssim t_\circ^{q-1}$, and \eqref{eq:decay:point-wise2} leads to \eqref{eq:decay:particular-t} (we also use \eqref{eq:general:t-r-lemma} when $q\geq \frac{1+2\sqrt 2}{7}$).
\end{proof}
\begin{rmk}[Pointwise decay in general case]\label{rmk:decay:pointwise-general}
    In general, for $0<q<1$, we can also prove a pointwise decay estimate only using the boundedness of the energy flux $\E^\varepsilon_q [\psi](\tau)$; see Remark \ref{rmk:intro:decay-from-EB}. In fact, we can follow the approach in the proof of Proposition \ref{prop:decay:PointwiseDecay} up to the estimate \eqref{eq:radiation:sphereIneq}, and then instead of using Proposition \ref{Prop:decay-energy}, we use just the boundedness of the energy. In fact, we have
    \begin{align*}
        \|\psi(t, r, \cdot)\|_{L^\infty(\SS^2)}\lesssim \frac{t^{\frac{\mu_q}{2}}\bar\E_q[\psi_0,\psi_1]^\frac{1}{2}}{\sqrt {1+r} t^q }\,,
    \end{align*}
    with $\mu_q=\mu^0_q$.
\end{rmk}
\subsection{Decay of first order derivatives}
In this section, we present the pointwise decay estimates for the first order derivatives of the solution $\psi$. The main result of this section is Proposition \ref{prop:decay:derivatives}, which proves these estimates in the general case.  
Nevertheless, in view of Remark \ref{rmk:radiation:commuting}, commuting with $\d_t (\sqrt t\, \cdot\, )$ in the radiation case yields an improved decay rate for the time derivative of $\psi$. This approach is detailed in Proposition \ref{prop:decay:commuting-radiation}. Note that Propositions \ref{prop:decay:commuting-radiation} and \ref{prop:decay:derivatives} together prove the last part of Corollary \ref{cor:general:decay}.

For the initial value problem \eqref{eq:intro:IVP-wave} with $q=\frac{1}{2}$, define
\begin{equation}\label{eq:decay:hh-initial-energy-radiation}
    \bar {\bar {\E}}[\psi_0,\psi_1]= \bar {\E}_\frac{1}{2}[\d_t (\sqrt t \psi)|_{\Sigma_{\tau_0}},\d_{tt} (\sqrt t \psi)|_{\S_{\tau_0}}]\,.
\end{equation}
\begin{prop}[Decay for the time derivative of waves with $q=\frac{1}{2}$]\label{prop:decay:commuting-radiation}
    Suppose $\psi$ solves the wave equation for $q=\frac{1}{2}$. Then the time derivative $\d_t\psi$ decays as
    \begin{equation*}
        |\d_t\psi|\lesssim \bar {\bar {\E}}[\psi_0,\psi_1]^\frac{1}{2} \frac{1}{t(1+r)\sqrt \tau}\,,\qquad|\d_t\psi|\lesssim \bar {\bar {\E}}[\psi_0,\psi_1]^\frac{1}{2} \frac{1}{t\tau \sqrt{1+r}}\,,\qquad|\d_t\psi|\lesssim \bar {\bar {\E}}[\psi_0,\psi_1]^\frac{1}{2} t^{-\frac{3}{2}}\,,
    \end{equation*}
    in which $\bar {\bar {\E}}[\psi_0,\psi_1]$ is defined in \eqref{eq:decay:hh-initial-energy-radiation}.
\end{prop}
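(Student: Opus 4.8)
The plan is to exploit the special structure of the radiation case recorded in Remark~\ref{rmk:radiation:commuting}: when $q=\frac12$, if $\psi$ solves $\Box_g\psi=0$ then so does $\Psi:=\d_t(\sqrt t\,\psi)$. Since $\psi$ arises from a compactly supported smooth initial data set on $\Sigma_{\tau_0}$ it is smooth on $\M$, and by finite speed of propagation $\Psi|_{\S_{\tau_0}}$, $\Psi|_{\C_{\tau_0}}$ and $\d_t\Psi|_{\S_{\tau_0}}=\d_{tt}(\sqrt t\,\psi)|_{\S_{\tau_0}}$ are again compactly supported and smooth; moreover $\Psi$ itself, restricted to $\R^{\tau_1}_{\tau_0}$, furnishes the smooth extension required in the definition of an initial data set. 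Hence $\Psi$ solves an initial value problem of the form \eqref{eq:intro:IVP-wave} with $q=\frac12$, and its third-order weighted initial energy $\bar\E_{\frac12}[\Psi|_{\Sigma_{\tau_0}},\d_t\Psi|_{\S_{\tau_0}}]$ is precisely the finite quantity $\bar{\bar\E}[\psi_0,\psi_1]$ of \eqref{eq:decay:hh-initial-energy-radiation}.

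The first step is then to apply the pointwise decay already proved in Proposition~\ref{prop:decay:PointwiseDecay} to the solution $\Psi$, with the parameters $\varepsilon=0$, $\mu_q^\varepsilon=0$, $\sigma=0$ of the first line of \eqref{eq:decay:parameters-q} (valid since $\frac12<\frac{1+2\sqrt2}{7}$, so that \eqref{eq:decay:point-wise1}, the first line of \eqref{eq:decay:point-wise2}, and the consequence \eqref{eq:decay:particular-t} are all available). This yields
\[
|\d_t(\sqrt t\,\psi)|\lesssim \frac{\bar{\bar\E}[\psi_0,\psi_1]^{\frac12}}{\sqrt t\,\tau\,\sqrt{1+r}}\,,\qquad |\d_t(\sqrt t\,\psi)|\lesssim \frac{\bar{\bar\E}[\psi_0,\psi_1]^{\frac12}}{\sqrt t\,(1+r)\,\sqrt\tau}\,,\qquad |\d_t(\sqrt t\,\psi)|\lesssim \frac{\bar{\bar\E}[\psi_0,\psi_1]^{\frac12}}{t}\,.
\]

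Next one passes from $\Psi$ back to $\d_t\psi$ via the identity
\[
\d_t\psi=\frac{1}{\sqrt t}\,\d_t(\sqrt t\,\psi)-\frac{1}{2t}\,\psi\,.
\]
Substituting the three bounds above, and controlling the lower-order term $\frac{1}{2t}\psi$ by the pointwise decay of $\psi$ itself established in Proposition~\ref{prop:decay:PointwiseDecay} — which carries an extra factor $t^{-q}=t^{-1/2}$ relative to the $\frac{1}{\sqrt t}\d_t(\sqrt t\,\psi)$ term and is therefore absorbed (using $\bar\E_{\frac12}[\psi_0,\psi_1]\lesssim\bar{\bar\E}[\psi_0,\psi_1]$ and $t\geq t_0$) — gives the asserted estimates
\[
|\d_t\psi|\lesssim \frac{\bar{\bar\E}[\psi_0,\psi_1]^{\frac12}}{t(1+r)\sqrt\tau}\,,\qquad |\d_t\psi|\lesssim \frac{\bar{\bar\E}[\psi_0,\psi_1]^{\frac12}}{t\tau\sqrt{1+r}}\,,\qquad |\d_t\psi|\lesssim \bar{\bar\E}[\psi_0,\psi_1]^{\frac12}\,t^{-\frac32}\,.
\]

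The calculation is essentially mechanical once the commutation property of Remark~\ref{rmk:radiation:commuting} is in hand, so the only point genuinely requiring care — and what I regard as the main (if modest) obstacle — is the first step: verifying that $\Psi=\d_t(\sqrt t\,\psi)$ legitimately fits the hypotheses of Proposition~\ref{prop:decay:PointwiseDecay}, i.e.\ that its data on the null part $\C_{\tau_0}$ (which involves the derivative of $\psi$ transversal to $\C_{\tau_0}$, read off from the characteristic initial value problem) again forms a compactly supported smooth initial data set, and that its third-order weighted initial energy is exactly $\bar{\bar\E}[\psi_0,\psi_1]$ and dominates the auxiliary contribution of $\psi$.
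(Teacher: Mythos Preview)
Your proposal is correct and follows essentially the same route as the paper: commute with $\d_t(\sqrt t\,\cdot\,)$ via Remark~\ref{rmk:radiation:commuting}, apply Proposition~\ref{prop:decay:PointwiseDecay} to $\Psi=\d_t(\sqrt t\,\psi)$, and then untwist using $\sqrt t\,\d_t\psi=\d_t(\sqrt t\,\psi)-\tfrac{1}{2\sqrt t}\psi$ together with the already established pointwise decay of $\psi$. If anything, you are more careful than the paper in checking that $\Psi$ again constitutes admissible initial data; the only point glossed over by both you and the paper is the absorption of $\bar\E_{\frac12}[\psi_0,\psi_1]$ into $\bar{\bar\E}[\psi_0,\psi_1]$, which is harmless since the $\psi$ contribution carries an extra $t^{-1/2}$.
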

\begin{proof}
    Commute the wave equation with $\d_t(\sqrt t\,\cdot\,)$ and recall that Remark \eqref{rmk:radiation:commuting} indicates $\d_t(\sqrt t \psi)$ satisfies the wave equation as well. Therefore, Proposition \ref{prop:decay:PointwiseDecay} also holds for $\d_t (\sqrt t \psi)$. Indeed we have 
    \begin{align*}
        |\sqrt t \d_t \psi| \lesssim |\d_t(\sqrt t \psi)| + | \frac{1}{\sqrt t} \psi| \lesssim \frac{\bar {\bar {\E}}[\psi_0,\psi_1]^\frac{1}{2}}{t}+ \frac{\bar \E_q[\psi_0,\psi_1]^\frac{1}{2}}{t\sqrt t } \lesssim \frac{\bar {\bar {\E}}_q[\psi_0,\psi_1]^\frac{1}{2}}{t}\,.
    \end{align*}
    The other two estimates also easily follow from Proposition \ref{prop:decay:PointwiseDecay} as well.
\end{proof}
To state the decay estimates for the first derivatives of $\psi$ in the general case, first define the initial energy as
    \begin{equation}\label{eq:decay:hh-initial-energy}
        \begin{aligned}
            \bar{\bar{\E}}_q[\psi_0,\psi_1]:={\bar{\E}}_q[\psi_0,\psi_1]+&\sum_{i,j\in\{1,2,3\}}(\bar{\E}_q[\d_{x^j}\d_{x^i}\psi|_{\Sigma_{\tau_0}},\d_t\d_{x^j}\d_{x^i}\psi|_{\S_{\tau_0}}]+\bar{\E}_q[\Omega_j\Omega_i\psi|_{\Sigma_{\tau_0}},\d_t\Omega_j\Omega_i\psi|_{\S_{\tau_0}}])\\
    +&\sum_{i=1,2}\E_q[\d_v^i\psi|_{\Sigma_{\tau_0}},\d_t\d_v^i\psi|_{\S_{\tau_0}}]\,.
        \end{aligned}
    \end{equation}
\begin{prop}[Decay estimates for the derivatives of waves on FLRW]\label{prop:decay:derivatives} For $q\neq \frac{1}{3}$, let $\varepsilon$, $\sigma$, and $\mu_q^\varepsilon$ be as defined in \eqref{eq:decay:parameters-q}. Suppose $\psi$ is a solution to the covariant wave equation \eqref{eq:intro:IVP-wave}. First derivatives of $\psi$ satisfy
    \begin{align*}
        |\d_v \psi| \lesssim 
            \bar{\bar{\E}}_q[\psi_0,\psi_1]^\frac{1}{2} \frac{ t^\frac{{\mu_q^\varepsilon}}{2}  \tau^{\frac{\sigma}{2}}}{ t^q  (1+r)^2} 
        \,, \qquad |\d_u \psi| \lesssim \bar{\bar{\E}}_q[\psi_0,\psi_1]^\frac{1}{2} \frac{ t^\frac{{\mu_q^\varepsilon}}{2}  \tau^{\frac{\sigma}{2}}}{ t^q  (1+r)}\,,\qquad |\d_t \psi| \lesssim \bar{\bar{\E}}_q[\psi_0,\psi_1]^\frac{1}{2} \frac{ t^\frac{{\mu_q^\varepsilon}}{2}  \tau^{\frac{\sigma}{2}}}{ t^{2q}  (1+r)}\,,
    \end{align*}
    in which $\bar {\bar\E}_q[\psi_0,\psi_1]$ is a high order weighted initial energy of $\psi$ containing derivatives up to order five, defined in \eqref{eq:decay:hh-initial-energy}. In addition, for the case $q=\frac{1}{2}$ with $\sigma={\mu_q^\varepsilon}=0$ we have
    \begin{align*}
       |\d_v \psi| \lesssim \bar{\bar{\E}}_\frac{1}{2}[\psi_0,\psi_1]^\frac{1}{2} \frac{\log \tau}{ \sqrt t  (1+r)^2}\,,\qquad |\d_u \psi| \lesssim \bar{\bar{\E}}_\frac{1}{2}[\psi_0,\psi_1]^\frac{1}{2} \frac{\log \tau}{ \sqrt t  (1+r)}\,,\qquad|\d_t \psi| \lesssim \bar{\bar{\E}}_\frac{1}{2}[\psi_0,\psi_1]^\frac{1}{2} \frac{\log \tau}{ t  (1+r)}\,.
    \end{align*}
\end{prop}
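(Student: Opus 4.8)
The plan is to reduce everything to pointwise bounds on the two null derivatives $\d_v\varphi$ and $\d_u\varphi$ of $\varphi=rt^q\psi$, and then to feed these into the already-established decay of $\psi$ (Proposition~\ref{prop:decay:PointwiseDecay}), the energy decay (Proposition~\ref{Prop:decay-energy} and Remark~\ref{rm:decay:hi}), and the $r^p$-hierarchy. In the region $\{r>\rho\}$ one has $-\d_{uv}\varphi+\frac1{r^2}\rslDelta\varphi\mp|\W|\varphi=0$ with $|\W|=q|1-2q|\,t^{2q-2}$, together with the algebraic identities $\d_v\psi=\frac1{rt^q}\d_v\varphi-(\frac1r+qt^{q-1})\psi$, $\d_u\psi=\frac1{rt^q}\d_u\varphi+(\frac1r-qt^{q-1})\psi$ and $\d_t=\frac1{2t^q}(\d_v+\d_u)$; in each of these the $\psi$-terms are of lower order and are bounded directly using Proposition~\ref{prop:decay:PointwiseDecay} together with Lemma~\ref{lem:general:t-r} to balance the $r$- and $t$-weights. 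In particular the claimed rate for $\d_t\psi$ is $t^{-q}$ times that of $\d_u\psi$, so only $\d_v\varphi$ and $\d_u\varphi$ need to be treated.

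For the ``bad'' derivative $\d_u\varphi$ --- which governs $\d_u\psi$ and $\d_t\psi$ --- I would integrate the equation in the $v$-direction: at fixed $u$ one has $\d_v(\d_u\varphi)=\frac1{r^2}\rslDelta\varphi\mp|\W|\varphi$, so $\d_u\varphi$ at a point equals its value at $\{r=\rho\}$ plus the $v$-integral of $\frac1{r^2}\rslDelta\varphi\mp|\W|\varphi$. The boundary term at $\{r=\rho\}$ is bounded pointwise by the $\Sigma_\tau$-energy of $\psi$ commuted with $\Omega_i$, using the ILED estimates (Propositions~\ref{prop:sub:ILED} and \ref{prop:sup:Morawetz}) and the $\SS^2$-Sobolev inequality \eqref{eq:general:Sobolev:SS}, exactly as in Proposition~\ref{prop:decay:PointwiseDecay}; the integrand is bounded pointwise by applying Proposition~\ref{prop:decay:PointwiseDecay} to $\Omega_i\Omega_j\psi$ (controlling $\rslDelta\varphi$) and by Lemma~\ref{lem:general:t-r} (controlling the potential), and then integrating the resulting $r$- and $t$-weights along the characteristic.

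For the ``good'' derivative $\d_v\varphi$ the extra power of $(1+r)^{-1}$ is obtained by running the $r^p$-hierarchy for the commuted quantity $\d_v\varphi$, which solves the inhomogeneous equation $-\d_{uv}(\d_v\varphi)+\frac1{r^2}\rslDelta(\d_v\varphi)\mp|\W|(\d_v\varphi)=-\frac2{r^3}\rslDelta\varphi\mp(\d_v|\W|)\varphi$; the source is absorbed using the $r^p$-estimates and ILED for $\psi$ and $\Omega_i\psi$ (and Lemma~\ref{lem:general:t-r} for the $t$-factors), and commuting once more with $\d_v$ --- this is where the $\d_v^i\psi$, $i\le 2$, entries of $\bar{\bar\E}_q$ appear --- produces the flux needed to extract a pointwise bound on $\d_v\varphi$ by integration in $v$ towards $\I^+$ (the boundary term there vanishes because $\d_v\varphi$, unlike $\varphi$, decays at null infinity), combined with Proposition~\ref{Prop:decay-energy}, Remark~\ref{rm:decay:hi} and the $\SS^2$-Sobolev inequality after commuting with $\Omega_i$; for $r$ not near $\I^+$ one integrates instead from the timelike hypersurface $\{r=\frac14 u\}$ as in Proposition~\ref{prop:decay:PointwiseDecay}. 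To remove the degeneracy of $\varphi$ at $r=0$, the region $\{r\le\rho\}$ is handled separately: since $\Box_{g_q}$ commutes exactly with $\d_{x^i}$ in $(t,x)$-coordinates, $\d_{x^i}\psi$ and $\d_{x^i}\d_{x^j}\psi$ solve the wave equation and inherit the decay of Proposition~\ref{prop:decay:PointwiseDecay}, and the local Sobolev inequality \eqref{eq:general:Sobolev:Local} on balls $B_s$ with $s<\rho/2$ then gives the stated bounds for $\d_t\psi$, $\d_v\psi$, $\d_u\psi$ near the centre. In the radiation case $q=\frac12$ one has $\mu_q=\sigma=0$ and the argument is the same, except that the $\d_v$-commuted hierarchy is borderline (the analogue of the $p=2$ endpoint), which forces a dyadic summation and the logarithmic factor $\log\tau$.

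The main obstacle will be controlling the error terms produced by commuting the $\varphi$-equation with $\d_v$: differentiating the weight $r^{-2}$ and the $t$-dependent potential $\W$ yields terms with an extra power of $r^{-1}$ and mixed $t$--$r$ weights which --- exactly as in the proofs of Propositions~\ref{prop:sub:rp} and \ref{prop:sup:rp} --- must be absorbed in $\{u>0\}$ using $rt^{q-1}\le(1-q)^{-1}$ from Lemma~\ref{lem:general:t-r} and in the bounded region $\{u\le0\}$ via Gr\"onwall's inequality (Lemma~\ref{lem:Gronwall}), with the sign of $\W$, and hence the split between $q\le\frac12$ and $q>\frac12$ and the corresponding weight $\mu_q$, tracked throughout. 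A secondary technical point is justifying that the relevant fluxes vanish along a sequence approaching $\I^+$, which follows from the energy boundedness of Section~\ref{sec:EnergyEstimates} exactly as in Proposition~\ref{prop:decay:PointwiseDecay}.
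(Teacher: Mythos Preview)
Your plan is workable in principle but is considerably more elaborate than what the paper actually does, and the two routes diverge precisely at the step for $\d_v\varphi$.

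The paper's argument is purely pointwise and never commutes with $\d_v$. It first obtains pointwise decay for $\Omega_i\Omega_j\psi$ and $\d_{x^i}\d_{x^j}\psi$ directly from Proposition~\ref{prop:decay:PointwiseDecay} (these are genuine symmetries of $\Box_{g_q}$), then uses the wave equation in the form $\d_{uv}\varphi=\frac{t^q}{r}\rslDelta\psi+q(2q-1)t^{3q-2}r\psi$ together with Lemma~\ref{lem:general:angular-derivatives} to get an explicit pointwise bound $|\d_{uv}\varphi|\lesssim\tau^{-1}(1+r)^{-3/2}\,t^{\mu_q^\varepsilon/2}\tau^{\sigma/2}\bar{\bar\E}_q^{1/2}$. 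From here both null derivatives are obtained by a single integration: $\d_v\varphi$ by integrating in $u$ from $\Sigma_{\tau_0}$ (picking up $\int\tau^{\sigma/2-1}du\sim\tau^{\sigma/2}$ for $\sigma>0$ and $\log\tau$ for $\sigma=0$, which is the origin of the logarithm when $q=\tfrac12$), and $\d_u\varphi$ by integrating in $v$ from $\{r=0\}$ or $\S_{\tau_0}$ rather than from $\{r=\rho\}$. No additional $r^p$-hierarchy, no Gr\"onwall, and no commutation with $\d_v$ are needed for the main argument; the $\d_v^i\psi$ entries in $\bar{\bar\E}_q$ appear only to control the boundary value $\d_v\varphi$ on $\Sigma_{\tau_0}$ via the local Sobolev inequality~\eqref{eq:general:Sobolev:Local}, not to seed a commuted hierarchy.

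Your approach --- running an $r^p$-estimate for the $\d_v$-commuted equation --- is the Angelopoulos--Aretakis--Gajic style and would also succeed, but it requires re-proving the full hierarchy for an inhomogeneous equation with sources $-2r^{-3}\rslDelta\varphi$ and $(\d_v\W)\varphi$, and in particular reproducing the $\{u>0\}$/$\{u\le0\}$ split and the Gr\"onwall step of Propositions~\ref{prop:sub:rp} and~\ref{prop:sup:rp} for this new equation. The paper's direct integration sidesteps all of that. Two concrete corrections to your outline: the $\log\tau$ in the radiation case does not arise from a borderline $p=2$ endpoint or dyadic summation but from the $u$-integral $\int_{u_0}^{\mathring u}\tau^{-1}du$; and for the $\d_u\varphi$ boundary term it is simpler to integrate from $r=0$ (where $\varphi$ vanishes and $\d_u\varphi=-t^q\psi$) than from $r=\rho$, which avoids invoking ILED at the timelike boundary.
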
\begin{proof}
    First, recall that the covariant wave operator commutes with angular derivatives $\Omega_i$, defined in \ref{sec:angular-derivatives}, for $i=1,2,3$. Therefore, Proposition \ref{prop:decay:PointwiseDecay} implies
    \begin{align}\label{eq:decay:second-angular-derivaives}
        |\Omega_j\Omega_i \psi|\lesssim \frac{t^\frac{{\mu_q^\varepsilon}}{2} \tau^{\frac{\sigma}{2}}}{\tau t^q  \sqrt{1+r} }\bar{\bar{\E}}_q[\psi_0,\psi_1]^\frac{1}{2}\,,
    \end{align}
    for $i,j\in\{1,2,3\}$.
    Now from the wave equation in double null coordinates
    \begin{align*}
        \frac{1}{t^{3q}r}\d_{uv}(t^qr\psi)=\frac{1}{t^{2q}r^2}\rslDelta \psi + \frac{q(2q-1)}{t^2}\psi\,,
    \end{align*}
    and Lemma \ref{lem:general:angular-derivatives}, we conclude that    \begin{align}\label{eq:decay:uv-derivative}
    \begin{split}
        |\d_{uv}(t^qr\psi)|&\lesssim \frac{t^q r}{r^2}|\rslDelta \psi| +t^{3q-2}r |\psi| \lesssim \frac{t^q r}{(1+r)^2} \sum_{i,j=1,2,3} (|\d_{x^i}\d_{x^j}\psi|+|\Omega_j\Omega_i \psi|) + t^{3q-2}r |\psi|\\
        & \lesssim \frac{t^q r}{(1+r)^2} \frac{t^\frac{{\mu_q^\varepsilon}}{2} \tau^{\frac{\sigma}{2}}}{\tau t^q  \sqrt{1+r} }\bar{\bar{\E}}_q[\psi_0,\psi_1]^\frac{1}{2} +  t^{3q-2}r \frac{t^\frac{{\mu_q^\varepsilon}}{2} \tau^{\frac{\sigma}{2}}}{\tau t^q  \sqrt{1+r} }\bar{\E}_q[\psi_0,\psi_1]^\frac{1}{2}\lesssim \frac{t^\frac{{\mu_q^\varepsilon}}{2} \tau^{\frac{\sigma}{2}}}{\tau   (1+r)^\frac{3}{2}} \bar{\bar{\E}}_q[\psi_0,\psi_1]^\frac{1}{2}\,,
    \end{split}
    \end{align}
    thanks to \eqref{eq:decay:second-angular-derivaives}, first estimate of Proposition \ref{prop:decay:PointwiseDecay} and Lemma \ref{lem:general:t-r}. Since the initial data set $(\psi_0,\psi_1)$ is compactly supported on $\Sigma_{\tau_0}$, there exists $v_a$ with $r(v_a,u_0)>\rho$ such that for $\psi|_{\Sigma_{\tau_0}\cap \{v>v_a\}}=0$. So consider the point $(\mathring v,\mathring u,\mathring\omega)$ with $\mathring v>v_a$ and integrate \eqref{eq:decay:uv-derivative} with respect to $u$ from the initial data hypersurface $\Sigma_{\tau_0}$ to $\mathring u$ to find
    \begin{align}
    \begin{split}
        |\d_v (\mathring t^q \mathring r \psi(\mathring v,\mathring u, \mathring \omega))| &\lesssim \bar{\bar{\E}}_q[\psi_0,\psi_1]^\frac{1}{2} |\int_{u_0}^{\mathring u} \frac{t^\frac{{\mu_q^\varepsilon}}{2} \tau^{\frac{\sigma}{2}}}{\tau   (1+r)^\frac{3}{2}} \,du |\lesssim \bar{\bar{\E}}_q[\psi_0,\psi_1] ^\frac{1}{2}\frac{\mathring t ^\frac{{\mu_q^\varepsilon}}{2}}{(1+\mathring r)^\frac{3}{2}}\int_{u_0}^{\mathring u} \tau^{\frac{\sigma}{2}-1}\,du\\ & \lesssim \begin{cases}
            \bar{\bar{\E}}_q[\psi_0,\psi_1]^\frac{1}{2} (1+\mathring r)^\frac{-3}{2} \mathring t ^\frac{{\mu_q^\varepsilon}}{2} \mathring \tau ^{\frac{\sigma}{2}}\,\quad \sigma>0\,,\\
            \bar{\bar{\E}}_q[\psi_0,\psi_1]^\frac{1}{2}(1+\mathring r)^\frac{-3}{2}  \log \mathring \tau \,\quad \sigma={\mu_q^\varepsilon}=0\,.
        \end{cases}
    \end{split}\label{eq:decay:v-derivative-1}
        \end{align} 
    Note that the last inequality holds whether $r>\rho$ or not, and the case $\sigma=0$ corresponds to $q=\frac{1}{2}$. For $(\mathring v,\mathring u,\mathring\omega)$ with $\mathring v\leq v_a$, where the corresponding point on the initial hypersurface is in the support of the initial data, we have
    \begin{align*}
        |\d_v (\mathring t^q \mathring r \psi(\mathring v,\mathring u, \mathring \omega))| &\lesssim \bar{\bar{\E}}_q[\psi_0,\psi_1]^\frac{1}{2} \int_{u_0}^{\mathring u} \frac{t^\frac{{\mu_q^\varepsilon}}{2} \tau^{\frac{\sigma}{2}}}{\tau   (1+r)^\frac{3}{2}} \,du +\Big|\d_v ( t^q r \psi)\vert_{(\mathring v, u_0, \mathring \omega)}\Big|\,. 
        \end{align*} 
        The first term on the right-hand side can be estimated as in \eqref{eq:decay:v-derivative-1}, while the second term is bounded by  $\bar{\bar{\E}}_q[\psi_0,\psi_1]$ using the local Sobolev inequalities \eqref{eq:general:Sobolev:Local} on $S_{\tau_0}$ and $\C_{\tau_0}$ depending on whether $r(\mathring v, u_0)\leq \rho$ or $r(\mathring v, u_0)>\rho$ (recall that $\sum_{i=1,2} \E_q[\d_v^i\psi|_{\Sigma_{\tau_0}},\d_t\d_v^i\psi|_{\S_{\tau_0}}]\lesssim \bar{\bar{\E}}_q[\psi_0,\psi_1] $)\footnote{Note that at the corner point $p\in \S_{\tau_0}\cap \C_{\tau_0}$ with $r(p)=\rho$, we can integrate from $p$ to future null infinity with respect to $v$, and then use the Sobolev inequality on Sphere after integration over $\SS^2$ and commuting with angular derivatives $\Omega_i^j$ for $j=0,1,2$.}. Since $\mathring v$ and $r(\mathring v, u_0)$ are bounded in this case, we can recover the same estimate in \eqref{eq:decay:v-derivative-1}. Therefore, estimate \eqref{eq:decay:v-derivative-1} holds for all points in $\M$.
   
    In order to estimate $\d_u \psi$ at the point $(\mathring v, \mathring u, \mathring \omega)$, we integrate \eqref{eq:decay:uv-derivative} in $v$ from $( \hat v,\mathring u, \mathring \omega)$ to $(\mathring v, \mathring u, \mathring \omega)$ where $( \hat v,\mathring u, \mathring \omega)$ is either lying on the spacelike part of the initial data hypersurface $\S_{\tau_0}$ or on the centre $\{r=0\}$ depending on $\mathring u$. Indeed,
    \begin{align}\label{eq:decay:derivatives:u-twisted}
        |\d_u (\mathring t^q \mathring r \psi(\mathring v,\mathring u, \mathring \omega))| &\lesssim \bar{\bar{\E}}_q[\psi_0,\psi_1]^\frac{1}{2} \int_{\hat v}^{\mathring v} \frac{t^\frac{{\mu_q^\varepsilon}}{2} \tau^{\frac{\sigma}{2}}}{\tau   (1+r)^\frac{3}{2}} \,dv  + \Big|\d_u ( t^q r \psi)\vert_{(\hat v,\mathring u, \mathring \omega)}\Big|\,.
    \end{align}
    The first term above can be handled as follows with $\mathring v_\rho$ denoting the $v-$coordinate of the intersection of $\{r=\rho\}$ and $\{u=\mathring u\}$,
    \begin{align*}
        \int_{\hat v}^{\mathring v} \frac{t^\frac{{\mu_q^\varepsilon}}{2} \tau^{\frac{\sigma}{2}}}{\tau   (1+r)^\frac{3}{2}} \,dv&\lesssim \mathring \tau^{-1+\frac{\sigma}{2}}\int_{\mathring v_\rho}^{\mathring v} \frac{t^\frac{{\mu_q^\varepsilon}}{2} }{ (1+r)^\frac{3}{2}} \,dv+\int_{\hat v}^{\mathring v_\rho} \frac{t^{\frac{{\mu_q^\varepsilon}}{2}+(1-q)(-1+\frac{\sigma}{2})}}{(1+r)^\frac{3}{2}} \,dv\lesssim  \frac{\mathring t ^\frac{{\mu_q^\varepsilon}}{2}\mathring\tau^{\frac{\sigma}{2}}}{\mathring \tau \sqrt \rho} + 
        \begin{cases}
            \mathring t^\frac{{\mu_q^\varepsilon}}{2} \mathring {\tau}^\frac{\sigma}{2}\quad &\sigma>0\\
            \log \mathring \tau\quad &\sigma={\mu_q^\varepsilon}=0
        \end{cases}
        \\&\lesssim\begin{cases}
            \mathring t^\frac{{\mu_q^\varepsilon}}{2} \mathring {\tau}^\frac{\sigma}{2}\quad &\sigma>0\\
            \log \mathring \tau\quad &\sigma={\mu_q^\varepsilon}=0
        \end{cases}
        \,.
    \end{align*}
    For the second term on the right-hand side of \eqref{eq:decay:derivatives:u-twisted}, note that if $( \hat v,\mathring u, \mathring \omega)\in \{r=0\}$ where $\tau=\frac{1}{2}\frac{t^{1-q}}{1-q}$, we have
    \begin{align*}
        \Big|\d_u ( t^q r \psi)|_{(\hat v,\mathring u, \mathring \omega)}\Big|=\Big|t^q \psi\vert_{(\hat v,\mathring u, \mathring \omega)}\Big| \lesssim  ( \frac{t^\frac{{\mu_q^\varepsilon}}{2} \tau^{\frac{\sigma}{2}}}{\tau })|_{(\hat v,\mathring u, \mathring \omega)}\bar\E_q[\psi_0,\psi_1]^\frac{1}{2}\lesssim \mathring t^\frac{{\mu_q^\varepsilon}}{2} \mathring \tau^{\frac{\sigma}{2}}\bar{\bar\E}_q[\psi_0,\psi_1]^\frac{1}{2}\,,
    \end{align*}
    and if $( \hat v,\mathring u, \mathring \omega)\in \S_{\tau_0}$, then the above is trivially correct,  again thanks to the local Sobolev inequality \eqref{eq:general:Sobolev:Local}.
    Thus,
    \begin{align*}
        |\d_t (\mathring t^q \mathring r \psi(\mathring v,\mathring u, \mathring \omega))|\lesssim \frac{1}{\mathring t^q}\left(|\d_v (\mathring t^q \mathring r \psi(\mathring v,\mathring u, \mathring \omega))|+|\d_u (\mathring t^q \mathring r \psi(\mathring v,\mathring u, \mathring \omega))|\right) &\lesssim\begin{cases}
            \frac{\mathring t^\frac{{\mu_q^\varepsilon}}{2} \mathring {\tau}^\frac{\sigma}{2}}{\mathring t ^q}\bar{\bar\E}_q[\psi_0,\psi_1]^\frac{1}{2}\quad &\sigma>0\\
            \frac{\log \mathring \tau}{\mathring t ^q}\bar{\bar\E}_q[\psi_0,\psi_1]^\frac{1}{2}\quad &\sigma={\mu_q^\varepsilon}=0
        \end{cases}
        \,.
    \end{align*}
    To estimate untwisted derivatives, first assume $0<\epsilon<\mathring r$. We therefore find that for $\sigma>0$
    \begin{align*}
        |\d_u \psi(\mathring v,\mathring u, \mathring \omega)|&\lesssim \frac{1}{\mathring t^q \mathring r}\left(|\d_u (\mathring t^q \mathring r \psi)| + \mathring t^{2q-1}\mathring r |\psi| + \mathring t^q |\psi|\right)\lesssim \frac{\mathring t^\frac{{\mu_q^\varepsilon}}{2} \mathring \tau^{\frac{\sigma}{2}}}{\mathring t^q \mathring r}\bar{\bar{\E}}_q[\psi_0,\psi_1] ^\frac{1}{2}\lesssim\frac{\mathring t^\frac{{\mu_q^\varepsilon}}{2} \mathring \tau^{\frac{\sigma}{2}}}{\mathring t^q (1+\mathring r)}\bar{\bar{\E}}_q[\psi_0,\psi_1]^\frac{1}{2} \,,
    \end{align*}
    and for  $\epsilon < \mathring r$ we have
    \begin{align*}
        |\d_v \psi ( \mathring v,\mathring u, \mathring \omega)| \lesssim  \frac{1}{\mathring t^q \mathring r}\left(|\d_v (\mathring t^q \mathring r \psi)| + \mathring t^{2q-1}\mathring r |\psi| + \mathring t^q |\psi|\right)\lesssim  \frac{\mathring t ^\frac{{\mu_q^\varepsilon}}{2}\mathring \tau ^\frac{\sigma}{2}}{\mathring t^q \mathring r^2}(\mathring r^\frac{-1}{2}+\mathring \tau ^\frac{-1}{2})
        \bar{\bar{\E}}_q[\psi_0,\psi_1]^\frac{1}{2}\lesssim\frac{\mathring t ^\frac{{\mu_q^\varepsilon}}{2}\mathring \tau ^\frac{\sigma}{2}}{\mathring t^q (1+\mathring r)^2}\bar{\bar{\E}}_q[\psi_0,\psi_1] ^\frac{1}{2}\,.
    \end{align*}
    Finally, for $\sigma>0$ we have
    \begin{align*}
        |\d_t \psi(\mathring v,\mathring u, \mathring \omega)|&\lesssim \frac{1}{\mathring t^q}\left(|\d_v \psi|+|\d_u \psi|\right) \lesssim  \frac{\mathring t^\frac{{\mu_q^\varepsilon}}{2} \mathring \tau^{\frac{\sigma}{2}}}{\mathring t^{2q} (1+\mathring r)}\bar{\bar{\E}}_q[\psi_0,\psi_1]^\frac{1}{2}\,.
    \end{align*}
    The results for $\sigma=0$, when $q=\frac{1}{2}$ follows similarly.

    For $\mathring r\leq \epsilon \ll \rho$, notice that local Sobolev inequality \eqref{eq:general:Sobolev:Local} gives
    \begin{align*}
        |\d_t(\mathring t^q \psi)|^2\lesssim \sum_{i=0,1,2}\sum_{j=1,2,3} \mathring t ^{\mu_q^\varepsilon-2q}\int_{\{\mathring t\}\times B_1}\mathring t ^{2q-\mu_q^\varepsilon}|\d_t( \mathring t^q  \d_{x^j}^i\psi)|^2\,dx\lesssim  \frac{\mathring t ^{\mu_q^\varepsilon}}{\mathring \tau ^{2-\sigma}\mathring t^{2q}}\bar{\bar{\E}}_q[\psi_0,\psi_1]\,, 
    \end{align*}
    leading to
    \begin{align*}
        |\d_t \psi|\lesssim \frac{|\d_t(\mathring t^q \psi)|}{\mathring t^q} + \frac{|\psi|}{\mathring t}\lesssim \frac{\mathring t ^{\frac{\mu_q^\varepsilon}{2}} }{\mathring \tau ^{1-\frac{\sigma}{2}}\mathring t ^{2q}}\bar{\bar{\E}}_q[\psi_0,\psi_1]^\frac{1}{2} + \frac{\mathring t ^{\frac{\mu_q^\varepsilon}{2}}}{\mathring \tau ^{1-\frac{\sigma}{2}}\mathring t^{1+q}}\bar{\E}_q[\psi_0,\psi_1]^\frac{1}{2}\lesssim  \frac{\mathring t^\frac{{\mu_q^\varepsilon}}{2} \mathring \tau^{\frac{\sigma}{2}}}{\mathring t^{2q} (1+\mathring r)}\bar{\bar{\E}}_q[\psi_0,\psi_1]^\frac{1}{2}\,,
    \end{align*}
    which completes the proof of the desired estimate for the time derivative. Estimates for $\d_v$ and $\d_u$  when $r<\epsilon\ll \rho$ follow similarly by noticing that if $m$ denotes $v$ or $u$, we have
    \begin{align*}
        |\d_m(\mathring t^q \psi)|^2&\lesssim \mathring t ^{2q}\left(|\d_t( \mathring t^q \psi)|^2 +|\d_r\psi|^2\right) \lesssim\sum_{i=0,1,2}\sum_{j=1,2,3} \mathring t ^{\mu_q^\varepsilon}\int_{\{\mathring t\}\times B_1} \mathring t ^{2q-\mu_q^\varepsilon}\left(|\d_t( \mathring t^q \d_{x^j}^i\psi)|^2 +|\nabla\d_{x^j}^i\psi|^2\right)\,dx\\
        &\lesssim \frac{\mathring t ^{\mu_q^\varepsilon}}{\mathring \tau ^{2-\sigma}}\bar{\bar{\E}}_q[\psi_0,\psi_1]\,. 
    \end{align*}
    Finally, untwisting the above estimate would complete the proof.
\end{proof}

\bibliography{Biblio}
\bibliographystyle{amsrefs}
\end{document}